\documentclass{article}
\usepackage{fullpage}
\usepackage{graphicx}
\usepackage{amsmath}
\usepackage{amssymb}
\usepackage{amsthm}
\usepackage{enumitem}
\usepackage{xcolor}
\usepackage[colorlinks=true]{hyperref}
\usepackage{cleveref}
\usepackage{bbm}
\usepackage{blkarray}
\usepackage{tikz}
\usepackage{authblk}

\newtheorem{theorem}{Theorem}[section]
\crefname{theorem}{Theorem}{Theorems}
\newtheorem{definition}[theorem]{Definition}
\crefname{definition}{Definition}{Definitions}
\newtheorem{lemma}[theorem]{Lemma}
\crefname{lemma}{Lemma}{Lemmas}
\newtheorem{corollary}[theorem]{Corollary}
\crefname{corollary}{Corollary}{Corollaries}

\crefname{figure}{Figure}{Figures}
\crefname{section}{Section}{Sections}
\crefname{subsection}{Subsection}{Subsections}

\counterwithin{figure}{section}

\title{Quantum walks through generalized graph composition}
\author[1]{Arjan Cornelissen}
\affil[1]{Simons institute, UC Berkeley, California, USA}

\newcommand{\A}{\ensuremath{\mathcal{A}}}
\newcommand{\C}{\ensuremath{\mathbb{C}}}
\newcommand{\D}{\ensuremath{\mathcal{D}}}
\newcommand{\E}{\ensuremath{\mathbb{E}}}
\renewcommand{\H}{\ensuremath{\mathcal{H}}}
\newcommand{\K}{\ensuremath{\mathcal{K}}}
\newcommand{\N}{\ensuremath{\mathbb{N}}}
\renewcommand{\P}{\ensuremath{\mathbb{P}}}
\newcommand{\R}{\ensuremath{\mathbb{R}}}
\newcommand{\V}{\ensuremath{\mathcal{V}}}

\newcommand{\ket}[1]{\ensuremath{\left|#1\right\rangle}}
\newcommand{\bra}[1]{\ensuremath{\left\langle#1\right|}}
\newcommand{\braket}[2]{\ensuremath{\left\langle#1\middle|#2\right\rangle}}
\newcommand{\norm}[1]{\ensuremath{\left\|#1\right\|}}
\newcommand{\transduce}{\rotatebox[origin=c]{-90}{$\rightsquigarrow$}}

\DeclareMathOperator{\ADV}{ADV}
\DeclareMathOperator{\diag}{diag}

\DeclareMathOperator{\polylog}{polylog}
\DeclareMathOperator{\Span}{Span}
\DeclareMathOperator{\supp}{supp}
\begin{document}
    \maketitle

    \begin{abstract}
        In this work, we generalize the recently-introduced graph composition framework to the non-boolean setting. A quantum algorithm in this framework is represented by a hypergraph, where each hyperedge is adjacent to multiple vertices. The input and output to the quantum algorithm is represented by a set of boundary vertices, and the hyperedges act like switches, connecting the input vertex to the output that the algorithm computes.

        Apart from generalizing the graph composition framework, our new proposed framework unifies the quantum divide and conquer framework, the decision-tree framework, and the unified quantum walk search framework. For the decision trees, we additionally construct a quantum algorithm from an improved weighting scheme in the non-boolean case. For quantum walk search, we show how our techniques naturally allow for amortization of the subroutines' costs. Previous work showed how one can speed up ``detection'' of marked vertices by amortizing the costs of the quantum walk. In this work, we extend these results to the setting of ``finding'' such marked vertices, albeit in some restricted settings.

        Along the way, we provide a novel analysis of irreducible, reversible Markov processes, by linear-algebraically connecting its effective resistance to the random walk operator. This significantly simplifies the algorithmic implementation of the quantum walk search algorithm, achieves an amortization speed-up for quantum walks over Johnson graphs, avoids the need for quantum fast-forwarding, and removes the log-factors from the query complexity statements.
    \end{abstract}

    \section{Introduction}

    Over the last three decades, the search for efficient quantum algorithms for solving computational problems has led to a wide variety of algorithmic techniques. Notable examples include quantum Fourier sampling~\cite{shor1999polynomial,bernstein1997quantum,simon1997power}, amplitude amplification and estimation~\cite{grover1996fast,brassard2002quantum}, quantum signal processing~\cite{low2017optimal,gilyen2019quantum}, quantum walks~\cite{szegedy2004quantum,magniez2011search,belovs2013quantum,dohotaru2017controlled,apers2021unified}, etc.

    In this search, the number of queries to the computational problem's input has often been used as a measure of efficiency for the designed algorithm. The \textit{query complexity} of a computational problem is the minimum number of times we need to access that problem's input to solve it. This number depends on the computational model, e.g., if we are allowed to use randomness to decide which part of the input to access, we refer to this measure as the \textit{randomized query complexity}. Similarly, if we are allowed to query our input coherently in superposition, then we refer to this quantity as the \textit{quantum query complexity}.

    In a landmark result, Reichardt showed that the quantum query complexity is captured by a semi-definite program, called the \textit{quantum adversary bound}~\cite{reichardt2009span,reichardt2011reflections}. This sets quantum query complexity apart from its randomized counterpart, where no such characterization is known. Moreover, there is a constructive way to turn any feasible solution of the minimization version of the adversary bound into a quantum query algorithm, which was recently simplified considerably using transducers~\cite{belovs2023one,belovs2024taming}. Many algorithmic techniques that followed essentially boil down to constructing feasible solutions to the adversary bound in a smart way. Examples include the learning graph framework~\cite{belovs2012learning}, the electric network framework~\cite{belovs2013quantum}, the $st$-connectivity framework~\cite{belovs-reichardt2012span,jeffery2017quantum,jarret2018quantum}, and the quantum divide-and-conquer framework~\cite{childs2025quantum}.

    Recently, the graph composition framework was introduced, which unifies all of these~\cite{cornelissen2025quantum}. An instance of this framework is an undirected graph $G = (V,E)$, with source and sink nodes $s,t \in V$. For any input $x \in \D$ in some finite domain $\D$, we define a subgraph $G(x) = (V,E_x)$, where the presence of the edge $e \in E_x$ can be determined by a subroutine $\mathcal{P}_e$, encoded (without loss of generality) as a span program on $\D$. The framework then constructs a quantum query algorithm that computes whether $s$ and $t$ are connected in the resulting subgraph $G(x)$.

    An inherent limitation of the graph composition framework is that it only naturally uses decision problems, i.e., it can only access subroutines that solve decision problems, and similarly it can merely produce quantum algorithms that solve decision problems. It is worth noting that other quantum algorithmic frameworks work well with non-boolean inputs and outputs too, like the unified quantum walk framework~\cite{apers2021unified}, or the decision-tree framework~\cite{beigi2020quantum,cornelissen2025improved}. This begs the question whether the graph composition framework can be generalized to the non-boolean setting, which we answer affirmatively in this work.

    \subsection{The generalized graph composition framework}

    In this work, we introduce the \textit{generalized graph composition framework}, which extends the graph composition framework to the state-conversion setting.

    \paragraph{State-conversion.} The state-conversion problem is the problem of converting a state $\ket{\sigma_x}$ into a state $\ket{\tau_x}$, given access to a unitary oracle $O_x$, for some unknown input $x \in \D$ from a finite domain $\D$. This problem encapsulates function evaluation, where $\ket{\sigma_x} := \ket{\perp}$ and $\ket{\tau_x} := \ket{f(x)}$, for every $x \in \D$ and some arbitrary reference state $\ket{\perp}$. Furthermore, it encapsulates ``database updates'', where we want to update a classical database from a state $S_x$ to $T_x$, given an oracle $O_x$. In this case we set $\ket{\sigma_x} := \ket{S_x}$ and $\ket{\tau_x} := \ket{T_x}$ to the computational basis states representing the database's state before and after the updates, respectively.

    In \cite{lee2011quantum}, the adversary bound $\ADV(P)$ for the state-conversion problem $P = \{(\ket{\sigma_x}, \ket{\tau_x}, O_x)\}_{x \in \D}$ is defined as the optimal value of the following semi-definite program:
    \begin{align*}
        \ADV(P) := \min\quad & \max_{x \in \D} \norm{\ket{w_x}}^2, \\
        \text{s.t.}\quad & \bra{w_x}(I - (O_x^{\dagger}O_y) \otimes I_{\mathcal{W}})\ket{w_y} = \braket{\sigma_x}{\sigma_y} - \braket{\tau_x}{\tau_y}, & \forall x,y \in \D, \\
        & \ket{w_x} \in \mathcal{M} \otimes \mathcal{W}, & \forall x \in \D,
    \end{align*}
    where for all $x \in \D$, $\ket{\sigma_x},\ket{\tau_x} \in \V$, we $O_x$ is a unitary acting on $\mathcal{M}$, and $\mathcal{W}$ is a Hilbert space of arbitrary (finite) dimension. It was shown in~\cite[Theorem~4.9]{lee2011quantum} that this adversary bound is tight for the query complexity of state conversion, if we require the quantum algorithm to prepare the state $\ket{\tau_x}$ up to constant fidelity. Furthermore, it was shown that any feasible solution to the minimization version of this optimization program can be turned into a quantum algorithm, and Belovs, Jeffery and Yolcu recently proved the existence of a much simpler construction using transducers~\cite{belovs2023one,belovs2024taming}.

    The adversary bound for state conversion serves as the starting point for this work. We first provide a structural lemma, showing that it suffices to look at a restricted set of state-conversion problems, which we dub state-reflection problems. These associate two orthogonal states $\ket{\sigma_x^+}$ and $\ket{\sigma_x^-}$ to every input $x \in \D$, which they map to $\ket{\sigma_x^+}$ and $-\ket{\sigma_x^-}$, and they are given access to an oracle $O_x$ that satisfies $O_x^2 = I$.

    \begin{lemma}[Informal version of \cref{thm:reformulation-state-conversion}]
        Let $P = \{(\ket{\sigma_x}, \ket{\tau_x}, O_x)\}_{x \in \D}$ be a state-conversion problem. For all $x \in \D$, let $\ket{\sigma_x^{\pm}} := (\ket{\sigma_x} \oplus \pm \ket{\tau_x})/\sqrt{2}$, and $\overline{O_x} = (X \otimes I_{\mathcal{M}})(O_x \oplus O_x^{\dagger}) \in \mathcal{L}(\mathcal{M} \oplus \mathcal{M})$\footnote{Here, $X$ is the Pauli-$X$ matrix, which means that that $X \otimes I_{\mathcal{M}}$ swaps the two copies of $\mathcal{M}$ it acts on.}, and consider the \emph{state-reflection problem} $R := \{(\ket{\sigma_x^+},\ket{\sigma_x^-},O_x)\}_{x \in \D}$, representing the state-conversion problem $\{(\ket{\sigma_x^{\pm}}, \pm\ket{\sigma_x^{\pm}}, \overline{O_x})\}_{x \in \D}$. Then, $\ADV(P) = \ADV(R)$, and we can without loss of generality assume that for all $x \in \D$, $\overline{O_x}\ket{w_x^{\pm}} = \pm\ket{w_x^{\pm}}$, for any feasible solution $w = \{\ket{w_x^{\pm}}\}_{x \in \D}$ to $R$.
    \end{lemma}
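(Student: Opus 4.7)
Plan. The equality $\ADV(P) = \ADV(R)$ will be shown by constructing norm-preserving correspondences between feasible solutions in both directions, and the WLOG eigenvector claim by an eigenspace-projection argument. The computational heart of the proof is the identity
\[
    I - \overline{O_x}\,\overline{O_y} \;=\; (I - O_x^\dagger O_y) \oplus (I - O_x O_y^\dagger) \;=\; 2\bigl(\Pi_x^+ \Pi_y^- + \Pi_x^- \Pi_y^+\bigr),
\]
where the first decomposition uses the block form $\overline{O_x} = \left(\begin{smallmatrix} 0 & O_x^\dagger \\ O_x & 0\end{smallmatrix}\right)$ and the second (with $\Pi_x^\pm := (I \pm \overline{O_x})/2$) uses $\overline{O_x}^2 = I$.

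The inequality $\ADV(R) \le \ADV(P)$ is the clean direction. Given a feasible $\{\ket{w_x}\}_{x \in \D}$ for $P$, I set
\[
    \ket{w_x^\pm} := \tfrac{1}{\sqrt{2}} \bigl( \ket{w_x} \oplus \pm (O_x \otimes I_{\mathcal{W}})\ket{w_x} \bigr),
\]
which by the block form is a $\pm 1$-eigenvector of $\overline{O_x}$ with $\norm{\ket{w_x^\pm}} = \norm{\ket{w_x}}$. Substituting into the reflection bilinear form and expanding via the first block identity, the two blocks cancel for $a = b$ and add for $a \neq b$, collapsing the expression to $\tfrac{1-ab}{2}\bra{w_x}(I - O_x^\dagger O_y)\ket{w_y}$; by $P$-feasibility this equals the required $(1-ab)\braket{\sigma_x^a}{\sigma_y^b}$, using $2\braket{\sigma_x^+}{\sigma_y^-} = \braket{\sigma_x}{\sigma_y} - \braket{\tau_x}{\tau_y}$. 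The construction simultaneously yields the eigenvector condition, and so proves the WLOG statement for solutions arising from $P$.

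For the converse inequality and the general WLOG statement, I start with any feasible $\{\ket{w_x^\pm}\}$ for $R$ and project onto the eigenspaces via $\ket{\tilde w_x^\pm} := (\Pi_x^\pm \otimes I)\ket{w_x^\pm}$; this is norm-non-increasing and the eigenvector condition holds trivially. The main obstacle is to verify that the reflection constraints survive projection: the projector identity reduces the bilinear form on the projections to $(1 - ab)\braket{\tilde w_x^a}{\tilde w_y^b}$, while applying the same identity to the original feasibility shows that this term differs from the required $(1-ab)\braket{\sigma_x^a}{\sigma_y^b}$ by an extra contribution $(1-ab)\braket{\phi_x^a}{\phi_y^b}$ in the wrong-eigenspace components $\ket{\phi_x^a} := (\Pi_x^{-a} \otimes I)\ket{w_x^a}$. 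My plan to eliminate these cross-terms is to combine the $a = b$ reflection constraints (which, by the same rewriting, yield antisymmetric identities between the $\ket{\phi}$'s and $\ket{\tilde w}$'s) with an ancilla augmentation that absorbs the surviving $\ket{\phi}$-components into a still-feasible solution in eigenspace form. Once the eigenvector condition is in hand, writing $\ket{w_x^\pm} = \tfrac{1}{\sqrt{2}}(\ket{u_x^\pm}\oplus\pm O_x\ket{u_x^\pm})$ collapses the reflection SDP to the single bipartite identity $\bra{u_x^+}(I - O_x^\dagger O_y)\ket{u_y^-} = \braket{\sigma_x}{\sigma_y} - \braket{\tau_x}{\tau_y}$ (with $\norm{\ket{u_x^\pm}} = \norm{\ket{w_x^\pm}}$ and the $a = b$ constraints becoming tautological), and a $P$-feasible $\ket{w_x}$ is obtained by exploiting the $+\leftrightarrow -$ relabeling symmetry of $R$ to average $(\ket{u_x^+}, \ket{u_x^-})$ into a common vector without increasing the maximum norm.
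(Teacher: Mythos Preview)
Your direction $\ADV(R)\le\ADV(P)$ is correct and essentially identical to the paper's: the vectors $\ket{w_x^\pm}=\tfrac{1}{\sqrt2}(\ket{w_x}\oplus\pm O_x\ket{w_x})$ are eigenvectors of $\overline{O_x}$ with the right norms, and your block identity cleanly verifies feasibility. This also yields the eigenvector property for the $R$-solutions produced from $P$, which together with $\ADV(P)=\ADV(R)$ is all that is actually claimed (the formal statement only asserts the existence of an \emph{optimal} eigenvector-form solution, not that every feasible one can be projected).

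The reverse direction, however, has two genuine gaps. First, the projection step: writing $\ket{w_x^a}=\ket{\tilde w_x^a}+\ket{\phi_x^a}$ and using your projector identity, the $a\ne b$ constraint becomes $\braket{\tilde w_x^+}{\tilde w_y^-}+\braket{\phi_x^+}{\phi_y^-}=\braket{\sigma_x^+}{\sigma_y^-}$, so the cross-eigenspace term $\braket{\phi_x^+}{\phi_y^-}$ is generally nonzero. Your ``ancilla augmentation'' would have to absorb it while keeping the oracle action fixed and the norms bounded by the original $\|\ket{w_x^\pm}\|$; but $\ket{\phi_x^+}$ lives in the $-1$-eigenspace of $\overline{O_x}\otimes I$, and no choice of workspace $\mathcal W'$ can move it to the $+1$-eigenspace, so the natural augmentations either break the eigenvector condition or redistribute norm between $+$ and $-$ uncontrollably. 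Second, even granting eigenvector form, your ``averaging'' step is the nontrivial one: you have the \emph{bipartite} identity $\bra{u_x^+}(I-O_x^\dagger O_y)\ket{u_y^-}=c_{xy}$ (and its $+\leftrightarrow-$ swap), but need a \emph{single} family $\ket{w_x}$ with $\bra{w_x}(I-O_x^\dagger O_y)\ket{w_y}=c_{xy}$. Linear combinations $\alpha\ket{u_x^+}+\beta\ket{u_x^-}$ produce the unwanted diagonal terms $\bra{u_x^+}(\cdot)\ket{u_y^+}$ and $\bra{u_x^-}(\cdot)\ket{u_y^-}$, and direct sums kill the cross terms you actually want; the $+\leftrightarrow-$ symmetry alone does not fix this.

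The paper sidesteps both issues by a single construction: it takes the eigenspace components $\ket{(w_x^s)_b}=\Pi_x^b\ket{w_x^s}$ of an \emph{arbitrary} $R$-solution and arranges them (with an $(I\oplus O_x^\dagger)$ twist) in a four-copy workspace so that the surviving pairings in $\bra{w_x}((I-O_x^\dagger O_y)\otimes I_4)\ket{w_y}$ are precisely the opposite-eigenvalue ones, which then telescope to $\sum_{s,t}\braket{\sigma_x^s}{\sigma_y^t}-st\braket{\sigma_x^s}{\sigma_y^t}=\braket{\sigma_x}{\sigma_y}-\braket{\tau_x}{\tau_y}$. The norm comes out as $\tfrac12(\|\ket{w_x^+}\|^2+\|\ket{w_x^-}\|^2)\le\ADV(R)$. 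This is exactly the ``pair $+$ with $-$'' cross-wiring your averaging was groping for, but it requires placing the components in separate summands rather than summing them in one copy.
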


    Incidentally, this characterization of state-conversion problems in terms of state-reflection problems gives rise to an explicit construction of the transducers used in the algorithm by Belovs, Jeffery and Yolcu~\cite{belovs2024taming}. Thus, we lift their existence proof to a constructive result, in \cref{thm:transducers}.

    The second benefit of considering state-reflection problems instead of state-conversion problems, is that it naturally introduces the concept of positive and negative witnesses. That is, for any feasible solution $w = \{\ket{w_x^{\pm}}\}_{x \in \D}$, we refer to $\ket{w_x^+}$ as the positive witness for $x$, and $\ket{w_x^-}$ as the negative witness for $x$. This way of looking at witnesses is a proper generalization of the notion of witness vectors that can be found elsewhere in the literature, e.g., in~\cite{reichardt2012span,ito2019approximate}. The size of these witness vectors determine the value of the objective function, and so we write $\mathsf{R}_x^+(w) := \norm{\ket{w_x^+}}^2$ and $\mathsf{R}_x^-(w) := \norm{\ket{w_x^-}}^2$, for every input $x \in \D$.

    Even though state-reflection problems are restrictions of state-conversion problems, there is still a somewhat surprising amount of flexibility within these objects. Indeed, we can freely apply invertible linear operators to $\ket{\sigma_x^{\pm}}$, and rescale the positive and negative witnesses, as described in the following lemma.

    \begin{lemma}[Informal version of \cref{lem:rescale-state-reflection-problem}]
        Let $D \in \mathcal{L}(\mathcal{V})$ be an invertible operator, and $\alpha_{\pm} \in \C$. Then, any feasible solution $w := \{\ket{w_x^{\pm}}\}_{x \in \D}$ for a state-reflection problem $R := \{(\ket{\sigma_x^+}, \ket{\sigma_x^-}, O_x)\}_{x \in \D}$ can be turned into a feasible solution $w' := \{\alpha_+\ket{w_x^+},\alpha_-\ket{w_x^-}\}_{x \in \D}$ for $R' := \{(\alpha_+D\ket{\sigma_x^+}, \alpha_-(D^{-1})^{\dagger}\ket{\sigma_x^{\pm}}, O_x)\}_{x \in \D}$. Consequently,
        \[\ADV(R) \leq \sqrt{\max_{x \in \D} \mathsf{R}_x^+(w) \cdot \max_{x \in \D} \mathsf{R}_x^-(w)}.\]
    \end{lemma}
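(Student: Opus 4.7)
The plan is to write the state-conversion adversary constraint for the state-conversion problem that $R$ represents, simplify it using the $\pm 1$-eigenspace normalization of the previous lemma, and observe that it reduces to an identity involving only \emph{cross} inner products $\braket{w_x^+}{w_y^-}$ and $\braket{\sigma_x^+}{\sigma_y^-}$. Both the transformation $\sigma^+ \mapsto \alpha_+ D\sigma^+$, $\sigma^- \mapsto \alpha_-(D^{-1})^\dagger\sigma^-$ on the data side and the rescaling $w^\pm \mapsto \alpha_\pm w^\pm$ on the witness side scale every such cross inner product by exactly the same factor, so feasibility is preserved. The geometric-mean bound on $\ADV(R)$ then follows by choosing $\alpha_\pm$ to balance the positive and negative witness maxes.

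\textbf{Cross inner product form.} The state-conversion problem represented by $R$ has inputs indexed by $(x,s) \in \D \times \{+,-\}$, sending $\ket{\sigma_x^s} \mapsto s\ket{\sigma_x^s}$ via the self-adjoint involution $\overline{O_x}$. The adversary constraint reads
\[
\bra{w_x^a}\bigl(I-\overline{O_x}^\dagger\overline{O_y}\bigr)\ket{w_y^b} \;=\; \braket{\sigma_x^a}{\sigma_y^b} - ab\braket{\sigma_x^a}{\sigma_y^b} \;=\; (1-ab)\braket{\sigma_x^a}{\sigma_y^b}.
\]
Using $\overline{O_x}^\dagger = \overline{O_x}$ together with $\overline{O_x}\ket{w_x^s} = s\ket{w_x^s}$ (inherited from the previous lemma), the left-hand side simplifies identically to $(1-ab)\braket{w_x^a}{w_y^b}$. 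Both sides vanish when $a = b$, and for $a \neq b$ one obtains the clean identity $\braket{w_x^+}{w_y^-} = \braket{\sigma_x^+}{\sigma_y^-}$ for all $x,y \in \D$ (its complex conjugate takes care of the case $a=-,b=+$).

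\textbf{Invariance under rescaling, and the bound.} Substituting $\ket{w_x^\pm} \mapsto \alpha_\pm\ket{w_x^\pm}$ and $\ket{\sigma_x^+}\mapsto \alpha_+ D\ket{\sigma_x^+}$, $\ket{\sigma_x^-}\mapsto \alpha_-(D^{-1})^\dagger\ket{\sigma_x^-}$, the left-hand side picks up $\alpha_+^*\alpha_-$, while on the right $\bra{\sigma_x^+}D^\dagger(D^{-1})^\dagger\ket{\sigma_y^-} = \braket{\sigma_x^+}{\sigma_y^-}$ (since $D^\dagger(D^{-1})^\dagger = (D^{-1}D)^\dagger = I$), so the right-hand side also picks up exactly $\alpha_+^*\alpha_-$. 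Feasibility of $w'$ for $R'$ follows; the eigenspace condition $\overline{O_x}\ket{w_x^\pm} = \pm\ket{w_x^\pm}$ transfers trivially because $\overline{O_x}$ is not touched. For the consequence, take $D = \alpha_+^{-1}I$ with $\alpha_\pm > 0$ and $\alpha_+\alpha_-=1$, so that $R' = R$ as sets of data while the witness sizes of $w'$ become $\alpha_+^2\mathsf{R}_x^+(w)$ and $\alpha_-^2\mathsf{R}_x^-(w)$. Choosing $\alpha_+^2 = \sqrt{\max_x\mathsf{R}_x^-(w)/\max_x\mathsf{R}_x^+(w)}$ equalizes both maxes at $\sqrt{\max_x\mathsf{R}_x^+(w)\cdot\max_x\mathsf{R}_x^-(w)}$, which is the claimed bound.

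\textbf{Main obstacle.} The only subtle point is recognizing that the adversary objective on a state-reflection problem is intrinsically the geometric mean $\sqrt{\max\mathsf{R}^+\cdot\max\mathsf{R}^-}$, rather than, say, the sum $\max(\mathsf{R}^+ + \mathsf{R}^-)$ inherited naively from state conversion; equivalently, one must verify that the $\alpha_+\alpha_-=1$ rescaling genuinely leaves the underlying reflection problem invariant (it does, since it merely rescales the spanning vectors of the reflected eigenspaces). Once that invariance is articulated, the bound reduces to a one-parameter AM--GM balancing over the scale $\alpha_+/\alpha_-$.
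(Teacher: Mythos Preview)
Your proof is correct and follows essentially the same route as the paper: reduce the adversary constraint to the cross terms $\braket{w_x^+}{w_y^-} = \braket{\sigma_x^+}{\sigma_y^-}$ via the $\pm1$-eigenspace property, then observe that both sides scale by the same factor $\alpha_+^*\alpha_-$ after inserting $D^\dagger(D^{-1})^\dagger = I$. You are more explicit than the paper in deriving the geometric-mean bound---the paper states it as a consequence without writing out the balancing $D = \alpha_+^{-1}I$, $\alpha_+\alpha_-=1$---but that is a minor expository difference, not a different argument.
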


    \paragraph{The framework.} The fundamental idea of the generalized graph composition framework is to use state-reflections problems of a specific form, such that its states can be interpreted as graph-theoretic objects.

    To that end, let $e$ be a hyperedge, with vertices $V := N(e)$. For every input $x \in \D$, we associate vectors $\delta_x^e, U_x^e \in \C^V$ to the hyperedge $e$, such that the sum of the entries of $\delta_x^e$ vanish and $U_x^e$ is constant on $\supp(\delta_x^e)$. We say that the state-reflection problem $R = \{(\delta_x^e, U_x^e, O_x)\}_{x \in \D}$ implements the ``net-flow'' $\delta_x^e$ and ``potential function'' $U_x^e$ on hyperedge $e$, for input $x \in \D$, as displayed in \cref{fig:hyperedge}. We aptly refer to these as \textit{hyperedge problems}.

    \begin{figure}[!ht]
        \centering
        \begin{tikzpicture}[yscale = -1]
            \node[draw, rounded corners = .3em] (P) at (0,0) {$e$};
            \draw ({cos(0)},{sin(0)}) to (P);
            \fill ({cos(0)},{sin(0)}) circle[radius = .2em];
            \draw[gray!60] ({cos(40)},{sin(40)}) to (P);
            \fill[gray!40] ({cos(40)},{sin(40)}) circle[radius = .2em];
            \draw[gray!60] ({cos(80)},{sin(80)}) to (P);
            \fill[gray!40] ({cos(80)},{sin(80)}) circle[radius = .2em];
            \draw[gray!60] ({cos(120)},{sin(120)}) to (P);
            \fill[gray!40] ({cos(120)},{sin(120)}) circle[radius = .2em];
            \draw ({cos(160)},{sin(160)}) to (P);
            \fill ({cos(160)},{sin(160)}) circle[radius = .2em];
            \draw ({cos(200)},{sin(200)}) to (P);
            \fill ({cos(200)},{sin(200)}) circle[radius = .2em];
            \draw[gray!60] ({cos(240)},{sin(240)}) to (P);
            \fill[gray!40] ({cos(240)},{sin(240)}) circle[radius = .2em];
            \draw[gray!60] ({cos(280)},{sin(280)}) to (P);
            \fill[gray!40] ({cos(280)},{sin(280)}) circle[radius = .2em];
            \draw[gray!60] ({cos(320)},{sin(320)}) to (P);
            \fill[gray!40] ({cos(320)},{sin(320)}) circle[radius = .2em];

            \draw[->] ({1.5*cos(160)},{1.5*sin(160)}) node[left] {$(\delta_x^e)_v$} to ({1.1*cos(160)},{1.1*sin(160)}) node[below] {$v$};
            \draw[->] ({1.5*cos(200)},{1.5*sin(200)}) node[left] {$(\delta_x^e)_{v'}$} to ({1.1*cos(200)},{1.1*sin(200)}) node[below] {$v'$};
            \draw[<-] ({1.5*cos(0)},{1.5*sin(0)}) node[right] {$(\delta_x^e)_{v''}$} to ({1.1*cos(0)},{1.1*sin(0)}) node[below] {$v''$};
        \end{tikzpicture}
        \caption{Pictorial representation of a state-reflection problem implementing a net-flow $\delta_x^e$ and potential function $U_x^e$, for all $x \in \D$, onto the hyperedge $e$. The black vertices represent $\supp(\delta_x^e) \subseteq N(e)$, and so the potential function $U_x^e$ has to be constant on this set, but can be different on the gray vertices. We can think of this state-reflection program $R$ providing a connection between the vertices $v$, $v'$ and $v''$ (the black edges), whilst ``cutting'' all the connections from this hyperedge towards the other vertices (the gray edges).}
        \label{fig:hyperedge}
    \end{figure}
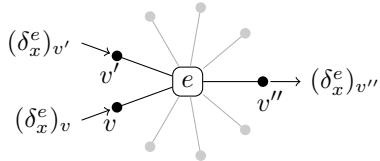

    The idea is that we can construct feasible solutions for hyperedge problems $R = \{(\delta_x^e,U_x^e,O_x)\}_{x \in \D}$ from those for state-conversion problems by choosing suitable invertible linear operators $D$. This is not possible in general, i.e., we cannot always find an invertible linear operator $D$ that maps the states from a state-conversion problem into the form we require. However, we \textit{can} always do this for function evaluation, as we can set $\ket{\sigma_x} := \ket{\perp}$, and $\ket{\tau_x} := \ket{f(x)}$, and let $D$ act as identity on the input states, and $-I$ on the output states. Similarly for database updates, we can set $\ket{\sigma_x} = \ket{S_x}$ and $\ket{\tau_x} = \ket{T_x}$, with the same choice for $D$. In these cases, we can interpret the flows and potential functions as a switch, providing a link between input and output vertices, as described in \cref{fig:function-evaluation-database-updates}.

    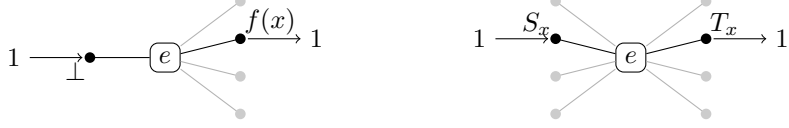
\begin{figure}[!ht]
        \centering
        \begin{tikzpicture}[vertex/.style = {draw, rounded corners = .3em}]
            \node[vertex] (e) at (0,0) {$e$};
            \draw (e) to (-1,0) node[below left=-.2em] {$\perp$};
            \fill (-1,0) circle[radius=.2em];
            \draw[->] (-1.8,0) node[left] {$1$} to (-1.1,0);

            \draw[gray!60] (e) to (1,.75);
            \fill[gray!40] (1,.75) circle[radius=.2em];
            \draw (e) to (1,.25) node[above right=-.2em] {$f(x)$};
            \fill (1,.25) circle[radius=.2em];
            \draw[->] (1.1,.25) to (1.8,.25) node[right] {$1$};
            \draw[gray!60] (e) to (1,-.25);
            \fill[gray!40] (1,-.25) circle[radius=.2em];
            \draw[gray!60] (e) to (1,-.75);
            \fill[gray!40] (1,-.75) circle[radius=.2em];
        \end{tikzpicture}\hspace{5em}\begin{tikzpicture}[vertex/.style = {draw, rounded corners = .3em}]
            \node[vertex] (e) at (0,0) {$e$};
            \draw[gray!60] (e) to (-1,.75);
            \fill[gray!40] (-1,.75) circle[radius=.2em];
            \draw (e) to (-1,.25) node[above left=-.2em] {$S_x$};
            \fill (-1,.25) circle[radius=.2em];
            \draw[<-] (-1.1,.25) to (-1.8,.25) node[left] {$1$};
            \draw[gray!60] (e) to (-1,-.25);
            \fill[gray!40] (-1,-.25) circle[radius=.2em];
            \draw[gray!60] (e) to (-1,-.75);
            \fill[gray!40] (-1,-.75) circle[radius=.2em];

            \draw[gray!60] (e) to (1,.75);
            \fill[gray!40] (1,.75) circle[radius=.2em];
            \draw (e) to (1,.25) node[above right=-.2em] {$T_x$};
            \fill (1,.25) circle[radius=.2em];
            \draw[->] (1.1,.25) to (1.8,.25) node[right] {$1$};
            \draw[gray!60] (e) to (1,-.25);
            \fill[gray!40] (1,-.25) circle[radius=.2em];
            \draw[gray!60] (e) to (1,-.75);
            \fill[gray!40] (1,-.75) circle[radius=.2em];
        \end{tikzpicture}
        \caption{Flow and potential function for function evaluation (left) and database update (right). In both cases, the potential function is $1$ and $0$ on all the black and gray vertices, respectively. Intuitively, we can think of the state-reflection problem implementing a switch between an input vertex on the left, and an output vertex on the right.}
        \label{fig:function-evaluation-database-updates}
    \end{figure}

    Another way in which we can obtain feasible solutions for hyperedge problems is from span programs. Indeed, for any span program $\mathcal{P}$ on $\D$, we take a simple edge $e = vw$. This edge now acts as a switch that can be either open, when $x$ is a negative input for $\mathcal{P}$, or closed otherwise. If the switch is closed, a unit flow can pass through the edge, whereas when it is open, the switch exhibits a unit potential difference. The witnesses from the span program framework now exactly carry over to feasible solutions to the corresponding hyperedge problem, as proved in \cref{thm:hyperedge-from-span-program}. We provide a pictorial representation of the resulting hyperedge problem in \cref{fig:span-programs}.

    \begin{figure}[!ht]
        \centering
        \begin{tikzpicture}[vertex/.style = {draw, rounded corners = .3em}]
            \node[vertex] (P) at (1,0) {$e$};
            \draw (0,0) node[below] {$v$} to (P) to (2,0) node[below] {$w$};
            \fill (0,0) circle[radius = .2em];
            \fill (2,0) circle[radius = .2em];
            \draw[->] (-.5,0) node[left] {$1$} to (-.1,0);
            \draw[<-] (2.5,0) node[right] {$1$} to (2.1,0);
        \end{tikzpicture}\hspace{3em}\begin{tikzpicture}[vertex/.style = {draw, rounded corners = .3em}]
            \node[vertex] (P) at (1,0) {$e$};
            \draw[gray!60] (0,0) node[below] {$v$} to (P) to (2,0) node[below] {$w$};
            \fill[gray!40] (0,0) circle[radius = .2em];
            \fill[gray!40] (2,0) circle[radius = .2em];
        \end{tikzpicture}
        \caption{The hyperedge problem implemented by a span program. On the left-hand side, we have a positive input that implements a unit flow through the edge $e$. This means that the potential function has to be constant on both sides of the edge. On the right-hand side, we have a negative input blocking any flow going through the edge, and thereby allowing there to be a potential difference across the edge.}
        \label{fig:span-programs}
    \end{figure}
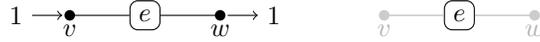

    An instance of the generalized graph composition framework consists of the following ingredients:
    \begin{enumerate}[nosep]
        \item A hypergraph $G = (V,E)$, with weights $w_e > 0$ for all $e \in E$, and a set of boundary vertices $B \subseteq V$, and internal vertices $V \setminus B$.
        \item A hyperedge problem $R^e = \{(\delta_x^e, U_x^e, O_x)\}_{x \in \D}$ for every edge $e \in E$, such that
        \begin{enumerate}[nosep]
            \item The net-flow vanishes at every internal vertex, i.e., $\sum_{e \in N(v)} \delta_x^e(v) = 0$, where $N(v) \subseteq E$ is the set of hyperedges adjacent to $v \in V \setminus B$.
            \item The potential function is uniquely defined at every vertex, i.e., $U_x^e(v)$ is the same for all hyperedges $e$ adjacent to $v$.
        \end{enumerate}
    \end{enumerate}
    For all $v \in B$, we define the net-flow $\delta_x(v) = \sum_{e \in N(v)} \delta_x^e(v)$, and similarly we let $U_x$ be the potential function on the boundary vertices. The framework combines feasible solutions to all the hyperedge problems $R^e$ with complexities $(\mathsf{R}^e)_x^{\pm}(w^e)$ into a feasible solution to the state-reflection problem $\{(\delta_x,U_x,O_x)\}_{x \in \D}$, with complexities $\mathsf{R}_x^{\pm}(w) = \sum_{e \in E} w_e^{\pm1}(\mathsf{R}^e)_x^{\pm}(w^e)$. For the details, we refer to \cref{thm:generalized-graph-composition}.

    \paragraph{Resistance-cut theorem.} In its general form, the generalized graph composition framework can be somewhat cumbersome to use. Therefore, we derive a simpler but less powerful version, akin to \cite[Theorem~4.3]{cornelissen2025quantum}.

    \begin{theorem}[Informal version of the resistance-cut theorem, \cref{thm:path-cut-theorem}]
        Let $G = (V,E)$ be a hypergraph, with boundary vertices $B \subseteq V$. For every hyperedge $e \in E$, let $R^e$ be a hyperedge problem with for all inputs $x \in \D$ a unit flow between two vertices $\delta_x^e = \mathbbm{1}_v - \mathbbm{1}_w$, and corresponding potential function $U_x^e = \mathbbm{1}_v + \mathbbm{1}_w$. Let $G(x) = (V,E(x))$ be the connected undirected graph that contains all these connections, and let $B_x \subseteq B$ be the subset of boundary vertices in the connected component of $G(x)$. Let $\delta_x$ be a net-flow on the boundary vertices with support on $B_x$. On the other hand, let $C_x \subseteq E$ be a set of hyperedges that cuts the connected component of $G(x)$ from the other boundary vertices $V \setminus B_x$. Then, we can find a feasible solution to the state-reflection problem $\{(\delta_x, \mathbbm{1}_{B_x}, O_x)\}_{x \in \D}$, with complexities
        \[\mathsf{R}_x^+ = R_{\mathrm{eff}}(G(x), (w_e \cdot (\mathsf{R}^e)_x^+)_{e \in E(x)}; \delta_x)\footnote{Here, we use a slightly more general definition for the effective resistance. See \cref{def:effective-resistance}.}, \qquad \text{and} \qquad \mathsf{R}_x^- = \sum_{e \in C_x} \frac{(\mathsf{R}^e)_x^-}{w_e}.\]
    \end{theorem}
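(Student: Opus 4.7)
The plan is to apply \cref{thm:generalized-graph-composition} to the collection $\{R^e\}_{e \in E}$, after rescaling each hyperedge problem's feasible solution via \cref{lem:rescale-state-reflection-problem}, to produce the desired composed solution $\{(\delta_x, \mathbbm{1}_{B_x}, O_x)\}_{x \in \D}$. Since the rescaling lemma lets me independently scale the positive and negative witnesses of each hyperedge by arbitrary scalars, I can treat the two witness types separately.

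For the positive witness at input $x$, I would pick the energy-minimizing electrical flow $i^x$ on the weighted graph $G(x)$ with boundary divergence $\delta_x$, where the edge resistances are $r_e := w_e \cdot (\mathsf{R}^e)_x^+$. For each $e \in E(x)$ I scale the hyperedge's positive witness by $i^x_e$, and for $e \notin E(x)$ by zero. Kirchhoff's current law on $G(x)$ then ensures that the composed flow equals $\delta_x$ on the boundary and vanishes at every internal vertex, as required by \cref{thm:generalized-graph-composition}. The positive complexity becomes $\sum_{e \in E(x)} r_e (i^x_e)^2$, which by the variational characterization of effective resistance is precisely $R_{\mathrm{eff}}(G(x), r; \delta_x)$.

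For the negative witness, I would define a global potential $U : V \to \R$ that equals $1$ on the connected component of $G(x)$ containing $B_x$ and $0$ on every other vertex; on the boundary this restricts to $\mathbbm{1}_{B_x}$. Since both endpoints of any edge $e \in E(x)$ lie in the same component, $U$ takes the same value on them, so these edges carry no potential drop and contribute nothing to the composed negative complexity. Only the cut edges $e \in C_x$ see a unit potential drop, each contributing $(\mathsf{R}^e)_x^- / w_e$ and summing over $C_x$ to the advertised bound. The main obstacle will be carefully verifying feasibility of the rescaled hyperedge solutions—in particular, checking that $U$ is globally consistent (constant on $\supp(\delta_x^e)$ for every $e$), that zero rescaling genuinely nullifies the contribution of edges whose unscaled complexity may be infinite, and that the composed $\delta_x^e, U_x^e$ structure matches the target state-reflection problem at the boundary vertices; once these checks are in place, the formula $\mathsf{R}_x^{\pm} = \sum_e w_e^{\pm 1}(\mathsf{R}^e)_x^{\pm}$ from \cref{thm:generalized-graph-composition} delivers the advertised bounds immediately.
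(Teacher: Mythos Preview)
Your proposal is correct and follows essentially the same approach as the paper: route the minimum-energy flow through $G(x)$ (with edge resistances $w_e(\mathsf{R}^e)_x^+$) to obtain the effective-resistance positive complexity, and set the potential function to $1$ on the reachable component and $0$ elsewhere so that only the cut edges in $C_x$ contribute to the negative complexity. If anything, your sketch is more careful than the paper's terse proof about invoking the rescaling lemma and verifying the Kirchhoff/consistency conditions needed for \cref{thm:generalized-graph-composition}.
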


    We show several direct applications of this theorem in \cref{subsec:examples}.

    \subsection{Unification results}

    The first contribution we make in this work is a unification of several algorithmic frameworks. We first formally provide the (non-surprising) result that the  generalized graph composition framework indeed generalizes the graph composition framework.

    \begin{theorem}[Informal version of \cref{thm:graph-composition}]
        Any instance of the graph composition framework from \cite{cornelissen2025quantum} is also an instance to generalized graph composition.
    \end{theorem}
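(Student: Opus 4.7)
The plan is to construct a direct translation from the original graph composition framework into the generalized one and then verify that all the structural requirements of the latter are met by this translation. Given a graph composition instance consisting of an undirected graph $G=(V,E)$, source and sink vertices $s,t\in V$, edge weights $(w_e)_{e\in E}$, and a span program $\mathcal{P}_e$ on $\D$ for each edge $e\in E$, I would view $G$ as a hypergraph in which every hyperedge has exactly two endpoints, take the boundary set to be $B=\{s,t\}$, keep the same weights, and for each $e=vw\in E$ invoke \cref{thm:hyperedge-from-span-program} to convert $\mathcal{P}_e$ into a hyperedge problem $R^e$. By the description depicted in \cref{fig:span-programs}, for positive inputs $x$ (those with $e\in E_x$) this $R^e$ realizes the unit flow $\delta_x^e=\mathbbm{1}_v-\mathbbm{1}_w$ with constant potential across $e$, and for negative inputs (those with $e\notin E_x$) it cuts the edge, allowing a unit potential difference between $v$ and $w$.

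Next, I would verify the two conditions required of a generalized graph composition instance by exhibiting, for every $x\in\D$, consistent flows $\{\delta_x^e\}_{e\in E}$ and a consistent potential function $U_x$. On a positive input (where $s$ and $t$ are connected in $G(x)$), I would choose a unit $s$–$t$ flow supported on $E_x$; this exists by standard electrical-network theory and automatically satisfies Kirchhoff's law at every internal vertex, giving condition (a), while condition (b) is satisfied by setting $U_x$ to be $1$ on the connected component of $s$ in $G(x)$ and $0$ elsewhere (constant across every present edge, and agreeing with the required support on absent edges). On a negative input I would take the zero flow (trivially satisfying (a)) and the potential $U_x=\mathbbm{1}_{B_x}$ with $B_x$ equal to the boundary vertex currently in $s$'s component; since every absent edge permits an arbitrary potential jump, condition (b) holds there, and across present edges both endpoints share the same component, so $U_x$ is constant on their support.

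With the instance assembled, I would apply the resistance-cut theorem to recover the graph composition complexities: the positive witness size equals the effective $s$–$t$ resistance in $G(x)$ with edge resistances $w_e\cdot(\mathsf{R}^e)_x^+$, and the negative witness size equals $\sum_{e\in C_x}(\mathsf{R}^e)_x^-/w_e$ over any $s$–$t$ cut $C_x$ in $G\setminus E_x$. These are exactly the positive and negative complexities proved in the graph composition framework of \cite{cornelissen2025quantum}, so the generalized instance computes the same $s$–$t$ connectivity function with the same query complexity bound.

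The main obstacle I anticipate is not any single computation but rather the clean identification of boundary data. The original framework outputs a boolean, while the generalized framework outputs a feasible solution to a state-reflection problem $\{(\delta_x,\mathbbm{1}_{B_x},O_x)\}_{x\in\D}$; I have to argue that the specific choice of boundary flow $\delta_x$ (namely $\mathbbm{1}_s-\mathbbm{1}_t$ on positive inputs and $0$ on negative ones) together with $\mathbbm{1}_{B_x}$ corresponds, via the function-evaluation template of \cref{fig:function-evaluation-database-updates} applied at the boundary, to the $s$–$t$ connectivity decision problem. Once that correspondence is recorded, everything else is bookkeeping on top of \cref{thm:hyperedge-from-span-program} and the resistance-cut theorem.
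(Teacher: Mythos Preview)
Your translation is the same as the paper's at the structural level: view each edge as a two-vertex hyperedge, convert the span program $\mathcal{P}_e$ into a hyperedge problem via \cref{thm:hyperedge-from-span-program}, choose a unit $st$-flow on positive inputs and a suitable potential on negative inputs, and assemble via \cref{thm:generalized-graph-composition}. For the informal statement this is fine.

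Where you diverge from the paper is in the negative-witness analysis, and your claim there is not quite right. You take a $\{0,1\}$-valued potential and appeal to the resistance-cut theorem, obtaining $\sum_{e\in C_x}(\mathsf{R}^e)_x^-/w_e$ over an $st$-cut $C_x$. That is only the path-cut \emph{upper bound} from \cite{cornelissen2025quantum}, not the exact negative witness. The formal \cref{thm:graph-composition} asserts that the witness sizes are recovered \emph{exactly}, and the paper achieves this by using, on each negative input, the \emph{maximum-energy} potential function on the contracted graph $\overline{G}(x)$ (with resistances $e\mapsto \mathsf{R}_x^-(w^e)^{-1}$), which yields precisely
\[
\mathsf{R}_x^-(w)=R_{\mathrm{eff}}\bigl(\overline{G}(x),\,e\mapsto \mathsf{R}_x^-(w^e)^{-1};\,s\leftrightarrow t\bigr)^{-1},
\]
the formula from \cite[Theorem~4.2]{cornelissen2025quantum}. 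Your cut-sum equals this only when the chosen cut happens to be optimal; in general it is strictly larger. A second, smaller point: the resistance-cut theorem as stated is tailored to hyperedges that route between two of many output vertices, not to a boolean $st$-connectivity output; you are really invoking \cref{thm:generalized-graph-composition} directly and then passing back through \cref{thm:hyperedge-from-span-program}, which is what the paper does explicitly. Also note that to get non-unit flow values on individual edges you must rescale each edge's witnesses (via \cref{lem:rescale-state-reflection-problem}), a step you leave implicit but the paper spells out.
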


    Next, we show that generalized graph compositions encapsulate instances of the decision tree frameworks that are introduced by \cite{beigi2020quantum} and \cite{cornelissen2025improved}.

    \begin{theorem}[Informal version of \cref{thm:BT20,thm:CMP25}]
        Every instance of the decision-tree frameworks of Beigi and Taghavi~\cite{beigi2020quantum} and Cornelissen, Mande and Patro~\cite{cornelissen2025improved}, is also an instance of the generalized graph composition framework.
    \end{theorem}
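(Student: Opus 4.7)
The plan is to take the decision tree itself as the backbone of a generalized graph composition. Given a decision tree $T$ for a function $f \colon \D \to \Sigma$, with internal nodes labeled by query indices $i$ and outgoing edges labeled by outcomes $j$, I would build a hypergraph whose vertex set consists of the nodes of $T$ together with one extra output vertex $t_y$ for each $y \in \Sigma$, with the root $r$ and all of the $t_y$ designated as boundary vertices. Each leaf of $T$ labeled by $y$ is joined to $t_y$ by a single auxiliary edge.

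For each tree edge $e$ from an internal node $u$ (querying index $i_u$) to a child $v$ corresponding to outcome $j$, I would install the hyperedge problem from \cref{thm:hyperedge-from-span-program} built from the single-index span program for the predicate $x_{i_u} = j$. For the auxiliary leaf-to-output edges, I would install a trivial always-closed switch (a hyperedge problem from the trivially-true span program, which yields a unit flow at zero query cost). Then, for every $x \in \D$, the set of closed switches in $G(x)$ is exactly the unique root-to-leaf path determined by $x$, extended by one auxiliary edge to the sink $t_{f(x)}$. I would set $\delta_x = \mathbbm{1}_r - \mathbbm{1}_{t_{f(x)}}$ and $U_x = \mathbbm{1}_r + \mathbbm{1}_{t_{f(x)}}$, so that the resulting state-reflection problem implements the function-evaluation switch of \cref{fig:function-evaluation-database-updates}.

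The resistance-cut theorem then directly produces a feasible solution. The positive complexity is the effective resistance of a simple path, which reduces to a sum of the per-edge positive witness sizes weighted by $w_e$ along the root-to-leaf path of $x$. For the negative direction, I would choose for each $x$ the cut $C_x$ consisting of all tree edges $u \to v$ where $u$ lies on the root-to-leaf path of $x$ but the label on $u \to v$ does not match $x_{i_u}$; this $C_x$ separates $r$ from $V \setminus B_x$, and the negative complexity becomes a sum over these ``sibling'' edges of the reciprocally weighted span program negative witness sizes. By selecting the per-edge weights $w_e$ to match the Beigi--Taghavi weighting (for \cref{thm:BT20}), respectively the improved weighting of Cornelissen--Mande--Patro (for \cref{thm:CMP25}), one recovers exactly the positive and negative witness quantities defined in those frameworks.

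The main obstacle I expect is the precise matching of the negative witness complexity to the CMP25 improved scheme, because its non-boolean construction distributes witness mass across sibling edges in a more delicate way than a uniform cut would suggest; verifying that the cut $C_x$ above produces the prescribed $\mathsf{R}_x^-$ (up to the chosen weights $w_e$) will require unpacking the CMP25 weighting in detail. For Beigi--Taghavi this step is essentially direct, since their negative complexity is already phrased as a sum over these sibling edges. A minor piece of housekeeping is to verify that the always-closed leaf-to-output switches contribute zero to the query complexity, which follows because the trivially-true span program makes no queries and has witness sizes independent of $x$.
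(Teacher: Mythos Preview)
Your construction diverges from the paper's in a way that creates a genuine gap for the Beigi--Taghavi result. The paper does \emph{not} place a simple edge with span program $[x_{i_u}=j]$ on each parent--child pair. Instead, for every internal node $v$ with children $C_v$ it installs a \emph{single hyperedge} on $\{v\}\cup C_v$, implementing the function-evaluation switch ``query $x_{i_v}$ and route to the matching child''. The feasible solution at that hyperedge comes from PSD matrices $X,Y\in\C^{C_v\times C_v}$ with $X[c,c']-Y[c,c']=1$ for $c\neq c'$; the per-node cost when the path goes through child $c$ is $X[c,c]+Y[c,c]$. The $\sqrt{GT}$ bound of \cref{thm:BT20} is then obtained by choosing $X=vv^T$ of rank one, which keeps the diagonal entries bounded independently of $|C_v|$.

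Your simple-edge decomposition cannot match this when the input alphabet is large. At a node with $k$ children and edge weights $w_1,\dots,w_k$, the positive contribution on the path is $w_{j^*}$ and the negative contribution from your cut is $\sum_{j\neq j^*}1/w_j$. Letting $j^*=\arg\max_j w_j$, we have $\sum_{j\neq j^*}1/w_j\ge (k-1)/w_{j^*}$, so $(\max_{j} w_j)\cdot(\max_{j}\sum_{j'\neq j}1/w_{j'})\ge k-1$. Already for a depth-$1$ tree with $k$ leaves this forces complexity $\Omega(\sqrt{k})$, whereas BT20 (and the trivial one-query algorithm) give $O(1)$. The loss is exactly that splitting one $k$-ary query into $k$ independent boolean span programs $[x_{i_u}=j]$ discards the correlations between siblings that the rank-one matrix $X$ exploits. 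So your assessment is inverted: the sentence ``for Beigi--Taghavi this step is essentially direct'' is precisely where the argument fails.

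For \cref{thm:CMP25} the input alphabet is binary, so $k=2$ and your simple-edge construction does recover the right bound; this is essentially the case already handled by the ordinary graph composition framework in \cite{cornelissen2025quantum}, and extending to non-boolean output via multiple sinks $t_y$ is fine. The non-trivial content of the present section is the hyperedge-per-node construction with the SDP of \cref{thm:optimal-weight-assignment}, which is what makes the non-boolean-input case go through.
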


    This unification not only leads to a new interpretation of these aforementioned techniques, but it also allows for generalizing the weighting scheme introduced in \cite{cornelissen2025improved} to the setting of non-boolean inputs, as considered in \cite{beigi2020quantum}. Whether this leads to significant improvements for specific applications remains to be seen, we leave this for future research.

    Next, we show how the generalized graph composition framework encapsulates the quantum divide-and-conquer framework. We recall that the simple graph composition framework already recovers Strategy~1 from \cite{childs2025quantum}, and we show here that we can also recover Strategy~2.

    \begin{theorem}[Informal version of \cref{thm:divide-and-conquer}]
        Any instance of Strategy~2 of the divide-and-conquer framework from \cite{childs2025quantum} is also an instance of the generalized graph composition framework.
    \end{theorem}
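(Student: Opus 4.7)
The plan is to encode an instance of Strategy~2 as a hypergraph whose boundary vertices carry the global input/output registers of the composed state-conversion problem, and whose hyperedges correspond either to a subproblem solved recursively or to the outer algorithm that combines the subproblems' outputs.

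First I would recall the structure of Strategy~2 in \cite{childs2025quantum}: the input problem $P$ is decomposed into subproblems $P_1,\dots,P_k$ together with an outer state-conversion algorithm $A$ that, on input $x$, coherently accesses the output states prepared by (a superposition over) the $P_i$'s and uses them to prepare the output of $P$. Each of the $P_i$ and $A$ is itself a state-conversion problem, so by the structural lemma (the informal lemma preceding \cref{thm:reformulation-state-conversion}) each is equivalent to a state-reflection problem with positive and negative witnesses. Applying the rescaling lemma with $D$ acting as $+I$ on the ``input'' register and $-I$ on the ``output'' register of that subroutine, exactly as in the construction behind \cref{fig:function-evaluation-database-updates}, converts each such state-reflection problem into a hyperedge problem whose net-flow $\delta_x^e$ is a unit flow from the input vertex of that subroutine to its output vertex, and whose potential function $U_x^e$ takes the value $1$ on both of these vertices.

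Next I would glue these edges into a single hypergraph. Each $P_i$ contributes an edge whose input endpoint is an internal vertex $v_i^{\text{in}}$ and whose output endpoint is an internal vertex $v_i^{\text{out}}$. The outer algorithm $A$ contributes a hyperedge whose support consists of the global input boundary vertex, every $v_i^{\text{in}}$, every $v_i^{\text{out}}$, and the global output boundary vertex. Because each hyperedge keeps its potential constant on the vertices in the support of its flow, and because the unit flow leaving a $P_i$-edge at $v_i^{\text{out}}$ is precisely the unit flow consumed by the $A$-edge there (and symmetrically at $v_i^{\text{in}}$), both the flow-conservation and single-valued-potential conditions at the internal vertices are satisfied. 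Choosing the weights $w_e$ as the usual geometric-mean normalization of the ratio of positive to negative complexities of each subroutine, and then invoking \cref{thm:generalized-graph-composition}, produces a feasible solution to the overall state-reflection problem $\{(\delta_x,U_x,O_x)\}_{x \in \D}$ for $P$ with complexities $\sum_e w_e^{\pm 1}(\mathsf{R}^e)_x^{\pm}$.

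The main obstacle will be showing that this aggregated complexity matches the product-of-complexities bound that Strategy~2 is designed to yield in \cite{childs2025quantum}, whose analysis is phrased through a recursive adversary composition rather than through a hypergraph-theoretic flow/potential picture. The bookkeeping of the invertible operators $D$ used to align the input/output registers between the outer algorithm and the subproblems, together with the choice of weights $w_e$, must be done carefully so that the resulting sum collapses to the Strategy~2 bound without spurious logarithmic or normalization factors; once this matching is in hand, the rest of the argument is a direct application of the generalized graph composition theorem and the rescaling lemma.
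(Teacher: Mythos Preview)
Your proposal misidentifies the structure of Strategy~2 as formalized in this paper. In \cref{thm:divide-and-conquer}, Strategy~2 is \emph{sequential and branching}: first compute an auxiliary function $f^{\mathrm{aux}}:\Sigma^n\to\Lambda$, obtain the label $s=f^{\mathrm{aux}}(x)$, and then compute $h^{(s)}(x)=g^{(s)}\big((f_i^{(s)}(x))_{i\in A}\big)$. The resulting complexity bound is \emph{additive}, $(\mathsf{R}^{\mathrm{aux}})_x^{\pm}+(\mathsf{R}^{f^{\mathrm{aux}}(x)})_x^{\pm}$, not the product-of-complexities you anticipate. Consequently the paper's proof is a two-line argument: take one function-evaluation hyperedge for $f^{\mathrm{aux}}$ (with input vertex $\perp$ and output vertices indexed by $\Lambda$), and at each output vertex labeled $s$ attach a hyperedge for $h^{(s)}$; send unit flow along the two hyperedges on the path $\perp\to s\to h^{(s)}(x)$, and set the potential to $0$ off this path. \cref{thm:generalized-graph-composition} then immediately yields the additive bound.

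Your construction instead models a \emph{parallel} decomposition with an outer algorithm $A$ that calls subproblems $P_1,\dots,P_k$ coherently as oracles, and realizes $A$ as a single hyperedge touching both the input and the output vertices of every $P_i$. This is not Strategy~2; it is closer to general subroutine composition, and it does not fit the hyperedge picture in the way you describe. A hyperedge encodes one state-reflection problem with a fixed net-flow and potential; it cannot by itself model ``$A$ queries $P_i$, receives an answer, and continues,'' possibly several times or in superposition. The vertex-level gluing you propose only captures one-shot data passing, not oracle access. Even in the one-call case, forcing $A$'s potential to be constant on all of $\{\text{global in}\}\cup\{v_i^{\text{in}},v_i^{\text{out}}\}_i\cup\{\text{global out}\}$ collapses the potential landscape and prevents you from separating the correct output from the wrong ones in the negative analysis. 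This is why your last paragraph flags the bookkeeping as the main obstacle: the difficulty is real, but it stems from aiming at the wrong target rather than from a missing detail.
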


    Finally, we show that any instance of the unified quantum walk search framework, introduced in \cite{apers2021unified} is also an to the generalized graph composition framework. Our result holds in a slightly different setting, though, which we elaborate on in \cref{subsec:quantum-walk-search}.

    \begin{theorem}[Informal version of \cref{thm:quantum-walk-detection}]
        Any instance of the detection version of quantum walk search from \cite{apers2021unified}, is also an instance of the generalized graph composition framework.
    \end{theorem}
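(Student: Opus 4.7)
The plan is to realize the reversible Markov chain $(V,P,\pi)$ together with the input-dependent marked-set predicate $M_x\subseteq V$ as a weighted hypergraph, and then read off the complexities from the resistance-cut theorem. Viewing detection as a function evaluation (``is $M_x$ empty?''), I would first extend $V$ by two boundary vertices: a source $\star$ and a sink $\bullet$. For every transition edge $uv$ of the Markov chain I include a \emph{transport} hyperedge with weight $w_{uv}=\pi_u P_{uv}$; its hyperedge problem is input-independent, implementing unit flow $\delta^{uv}=\mathbbm{1}_v-\mathbbm{1}_u$ and potential $U^{uv}=\mathbbm{1}_v+\mathbbm{1}_u$, whose witness sizes reflect the cost of the update subroutine. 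For every vertex $v\in V$ I include a \emph{checking} hyperedge from $v$ to $\bullet$, whose hyperedge problem is obtained from the span program for ``$v\in M_x$'' via \cref{thm:hyperedge-from-span-program}. Finally, a single \emph{setup} hyperedge from $\star$ injects the stationary-weighted flow $\pi$ into $V$ via the setup subroutine. Setting the boundary flow $\delta_x=\mathbbm{1}_\star-\mathbbm{1}_\bullet$ and boundary potential $U_x=\mathbbm{1}_\star+\mathbbm{1}_\bullet$ gives the state-reflection problem that decides the detection question.

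I would then apply the resistance-cut theorem. On the positive side ($M_x\neq\emptyset$), the resulting graph $G(x)$ is the Markov chain with all unmarked checking edges removed, so $\star$ is connected to $\bullet$ exactly through the marked set. The positive witness cost is the effective resistance from $\star$ to $\bullet$ in this electrical network with the edge conductances $\pi_u P_{uv}$. By the standard correspondence between reversible Markov chains and electrical networks this equals (up to constants and the squared update cost) the hitting time $\HT(\pi,M_x)$ that governs the Apers--Gilyén--Jeffery bound; the paper's novel linear-algebraic treatment of irreducible reversible Markov processes, promised in the abstract, is what I expect to plug in here. On the negative side ($M_x=\emptyset$), every checking edge is open and forms a natural cut $C_x=\{v\bullet:v\in V\}$, contributing $\sum_{v\in V}(\mathsf{R}^v)_x^-/w_v$; with the span-program witnesses for the checking predicate this is precisely the setup-plus-checking contribution, and picking the weights $w_v=\pi_v\cdot(\text{setup cost})$ balances the two sides.

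Balancing $\mathsf{R}_x^+$ against $\mathsf{R}_x^-$ via the rescaling lemma yields the AGJ detection complexity, and the hypergraph construction transparently exposes the amortization: the setup subroutine is invoked through a \emph{single} composite hyperedge whose witness is paid once inside the effective resistance, rather than per-step.

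The main obstacle, as I see it, is not the hypergraph construction itself — that is essentially a transcription of the Markov-chain-as-electrical-network picture — but rather the \emph{identification} of the effective resistance in the weighted hypergraph with the quantities appearing in the AGJ bound. Concretely, the AGJ framework is phrased spectrally (in terms of the walk operator's phase gap and the initial state's overlap with marked states), while the generalized graph composition framework speaks the language of flows and potentials. Bridging these two requires the linear-algebraic bridge between effective resistance and the random-walk operator that the paper advertises as one of its technical contributions, and I would expect this bridge (presumably developed in a separate section) to be invoked as a black box here. A secondary subtlety will be correctly incorporating the update and checking subroutines' own witness sizes as multiplicative factors on the effective resistance, rather than additive contributions, so that the bound indeed matches the quadratic speedup rather than merely the classical random walk bound.
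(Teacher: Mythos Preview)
Your high-level picture is right—add a source and sink, put the walk edges in between, hang checking edges off the vertices—and you correctly anticipate that the linear-algebraic resistance/walk-operator bridge is what eventually matches the effective resistance to the AGJ quantity. But the construction you sketch has a genuine gap that prevents it from reproducing the paper's complexities.

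The missing ingredient is the \emph{database}. In the paper's construction the internal vertices are not the walk vertices $v\in V$ but pairs $(v,D)$ with $D$ ranging over all database states $\mathcal S$; the setup hyperedge $S_v$ connects $s$ to all of $\{(v,D):D\in\mathcal S\}$ and its job is to \emph{compute} $D_{v,x}$, and the update hyperedge $U_{vw}$ connects $\{(v,D):D\in\mathcal S\}$ to $\{(w,D):D\in\mathcal S\}$ and its job is to \emph{update} $D_{v,x}\mapsto D_{w,x}$. These are genuine input-dependent function-evaluation/database-update problems, so they carry non-zero positive \emph{and} negative witness sizes. In your construction the transport edges are declared input-independent, which forces their negative witness to vanish; correspondingly your cut $C_x=\{v\bullet:v\in V\}$ contains only checking edges, and your $\mathsf R_x^-$ has no setup or update term. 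The paper's negative witness, by contrast, has all three: the potential is $1$ exactly on the ``correct'' database vertices $(v,D_{v,x})$ and $0$ on all the spurious $(v,D)$ with $D\neq D_{v,x}$, so every setup and update hyperedge sees a non-trivial potential drop across its many legs and contributes $(\mathsf S_v)_x^-$ or $(\mathsf U_e)_x^-$ accordingly. Without the database copies you simply cannot make the setup/update costs show up on the negative side, and you will not recover the AGJ bound.

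Two smaller points. First, your single composite setup hyperedge injecting flow $\pi$ simultaneously into all of $V$ has $\delta_x$ supported on $|V|+1$ vertices, which violates the hypothesis of the resistance-cut theorem (support on at most two vertices per hyperedge); the paper instead uses one setup hyperedge per vertex with weight $1/\sigma_v$, and does not go through the resistance-cut theorem at all but applies the general composition theorem directly. Second, you locate the main obstacle in the resistance-to-spectral translation, but for \cref{thm:quantum-walk-detection} itself that translation is not yet needed—it is only invoked later, in \cref{thm:unified-quantum-walk-search}, to massage the raw $R_{\mathrm{eff}}(P;\mu_x-\nu_x)$ into the $t\cdot R_{\mathrm{eff}}(P^t;\cdot)$ form that matches AGJ. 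The actual content of \cref{thm:quantum-walk-detection} is precisely the hypergraph construction with the database vertices, which is the piece your sketch is missing.
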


    Merging all these results with the observations from \cite{cornelissen2025quantum}, we obtain an overview of the relations between quantum algorithmic frameworks that is displayed in \cref{fig:overview-frameworks}.

    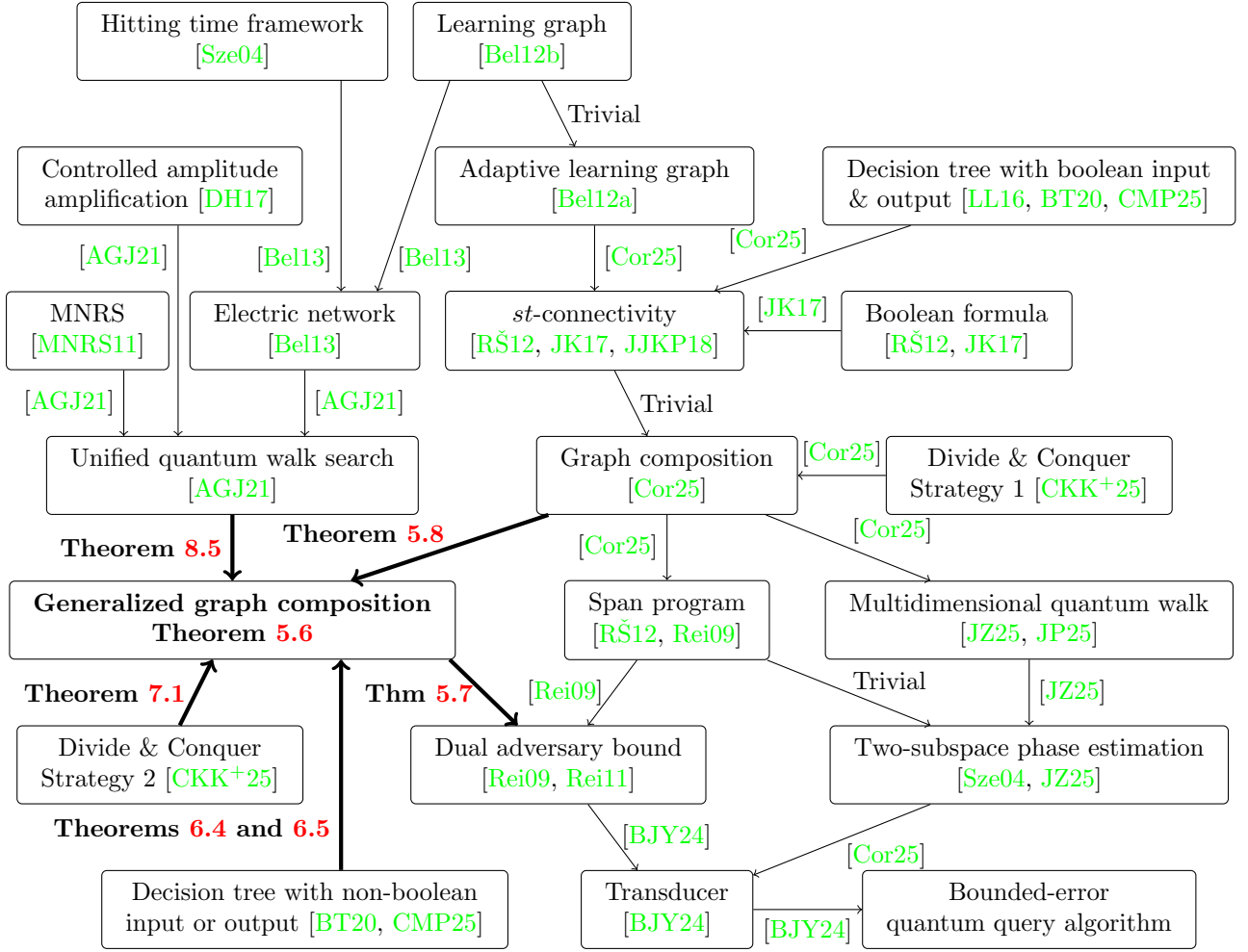
\begin{figure}[!ht]
        \centering
        \begin{tikzpicture}[vertex/.style = {draw, rounded corners = .2em}]
            \node[vertex] (NBDT) at (-5,-4) {\begin{tabular}{c}
                Decision tree with non-boolean \\
                input or output \cite{beigi2020quantum,cornelissen2025improved}
            \end{tabular}};
            \node[vertex] (UQW) at (-6,2) {\begin{tabular}{c}
                Unified quantum walk search \\
                \cite{apers2021unified}
            \end{tabular}};
            \node[vertex] (MNRS) at (-8,4) {\begin{tabular}{c}
                MNRS \\
                \cite{magniez2011search}
            \end{tabular}};
            \node[vertex] (CQA) at (-7,6) {\begin{tabular}{c}
                Controlled amplitude \\
                amplification \cite{dohotaru2017controlled}
            \end{tabular}};
            \node[vertex] (EN) at (-5,4) {\begin{tabular}{c}
                Electric network \\
                \cite{belovs2013quantum}
            \end{tabular}};
            \node[vertex] (HT) at (-6,8) {\begin{tabular}{c}
                Hitting time framework \\
                \cite{szegedy2004quantum}
            \end{tabular}};

            \node[vertex] (DC2) at (-7,-2) {\begin{tabular}{c}
                Divide \& Conquer \\
                Strategy 2 \cite{childs2025quantum}
            \end{tabular}};
            \node[vertex] (GGC) at (-6,0) {\begin{tabular}{c}
                \bf Generalized graph composition \\
                \bf\cref{def:generalized-graph-composition}
            \end{tabular}};
            \node[vertex] (GC) at (0,2) {\begin{tabular}{c}
                Graph composition \\
                \cite{cornelissen2025quantum}
            \end{tabular}};
            \node[vertex] (SP) at (0,0) {\begin{tabular}{c}
                Span program \\
                \cite{reichardt2012span,reichardt2009span}
            \end{tabular}};
            \node[vertex] (st) at (-1,4) {\begin{tabular}{c}
                $st$-connectivity \\
                \cite{reichardt2012span,jeffery2017quantum,jarret2018quantum}
            \end{tabular}};
            \node[vertex] (ALG) at (-1,6) {\begin{tabular}{c}
                Adaptive learning graph \\
                \cite{belovs2012learning}
            \end{tabular}};
            \node[vertex] (LG) at (-2,8) {\begin{tabular}{c}
                Learning graph \\
                \cite{belovs2012span}
            \end{tabular}};

            \node[vertex] (DC1) at (5,2) {\begin{tabular}{c}
                Divide \& Conquer \\
                Strategy 1 \cite{childs2025quantum}
            \end{tabular}};
            \node[vertex] (FE) at (4,4) {\begin{tabular}{c}
                Boolean formula \\
                \cite{reichardt2012span,jeffery2017quantum}
            \end{tabular}};
            \node[vertex] (DT) at (5,6) {\begin{tabular}{c}
                Decision tree with boolean input \\
                \& output \cite{lin2016upper,beigi2020quantum,cornelissen2025improved}
            \end{tabular}};

            \node[vertex] (MQW) at (5,0) {\begin{tabular}{c}
                Multidimensional quantum walk \\
                \cite{jeffery2025multidimensional,jeffery-pass2025multidimensional}
            \end{tabular}};
            \node[vertex] (PE) at (5,-2) {\begin{tabular}{c}
                Two-subspace phase estimation \\
                \cite{szegedy2004quantum,jeffery2025multidimensional}
            \end{tabular}};
            \node[vertex] (ADV) at (-1.5,-2) {\begin{tabular}{c}
                Dual adversary bound \\
                \cite{reichardt2009span,reichardt2011reflections}
            \end{tabular}};
            \node[vertex] (T) at (0,-4) {\begin{tabular}{c}
                Transducer \\
                \cite{belovs2024taming}
            \end{tabular}};
            \node[vertex] (Q) at (5,-4) {\begin{tabular}{c}
                Bounded-error \\
                quantum query algorithm
            \end{tabular}};

            \draw[->] ([shift={(1.5,0)}]HT.south) to node[left, pos=.85] {\cite{belovs2013quantum}} ([shift={(.5,0)}]EN.north);
            \draw[->] ([shift={(-1,0)}]LG.south) to node[right, pos=.85] {\cite{belovs2013quantum}} ([shift={(1,0)}]EN.north);
            \draw[->] ([shift={(.25,0)}]CQA.south) to node[pos=.15,left] {\cite{apers2021unified}} ([shift={(-.75,0)}]UQW.north);
            \draw[->] ([shift={(.5,0)}]MNRS.south) to node[left] {\cite{apers2021unified}} ([shift={(-1.5,0)}]UQW.north);
            \draw[->] (EN.south) to node[right] {\cite{apers2021unified}} ([shift={(1,0)}]UQW.north);
            \draw[ultra thick,->] (UQW) to node[left] {\bf\cref{thm:unified-quantum-walk-search}} (GGC);
            \draw[->] (LG) to node[right] {Trivial} (ALG);
            \draw[->] (ALG) to node[right] {\cite{cornelissen2025quantum}} (st);
            \draw[->] (st) to node[right] {Trivial} (GC);
            \draw[->] (GC) to node[left] {\cite{cornelissen2025quantum}} (SP);
            \draw[ultra thick,->] (GC) to node[above left=-.2em] {\bf\cref{thm:graph-composition}} (GGC);
            \draw[->] (DC1) to node[above] {\cite{cornelissen2025quantum}} (GC);
            \draw[->] (FE) to node[above] {\cite{jeffery2017quantum}} (st);
            \draw[->] (DT) to node[above left=-.2em] {\cite{cornelissen2025quantum}} (st);
            \draw[ultra thick,->] ([shift={(.5,0)}]NBDT.north) to node[pos=.2, left] {\bf\cref{thm:BT20,thm:CMP25}} ([shift={(1.5,0)}]GGC.south);
            \draw[ultra thick,->] (DC2) to node[left] {\bf\cref{thm:divide-and-conquer}} (GGC);
            \draw[->] (GC) to node[above right=-.2em] {\cite{cornelissen2025quantum}} (MQW);
            \draw[->] (MQW) to node[right] {\cite{jeffery2025multidimensional}} (PE);
            \draw[->] (PE) to node[below right=-.2em] {\cite{cornelissen2025quantum}} (T);
            \draw[->] (SP) to node[left] {\cite{reichardt2009span}} (ADV);
            \draw[ultra thick,->] ([shift={(3,0)}]GGC.south) to node[left] {\bf Thm~\ref{thm:generalized-graph-composition}} (ADV);
            \draw[->] (ADV) to node[right] {\cite{belovs2024taming}} (T);
            \draw[->] (T) to node[below] {\cite{belovs2024taming}} (Q);
            \draw[->] (SP) to node[above right=-.2em] {Trivial} (PE);
        \end{tikzpicture}
        \caption{Overview of quantum algorithmic frameworks. An arrow from $A$ to $B$ represents that every instance of $A$ can be turned into an instance of $B$. The cited resources represent the first reference where these frameworks are considered in the context of quantum algorithms. The new results proved in this work are represented by the bold arrows.}
        \label{fig:overview-frameworks}
    \end{figure}

    \subsection{Improvements for quantum walk search}
    \label{subsec:quantum-walk-search}

    In the quantum walk search problem, we are given an undirected graph $G = (V,E)$, with resistances $r : E \to \R_{>0}$. For any given input $x \in \D$ where the domain $\D$ is an arbitrary set, we have a set of marked vertices $M_x \subseteq V$. In the \textit{detection} problem, the goal is to determine whether $M_x$ is non-empty. On the other hand, in the \textit{finding} problem, we are promised that $M_x$ is non-empty, and the goal is to output an element from it.

    We can define a Markov process on the vertices of $G$ using a specific random walk dynamic, where at a given vertex $v$, we select an edge $e$ adjacent to $v$ to traverse with probability proportional to $1/r_e$. This setting has been well-studied in the literature and exhibits interesting connections between statistical quantities on the one side, like hitting and commute times, and effective resistances in the graph on the other. We highlight such results in \cref{subsec:random-walks}, insofar as that they're relevant to our work.

    \paragraph{Prior unification result~\cite{apers2021unified}.} In \cite{apers2021unified}, we are given access to the following three subroutines:
    \begin{enumerate}[nosep]
        \item A \textit{setup routine} $S$ with cost $\mathsf{S}$, that generates a superposition with some fixed distribution $\sigma$, i.e.,
        \[S : \ket{0} \mapsto \sum_{v \in V} \sqrt{\sigma_v}\ket{v}.\]
        \item An \textit{update routine} $U$ with cost $\mathsf{U}$, that given $v$ prepares a weighted uniform distribution of the neighbors of $v$, i.e., that acts as
        \[O_P : \ket{v}\ket{0} \mapsto \ket{v} \sum_{w \in N(v)} \sqrt{\frac{r_v}{r_e}}\ket{w}, \qquad \text{where} \qquad r_v = \left[\sum_{e \in N(v)} \frac{1}{r_e}\right]^{-1}.\]
        \item A \textit{checking routine} $C$ with cost $\mathsf{C}$, that given a vertex $v$ checks whether $v \in M_x$.
    \end{enumerate}
    The result in \cite{apers2021unified} is that we can solve the finding problem with high probability with a number of queries that satisfies\footnote{In \cite{apers2021unified}, it is assumed that $R_{\mathrm{eff}}(P^t; \sigma \leftrightarrow M_x) \in \Omega(1)$. However, their results also hold without this assumption, at the expense of an additional additive factor of $\mathsf{C}$, which is the complexity we display here.}
    \begin{align*}
        &\widetilde{O}\left(\mathsf{S} + \max_{x \in \D : M_x \neq \varnothing} \sqrt{R_{\mathrm{eff}}(P^t; \sigma  \leftrightarrow M_x)}(\sqrt{t}\mathsf{U} + \mathsf{C}) + \mathsf{C}\right), & \text{(detection)}, \\
        &\widetilde{O}\left(\mathsf{S} + \sqrt{R_{\mathrm{eff}}(P^t; \sigma  \leftrightarrow M_x)}(\sqrt{t}\mathsf{U} + \mathsf{C}) + \mathsf{C}\right), & \text{(finding)}.
    \end{align*}
    Here, $t \in \N$ is an integer parameter in the construction of the algorithm, and $R_{\mathrm{eff}}(P^t; \sigma \leftrightarrow M_x)$ represents the minimum effective resistance in the graph that is walked on if we take $t$ steps of the random walk $P$, between the initial distribution $\sigma$ and any distribution that has support in $M_x$. This result unifies four previous approaches, \cite{szegedy2004quantum,magniez2011search,belovs2013quantum,dohotaru2017controlled}.

    \paragraph{Amortization by \cite{jeffery2022quantum}.} A property of random walks in the classical setting, is that they are amenable to amortization. That is, if figuring out which edge to traverse is much more costly for some vertices $v \in V$ in the graph, but these vertices are not likely to be visited often, then we can take the ``expected cost per step'' in the algorithm's analysis. This is crucial for, for instance, the best-known algorithm to compute the volume of a convex body~\cite{cousins2018gaussian}. In the quantum setting, on the other hand, we don't know in which vertex we are in any of the intermediate steps, and hence we cannot easily access such amortization techniques.

    Towards achieving such an amortization result, Jeffery developed a variable-time quantum walk search algorithm~\cite{jeffery2022quantum}. They assume that it is free to select the edge that is to be traversed, but the edge traversal itself is costly. Their analysis also works with respect to time complexity, but here we merely state the query complexity results:

    \begin{enumerate}[nosep]
        \item A \textit{setup routine} making $\mathsf{S}$ queries that computes a superposition over a distribution $\sigma$ as before.
        \item An \textit{update routine} that for any $v \in V$ and index $j \in [\deg(v)]$ computes the $j$th neighbor of $v$, say $w$, and the index of $v$ in the list of neighbors of $w$. The number of queries this routine makes is described by the random variable $\mathsf{T}_{vw,x}$.
        \item A \textit{checking routine} that for any $v \in V$ checks whether $v \in M_x$ with a number of queries described by the random variable $\mathsf{C}_{v,x}$.
    \end{enumerate}
    Now, Jeffery showed \cite[Corollary~4.2]{jeffery2022quantum} that we can solve the detection version of quantum walk search in a cost that satisfies
    \begin{align*}
        &\widetilde{O}\left(\mathsf{S} + \sqrt{\max_{x \in \D_+} R_{\mathrm{eff}}(P; \sigma - \nu_x) \cdot \max_{x \in \D_-} \underset{\substack{v \sim \pi \\ w \sim P_{v,\cdot}}}{\E} \left[\mathsf{T}_{vw,x}^2\right]} + \sqrt{\max_{x \in \D_+} \underset{v \sim \tau}{\E} \left[\left(\frac{(\nu_x)_v}{\tau_v}\right)^2\right] \cdot \max_{x \in \D_-} \underset{v \sim \tau}{\E} \left[\mathsf{C}^2_{v,x}\right]}\right),
    \end{align*}
    where $\D_+ = \{x \in \D : M_x \neq \varnothing\}$, $\D_- = \D \setminus \D_+$, $\tau$ is a distribution on $V$ and $\nu_x$ a distribution on $M_x \cap \supp(\tau)$. Moreover, the above complexities can be slightly improved if we know the expected query complexities beforehand.

    Jeffery also specializes these bounds to the specific setting of detection in the MNRS-framework~\cite[Corollary~4.3]{jeffery2022quantum}, with the resulting query complexity being
    \[\widetilde{O}\left(\mathsf{S} + \sqrt{\max_{x \in \D_+} \frac{1}{\underset{v \sim \tau}{\P} [v \in M_x]}} \left(\sqrt{\frac{1}{\delta} \max_{x \in \D_-} \underset{\substack{v \sim \pi \\ w \sim P_{v,\cdot}}}{\E}\left[\mathsf{T}_{vw,x}^2\right]} + \sqrt{\max_{x \in \D_-} \underset{v \sim \tau}{\E} \left[\mathsf{C}^2_{v,x}\right]}\right) \sqrt{\log\frac{1}{\pi_{\min}}}\right),\]
    where $\delta$ is the spectral gap of the graph $G$, and where, again the squares can be moved outside the expectation if the expected query complexity is known beforehand.\footnote{We are sweeping some details under the rug here -- Jeffery shows that one can even move the squares outside the expectation if the costs can be computed on the fly, or with a single query. We refer to their paper for the precise statements.}

    \paragraph{Our improvement.} In this work, we consider again a slightly different setting, where we have a database with possible states $\mathcal{S}$, such that for every vertex $v \in V$ and input $x \in \D$, there exists a unique entry in the database $D_{v,x} \in \mathcal{S}$. The task, now, is to keep track of the database entry during the walk, such that we can use it in the checking routine. This mimics the approach taken in the quantum walk algorithm for the element-distinctness problem~\cite{ambainis2007quantum}.

    We assume that we are given access to the following subroutines, as feasible solutions to their corresponding state-reflection problems:
    \begin{enumerate}[nosep]
        \item For all $v \in V$, a setup routine $S_v$ with witness sizes $(\mathsf{S}_v)_x^{\pm}$, that computes $D_{v,x}$.
        \item For all $vw = e \in E$, an update routine $U_e$ with witness sizes $(\mathsf{U}_e)_x^{\pm}$, that updates $D_{v,x}$ to $D_{w,x}$.
        \item For all $v \in V$, and $D \in \mathcal{S}$, a checking routine $C_{v,D}$, such that for all $x \in \D$, if $D = D_{v,x}$, $C_{v,D}$ computes whether $v \in M_x$, with witness sizes $(\mathsf{C}_v)_x^{\pm}$.
    \end{enumerate}

    We now employ the generalized graph composition construction to obtain the following quantum algorithm that solves the detection version of the quantum walk search problem.

    \begin{theorem}[Simplified version of \cref{thm:quantum-walk-detection}]
        Let $\D_+ = \{x \in \D : M_x \neq \varnothing\}$ and $\D_- = \D \setminus \D_+$. For all $x \in \D^+$, let $\mu_x$ be a probability distribution on $\supp(\sigma)$, and $\nu_x$ be a probability distribution on $M_x \cap \supp(\tau)$. Then, we can solve the detection version of the quantum walk search problem with high probability with a number of queries that satisfies $O(\sqrt{\mathsf{S}^+\mathsf{S}^-} + \sqrt{\mathsf{U}^+\mathsf{U}^-} + \sqrt{\mathsf{C}^+\mathsf{C}^-})$, where
        \begin{align*}
            \mathsf{S}^+ &= \max_{x \in \D_+} \underset{v \sim \sigma}{\E}\left[\left(\frac{(\mu_x)_v}{\sigma_v}\right)^2 (\mathsf{S}_v)_x^+\right], & \mathsf{S}^- &= \max_{x \in \D_-} \underset{v \sim \sigma}{\E}\left[(\mathsf{S}_v)_x^-\right], \\
            \mathsf{U}^+ &= \max_{x \in \D_+} R_{\mathrm{eff}}(P; \mu_x - \nu_x) \cdot \max_{e \in E} (\mathsf{U}_e)_x^+, & \mathsf{U}^- &= \max_{x \in \D_-} \underset{\substack{v \sim \pi \\ w \sim P_{v,\cdot}}}{\E} \left[(\mathsf{U}_{vw})_x^-\right], \\
            \mathsf{C}^+ &= \max_{x \in \D_+} \underset{v \sim \tau}{\E}\left[\left(\frac{(\nu_x)_v}{\tau_v}\right)^2 (\mathsf{C}_{v,D_{v,x}})_x^+\right], & \mathsf{C}^- &= \max_{x \in \D_-} \underset{v \sim \tau}{\E}\left[(\mathsf{C}_{v,D_{v,x}})_x^-\right].
        \end{align*}
    \end{theorem}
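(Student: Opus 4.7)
I would cast the theorem as an instance of the general generalized graph composition framework (\cref{thm:generalized-graph-composition}), rather than its resistance-cut specialization, building a hypergraph whose three ``layers'' of hyperedges, one per subroutine family, each contribute one of the three terms $\sqrt{\mathsf{S}^+\mathsf{S}^-}$, $\sqrt{\mathsf{U}^+\mathsf{U}^-}$, $\sqrt{\mathsf{C}^+\mathsf{C}^-}$ after the rescaling lemma is applied independently to each layer's common weight multiplier.

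\textbf{Construction.} Take boundary $B=\{s,t\}$ and internal vertices $V$. For each $v\in V$, add a setup hyperedge $\{s,v\}$ of weight $w_v^S=a_S/\sigma_v$, with hyperedge problem obtained by lifting $S_v$ (net-flow $\mathbbm{1}_s-\mathbbm{1}_v$, potential constant on $\{s,v\}$). For each $vw\in E$, add an update hyperedge $\{v,w\}$ of weight $w_{vw}^U=a_U/(\pi_v P_{vw})$, obtained by lifting $U_{vw}$. For each $v\in V$, add a check hyperedge $\{v,t\}$ of weight $w_v^C=a_C/\tau_v$, derived from $C_{v,D_{v,x}}$, acting as a switch that is ``closed'' precisely when $v\in M_x$. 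The scalars $a_S,a_U,a_C>0$ remain free.

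\textbf{Flow/cut analysis.} For $x\in\D_+$, route the net-flow $\mathbbm{1}_s-\mathbbm{1}_t$ by injecting $(\mu_x)_v$ through each setup edge, carrying the optimal walk flow between $\mu_x$ and $\nu_x$ (realising $R_{\mathrm{eff}}(P;\mu_x-\nu_x)$) through the update edges, and extracting $(\nu_x)_v$ through each check edge; flow conservation at every internal vertex is automatic, and the potential takes the value $1$ everywhere. For $x\in\D_-$, all check switches are open, so the only admissible flow is the zero flow, and the potential drops from $U_x(s)=1$ to $U_x(t)=0$ across the check layer. Invoking \cref{thm:generalized-graph-composition} with $\mathsf{R}_x^\pm=\sum_e w_e^{\pm1}(\mathsf{R}^e)_x^\pm$ and substituting the scaled flows and the chosen weights (using monotonicity of effective resistance to pull $\max_e(\mathsf{U}_e)_x^+$ out of the walk layer) yields $\max_{x\in\D_+}\mathsf{R}_x^+ \le a_S\mathsf{S}^+ + a_U\mathsf{U}^+ + a_C\mathsf{C}^+$ and $\max_{x\in\D_-}\mathsf{R}_x^- \le \mathsf{S}^-/a_S + \mathsf{U}^-/a_U + \mathsf{C}^-/a_C$, the setup and update contributions to the latter arising from the intrinsic non-trivial negative witnesses of the non-boolean subroutines.

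\textbf{Optimization and main obstacle.} Choosing $a_S=\sqrt{\mathsf{S}^-/\mathsf{S}^+}$, $a_U=\sqrt{\mathsf{U}^-/\mathsf{U}^+}$, $a_C=\sqrt{\mathsf{C}^-/\mathsf{C}^+}$ equalises the two bounds to $\sqrt{\mathsf{S}^+\mathsf{S}^-}+\sqrt{\mathsf{U}^+\mathsf{U}^-}+\sqrt{\mathsf{C}^+\mathsf{C}^-}$, and the adversary bound for state-reflection then delivers the claimed query complexity. The main technical obstacles I anticipate are (i) lifting the three subroutine families to hyperedge problems that jointly satisfy flow conservation and potential consistency at every internal vertex and every $x\in\D$ \emph{while} having the asserted scaled positive and negative witness sizes under the rescaling lemma, and (ii) matching the walk-layer contribution to $R_{\mathrm{eff}}(P;\mu_x-\nu_x)$ exactly; the latter relies on the novel linear-algebraic treatment of reversible Markov chains highlighted in the abstract, which is what cleanly factors the walk resistance into the product form required for $\mathsf{U}^+$.
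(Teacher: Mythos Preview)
Your three-layer plan, the flow/cut choice, and the per-layer scalars $a_S,a_U,a_C$ are exactly how the paper proceeds; the geometric means fall out just as you say. There is, however, a genuine structural omission: the internal vertex set cannot be $V$ alone, it must be $\bigcup_{v\in V}\{v\}\times\mathcal{S}$. The setup routine $S_v$ is a function-evaluation hyperedge whose vertex set is $\{s\}\cup(\{v\}\times\mathcal{S})$ (it outputs $D_{v,x}$), the update routine $U_{vw}$ is a database-update hyperedge on $(\{v\}\times\mathcal{S})\cup(\{w\}\times\mathcal{S})$, and the checking routine is indexed by \emph{both} $v$ and $D$, yielding one span-program edge $\{(v,D),t\}$ per pair. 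Collapsing these to two-vertex edges on $\{s,v\}$, $\{v,w\}$, $\{v,t\}$ is not possible while retaining the witness sizes $(\mathsf{S}_v)_x^\pm$, $(\mathsf{U}_e)_x^\pm$, $(\mathsf{C}_{v,D})_x^\pm$: a hyperedge problem on a two-point set with $x$-independent flow $\mathbbm{1}_s-\mathbbm{1}_v$ is trivial, and writing ``derived from $C_{v,D_{v,x}}$'' makes the edge itself $x$-dependent. The database dimension is also what makes your negative-side accounting honest: for $x\in\D_-$ the potential is $1$ exactly on $s$ and on the vertices $(v,D_{v,x})$ and $0$ on all $(v,D)$ with $D\neq D_{v,x}$, so every setup and update hyperedge sees a genuine potential pattern matching its own $U_x^e$, which is why their full negative witnesses enter the sum. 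Your remark about ``intrinsic non-trivial negative witnesses'' is morally right but only becomes a valid instance of \cref{def:generalized-graph-composition} once the $(v,D)$ vertices are present.

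Separately, your anticipated obstacle~(ii) is a misdiagnosis. The equality between the walk-layer resistance and $R_{\mathrm{eff}}(P;\mu_x-\nu_x)$ is immediate from the choice $r_e=1/(\pi_vP_{vw})$ via \cref{thm:random-walk-characterization}; no spectral analysis is needed. The linear-algebraic lemma you allude to (\cref{lem:fast-forwarding}) is used only \emph{downstream}, in \cref{thm:unified-quantum-walk-search}, to compare $R_{\mathrm{eff}}(P;\cdot)$ with $R_{\mathrm{eff}}(P^t;\cdot)$ and with the spectral gap; it plays no role in the present statement.
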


    We generalize this result to the finding version in two cases, with very similar complexities. First, in the case where we are promised that the marked vertex is unique (\cref{thm:quantum-walk-finding-unique}), and second, in the case where we know that the fraction of marked vertices is $\varepsilon$ relative to some fixed probability distribution $\tau \in \Delta_V$, that is, $\P_{v \sim \tau}[v \in M_x] = \varepsilon$ (\cref{thm:quantum-walk-finding-fixed-fraction}).

    \paragraph{Analysis of our improvement.} The positive and negative witness complexities from the above theorem statement look quite unwieldy, and it is consequently not easy to see how they relate to the aforementioned query complexities that were obtained by \cite{apers2021unified} and \cite{jeffery2022quantum}. Indeed, in showing that we can recover their results from our complexities, we need to prove some new relations between the effective resistance in a graph $G$ with resistances $r$, and the probability transition matrix $P$ of a random walk on it. We prove that

    \begin{lemma}[Informal version of \cref{lem:fast-forwarding}]
        Let $M = (V,P)$ be an irreducible, reversible Markov process, with stationary distribution $\pi$. Let $\xi \in \R_V$ such that $\sum_{v \in V} \xi_v = 0$. We write $D = \diag(\pi)$, and $\widetilde{\xi} = D^{-1/2}\xi$. Then, we have the following two consequences:
        \begin{enumerate}[nosep]
            \item For integer $t \geq 1$, $R_{\mathrm{eff}}(P;\xi) \leq t \cdot R_{\mathrm{eff}}(P^t;\xi)$.
            \item If the spectral gap of $P$ is $\delta$, we have $R_{\mathrm{eff}}(P;\xi) \leq \|\widetilde{\xi}\|_2^2/\delta$.
        \end{enumerate}
    \end{lemma}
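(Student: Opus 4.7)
The plan is to reduce both inequalities to elementary spectral facts about the symmetrized transition operator $\widetilde{P} := D^{1/2} P D^{-1/2}$, which is symmetric by detailed balance ($\pi_v P_{vw} = \pi_w P_{wv}$) and has the same spectrum as $P$. First I would recast the effective resistance in this basis. The weighted Laplacian $L := D(I - P) = D - DP$ is symmetric and PSD by reversibility, and under the paper's definition $R_{\mathrm{eff}}(P;\xi) = \xi^{T} L^{+} \xi$. The conjugation identity $D^{-1/2} L D^{-1/2} = I - \widetilde{P}$ then gives
\[ R_{\mathrm{eff}}(P;\xi) \;=\; \widetilde{\xi}^{T}(I-\widetilde{P})^{+}\widetilde{\xi}. \]
A key observation is that $D^{1/2}\mathbf{1}$ is the top eigenvector of $\widetilde{P}$ (eigenvalue $1$), and $\widetilde{\xi}^{T} D^{1/2}\mathbf{1} = \xi^{T}\mathbf{1} = 0$, so $\widetilde{\xi}$ is orthogonal to $\ker(I-\widetilde{P})$ and the pseudoinverse makes sense on it.

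For the first claim, I would note that $P^{t}$ is reversible with the same stationary distribution, and its symmetrization is exactly $\widetilde{P}^{t}$, which shares the eigenbasis of $\widetilde{P}$. Decomposing $\widetilde{\xi}$ spectrally, the inequality reduces on each eigenvector with eigenvalue $\lambda \in [-1,1)$, $\lambda^{t}\neq 1$, to
\[ \frac{1}{1-\lambda} \;\leq\; \frac{t}{1-\lambda^{t}} \qquad\Longleftrightarrow\qquad \sum_{k=0}^{t-1}\lambda^{k} \;\leq\; t, \]
which is immediate since each term has magnitude at most $1$. Components of $\widetilde{\xi}$ on eigenvectors with $\lambda^{t}=1$ and $\lambda\neq 1$ (only possible when $\lambda=-1$ and $t$ is even) would make $R_{\mathrm{eff}}(P^{t};\xi)$ infinite, rendering the inequality trivial.

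For the second claim, the orthogonality of $\widetilde{\xi}$ to $D^{1/2}\mathbf{1}$ means $\widetilde{\xi}$ lies in the span of eigenvectors of $\widetilde{P}$ with eigenvalue at most $1-\delta$. Hence every eigenvalue of $I-\widetilde{P}$ appearing in $\widetilde{\xi}$'s decomposition is at least $\delta$, the restricted operator norm of $(I-\widetilde{P})^{+}$ is at most $1/\delta$, and $\widetilde{\xi}^{T}(I-\widetilde{P})^{+}\widetilde{\xi} \leq \|\widetilde{\xi}\|_{2}^{2}/\delta$.

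The main obstacle is really the first step---matching the paper's (slightly generalized) definition of effective resistance with the symmetric quadratic form $\widetilde{\xi}^{T}(I-\widetilde{P})^{+}\widetilde{\xi}$, which requires carefully tracking the generalized weighting conventions encoded in \cref{def:effective-resistance}. Once that symmetrization identity is secured, both bounds collapse to the one-line spectral calculations above, and in particular neither requires any quantum fast-forwarding machinery.
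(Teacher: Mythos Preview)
Your proposal is correct and follows essentially the same approach as the paper: symmetrize to $\widetilde{P} = D^{1/2}PD^{-1/2}$, rewrite $R_{\mathrm{eff}}(P;\xi) = \widetilde{\xi}^{T}(I-\widetilde{P})^{+}\widetilde{\xi}$ via the Laplacian identity (which the paper has already established in \cref{thm:resistance-through-laplacian,thm:laplacian-and-prob-transition}, so your ``main obstacle'' is not one), and then reduce both claims to the scalar inequalities $1/(1-\lambda) \leq t/(1-\lambda^{t})$ and $1/(1-\lambda) \leq 1/\delta$ on each eigenspace. Your explicit treatment of the degenerate case $\lambda=-1$ with $t$ even is a small refinement over the paper's write-up, which applies the scalar inequality for all $\lambda<1$ without singling out that case.
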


    Both results were already known through combinatorial arguments in the case where $\xi$ is the difference of two standard basis vectors, i.e., $\mathbbm{1}_s - \mathbbm{1}_t$, see e.g., \cite{mieghem2023graph}. However, for our purposes we need the more general case where $\xi$ is the difference between two probability distributions, and so we give a direct linear algebraic proof here. We deem this to be an independently interesting result in its own right.

    \paragraph{Contributions.} Using the aforementioned analysis of the effective resistance, we are now able to recover the complexities from \cite{apers2021unified} and \cite{jeffery2022quantum}. We compare our results to theirs.

    \begin{enumerate}
        \item Encoding the subroutines as feasible solutions to the adversary bound for state-conversion is beneficial over direct algorithmic access models, as used by \cite{apers2021unified} and \cite{jeffery2022quantum}. With algorithmic input models, namely, one can on the one hand consider \textit{zero-error} quantum algorithms, which is inherently limited since some problems can be much harder to solve with zero-error than with bounded error (e.g., Grover search). On the other hand, one can consider \textit{bounded-error} quantum algorithms, but then one has to deal with the build-up of errors throughout the procedure, complicating the construction and likely leading to polylogarithmic overhead of error reduction along the way.

        Feasible solutions to the adversary bound possess both the desirable properties: they are exact in the sense that they solve the adversary bound exactly, and they are as powerful as bounded-error quantum query algorithms, due to the characterization of bounded-error quantum query complexity by the adversary bound~\cite{lee2011quantum}. Developments after \cite{jeffery2022quantum} suggest how to use Jeffery's techniques without build-up of errors using Las-Vegas algorithms~\cite{belovs2023one}, which we recover with our approach, see~\cref{thm:las-vegas-algorithm}.

        \item We recover the same query complexity as in \cite{apers2021unified} for the detection version of quantum walk search, but without any log-factors. See \cref{thm:unified-quantum-walk-search}. We obtain the same improved result for the finding case, if we are promised that there is only one marked vertex. We conjecture this result can also be obtained in the general finding case, but we leave that to future research.

        \item An interesting consequence of our improved analysis is that we recover the query bounds from \cite{apers2021unified}, for all values of the parameter $t$. However, the construction of the quantum algorithm we propose does not depend on $t$, i.e., it is merely an artifact of the analysis. Since the parameter $t$ is used in the quantum fast-forwarding step of \cite{apers2021unified}, we remove the necessity of doing quantum fast-forwarding for quantum walk search, conceptually simplifying its implementation.

        \item We recover all the complexities from \cite{jeffery2022quantum}. With our improved analysis of the effective resistance, we can additionally improve the query complexity of the variable-time MNRS framework, by removing the additional factor $\sqrt{\log(1/\pi_{\min})}$. This was already conjectured to be possible by Jeffery, so we confirm their conjecture.

        We additionally remark that this extra factor, although being only logarithmic, can indeed be very large. One of the most common usecases for the MNRS-framework is walking over Johnson graphs, which typically have exponentially many vertices. In those cases, $\pi_{\min}$ is inevitably exponentially small, which means that $\log(1/\pi_{\min})$ results in polynomial overhead.

        Moreover, avoiding the extra overhead brings the results in line with the results from \cite{carette2020extended}, who show that one can amortize the setup, update, and checking query complexities if these routines are constructed using learning graphs. This work generalizes that result to the case where these routines can be any feasible solution to the adversary bound.
    \end{enumerate}

    \subsection{Future directions}

    In the variable-time quantum walk search results that we obtain in this work are similar in flavor to those from \cite{jeffery2022quantum}, in the sense that the cost of \textit{traversing} the edges is amortized. However, in some cases, it is the cost of \textit{selecting} the next vertex to traverse to that we would like to traverse. We leave whether such an amortization can be achieved using our techniques for future research.

    Furthermore, one of the most interesting problems in the context of quantum walks is the glued-trees problem, which obtains a provable exponential speed-up over its classical counterpart~\cite{childs2003exponential,belovs2024global}. It would be interesting to figure out if we can recover this exponential speed-up from the generalized graph composition framework, but we leave this for future work as well.

    \subsection{Organization}

    We recall relevant results from the literature in \cref{sec:preliminaries}. We present our improvement analysis of the effective resistance in \cref{sec:effective-resistance}. We present our explicit implementation of transducers from the adversary bound for state conversion in \cref{sec:transducer}. Then, we turn to the generalized graph composition framework in \cref{sec:generalized-graph-composition}. Next, we show how it unifies the decision-tree frameworks in \cref{sec:decision-trees}, and to the divide-and-conquer framework in \cref{sec:divide-and-conquer}. Finally, we apply the generalized graph composition framework to the quantum walk search problem in \cref{sec:quantum-walk-search}.

    \section{Preliminaries}
    \label{sec:preliminaries}

    \subsection{Notation}

    We write the set of natural numbers $\N = \{1,2,\dots\}$. Whenever we have an expression involving the symbol $\pm$, we interpret this as a pair of expressions where we put replace it with $+$ and $-$, respectively. If we use the symbol $\pm$ multiple times in the same expressions, we replace each with the same sign, and if we use $\mp$ instead, we use the opposite sign. For example, $5 \pm 3 \mp 3 = 5$ represents the two statements $5 + 3 - 3 = 5$ and $5 - 3 + 3 = 5$.

    Let $\Omega$ be a finite set. We let $\Delta_{\Omega}$ be the set of probability distributions on $\Omega$. We write $\mathbf{e}_\omega \in \R^\Omega$ for the standard basis vector with its $1$ at the position with index $\omega$.

    Let $G = (V,E)$ be an undirected graph. We associate to every edge $e \in E$ an implicit direction, i.e., a starting and ending vertex $e_- \in V$ and $e_+ \in V$, respectively. For all $v \in V$, we write $N_-(v) \subseteq E$ for all outgoing edges and $N_+(v) \subseteq E$ for all incoming edges. We also write $N(v) = N_-(v) \cup N_+(v)$. As such, $N(v) \cap N(w)$ is the set of all edges connecting $v$ and $w$.

    Let $d \in \N$ and $f,g : \R_{\geq0}^d \supseteq \D \to \R$. We say that $f \in O(g)$ if there exists a $C,M > 0$ such that for all $x \in \D$ with $\norm{x} \geq M$, then $|f(x)| \leq C \cdot |g(x)|$. We write $f \in \Omega(g)$ if and only if $g \in O(f)$, and we write $\Theta(f) = O(f) \cap \Omega(f)$. Furthermore, we write $f \in \widetilde{O}(g)$, if there exists a $k \in \N$ such that $f \in O(g \cdot \log^k(g))$. Similarly, we write $f \in \widetilde{\Omega}(g)$ if and only if $g \in \widetilde{O}(f)$, and $\widetilde{\Theta}(f) = \widetilde{O}(f) \cap \widetilde{\Omega}(f)$. Finally, we write $f \in O(g \cdot \polylog((x_j)_{j \in S}))$, for $S \subseteq [d]$, if there exist $k_j \in \N$, for every $j \in S$, such that $f \in O(g(x) \cdot \prod_{j \in S} \log^{k_j}(x_j))$.

    \subsection{Markov chains}

    We revisit the theory of Markov chains here. We base our exposition on \cite{levin2017markov}, but there exist plenty of alternative resources, e.g., \cite{haggstrom2002finite,klenke2008probability}.

    \begin{definition}[{\cite[Section~1.1]{levin2017markov}}]
        A Markov process $M = (\Omega,P)$ is defined by:
        \begin{enumerate}[nosep]
            \item A finite state space $\Omega$.\footnote{In this work, we will always assume that $\Omega$ is finite. There is also a vast literature on Markov chains that considers infinite state spaces, or even continuous ones, but this is beyond the scope of our work.}
            \item A probability transition matrix $P \in \R_{\geq0}^{\Omega \times \Omega}$, such that every row sums to $1$.
        \end{enumerate}
        If we take a random variable $X_0$ according to some probability distribution on $\Omega$, and then select a sequence of random variables $X_1, X_2, \dots$ such that for all $t \in \N$ and $\omega_1,\omega_2 \in \Omega$,
        \[\P[X_t = \omega_2 | X_{t-1} = \omega_1] = P_{\omega_1,\omega_2},\]
        then we say that $X_0, X_1, X_2, \dots$ is a Markov chain. We say that the Markov chain transition from state $\omega_1 \in \Omega$ into state $\omega_2 \in \Omega$ with probability $P_{\omega_1,\omega_2}$.
    \end{definition}

    We discuss some properties of Markov chains.

    \begin{definition}[{\cite[Sections~1.3 and 1.5]{levin2017markov}}]
        Let $M = (\Omega,P)$ be a Markov process.
        \begin{enumerate}[nosep]
            \item A distribution $\pi \in \Delta_{\Omega}$ is a \textit{stationary distribution} for $M$ if $\pi = \pi P$.
            \item A Markov chain is \textit{irreducible} if it is possible to reach any state from any other state through a sequence of transitions with non-zero probability. In other words, for all $\omega_1, \omega_2 \in \Omega$, there exists a $t \in \N$ such that $(P^t)_{\omega_1,\omega_2} > 0$.
        \end{enumerate}
    \end{definition}

    \begin{theorem}[{\cite[Corollary~1.17]{levin2017markov}}]
        \label{thm:pi-unique}
        If a Markov chain is irreducible, it has a unique stationary distribution, and $\pi_{\omega} > 0$ for all $\omega \in \Omega$.
    \end{theorem}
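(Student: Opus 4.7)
The plan is to prove three statements in sequence: existence of a stationary distribution, its strict positivity under irreducibility, and its uniqueness, since positivity will rely on existence and uniqueness on positivity.

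For existence, I would use a soft compactness argument rather than the expected-return-time construction in Levin–Peres–Wilmer. Fix any $\mu_0 \in \Delta_\Omega$ and form the Ces\`aro averages $\overline{\mu}_T := \tfrac{1}{T}\sum_{t=0}^{T-1} \mu_0 P^t$. Each $\overline{\mu}_T$ lies in the compact simplex $\Delta_\Omega \subseteq \R^\Omega$, so by Bolzano–Weierstrass some subsequence converges to a limit $\pi \in \Delta_\Omega$. A telescoping calculation gives $\overline{\mu}_T P - \overline{\mu}_T = \tfrac{1}{T}(\mu_0 P^T - \mu_0)$, whose entries are bounded by $2/T$ in absolute value and thus vanish. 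Passing to the subsequential limit yields $\pi P = \pi$, so $\pi$ is stationary.

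For positivity, I would argue by contradiction: suppose $\pi_{\omega^*} = 0$ for some stationary $\pi$ and some $\omega^* \in \Omega$. Since $\pi \in \Delta_\Omega$, at least one $\omega' \in \Omega$ has $\pi_{\omega'} > 0$. Irreducibility furnishes some $t \in \N$ with $(P^t)_{\omega',\omega^*} > 0$. Then stationarity gives the contradiction
\[0 = \pi_{\omega^*} = (\pi P^t)_{\omega^*} = \sum_{\omega \in \Omega} \pi_\omega (P^t)_{\omega,\omega^*} \geq \pi_{\omega'} (P^t)_{\omega',\omega^*} > 0.\]
For uniqueness, suppose $\pi$ and $\pi'$ are both stationary. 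By the previous step both are strictly positive, so $c := \min_{\omega \in \Omega} \pi_\omega/\pi'_\omega$ is well-defined and at most $1$ (since both sum to $1$, the minimum of a ratio whose average weighted by $\pi'$ is $1$ cannot exceed $1$). The vector $\pi - c\pi'$ is nonnegative, stationary by linearity, and has at least one zero entry by the definition of $c$. If it is nonzero we could normalize to get a stationary distribution with a zero coordinate, contradicting positivity. Hence $\pi = c\pi'$, and taking sums forces $c = 1$, so $\pi = \pi'$.

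The main obstacle is the uniqueness step: naively trying to analyze $\pi - \pi'$ as a signed measure in $\ker(I-P)$ can lead one into circular appeals to Perron–Frobenius, and one must instead exploit positivity via the $\min$-and-subtract trick to reduce uniqueness to the already-established positivity claim. The remaining steps are essentially standard finite-dimensional compactness and nonnegativity arguments.
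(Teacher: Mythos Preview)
Your proof is correct. Note, however, that the paper does not actually prove this statement: it is quoted as a preliminary from \cite[Corollary~1.17]{levin2017markov} with no argument given, so there is no in-paper proof to compare against.

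Since you explicitly flag deviating from the cited source, a brief comparison with Levin--Peres--Wilmer is still worthwhile. There, existence (their Proposition~1.14) is obtained constructively via expected return times, $\pi_\omega \propto \E_{\omega_0}[\text{visits to }\omega\text{ before returning to }\omega_0]$, which simultaneously yields positivity from irreducibility; uniqueness (their Corollary~1.17) then follows from a harmonic-function maximum principle. Your route---Ces\`aro averages plus compactness for existence, then positivity by contradiction, then the min-and-subtract trick for uniqueness---is equally standard and arguably cleaner in the finite-state setting, trading the probabilistic return-time machinery for soft linear-algebraic arguments. The only cost is that your $\pi$ is initially non-constructive (a subsequential limit), whereas the return-time formula gives an explicit expression; for the purposes of this paper, which only ever invokes existence, uniqueness, and strict positivity of $\pi$, this makes no difference.
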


    A well-studied type of Markov processes are reversible. We introduce them formally here.

    \begin{definition}[{\cite[Section~1.6]{levin2017markov}}]
        Let $M = (\Omega,P)$ be a Markov process with stationary distribution $\pi$. We say that $M$ is reversible with stationary distribution $\pi$ if for all $\omega_1,\omega_2 \in \Omega$, we have
        \[\pi_{\omega_1}P_{\omega_1,\omega_2} = \pi_{\omega_2}P_{\omega_2,\omega_1}.\]
        This condition is referred to as the detailed balance condition. If $M$ is irreducible, then $\pi$ is unique by \cref{thm:pi-unique} and so we simply say that $M$ is reversible.
    \end{definition}

    Note that this condition is a severe restriction, as it reduces the total number of degrees of freedom in $P$ significantly. Nevertheless, this type of Markov processes is most important to us, because it has a direct connection to random walks on graphs, which we discuss next.

    \subsection{Random walks}
    \label{subsec:random-walks}

    We now focus our attention to random walks on graphs.

    \begin{definition}[{\cite[Section~9.1]{levin2017markov}}]
        Let $G = (V,E)$ be an undirected graph, with resistances $r : E \to \R_{>0}$. Now, let $P \in \R_{\geq0}^{\Omega \times \Omega}$ such that
        \[P_{vw} = \sum_{e \in N(v) \cap N(w)} \frac{r_v}{r_e}, \qquad \text{where} \qquad r_v = \left[\sum_{e \in N(v)} \frac{1}{r_e}\right]^{-1}.\]
        The Markov process $M = (V,P)$ is the \textit{random walk on $G$ with resistances $r$}.
    \end{definition}

    We readily verify that $M$ is indeed a valid Markov process, because all the row-sums of $P$ are indeed $1$. We now derive some interesting properties of these random walks.

    \begin{theorem}[{\cite[Section~9.1]{levin2017markov}}]
        Let $G = (V,E)$ be an undirected graph with resistances $r : E \to \R_{>0}$, and let $M = (V,P)$ be the random walk on $G$ with resistances $r$.
        \begin{enumerate}[nosep]
            \item Let $\pi \in \Delta_V$ be defined by
            \[\pi_v = \frac{R}{r_v}, \qquad \text{where} \qquad R = \left[\sum_{v \in V} \frac{1}{r_v}\right]^{-1}.\]
            Then, $M$ is reversible with stationary distribution $\pi$.
            \item $G$ is connected if and only if $M$ is irreducible. In that case, $\pi$ is unique.
        \end{enumerate}
    \end{theorem}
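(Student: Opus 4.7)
The plan is to verify each clause directly from the definitions, as this is essentially a bookkeeping result relating the combinatorial structure of $G$ to the algebraic properties of $P$.

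For part 1, I would first check that $\pi$ is a valid probability distribution, which is immediate: $\sum_{v \in V} \pi_v = R \sum_{v \in V} 1/r_v = R/R = 1$, and all entries are positive. Then I would verify the detailed balance condition, which is the main computation. Plugging in the definitions,
\[\pi_v P_{vw} \;=\; \frac{R}{r_v} \sum_{e \in N(v) \cap N(w)} \frac{r_v}{r_e} \;=\; R \sum_{e \in N(v) \cap N(w)} \frac{1}{r_e}.\]
The key observation is that $N(v) \cap N(w)$ is manifestly symmetric in $v$ and $w$ (it is the set of edges joining them), so the right-hand side is invariant under swapping $v$ and $w$, yielding $\pi_v P_{vw} = \pi_w P_{wv}$. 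Stationarity then follows from a standard one-line argument: summing the detailed balance equation over $v$ gives $(\pi P)_w = \sum_v \pi_v P_{vw} = \pi_w \sum_v P_{wv} = \pi_w$, since the rows of $P$ sum to $1$.

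For part 2, the forward direction is to show that if $G$ is connected, then $M$ is irreducible. Given any $v, w \in V$, connectivity produces a path $v = v_0, v_1, \dots, v_k = w$ in $G$; since each consecutive pair $v_i, v_{i+1}$ shares at least one edge, we have $P_{v_i v_{i+1}} > 0$, and hence $(P^k)_{vw} \geq \prod_i P_{v_i v_{i+1}} > 0$. For the converse, if $G$ is disconnected, take $v, w$ in different components: then $N(v') \cap N(w') = \varnothing$ for every $v'$ in the component of $v$ and $w'$ in the component of $w$, so $P_{v'w'} = 0$ for all such pairs, and induction on $t$ shows $(P^t)_{vw} = 0$ for all $t$, contradicting irreducibility. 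Finally, uniqueness of $\pi$ in the irreducible case is a direct citation of \cref{thm:pi-unique}.

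I do not anticipate a serious obstacle. The only subtlety worth being careful about is the handling of multigraphs (since the definition of $P_{vw}$ uses a sum over $N(v) \cap N(w)$, which can have more than one element) and the treatment of self-loops if the convention allows them; both are accommodated automatically by the symmetry of $N(v) \cap N(w)$ and the identity $r_v = [\sum_{e \in N(v)} 1/r_e]^{-1}$, so no special case analysis is needed.
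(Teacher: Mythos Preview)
Your proof is correct and is exactly the standard verification one would expect. Note, however, that the paper does not actually prove this theorem: it is stated in the preliminaries with a citation to \cite[Section~9.1]{levin2017markov} and no argument is given. So there is no ``paper's own proof'' to compare against; your write-up simply supplies the routine check that the reference contains.
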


    We see from the above theorem that random walks on connected undirected graphs generate irreducible, reversible Markov processes. It turns out that the converse is also true.

    \begin{theorem}[{\cite[Section~9.1]{levin2017markov}}]
        \label{thm:random-walk-characterization}
        Let $M = (V,P)$ be an irreducible, reversible Markov process, with stationary distribution $\pi$. Let $G = (\Omega,E)$ be defined as
        \[E = \{\{v,w\} : v,w \in V, P_{vw} > 0\}, \qquad \text{and} \qquad r_{\{v,w\}} = \frac{1}{\pi_vP_{vw}}.\]
        Let $M'$ be the random walk on $G$ with resistances $r$. Then, $M = M'$.
    \end{theorem}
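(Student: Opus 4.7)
The plan is to show the equality $M = M'$ by computing the transition matrix $P'$ of $M'$ explicitly from the definitions and checking it equals $P$. The argument is essentially a verification, but it has three parts that need to be handled in order.

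First, I would check that $G$ and the resistances $r$ are well-defined. Since $M$ is irreducible, \cref{thm:pi-unique} gives $\pi_v > 0$ for every $v \in V$, and reversibility means $\pi_v P_{vw} = \pi_w P_{wv}$ for all $v,w \in V$. This symmetry ensures $P_{vw} > 0 \iff P_{wv} > 0$, so the edge set $E = \{\{v,w\} : P_{vw} > 0\}$ is a legitimate undirected graph, and the resistance $r_{\{v,w\}} = 1/(\pi_v P_{vw}) = 1/(\pi_w P_{wv})$ does not depend on the chosen orientation of the edge.

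Second, I would compute $r_v$, the effective resistance weight at each vertex, directly from the definition:
\[
\frac{1}{r_v} = \sum_{e \in N(v)} \frac{1}{r_e} = \sum_{w : P_{vw} > 0} \pi_v P_{vw} = \pi_v \sum_{w \in V} P_{vw} = \pi_v,
\]
where the last equality uses that the rows of $P$ sum to $1$. Hence $r_v = 1/\pi_v$.

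Third, I would substitute into the definition of the transition matrix of $M' = (V,P')$: for any $v,w \in V$ with $P_{vw} > 0$,
\[
P'_{vw} = \sum_{e \in N(v) \cap N(w)} \frac{r_v}{r_e} = \frac{1/\pi_v}{1/(\pi_v P_{vw})} = P_{vw},
\]
and for any $v,w$ with $P_{vw} = 0$, the edge $\{v,w\}$ is not in $E$, so $P'_{vw} = 0 = P_{vw}$ as well. This gives $P' = P$ and therefore $M = M'$.

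The only mildly subtle point is the bookkeeping in the first step: one has to verify that reversibility together with irreducibility actually makes both $E$ and $r$ well-defined as unordered objects, and that self-loops (edges $\{v,v\}$ arising when $P_{vv} > 0$) contribute consistently in the sum defining $r_v$ under whatever convention is adopted for $N(v)$. Once this is set up cleanly, the remaining two steps are one-line computations, so I do not expect a genuine obstacle.
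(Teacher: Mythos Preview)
Your proposal is correct. The paper does not actually prove this theorem: it is cited from \cite[Section~9.1]{levin2017markov} without proof, and the only thing the paper adds is a one-sentence remark on well-definedness (that detailed balance makes $E$ and $r_{\{v,w\}}$ symmetric in $v,w$), which is exactly your first step. Your second and third steps constitute the standard verification and are fine; the self-loop caveat you flag is real but is a notational convention issue rather than a mathematical obstacle.
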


    Note that the above is well-defined, because we know from the detailed balance condition that $P_{vw} > 0$ if and only if $P_{wv} > 0$, for any two states $v,w \in V$. Similarly, the definition or $r_{\{v,w\}}$ makes sense because the expression on the right-hand side is symmetric in $v$ and $w$ due to the detailed balance condition.

    \subsection{Electrical networks}

    We can additionally think about undirected graphs with edge-resistances as electrical networks. We follow the exposition in \cite{cornelissen2025quantum}, but very similar expositions can be found in \cite{levin2017markov}.

    \begin{definition}[{\cite[Section~4.1]{cornelissen2025quantum}}]
        \label{def:effective-resistance}
        Let $G = (V,E)$ be an undirected graph, with resistances $r : E \to \R_{>0}$.
        \begin{enumerate}[nosep]
            \item A flow is a function $f : E \to \R$. We embed all flows in the flow space $\C^E = \Span\{\ket{e} : e \in E\}$ by
            \[\ket{f} = \sum_{e \in E} f_e\sqrt{r_e}\ket{e}.\]
            We refer to $\norm{\ket{f}}^2$ as the energy dissipated by $f$, or simply the energy of $f$.
            \item For any flow $f$, we define the net-flow $\delta_f : V \to \R$ as
            \[\delta_f(v) = \sum_{e \in N_-(v)} f_e - \sum_{e \in N_+(v)} f_e.\]
            If $\delta_f \equiv 0$, then $f$ is a circulation, and we define the circulation space as $\mathcal{C} = \{\ket{f} : f \text{ is a circulation}\}$.
            \item For any net-flow $\delta : V \to \R$, the set of corresponding flows $f$ such that $\delta_f = \delta$ forms an affine subspace, i.e., $\ket{f_\delta^{\min}} + \mathcal{C}$. We refer to $f_{\delta}^{\min}$ as the minimum-energy flow with net-flow $\delta$, and we write
            \[R(G,r;\delta) = \norm{\ket{f_{\delta}^{\min}}}^2.\]
            We refer to this as the effective resistance of net-flow $\delta$ in $G$ with resistances $r$.
        \end{enumerate}
    \end{definition}

    We observe that our definition of effective resistance agrees with the usual definition between two nodes $v,w \in V$. Indeed, if we take $\delta(v) = 1$, $\delta(w) = -1$, and we take $\delta \equiv 0$ everywhere else, then $R(G,r;\delta)$ is indeed the effective resistance between $v$ and $w$ in $G$, as witnessed by Thomson's principle (see e.g., \cite[Theorem~9.10]{levin2017markov}). As such, one can view our definition of effective resistance as a generalization of the usual version to more general net-flows $\delta$.

    There exists a very natural way to compute the effective resistance through the graph Laplacian. We loosely follow the exposition as introduced in \cite{mieghem2023graph}.

    \begin{definition}[{\cite[Section~2.1]{mieghem2023graph}}]
        Let $G = (V,E)$ be an undirected graph with resistance $r : E \to \R_{>0}$.
        \begin{enumerate}[nosep]
            \item The weighted incidence matrix is the matrix $B \in \R^{V \times E}$ is defined as
            \[B_{ve} = \begin{cases}
                \frac{1}{\sqrt{r_e}}, & \text{if } e_- \neq e_+ \land v = e_-, \\
                -\frac{1}{\sqrt{r_e}}, & \text{if } e_- \neq e_+ \land v = e_+, \\
                0, & \text{otherwise}.
            \end{cases}\]
            \item The graph Laplacian $L \in \R^{V \times V}$ is defined as $L = BB^T$.
        \end{enumerate}
    \end{definition}

    We now observe that the effective resistance can be computed through the Laplacian.

    \begin{theorem}
        \label{thm:resistance-through-laplacian}
        $R(G,r;\delta) = \delta^TL^+\delta$, where $L^+$ is the Moore-Penrose pseudo-inverse of $L$.
    \end{theorem}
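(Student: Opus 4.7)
The plan is to reformulate the effective resistance as a minimum-norm problem constrained by the weighted incidence matrix $B$, and then produce the minimizing flow explicitly from $L^{+}\delta$ using the basic identity $LL^{+}\delta = \delta$ valid on $\mathrm{range}(L)$.

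First I would verify that the weighted incidence matrix and the embedding of flows into $\C^E$ are compatible, namely that $B\ket{f} = \delta_f$. Expanding definitions,
\[(B\ket{f})_v = \sum_{e : v = e_-} \frac{1}{\sqrt{r_e}} \cdot f_e\sqrt{r_e} \;-\; \sum_{e : v = e_+} \frac{1}{\sqrt{r_e}} \cdot f_e\sqrt{r_e} = \delta_f(v).\]
Consequently the circulation space is exactly $\mathcal{C} = \ker(B)$, and the effective resistance rewrites as the constrained minimization
\[R(G,r;\delta) = \min\{\|g\|^2 : g \in \C^E,\ Bg = \delta\}.\]
The unique minimizer is the component of any feasible $g$ in $(\ker B)^{\perp} = \mathrm{range}(B^T)$.

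Next I would construct this minimizer directly from $L^{+}\delta$, avoiding an SVD computation. Set $x := L^{+}\delta$ and $g := B^T x$. Then $Bg = BB^T x = Lx = LL^{+}\delta$, and since for any valid net-flow $\delta$ we have $\delta \in \mathrm{range}(B) = \mathrm{range}(L)$, this equals $\delta$; hence $g$ is feasible. By construction $g \in \mathrm{range}(B^T)$, so $g$ is the minimum-norm feasible flow, i.e.\ $g = \ket{f_{\delta}^{\min}}$. Finally I compute the energy,
\[\|g\|^2 = x^T BB^T x = x^T L x = x^T (LL^{+}\delta) = x^T \delta = \delta^T L^{+} \delta,\]
using the symmetry $L^{+} = (L^{+})^T$ inherited from $L$ being symmetric PSD.

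The only real subtlety, which I would flag but not dwell on, is the range condition $\delta \in \mathrm{range}(L)$. On a connected graph this is exactly $\sum_v \delta_v = 0$, which is automatic whenever $\delta$ actually arises as the net-flow of some $f$ (each edge contributes $\pm f_e$ to exactly two vertices); on a disconnected graph the analogous statement holds per connected component, again automatically. So the formula $R(G,r;\delta) = \delta^T L^{+}\delta$ holds on the whole domain where the minimum-energy flow is defined, and there are no genuine obstacles beyond this bookkeeping.
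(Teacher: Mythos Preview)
Your proof is correct and follows essentially the same approach as the paper: both verify $B\ket{f}=\delta_f$, identify $R(G,r;\delta)$ as the minimum-norm solution to $Bg=\delta$, and then evaluate the energy as $\delta^TL^+\delta$. The only cosmetic difference is that the paper writes the minimizer as $B^+\delta$ and invokes the identity $(B^+)^TB^+=(BB^T)^+$, whereas you construct it explicitly as $B^TL^+\delta$ and verify feasibility and orthogonality by hand; your version is slightly more self-contained, but the underlying argument is the same.
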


    One can find the proof of this theorem in \cite[Section~5.1]{mieghem2023graph} for the case where $\delta = e_v - e_w$, for $v,w \in V$, but we need this result for arbitrary net-flows $\delta$. This is why we provide a self-contained proof here.

    \begin{proof}[Proof of \cref{thm:resistance-through-laplacian}]
        It is elementary to check that for any flow $f$, we have $B\ket{f} = \delta_f$. As such, the minimum-energy flow $f$ that has net-flow $\delta$ satisfies $\ket{f_{\delta}^{\min}} = B^+\delta$. Now, we obtain
        \[R(G,r;\delta) = \braket{f_{\delta}^{\min}}{f_{\delta}^{\min}} = \delta^T(B^+)^TB^+\delta = \delta^T(BB^T)^+\delta = \delta^TL^+\delta.\qedhere\]
    \end{proof}

    For future reference, we relate the Laplacian of the graph that is generated from an irreducible, reversible Markov process to its probability-transition matrix.

    \begin{theorem}[{\cite[Section~3.11]{mieghem2023graph}}]
        \label{thm:laplacian-and-prob-transition}
        Let $M = (V,P)$ be an irreducible, reversible Markov process with stationary distribution $\pi$. Let $G$ be the corresponding graph with resistances $r$. Let $L$ be its Laplacian. Then,
        \[L = \diag(\pi)(I - P) \qquad \Leftrightarrow \qquad P = I - \diag(\pi)^{-1}L.\]
    \end{theorem}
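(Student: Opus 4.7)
The plan is to verify the left-hand equality $L = \diag(\pi)(I-P)$ entry by entry; the right-hand equality then follows by multiplying on the left by $\diag(\pi)^{-1}$, which is well-defined since \cref{thm:pi-unique} guarantees $\pi_v > 0$ for all $v \in V$.

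First I would handle the off-diagonal entries. For $v \neq w$, expanding gives $L_{vw} = (BB^T)_{vw} = \sum_{e \in E} B_{ve} B_{we}$, and only edges $e$ incident to both $v$ and $w$ contribute. By the construction in \cref{thm:random-walk-characterization}, the edge set is a set of unordered pairs, so $G$ is simple and the only candidate is $e = \{v,w\}$, which is present precisely when $P_{vw} > 0$. In that case $B_{ve} B_{we} = -1/r_e$, and by the formula $r_{\{v,w\}} = 1/(\pi_v P_{vw})$ from \cref{thm:random-walk-characterization} this equals $-\pi_v P_{vw}$; if $P_{vw} = 0$ both sides are zero. Either way $L_{vw} = -\pi_v P_{vw} = [\diag(\pi)(I-P)]_{vw}$.

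Next I would verify the diagonal. Here $L_{vv} = \sum_{e \in E} B_{ve}^2$, with contributions only from non-self-loop edges incident to $v$, for which $B_{ve}^2 = 1/r_e$. Using the same identity $1/r_{\{v,w\}} = \pi_v P_{vw}$, this sum becomes $\pi_v \sum_{w \neq v} P_{vw} = \pi_v(1 - P_{vv})$, invoking the row-stochasticity of $P$. This matches $[\diag(\pi)(I-P)]_{vv}$, completing the verification.

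There is no real obstacle here; the only point that deserves care is ensuring self-loops are treated consistently. The weighted incidence matrix $B$ is explicitly defined to vanish on self-loops, so any edges $\{v\} \in E$ arising from $P_{vv} > 0$ contribute nothing to $L$, which is compatible with the diagonal identity since the $w = v$ term of $\sum_w P_{vw}$ is correctly absorbed into $1 - P_{vv}$ on the right-hand side.
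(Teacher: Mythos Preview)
The paper does not supply its own proof of this statement; it is quoted as a known fact from \cite{mieghem2023graph}. Your entry-by-entry verification is correct and is the standard way to establish the identity, and your treatment of the self-loop case (where $B$ vanishes by definition, so any contribution from $P_{vv}$ is absorbed into the $1-P_{vv}$ factor on the diagonal) is exactly the point that needs care.
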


    \subsection{Quantum adversary bound for state-conversion and transducers}

    We use the formulation of the quantum adversary bound for state conversion from \cite[Section~8.1]{belovs2024taming}. This semi-definite optimization program was first introduced in \cite[Definition~3.2]{lee2011quantum}, also appeared in \cite[Definition~8]{belovs2015variations}, and subsequently in \cite[Definition~7.3]{belovs2023one}.

    \begin{definition}[Quantum adversary bound for state conversion]
        Let $\D$ be some finite domain, and for all $x \in \D$, we define $\ket{\sigma_x} \in \V$, $\ket{\tau_x} \in \V$ and $O_x \in \mathcal{L}(\mathcal{M})$. Then, $P = \{(\ket{\sigma_x},\ket{\tau_x},O_x)\}_{x \in \D}$ is a state-conversion problem, and its adversary bound is the following optimization program:
        \begin{align*}
            \min\quad & \max_{x \in \D} \norm{\ket{w_x}}^2, \\
            \text{s.t.}\quad & \braket{\sigma_x}{\sigma_y} - \braket{\tau_x}{\tau_y} = \bra{w_x}\left(\left(I_{\mathcal{M}} - O_x^{\dagger}O_y\right) \otimes I_{\mathcal{W}}\right)\ket{w_y}, & \forall (x,y) \in \D, \\
            & \ket{w_x} \in \mathcal{M} \otimes \mathcal{W}, \text{ where } \mathcal{W} \text{ is a Hilbert space}, & \forall x \in \D.
        \end{align*}
        We denote the optimal value by $\ADV(P)$. If all the states satisfy $\norm{\ket{\sigma_x}} = \norm{\ket{\tau_x}} = 1$, for all $x \in \D$, then we say that this is a unit-norm state-conversion problem, and if the operations $O_x$ are unitaries, we say that $P$ is a unitary-oracle state-conversion problem. For a function $f : \Sigma_1 \to \Sigma_2$, we define $\ket{\sigma_x} = \ket{\perp}$ and $\ket{\tau_x} = \ket{f(x)}$, for all $x \in \D$, and we abbreviate the optimal value to $\ADV(\{(\ket{\bot}, \ket{f(x)}, O_x)\}_{x \in \D}) = \ADV(f, \{O_x\}_{x \in \D})$.
    \end{definition}

    Note that this definition is subtly different from other versions of the adversary bound that can be found in the existing literature. We refer to \cite[Section~8.2]{belovs2024taming} for a discussion on why these differ by a constant factor of $2$.

    The relevance of the quantum adversary bound is highlighted by the fact that one can convert a feasible solution to a transducer.

    \begin{definition}[{\cite{belovs2024taming}}]
        A \textit{transducer} $U$ is a unitary on $\V \oplus \H$. Let $\ket{\sigma},\ket{\tau} \in \V$ and $\ket{w} \in \H$. If $U$ performs the mapping $\ket{\sigma} \oplus \ket{w} \overset{U}{\mapsto} \ket{\tau} \oplus \ket{w}$, then we say that $U$ transduces $\ket{\sigma}$ to $\ket{\tau}$, and we write $\ket{\sigma} \overset{U}{\rightsquigarrow} \ket{\tau}$, and we refer to $\ket{w}$ as the catalyst for this transduction.
    \end{definition}

    We now borrow the following properties of transducers.

    \begin{theorem}[{\cite[Theorem~5.1]{belovs2024taming}}]
        Let $U$ be a transducer on $\V \oplus \H$. Then, for every $\ket{\sigma} \in \V$, there exists a unique $\ket{\tau} \in \V$ such that $\ket{\sigma} \overset{U}{\rightsquigarrow} \ket{\tau}$. Moreover, the minimal catalyst, i.e., the catalyst $\ket{w} \in \H$ for this transduction of minimal norm is uniquely defined. Finally, the mappings $U_{\transduce\V} : \ket{\sigma} \mapsto \ket{\tau}$ and $U_{\rightsquigarrow\V} : \ket{\sigma} \mapsto \ket{w}$ are linear.
    \end{theorem}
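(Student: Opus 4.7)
The plan is to work in block form. Decomposing with respect to $\V \oplus \H$, I would write
\[
U = \begin{pmatrix} A & B \\ C & D \end{pmatrix}, \qquad A : \V \to \V,\; B : \H \to \V,\; C : \V \to \H,\; D : \H \to \H.
\]
The transduction condition $U(\ket{\sigma} \oplus \ket{w}) = \ket{\tau} \oplus \ket{w}$ then unpacks into $\ket{\tau} = A\ket{\sigma} + B\ket{w}$ together with the linear equation $(I - D)\ket{w} = C\ket{\sigma}$. All four assertions of the theorem reduce to analyzing solvability, non-uniqueness, and the minimum-norm solution of this equation in $\ket{w}$, and tracking how the resulting $\ket{\tau}$ depends on the choice of solution.

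For existence of a catalyst I would show $C\ket{\sigma} \in \mathrm{range}(I - D)$. Since $D$ is a block of a unitary it is a contraction, and a standard fact about contractions yields $\ker(I-D) = \ker(I-D^{\dagger})$, whence $\mathrm{range}(I - D) = \ker(I - D)^{\perp}$. The key step is to show $C^{\dagger}\ket{v} = 0$ for every $\ket{v}$ in this common kernel. Reading off the $\H$-block of $U U^{\dagger} = I$ gives $C C^{\dagger} + D D^{\dagger} = I_{\H}$; applied to $\ket{v}$ and using $D D^{\dagger}\ket{v} = D\ket{v} = \ket{v}$, this yields $C C^{\dagger}\ket{v} = 0$, hence $C^{\dagger}\ket{v} = 0$. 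Therefore $C\ket{\sigma} \perp \ker(I - D)$, and a catalyst exists. I expect this to be the main obstacle, since $D$ may have eigenvalue $1$ so that $I - D$ need not be invertible; it is the interplay of the unitarity of $U$ with the contraction structure of $D$ that forces $C\ket{\sigma}$ into the correct subspace.

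Finally, any two catalysts differ by some $\ket{v} \in \ker(I - D)$, so uniqueness of $\ket{\tau}$ reduces to showing $B\ket{v} = 0$. Reading off the $\V$-block of $U U^{\dagger} = I$ gives $A C^{\dagger} + B D^{\dagger} = 0$; combined with $C^{\dagger}\ket{v} = 0$ and $D^{\dagger}\ket{v} = \ket{v}$ this forces $B\ket{v} = 0$. The minimum-norm catalyst is the unique solution orthogonal to $\ker(I - D)$, namely $\ket{w}_{\min} = (I - D)^{+} C\ket{\sigma}$ via the Moore-Penrose pseudo-inverse. Linearity of $U_{\rightsquigarrow\V} = (I - D)^{+} C$ and $U_{\transduce\V} = A + B(I - D)^{+} C$ is then immediate as compositions of linear maps.
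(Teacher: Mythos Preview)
Your proof is correct. The block decomposition of $U$ with respect to $\V \oplus \H$, together with the unitarity relations, is the natural route: the contraction identity $\ker(I-D) = \ker(I-D^{\dagger})$ is exactly the ingredient that forces $C\ket{\sigma}$ into $\mathrm{range}(I-D)$, and the off-diagonal block $AC^{\dagger} + BD^{\dagger} = 0$ of $UU^{\dagger} = I$ then kills the ambiguity in $\ket{\tau}$. The explicit formulas $U_{\rightsquigarrow\V} = (I-D)^{+}C$ and $U_{\transduce\V} = A + B(I-D)^{+}C$ are the right closed forms.

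There is, however, nothing to compare against in this paper: the statement appears in the preliminaries as a citation to \cite[Theorem~5.1]{belovs2024taming} and is not proved here. Your argument is self-contained and would serve as a complete proof.

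One small wording issue: what you call ``the $\V$-block of $UU^{\dagger} = I$'' is really the off-diagonal $(\V,\H)$-block; the equation $AC^{\dagger} + BD^{\dagger} = 0$ you actually use is correct, so this is purely cosmetic.
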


    Finally, we observe that transduction actions can be emulated in quantum algorithms, and that the efficiency of this operation depends on the norm of the catalyst state.

    \begin{theorem}[{\cite[Theorem~5.5]{belovs2024taming}}]
        Let $U$ be a transducer on $\V \oplus \H$, and $K > 0$. Then, there exists a quantum circuit $\A$ that calls $U$ a total of $K$ times, and that implements the mapping
        \[\norm{\A\ket{\sigma} - U_{\transduce\V}\ket{\sigma}} \leq \frac{2\norm{U_{\rightsquigarrow\V}\ket{\sigma}}}{\sqrt{K}}.\]
    \end{theorem}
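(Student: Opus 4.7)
The plan is to exploit the defining identity $U(\ket{\sigma}\oplus\ket{w}) = \ket{\tau}\oplus\ket{w}$ and convert it into an error bound that decays as $1/\sqrt{K}$ rather than as a constant. Writing $\ket{\Phi_0} := \ket{\sigma}\oplus\ket{0}_{\H}$ and $\ket{\Psi_0} := \ket{\tau}\oplus\ket{0}_{\H}$, the transducer identity rearranges to $U\ket{\Phi_0} - \ket{\Psi_0} = (U - I)(0 \oplus \ket{w})$, whose norm is at most $2\norm{\ket{w}}$. Hence a single call to $U$ already produces $\ket{\tau}$ up to additive error $O(\norm{U_{\rightsquigarrow\V}\ket{\sigma}})$; the whole point of the theorem is to drive this down by a factor $\sqrt{K}$ by using $U$ a total of $K$ times.

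First, I would introduce an auxiliary "clock" register $\C^K$ and design a circuit $\A$ that, in round $k \in \{1,\dots,K\}$, applies $U$ on $\V\oplus\H$ followed by the reflection $I_{\V} \oplus (2\ket{0}\bra{0}_{\H} - I_{\H})$ on the catalyst register, advancing the clock. The key intuition is that if the catalyst $\ket{w}$ had been supplied from the start, every round would act as the identity on the stable state $\ket{\sigma}\oplus\ket{w}$ and the output would be exactly $\ket{\tau}\oplus\ket{w}$. The only source of error is the mismatch $\ket{\sigma}\oplus\ket{0} = (\ket{\sigma}\oplus\ket{w}) - (0\oplus\ket{w})$, and the interleaved reflections have the effect of spreading this single error vector across $K$ mutually orthogonal directions inside a two-dimensional invariant subspace of the walk operator $W := (I_{\V} \oplus R_{\H})\, U$, where $R_{\H} := 2\ket{0}\bra{0}_{\H} - I_{\H}$.

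The technical heart, and the main obstacle, is a Szegedy-style spectral analysis of $W$ restricted to the invariant subspace spanned by $\ket{\sigma}\oplus\ket{w}$ and its image. There $W$ acts as a rotation by a small angle whose sine is controlled precisely by $\norm{0 \oplus \ket{w}} = \norm{U_{\rightsquigarrow\V}\ket{\sigma}}$; after $K$ such rotations, the component orthogonal to $\ket{\tau}\oplus\ket{0}_{\H}$ has amplitude proportional to $\norm{U_{\rightsquigarrow\V}\ket{\sigma}}/\sqrt{K}$ — exactly the $1/\sqrt{K}$ suppression that powers amplitude amplification and phase estimation. I would then invoke linearity of $U_{\transduce\V}$ and $U_{\rightsquigarrow\V}$, plus uniqueness of the minimal catalyst (from the previously quoted Theorem~5.1), to conclude that the error scales with $\norm{U_{\rightsquigarrow\V}\ket{\sigma}}$ rather than with a worst-case catalyst norm, yielding the stated bound $\norm{\A\ket{\sigma} - U_{\transduce\V}\ket{\sigma}} \leq 2\norm{U_{\rightsquigarrow\V}\ket{\sigma}}/\sqrt{K}$. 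The subtlety I expect to grapple with is verifying that the two-dimensional subspace used in the analysis is genuinely invariant under $W$ — equivalently, that the reflections do not leak amplitude into directions that subsequent rounds cannot correct — which is what replaces the loose triangle inequality with the Euclidean-type bound responsible for the $\sqrt{K}$ gain.
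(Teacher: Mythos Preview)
First, note that this statement is not proved in the present paper: it is quoted from \cite[Theorem~5.5]{belovs2024taming} as a preliminary, with no argument supplied here. So there is no ``paper's own proof'' to compare against --- only the construction in the cited source.

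Your proposal has a genuine gap at its core. You set $W := (I_\V \oplus R_\H)\,U$ with $R_\H = 2\ket{0}\bra{0}_\H - I_\H$ and claim a Szegedy-style two-dimensional $W$-invariant subspace containing $\ket{\sigma}\oplus\ket{w}$. But the Jordan/Szegedy decomposition into $2$-dimensional rotating blocks applies to products of \emph{two reflections}; here $U$ is an arbitrary unitary, so $W$ has no reason to preserve $\Span\{\ket{\sigma}\oplus\ket{w},\, W(\ket{\sigma}\oplus\ket{w})\}$. Concretely, $W(\ket{\sigma}\oplus\ket{w}) = \ket{\tau}\oplus R_\H\ket{w}$, and closing the subspace would require control over $U(\ket{\tau}\oplus R_\H\ket{w})$, about which the transducer identity tells you nothing. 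You flag exactly this as ``the subtlety I expect to grapple with'', but it is not a detail to be patched --- it is the entire obstruction, and the reflection $R_\H$ (anchored at a state $\ket{0}_\H$ that bears no relation to the catalyst $\ket{w}$) gives no handle on it. A secondary issue: even granting a small-angle rotation, $K$ iterations accumulate angle linearly in $K$; that is the amplitude-amplification mechanism, not a source of a $1/\sqrt{K}$ \emph{residual error}, so the heuristic you describe does not actually match the target bound.

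The argument in \cite{belovs2024taming} is not spectral. It attaches $K$ orthogonal clock-indexed copies of $\H$ and uses linearity of the catalyst map together with orthogonality of the copies, so that the missing-catalyst defect is distributed across the $K$ slots and its norm acquires the $1/\sqrt{K}$ factor by a direct computation --- no invariant-subspace or rotation analysis is involved. If you want to repair your sketch, drop $R_\H$ and the rotation picture, and instead make the clock register you introduced actually do work: in your current write-up it is mentioned and then plays no role in the analysis of $W$.
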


    Next, it is shown that one can turn an adversary bound solution into a transducer.

    \begin{theorem}[{\cite[Section~8.1]{belovs2024taming}}]
        Let $\{\sigma_x\}_{x \in \D} \subseteq \V_1$, $\{\tau_x\}_{x \in \D} \subseteq \V_2$ and $\{O_x\}_{x \in \D} \subseteq \mathcal{L}(\mathcal{M})$ define a state-conversion problem $P$. Let $\{\ket{w_x}\}_{x \in \D} \subseteq \mathcal{M} \otimes \mathcal{W}$ be a feasible solution to the quantum adversary bound for state conversion. Then, there exists a unitary $U$ acting on $(\V_1 \oplus \V_2) \oplus (\mathcal{M} \otimes \mathcal{W})$, such that $U_x := U \cdot ((I_{\V_1} \oplus I_{\V_2}) \oplus (O_x \otimes I_{\mathcal{W}}))$ acts as
        \[(U_x)_{\transduce(\V_1 \oplus \V_2)} : \ket{\sigma_x} \mapsto \ket{\tau_x}, \qquad \text{and} \qquad (U_x)_{\rightsquigarrow(\V_1 \oplus \V_2)} : \ket{\sigma_x} \mapsto \ket{w_x}.\]
        Consequently, $\mathsf{Q}(P) \in O(\ADV(P))$.
    \end{theorem}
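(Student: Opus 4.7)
The plan is to first use the structural lemma (informal version of \cref{thm:reformulation-state-conversion}) to reduce the state-conversion problem $P$ to an equivalent state-reflection problem $R$, and then construct the transducer $U$ as a reflection about a subspace built from the feasible solution $\{\ket{w_x^\pm}\}_{x \in \D}$ to $R$. The structural lemma provides witnesses $\ket{w_x^\pm}$ that are $\pm 1$ eigenvectors of $\overline{O_x} = (X \otimes I_\mathcal{M})(O_x \oplus O_x^\dagger)$ and satisfy $\norm{\ket{w_x^+}}^2 + \norm{\ket{w_x^-}}^2 \leq \norm{\ket{w_x}}^2$. Writing $\ket{\sigma_x^\pm} = (\ket{\sigma_x} \oplus \pm\ket{\tau_x})/\sqrt{2}$, observe that $\ket{\sigma_x} \oplus 0 = (\ket{\sigma_x^+} + \ket{\sigma_x^-})/\sqrt{2}$ and $0 \oplus \ket{\tau_x} = (\ket{\sigma_x^+} - \ket{\sigma_x^-})/\sqrt{2}$, so transducing $\ket{\sigma_x} \oplus 0$ to $0 \oplus \ket{\tau_x}$ reduces to fixing $\ket{\sigma_x^+}$ and negating $\ket{\sigma_x^-}$ in $\V_1 \oplus \V_2$.

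Next, I would define $\mathcal{A}_\pm = \Span\{\ket{\sigma_x^\pm} \oplus \ket{w_x^\pm} : x \in \D\}$ in $(\V_1 \oplus \V_2) \oplus (\mathcal{M} \otimes \mathcal{W})$ and take $U$ to be the reflection $2\Pi_+ - I$ about a subspace $\Pi_+$ chosen so that, after right-multiplication by $(I_{\V_1} \oplus I_{\V_2}) \oplus (O_x \otimes I_{\mathcal{W}})$, the vectors of $\mathcal{A}_+$ lie in the $+1$ eigenspace of $U_x$ and those of $\mathcal{A}_-$ lie in its $-1$ eigenspace, for every $x \in \D$ simultaneously. The eigenvector condition $\overline{O_x}\ket{w_x^\pm} = \pm\ket{w_x^\pm}$ is what bridges the abstract oracle $\overline{O_x}$ on the enlarged space and the concrete action of $O_x$ on the catalyst register; this is precisely the motivation for introducing $\overline{O_x}$ in the structural lemma, and it tells us what form the subspace $\Pi_+$ must take.

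The main obstacle is showing that such a single unitary $U$ (independent of $x$) actually exists, i.e., that the desired eigenspace constraints for all $x \in \D$ are mutually consistent. This amounts to verifying that the adversary constraint $\braket{\sigma_x}{\sigma_y} - \braket{\tau_x}{\tau_y} = \bra{w_x}((I_{\mathcal{M}} - O_x^\dagger O_y) \otimes I_{\mathcal{W}})\ket{w_y}$, rewritten in terms of the state-reflection data, forces the appropriate orthogonality between the $+1$-target vectors at $x$ and the $-1$-target vectors at $y$, so that the reflection $U$ is well-defined on the span of these vectors (and extends by identity on the orthogonal complement). Once $U$ is constructed, the transduction $\ket{\sigma_x} \oplus 0 \rightsquigarrow 0 \oplus \ket{\tau_x}$ with catalyst of norm at most $\norm{\ket{w_x}}$ follows immediately from the $\pm 1$ eigendecomposition of $U_x$, and feeding this into the preceding theorem on emulating transducers with $K \in \Theta(\max_x \norm{\ket{w_x}}^2)$ applications then yields $\mathsf{Q}(P) \in O(\ADV(P))$.
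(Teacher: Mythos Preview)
Your proposal is correct and is precisely the paper's own constructive argument in \cref{sec:transducer}: pass to the state-reflection problem via \cref{thm:reformulation-state-conversion}, then take $U=2\Pi_{\A}-I$ for a subspace $\A$ sandwiched between $\Span\{\ket{\sigma_x^+}\oplus\ket{w_x^+}\}$ and $\Span\{\ket{\sigma_x^-}\oplus\ket{w_x^-}\}^\perp$ (\cref{thm:transducers}), where the needed orthogonality is exactly the adversary constraint rewritten as $2\braket{\sigma_x^+}{\sigma_y^-}=-2\braket{w_x^+}{w_y^-}$. One small slip: the structural lemma actually gives $\norm{\ket{w_x^+}}^2=\norm{\ket{w_x^-}}^2=\norm{\ket{w_x}}^2$, not sum $\le\norm{\ket{w_x}}^2$; however the catalyst for $\ket{\sigma_x}\oplus 0=(\ket{\sigma_x^+}+\ket{\sigma_x^-})/\sqrt{2}$ is $(\ket{w_x^+}+\ket{w_x^-})/\sqrt{2}=\ket{w_x}\oplus 0$ by the explicit form $\ket{w_x^\pm}=(\ket{w_x}\oplus\pm O_x\ket{w_x})/\sqrt{2}$, so your catalyst-norm claim and the final $O(\ADV(P))$ bound are unaffected.
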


    \section{Effective resistances of Markov processes}
    \label{sec:effective-resistance}

    In this section, we explore the connection between Markov processes and the effective resistance. We start by defining some shorthand notation:

    \begin{definition}
        Let $M = (V,P)$ be an irreducible, reversible Markov process, and let $G = (V,E)$ be the corresponding undirected graph with resistance function $r$, as in \cref{thm:random-walk-characterization}. For any net-flow $\delta \in \R^V$, we write $R_{\mathrm{eff}}(P;\delta) = R_{\mathrm{eff}}(G,r;\delta)$. We also write
        \begin{enumerate}[nosep]
            \item For all $v,w \in V$, we write $R_{\mathrm{eff}}(G,r;v \leftrightarrow w) = R_{\mathrm{eff}}(G,r;\mathbbm{1}_v - \mathbbm{1}_w)$, i.e., the effective resistance between vertices $v$ and $w$.
            \item For any two distributions $\mu,\nu \in \Delta_V$, we write $R_{\mathrm{eff}}(G,r;\mu \leftrightarrow \nu) = R_{\mathrm{eff}}(G,r;\mu - \nu)$.
            \item For a distribution $\mu \in \Delta_V$ and a set $M \subseteq V$, we write
            \[R_{\mathrm{eff}}(G,r;\mu \leftrightarrow M) = \min_{\nu \in \Delta_V : \nu|_{V \setminus M} \equiv 0} R_{\mathrm{eff}}(G,r;\mu \leftrightarrow \nu).\]
        \end{enumerate}
        We extend all these notations $R(G,r;\cdot)$ to $R(P;\cdot)$ as well.
    \end{definition}

    The effective resistance $R_{\mathrm{eff}}(P; v \leftrightarrow w)$ between two distinct vertices $v,w \in V$ can be interpreted combinatorially as the ``commute time'' between $v$ and $w$, i.e., the expected time it takes to start a random walk in $v$, hit $w$, and then come back to $v$~\cite[Proposition~10.7]{levin2017markov}. Similarly, the effective resistance $R_{\mathrm{eff}}(P; \pi \leftrightarrow M)$ can be interpreted as the expected time to hit $M$ starting from a vertex sampled from $\pi$, where $\pi$ is the stationary distribution of the random walk.

    However, Apers, Gily\'en and Jeffery realized that \cite{apers2021unified} these combinatorial interpretations fail in more complicated settings. For instance, they showed that the effective resistance $R_{\mathrm{eff}}(P; \mu \leftrightarrow M)$, with $\mu \in \Delta_S$ and $S,M \subseteq V$, cannot be interpreted as some commute time between $S$ and $M$. This inherently limits our analysis of the effective resistance, especially in cases where the net-flow $\delta$ does not bear any apparent structure.

    How these combinatorial interpretations come into play can be very nicely illustrated when we consider fast-forwarding walks. Indeed, suppose that we have a Markov process $M = (V,P)$, and now we consider $M^t := (V,P^t)$, i.e., the process that takes $t > 1$ steps of the original walk. Using combinatorial arguments, we can now easily prove that
    \[R_{\mathrm{eff}}(P; v \leftrightarrow w) \leq t \cdot R_{\mathrm{eff}}(P^t; v \leftrightarrow w).\]
    Indeed, if we interpret the left- and right-hand side as commute times, then the inequality is obvious, because we are essentially considering the same random process on both sides of the equation, but on the left-hand side we are checking at every step whether we hit the vertex $w$ or returned back to $v$, whereas on the right-hand side we check this only once every $t$ steps. However, we cannot use these combinatorial arguments to prove, for instance, that $R_{\mathrm{eff}}(P; \delta) \leq t \cdot R_{\mathrm{eff}}(P^t; \delta)$, for any net-flow $\delta$.

    To mitigate this, we drop the combinatorial interpretation of the effective resistance, and interpret it linear-algebraically, i.e., using its relation to the graph Laplacian and the probability transition matrix~ \cref{thm:resistance-through-laplacian,thm:laplacian-and-prob-transition}. This allows us to obtain the following result, which we expect to be independently interesting:

    \begin{lemma}
        \label{lem:fast-forwarding}
        Let $M = (V,P)$ be an irreducible, reversible Markov process, with stationary distribution $\pi$. Let $\xi \in \R_V$ such that $\sum_{v \in V} \xi_v = 0$. We write $D = \diag(\pi)$, and $\widetilde{\xi} = D^{-1/2}\xi$. Then, we have the following two consequences:
        \begin{enumerate}[nosep]
            \item For integer $t \geq 1$, $R_{\mathrm{eff}}(P;\xi) \leq t \cdot R_{\mathrm{eff}}(P^t;\xi)$.
            \item If the spectral gap of $P$ is $\delta$, we have $R_{\mathrm{eff}}(P;\xi) \leq \|\widetilde{\xi}\|_2^2/\delta$.
        \end{enumerate}
    \end{lemma}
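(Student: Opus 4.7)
The plan is to symmetrize $P$ and reduce both parts to elementary scalar inequalities in the eigenvalues of the resulting symmetric operator. Reversibility makes $\widetilde{P} := D^{1/2} P D^{-1/2}$ symmetric, with top eigenvalue $1$ carried uniquely by the unit vector $D^{1/2}\mathbbm{1}$ and all other eigenvalues lying in $[-1,1)$. By \cref{thm:laplacian-and-prob-transition}, $L = D(I-P)$, so the conjugated Laplacian $\widetilde{L} := D^{-1/2} L D^{-1/2} = I - \widetilde{P}$ is symmetric positive semidefinite with kernel $\mathrm{span}(D^{1/2}\mathbbm{1})$.

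The key step is the reformulation $R_{\mathrm{eff}}(P;\xi) = \widetilde{\xi}^T \widetilde{L}^+ \widetilde{\xi}$. Starting from $R_{\mathrm{eff}}(P;\xi) = \xi^T L^+ \xi$ (\cref{thm:resistance-through-laplacian}) and writing $y := L^+\xi$, one has $\xi^T y = \widetilde{\xi}^T (D^{1/2}y)$. Setting $\widetilde{y} := D^{1/2}y$ gives $\widetilde{L}\widetilde{y} = \widetilde{\xi}$, though $\widetilde{y}$ need not be orthogonal to $\ker(\widetilde{L})$. Its $D^{1/2}\mathbbm{1}$-component nonetheless contributes nothing to $\widetilde{\xi}^T\widetilde{y}$, because $\mathbbm{1}^T\xi = 0$ forces $\widetilde{\xi}\perp D^{1/2}\mathbbm{1}$, so $\widetilde{\xi}^T\widetilde{y} = \widetilde{\xi}^T\widetilde{L}^+\widetilde{\xi}$ as claimed.

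Now diagonalize $\widetilde{P} = \sum_i \lambda_i v_i v_i^T$ orthonormally with $v_1 = D^{1/2}\mathbbm{1}$, and expand $\widetilde{\xi} = \sum_{i \geq 2} c_i v_i$ (no $c_1$ term by the previous orthogonality). Then
\[R_{\mathrm{eff}}(P;\xi) = \sum_{i \geq 2} \frac{c_i^2}{1-\lambda_i}, \qquad R_{\mathrm{eff}}(P^t;\xi) = \sum_{i \geq 2 \,:\, \lambda_i^t \neq 1} \frac{c_i^2}{1-\lambda_i^t},\]
the second identity following from applying the same reformulation to the reversible process $(V,P^t)$ whose symmetrization is $\widetilde{P}^t$. (If some $c_i \neq 0$ for an index with $\lambda_i = -1$ and $t$ even, the $P^t$-graph is disconnected and no flow realising $\xi$ exists, making $R_{\mathrm{eff}}(P^t;\xi) = +\infty$ and part~1 trivially true.) Part~1 thus reduces to the termwise inequality $\tfrac{1}{1-\lambda} \leq \tfrac{t}{1-\lambda^t}$, equivalently $\sum_{j=0}^{t-1}\lambda^j \leq t$, which is immediate since $\lambda^j \leq 1$ for $\lambda \in [-1,1]$. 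Part~2 follows from $1-\lambda_i \geq \delta$ for every $i \geq 2$, which gives $R_{\mathrm{eff}}(P;\xi) \leq \frac{1}{\delta}\sum_{i \geq 2} c_i^2 = \frac{\|\widetilde{\xi}\|_2^2}{\delta}$.

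The main obstacle is the reformulation in the second step: pseudo-inverses do not in general commute with similarity transformations, and the identity survives only because the zero-sum constraint on $\xi$ lets us discard the stationary-direction ambiguity in $\widetilde{y}$. Once in the spectral form, both parts collapse to one-line scalar comparisons.
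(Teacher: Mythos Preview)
Your proof is correct and follows essentially the same route as the paper: symmetrize $P$ via $D^{1/2}$, express $R_{\mathrm{eff}}$ spectrally as $\sum_i c_i^2/(1-\lambda_i)$, and reduce to the scalar inequalities $1/(1-\lambda) \leq t/(1-\lambda^t)$ and $1-\lambda_i \geq \delta$. You are in fact more careful than the paper on two points: you justify the identity $\xi^T L^+\xi = \widetilde{\xi}^T(I-\widetilde{P})^+\widetilde{\xi}$ (which the paper asserts without comment, even though pseudo-inverses need not respect conjugation by invertible matrices), and you handle the edge case $\lambda_i = -1$ with $t$ even.
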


    \begin{proof}
        Recall from \cref{thm:pi-unique} that the stationary distribution $\pi$ is uniquely defined. Next, recall from \cref{thm:laplacian-and-prob-transition} that $L = D(I-P) = D^{1/2}(I - \widetilde{P})D^{1/2}$ is the Laplacian of the graph that $M$ walks on. We observe that $\widetilde{P} = I - D^{-1/2}LD^{-1/2}$ is symmetric, and so we can diagonalize $\widetilde{P}$ and write it as
        \[\widetilde{P} = \sum_{j=1}^r \lambda_j v_jv_j^T, \qquad \text{which implies} \qquad (I - \widetilde{P})^+ = \sum_{\substack{j=1 \\ \lambda_j < 1}}^r \frac{v_jv_j^T}{1 - \lambda_j}.\]
        Thus, we obtain that
        \[R_{\mathrm{eff}}(P;\xi) = \xi^TL^+\xi = \widetilde{\xi}^T(I - \widetilde{P})^+\widetilde{\xi} = \sum_{\substack{j=1 \\ \lambda_j < 1}}^r \frac{|\widetilde{\xi}^Tv_j|^2}{1 - \lambda_j}.\footnote{This expression is similar to the expression considered in \cite[Proposition~9]{krovi2016quantum}. However, here we consider a more general setting where $\xi$ can be any vector with vanishing sum of all entries, whereas in \cite{krovi2016quantum} only uniform superpositions are considered.}\]

        Now, for the first claim, let $t \geq 1$ be an integer, and observe that $\widetilde{P}^t = D^{-1/2}P^tD^{1/2}$. Using $1/(1-x) \leq t/(1-x^t)$, for all $x < 1$, we observe that
        \[R_{\mathrm{eff}}(P;\xi) = \sum_{\substack{j=1 \\ \lambda_j < 1}}^r \frac{|\widetilde{\xi}^Tv_j|^2}{1 - \lambda_j} \leq t \cdot \sum_{\substack{j=1 \\ \lambda_j < 1}}^r \frac{|\widetilde{\xi}^Tv_j|^2}{1 - \lambda_j^t} = t \cdot R_{\mathrm{eff}}(P^t;\xi).\]

        For the second claim, observe that for all $\lambda_j < 1$, we have $\lambda_j \leq 1-\delta$. Thus, we obtain
        \[R_{\mathrm{eff}}(P;\xi) = \sum_{\substack{j=1 \\ \lambda_j < 1}}^r \frac{|\widetilde{\xi}^Tv_j|^2}{1 - \lambda_j} \leq \frac{1}{\delta} \sum_{\substack{j=1 \\ \lambda_j < 1}}^r |\widetilde{\xi}^Tv_j|^2 \leq \frac{\|\widetilde{\xi}\|_2^2}{\delta}.\qedhere\]
    \end{proof}

    Along similar lines, we prove a lower bound on the effective resistance using linear-algebraic methods.

    \begin{lemma}
        \label{lem:fraction-vs-eff-resistance}
        Let $t \geq 1$ be an integer. Let $M = (V,P)$ be an irreducible, reversible Markov process, with stationary distribution $\pi \in \Delta_V$. Let $\sigma, \nu \in \Delta_V$ have disjoint support. Then, $\sum_{v \in V} \nu_v^2/\pi_v \leq 2R_{\mathrm{eff}}(P^t; \sigma - \nu)$.
    \end{lemma}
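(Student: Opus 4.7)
The plan is to lower-bound $R_{\mathrm{eff}}(P^t;\sigma-\nu)$ using exactly the same spectral setup as in the proof of \cref{lem:fast-forwarding}. Writing $D=\diag(\pi)$, $\widetilde{P}=D^{1/2}PD^{-1/2}$ (symmetric by reversibility of $P$, with spectrum in $[-1,1]$), $\widetilde{\xi}=D^{-1/2}(\sigma-\nu)$, and $\widetilde{\nu}=D^{-1/2}\nu$, the combination of \cref{thm:resistance-through-laplacian,thm:laplacian-and-prob-transition} applied to the (still reversible) Markov process $M^t$ gives
\[R_{\mathrm{eff}}(P^t;\sigma-\nu) \;=\; \widetilde{\xi}^T(I-\widetilde{P}^t)^{+}\widetilde{\xi}.\]
If $\widetilde{\xi}$ is not orthogonal to $\ker(I-\widetilde{P}^t)$, then $\sigma-\nu$ lies outside the range of the Laplacian of $M^t$, the effective resistance is $+\infty$, and the inequality is trivial; so I may assume $\widetilde{\xi}\in\mathrm{range}(I-\widetilde{P}^t)$.

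The key ingredient will be a variational lower bound obtained by Cauchy--Schwarz: for every test vector $\widetilde{\phi}$ with $\widetilde{\phi}^T(I-\widetilde{P}^t)\widetilde{\phi}>0$,
\[R_{\mathrm{eff}}(P^t;\sigma-\nu) \;\geq\; \frac{\lvert\widetilde{\phi}^T\widetilde{\xi}\rvert^2}{\widetilde{\phi}^T(I-\widetilde{P}^t)\widetilde{\phi}}.\]
This is a routine consequence of expanding both sides in an eigenbasis of $\widetilde{P}^t$ and applying Cauchy--Schwarz term-by-term, mirroring the eigenvalue sum in the proof of \cref{lem:fast-forwarding}. I would then pick the natural test vector $\widetilde{\phi}:=\widetilde{\nu}$, which is supported on $\supp(\nu)$ and hence ``isolates'' the $\nu$-side of the flow. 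Disjointness of $\supp(\sigma)$ and $\supp(\nu)$ gives $\widetilde{\nu}^T\widetilde{\sigma}=0$, so $\widetilde{\phi}^T\widetilde{\xi}=-\norm{\widetilde{\nu}}^2$ and the numerator equals $\norm{\widetilde{\nu}}^{4}=\bigl(\sum_v \nu_v^2/\pi_v\bigr)^2$.

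For the denominator I would use only that $\widetilde{P}^t$ is symmetric with spectral radius at most one, so $\widetilde{\nu}^T\widetilde{P}^t\widetilde{\nu}\geq -\norm{\widetilde{\nu}}^2$ and therefore
\[\widetilde{\phi}^T(I-\widetilde{P}^t)\widetilde{\phi} \;=\; \norm{\widetilde{\nu}}^2 - \widetilde{\nu}^T\widetilde{P}^t\widetilde{\nu} \;\leq\; 2\norm{\widetilde{\nu}}^2.\]
Substituting into the variational bound gives $R_{\mathrm{eff}}(P^t;\sigma-\nu)\geq \tfrac{1}{2}\norm{\widetilde{\nu}}^2=\tfrac{1}{2}\sum_v\nu_v^2/\pi_v$, which rearranges to the claim. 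The only genuine subtlety---\emph{the main obstacle}---is justifying the pseudo-inverse variational formula when $\widetilde{P}^t$ has a nontrivial $(+1)$-eigenspace beyond $\Span(\sqrt{\pi})$ (which can occur for even $t$ and $2$-periodic $P$); the dichotomy at the outset handles this cleanly, and no further structural assumption on $P^t$ beyond reversibility is needed.
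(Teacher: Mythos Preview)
Your proof is correct and takes a genuinely different route from the paper's. The paper argues directly at the operator level: since the eigenvalues of $I-\widetilde{P}^t$ lie in $[0,2]$, one has $(I-\widetilde{P}^t)^+ \succeq \tfrac12\,\Pi_{\mathrm{range}}$, and after observing that the $\sqrt{\pi}$-component of $D^{-1/2}(\sigma-\nu)$ vanishes this yields the stronger intermediate bound $R_{\mathrm{eff}}(P^t;\sigma-\nu)\geq \tfrac12\sum_v(\sigma_v-\nu_v)^2/\pi_v$, from which the claim follows by restricting the sum to $\supp(\nu)$ via disjointness. You instead invoke the Dirichlet-type variational lower bound $R_{\mathrm{eff}}\geq |\widetilde{\phi}^T\widetilde{\xi}|^2/\widetilde{\phi}^T(I-\widetilde{P}^t)\widetilde{\phi}$ and pick the single test vector $\widetilde{\phi}=\widetilde{\nu}$; disjointness then enters in the numerator rather than at the end, and the same spectral fact (eigenvalues $\leq 2$) enters in the denominator. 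The paper's route gives a symmetric bound in $\sigma$ and $\nu$ for free, while your variational formulation is more portable (other test vectors would give other inequalities) and you handle the periodic/non-irreducible $P^t$ edge case more explicitly than the paper does.
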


    \begin{proof}
        Recall from \cref{thm:laplacian-and-prob-transition} that $D(I-P) = L \succeq 0$. Thus, we have $D^{1/2}(I - D^{1/2}PD^{-1/2})D^{1/2} \succeq 0$, and by consequence also $I - D^{1/2}PD^{-1/2} \succeq 0$. Moreover, the eigenvalues of $D^{1/2}PD^{-1/2}$ are the same as those of $P$, because they are related by a similarity transform. We also have $\norm{P} \leq 1$, and so we find that all these eigenvalues are in the interval $[-1,1]$. Combining everything implies that the eigenvalues of $I - D^{1/2}PD^{-1/2}$ are contained in the interval $[0,2]$. Thus, all the eigenvalues of its Moore-Penrose pseudo-inverse are at least $1/2$, and since the unique $0$-eigenvector is $\sqrt{\pi}$, we find that $(I - D^{-1/2}PD^{-1/2})^+ \succeq (I - \sqrt{\pi}\sqrt{\pi}^T)/2$. Thus, we find that
        \begin{align*}
            R_{\text{eff}}(P^t, \sigma - \nu) &= (\sigma - \nu)^TL^+(\sigma - \nu) = (\sigma - \nu)^T D^{-1/2} (I - D^{1/2}P^tD^{-1/2})^+ D^{-1/2} (\sigma - \nu) \\
            &\geq \frac12(\sigma - \nu)^T D^{-1/2} (I - \sqrt{\pi}\sqrt{\pi}^T) D^{-1/2}(\sigma - \nu) \\
            &= \frac12(\sigma - \nu)^T D^{-1} (\sigma - \nu) - \underbrace{\frac12(\sigma - \nu)\mathbbm{1} \mathbbm{1}^T(\sigma - \nu)}_{= 0} = \frac12\sum_{v \in V} \frac{(\sigma_v - \nu_v)^2}{\pi_v} \geq \frac12\sum_{v \in \supp(\nu)} \frac{\nu_v^2}{\pi_v}.\qedhere
        \end{align*}
    \end{proof}

    \section{Explicit transducers from adversary-bound solutions}
    \label{sec:transducer}

    In \cite[Section~8.1]{belovs2024taming}, Belovs, Jeffery and Yolcu present a way to convert feasible solutions to the dual adversary bound for state conversion into transducers. However, this result only asserts the existence of such a transducer, but the proof is not constructive. In this section, we provide a constructive proof for the same result.

    We start with a structural observation, namely that we can equivalently look at a reformulated version of the state-conversion problem, in which we are reflecting through exactly half of the input states. As such, we conveniently refer to this specific form of a state-conversion problem as a \textit{state-reflection problem}.

    \begin{definition}[State-reflection problem]
        \label{def:state-reflection-problem}
        Let $\D$ be a finite set. For all $x \in \D$, let $\ket{\sigma_x^+}, \ket{\sigma_x^-}$ be two orthogonal vectors in $\V$, and let $O_x$ be an operation on $\H$, such that $O_x^2 = I$. We refer to the state-conversion problem $\{(\ket{\sigma_x^{\pm}}, \pm\ket{\sigma_x^{\pm}}, O_x)\}_{x \in \D}$ as a state-reflection problem, and we write it as $R = \{(\ket{\sigma_x^+}, \ket{\sigma_x^-}, O_x)\}_{x \in \D}$ for short. Let $w := \{\ket{w_x^{\pm}}\}_{x \in \D}$ be a feasible solution to $R$, satisfying $O_x\ket{w_x^{\pm}} = \pm\ket{w_x^{\pm}}$. We refer to $\ket{w_x^+}$ and $\ket{w_x^-}$ as the positive and negative witnesses for $x$, and similarly to $\norm{\ket{w_x^+}}^2$ and $\norm{\ket{w_x^-}}^2$ as the positive and negative witness sizes for $x \in \D$, which write as $\mathsf{R}^+_x(w)$ and $\mathsf{R}^-_x(w)$.
    \end{definition}

    We now show that every unit-norm unitary-oracle state-conversion problem can be equivalently reformulated as a state-reflection problem. This equivalence only holds if we have access to both forward and backwards versions of the oracle $O_x$ and $O_x^{\dagger}$, though, in other cases the situation might be more subtle.

    \begin{lemma}
        \label{thm:reformulation-state-conversion}
        Let $P = \{(\ket{\sigma_x}, \ket{\tau_x}, O_x)\}_{x \in \D}$ be a unit-norm unitary-oracle state-conversion problem, with input and output and oracle spaces $\V_1$, $\V_2$ and $\H$, respectively. For all $x \in \D$, we write
        \[\ket{\sigma_x^{\pm}} = \frac{\ket{\sigma_x} \oplus \pm\ket{\tau_x}}{\sqrt{2}} \in \V_1 \oplus \V_2, \qquad \text{and} \qquad \overline{O_x} = (X \otimes I)(O_x \oplus O_x^{\dagger}) \in \mathcal{L}(\H \oplus \H),\]
        where $X \otimes I$ acting on $\H \oplus \H$ swaps both copies of $\H$. Then, $R = \{(\ket{\sigma_x^+}, \ket{\sigma_x^-}, \overline{O_x})\}_{x \in \D}$ is a state-reflection problem, $\ADV(P) = \ADV(R)$, and there exists an optimal solution $\{\ket{w_x^{\pm}}\}_{x \in \D}$ to $R$ such that $\overline{O_x}\ket{w_x^{\pm}} = \pm\ket{w_x^{\pm}}$.
    \end{lemma}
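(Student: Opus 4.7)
The plan is to prove $\ADV(P) = \ADV(R)$ by establishing both inequalities while simultaneously producing an optimal solution to $R$ with the eigenvector structure. I would first verify that $R$ is a genuine state-reflection problem: orthogonality $\braket{\sigma_x^+}{\sigma_x^-} = \tfrac{1}{2}(\braket{\sigma_x}{\sigma_x} - \braket{\tau_x}{\tau_x}) = 0$ follows from the unit-norm hypothesis, and $\overline{O_x}^2 = I$ follows because conjugating $O_x \oplus O_x^\dagger$ by the swap $X \otimes I$ produces $O_x^\dagger \oplus O_x$, whose product with the original is $I \oplus I$. I would then split the state-conversion constraint for $R$ into its $(a,b) \in \{+,-\}^2$ cases: the diagonal cases $a=b$ have RHS $\braket{\sigma_x^a}{\sigma_y^a}(1-a^2) = 0$, while the off-diagonal cases reduce, via $\braket{\sigma_x^+}{\sigma_y^-} = \tfrac{1}{2}(\braket{\sigma_x}{\sigma_y} - \braket{\tau_x}{\tau_y})$, to $\braket{\sigma_x}{\sigma_y} - \braket{\tau_x}{\tau_y} = \bra{w_x^+}(I - \overline{O_x}\overline{O_y})\ket{w_y^-}$ and its adjoint, which mimic the $P$-constraint with $\overline{O_x}$ in place of $O_x$.

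For $\ADV(R) \leq \ADV(P)$, I would lift any feasible $\{\ket{w_x}\}$ for $P$ by defining $\ket{w_x^{\pm}} := \tfrac{1}{\sqrt{2}}(\ket{w_x} \oplus \pm (O_x \otimes I_{\mathcal{W}})\ket{w_x}) \in (\H \oplus \H) \otimes \mathcal{W}$. A short calculation using $O_x^\dagger O_x = I$ yields $\overline{O_x}\ket{w_x^{\pm}} = \pm\ket{w_x^{\pm}}$; the diagonal constraints then hold trivially since $\bra{w_x^a}\overline{O_x}\overline{O_y}\ket{w_y^a} = a^2 \braket{w_x^a}{w_y^a}$, and the off-diagonal constraint unfolds to exactly $\bra{w_x}(I - O_x^\dagger O_y)\ket{w_y} = \braket{\sigma_x}{\sigma_y} - \braket{\tau_x}{\tau_y}$, which holds by $P$-feasibility. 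Unitarity of $O_x$ gives $\|\ket{w_x^{\pm}}\|^2 = \|\ket{w_x}\|^2$, so the objective is preserved.

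The reverse inequality $\ADV(P) \leq \ADV(R)$, together with the eigenvector statement, would use the decomposition $\overline{O_x} = \Pi_x^+ - \Pi_x^-$, which gives the central identity $I - \overline{O_x}\overline{O_y} = 2(\Pi_x^+ \Pi_y^- + \Pi_x^- \Pi_y^+)$. Given feasible $\{\ket{w_x^{\pm}}\}$, the aim is to reduce WLOG to $\ket{w_x^{\pm}} \in \mathrm{range}(\Pi_x^{\pm})$. Any such witness decomposes as $\ket{w_x^+} = \ket{a_x} \oplus (O_x \otimes I_{\mathcal{W}})\ket{a_x}$ and $\ket{w_x^-} = \ket{c_x} \oplus -(O_x \otimes I_{\mathcal{W}})\ket{c_x}$, with $\|\ket{w_x^+}\|^2 = 2\|\ket{a_x}\|^2$ and $\|\ket{w_x^-}\|^2 = 2\|\ket{c_x}\|^2$. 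Substituting into the off-diagonal constraint and using $O_x^\dagger O_x = I$ collapses it to $2\bra{a_x}(I - O_x^\dagger O_y)\ket{c_y} = \braket{\sigma_x}{\sigma_y} - \braket{\tau_x}{\tau_y}$; a further symmetrization step forces $\ket{a_x} = \ket{c_x}$ WLOG, at which point $\ket{w_x} := \sqrt{2}\ket{a_x}$ is feasible for $P$ with $\|\ket{w_x}\|^2 = \|\ket{w_x^{\pm}}\|^2$.

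The main obstacle will be the eigenvector reduction itself. Naive projection $\widetilde{w}_x^a := \Pi_x^a \ket{w_x^a}$ preserves the diagonal constraints trivially (since $\Pi_y^{-a}\Pi_y^a = 0$) but discards the cross term $2\bra{w_x^+}\Pi_x^-\Pi_y^+\ket{w_y^-}$ from the off-diagonal constraint, which need not vanish a priori. My plan to overcome this is to enlarge the catalyst space $\mathcal{W}$ by a two-dimensional ancilla and place the $\mathrm{range}(\Pi_x^a)$-component and the wrong-eigenspace component of each $\ket{w_x^a}$ on orthogonal ancilla branches, so that the unwanted cross terms become manifestly orthogonal and vanish automatically, while the objective $\max_{x,a} \|\ket{w_x^a}\|^2$ is unchanged. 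An analogous trick should then force $\ket{a_x} = \ket{c_x}$, and verifying that these adjustments simultaneously satisfy every $(a,b)$-constraint is the technical crux.
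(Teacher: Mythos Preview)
Your verification that $R$ is a state-reflection problem and the forward direction $\ADV(R)\le\ADV(P)$ are correct and match the paper exactly. Note, incidentally, that the eigenvector claim in the lemma already follows from this forward construction once $\ADV(P)=\ADV(R)$ is established; it is not something that needs to be wrung out of the reverse direction.

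The gap is in the reverse direction. Your ancilla trick---putting $\Pi_x^{a}\ket{w_x^{a}}$ on one branch and $\Pi_x^{-a}\ket{w_x^{a}}$ on the other---does preserve feasibility and norms, but the resulting vectors are \emph{not} eigenvectors of $\overline{O_x}\otimes I_{\mathrm{anc}}$ (they are eigenvectors of $\overline{O_x}\otimes Z$, which is a different oracle). If you then project onto the $\overline{O_x}$-eigenspaces, you keep only the $\ket 0$-branch and lose exactly the cross term $2\bra{w_x^{+}}\Pi_x^{-}\Pi_y^{+}\ket{w_y^{-}}$ from the off-diagonal constraint, so feasibility is destroyed. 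The subsequent ``symmetrization forcing $\ket{a_x}=\ket{c_x}$'' is likewise not a free move: from eigenvector witnesses one only gets the \emph{bilinear} relation $\bra{a_x}(I-O_x^{\dagger}O_y)\ket{c_y}=\braket{\sigma_x}{\sigma_y}-\braket{\tau_x}{\tau_y}$, and there is no averaging over a two-dimensional ancilla that turns this into the required \emph{quadratic} relation $\bra{w_x}(I-O_x^{\dagger}O_y)\ket{w_y}=\braket{\sigma_x}{\sigma_y}-\braket{\tau_x}{\tau_y}$ while controlling $\|w_x\|$, because the diagonal terms $\bra{a_x}(I-O_x^{\dagger}O_y)\ket{a_y}$ and $\bra{c_x}(I-O_x^{\dagger}O_y)\ket{c_y}$ are unconstrained.

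The paper avoids both issues by never attempting the eigenvector reduction. It decomposes each $\ket{w_x^{s}}$ into its two $\overline{O_x}$-eigencomponents and then directly assembles a $P$-witness in $(\H\oplus\H)^{\oplus 2}\otimes\mathcal{W}$ from all four pieces, applying $I\oplus O_x^{\dagger}$ on each copy of $\H\oplus\H$ to undo the $\overline{O_x}$-structure. A telescoping computation then shows the $P$-constraint holds and $\|\ket{w_x}\|^2=\tfrac12(\|\ket{w_x^{+}}\|^2+\|\ket{w_x^{-}}\|^2)$. If you want to salvage your two-step plan, the first step needs a genuinely different construction that keeps all four components $\Pi_x^{b}\ket{w_x^{s}}$ in play; but at that point you are essentially re-deriving the paper's one-step construction.
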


    \begin{proof}
        We first show that $R$ satisfies the conditions of a state-reflection problem, as outlined in \cref{def:state-reflection-problem}. To that end, we check that $\ket{\sigma_x^+}$ and $\ket{\sigma_x^-}$ are orthogonal for all $x \in D$, since
        \[\braket{\sigma_x^+}{\sigma_x^-} = \left(\frac{\bra{\sigma_x} \oplus \bra{\tau_x}}{\sqrt{2}}\right)\left(\frac{\ket{\sigma_x} \oplus \ket{\tau_x}}{\sqrt{2}}\right) = \frac12\left(\braket{\sigma_x}{\sigma_x} - \braket{\tau_x}{\tau_x}\right) = 0.\]
        Next, we observe that $\overline{O_x}^2 = I$, for all $x \in \D$, as
        \[\overline{O_x}^2 = (X \otimes I)(O_x \oplus O_x^{\dagger})(X \otimes I)(O_x \oplus O_x^{\dagger}) = (X \otimes I)(X \otimes I) \cdot (O_x^{\dagger} \oplus O_x)(O_x \oplus O_x^{\dagger}) = I.\]
        Thus, $R$ indeed is a state-reflection problem.

        Next, we prove that the adversary values are the same for $P$ and $R$. To that end, let $\{\ket{w_x}\}_{x \in \D}$ be a feasible solution to $P$. Now, for all $x \in \D$, we define
        \[\ket{w_x^{\pm}} := \frac{\ket{w_x} \oplus \pm O_x\ket{w_x}}{\sqrt{2}},\]
        and we verify that this is a feasible solution to $R$. First, let $x \in \D$, and observe that
        \[\overline{O_x}\ket{w_x^{\pm}} = \frac{\mathrm{SWAP}(O_x\ket{w_x} \oplus \pm O_x^{\dagger}O_x\ket{w_x})}{\sqrt{2}} = \frac{\pm\ket{w_x} \oplus O_x\ket{w_x}}{\sqrt{2}} = \pm\ket{w_x^{\pm}}.\]
        Next, let $x,y \in \D$, and observe that
        \[\braket{\sigma_x^{\pm}}{\sigma_y^{\pm}} - \left(\pm\bra{\sigma_x^{\pm}}\right)(\pm\ket{\sigma_y^{\pm}}) = \braket{\sigma_x^{\pm}}{\sigma_y^{\pm}} - \braket{\sigma_x^{\pm}}{\sigma_y^{\pm}} = 0,\]
        and on the other hand
        \[\bra{w_x^{\pm}}(I - \overline{O_x}^{\dagger}\overline{O}_y)\ket{w_y^{\pm}} = \braket{w_x^{\pm}}{w_y^{\pm}} - (\pm\bra{w_x^{\pm}})(\pm\ket{w_y^{\pm}}) = \braket{w_x^{\pm}}{w_y^{\pm}} - \braket{w_x^{\pm}}{w_y^{\pm}} = 0.\]
        Thus, it remains to verify what happens for the inner products between $+$ and $-$. To that end, let $x,y \in \D$, and observe that
        \begin{align*}
            &\braket{\sigma_x^+}{\sigma_y^-} - \bra{\sigma_x^+}(-\ket{\sigma_y^-}) = 2\braket{\sigma_x^+}{\sigma_y^-} = 2\left(\frac{\bra{\sigma_x} \oplus \bra{\tau_x}}{\sqrt{2}}\right)\left(\frac{\ket{\sigma_y} \oplus -\ket{\tau_y}}{\sqrt{2}}\right) \\
            &\qquad = \braket{\sigma_x}{\sigma_y} - \braket{\tau_x}{\tau_y} = \bra{w_x}(I - O_x^{\dagger}O_y)\ket{w_y} = 2\braket{w_x^+}{w_y^-} = \bra{w_x^+}(I - \overline{O_x}^{\dagger}\overline{O_y})\ket{w_y^-}.
        \end{align*}
        Thus, $\{\ket{w_x^{\pm}}\}_{x \in \D}$ is indeed a feasible solution to $R$, with the same objective function as $P$, and satisfying $\overline{O_x}\ket{w_x^{\pm}} = \pm\ket{w_x^{\pm}}$. Thus, $\ADV(R) \leq \ADV(P)$.

        It remains to verify the reverse inequality. To that end, let $\{\ket{w_x^{\pm}}\}_{x \in \D}$ be a feasible solution to $R$. Now, for all $x \in \D$ and $b \in \{\pm\}$, we write $\ket{(w_x^{\pm})_b} = (I + b\overline{O_x})\ket{w_x^{\pm}}/2$, i.e., $\ket{(w_x^+)_+}$ is the projection of $\ket{w_x^+}$ onto the $+1$-eigenspace of $\overline{O_x}$, and similarly for the other choices for $b$ and $\pm$. In particular, for all $b \in \{\pm\}$, $\overline{O_x}\ket{(w_x^{\pm})_b} = b\ket{(w_x^{\pm})_b}$, and for all $s,t \in \{\pm\}$, we have $\braket{(w_x^s)_b}{(w_x^t)_{1-b}} = 0$. We let
        \[\ket{w_x} := \frac{(I \oplus O_x^{\dagger} \oplus I \oplus O_x^{\dagger})(\ket{(w_x^+)_0} \oplus \ket{(w_x^+)_1} + \ket{(w_x^-)_1} \oplus \ket{(w_x^-)_0})}{\sqrt{2}}.\]
        Then, for all $x,y \in \D$, we have
        \begin{align*}
            &\bra{w_x}(I - O_x^{\dagger}O_y \otimes I_4)\ket{w_y} = \bra{w_x}(I - O_x^{\dagger}O_y \oplus O_x^{\dagger}O_y \oplus O_x^{\dagger}O_y \oplus O_x^{\dagger}O_y)\ket{w_y} \\
            &\qquad = \frac12\sum_{b,s,t \in \{\pm\}} \bra{(w_x^s)_b}(I \oplus O_x)(I - O_x^{\dagger}O_y \oplus O_x^{\dagger}O_y)(I \oplus O_y^{\dagger})\ket{(w_y^t)_{1-b}} \\
            &\qquad = \frac12\sum_{b,s,t \in \{\pm\}} \bra{(w_x^s)_b}(I \oplus O_x^{\dagger}O_y)\ket{(w_y^t)_{1-b}} - \bra{(w_x^s)_b}(O_xO_y^{\dagger} \oplus I)\ket{(w_y^t)_{1-b}} \\
            &\qquad = \frac12\sum_{b,s,t \in \{\pm\}} \braket{(w_x^s)_b}{(w_y^t)_{1-b}} - \bra{(w_x^s)_b}(O_xO_y^{\dagger} \oplus O_xO_y^{\dagger})\ket{(w_y^t)_{1-b}} \\
            &\qquad = \frac12\sum_{b,s,t \in \{\pm\}} \bra{(w_x^s)_b}(I - \overline{O_x}^{\dagger}\overline{O_y})\ket{(w_y^t)_{1-b}} = \frac12\sum_{b,c,s,t \in \{\pm\}} \bra{(w_x^s)_b}(I - \overline{O_x}^{\dagger}\overline{O_y})\ket{(w_y^t)_c} \\
            &\qquad = \frac12\sum_{s,t \in \{\pm\}} \bra{w_x^s}(I - \overline{O_x}^{\dagger}\overline{O_y})\ket{w_y^t} = \frac12\sum_{s,t \in \{\pm\}} \braket{\sigma_x^s}{\sigma_y^t} - st\braket{\sigma_x^s}{\sigma_y^t} = \braket{\sigma_x^+}{\sigma_x^-} + \braket{\sigma_x^-}{\sigma_x^+} = \braket{\sigma_x}{\tau_x},
        \end{align*}
        and finally we observe that for all $x \in \D$,
        \begin{align*}
            \norm{\ket{w_x}}^2 &= \frac12\left[\norm{\ket{(w_x^+)_0}}^2 + \norm{\ket{(w_x^+)_1}}^2 + \norm{\ket{(w_x^-)_1}}^2 + \norm{\ket{(w_x^-)_0}}^2\right] = \frac12\left[\norm{\ket{w_x^+}}^2 + \norm{\ket{w_x^-}}^2\right] \\
            &\leq \ADV(R),
        \end{align*}
        and so $\ADV(P) \leq \ADV(R)$.
    \end{proof}

    The previous theorem tells us that we can without loss of generality assume that a feasible solution to a state-reflection problem with oracle $O_x$ has the property that for all $x \in \D$, $O_x\ket{w_x^{\pm}} = \pm\ket{w_x^{\pm}}$. We will assume this without further mention in the remainder of this text, and we remark explicitly here that it follows that $\braket{w_x^+}{w_x^-} = 0$, for all $x \in \D$, since they live in different eigenspaces of $O_x$.

    Next, we observe that for a state-reflection problem, the unitary $U$ that is required to implement the transduction action becomes a reflection through a particular subspace. This is the objective of the following theorem.

    \begin{theorem}[Explicit implementation of transducers]
        \label{thm:transducers}
        Let $R = \{(\ket{\sigma_x^{\pm}}, \pm\ket{\sigma_x^{\pm}}, O_x)\}_{x \in \D}$ be a state-reflection problem, with feasible solution $\{\ket{w_x^{\pm}}\}_{x \in \D}$. Let, let $\A \subseteq \V \oplus \H$ be a subspace such that
        \[\Span\left\{\begin{bmatrix}
            \ket{\sigma_x^+} \\
            \ket{w_x^+}
        \end{bmatrix}\right\}_{x \in \D} \subseteq \A \subseteq \Span\left\{\begin{bmatrix}
        \ket{\sigma_x^-} \\
        \ket{w_x^-}
        \end{bmatrix}\right\}_{x \in \D}^{\perp}.\]
        Then, the unitary $U = 2\Pi_A - I$ is such that $U(I \oplus O_x)$ implements $\ket{\sigma_x^{\pm}} \overset{U(I \oplus O_x)}{\rightsquigarrow} \pm\ket{\sigma_x^{\pm}}$.
    \end{theorem}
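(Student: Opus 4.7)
The plan is to verify directly the two defining properties of the transducer $U' := U(I \oplus O_x)$: that it is unitary, and that it maps $\ket{\sigma_x^\pm} \oplus \ket{w_x^\pm}$ to $\pm\ket{\sigma_x^\pm} \oplus \ket{w_x^\pm}$, so that $\ket{w_x^\pm}$ serves as the required catalyst. Unitarity of $U = 2\Pi_\A - I$ is immediate because it is the reflection about the closed subspace $\A$, and $I \oplus O_x$ is unitary by assumption. The transduction condition I would establish by splitting on the sign and exploiting the eigenvalue identities $O_x\ket{w_x^\pm} = \pm\ket{w_x^\pm}$ guaranteed by the assumption carried over from \cref{thm:reformulation-state-conversion}.

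For the positive branch, $(I \oplus O_x)$ fixes $\ket{\sigma_x^+} \oplus \ket{w_x^+}$ because $O_x$ has eigenvalue $+1$ on $\ket{w_x^+}$. By the left inclusion of the sandwich, this vector lies in $\A$, so the reflection $U$ keeps it intact, yielding the desired output $+\ket{\sigma_x^+} \oplus \ket{w_x^+}$. For the negative branch, $(I \oplus O_x)$ maps $\ket{\sigma_x^-} \oplus \ket{w_x^-}$ to $\ket{\sigma_x^-} \oplus -\ket{w_x^-}$, and the target $-\ket{\sigma_x^-} \oplus \ket{w_x^-}$ equals $-(\ket{\sigma_x^-} \oplus -\ket{w_x^-})$. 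Hence the task reduces to showing that $\ket{\sigma_x^-} \oplus -\ket{w_x^-}$ lies in $\A^\perp$, so that the reflection $U$ negates it.

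To establish this orthogonality, I would unpack the adversary-bound feasibility condition for the state-reflection problem. Combining $O_x^{\dagger} = O_x$, which follows from $O_x^2 = I$ together with unitarity, with the eigenvalue relations $O_x\ket{w_x^s} = s\ket{w_x^s}$, the feasibility identity $\braket{\sigma_x^s}{\sigma_y^t} - st\braket{\sigma_x^s}{\sigma_y^t} = \bra{w_x^s}(I - O_x^{\dagger} O_y)\ket{w_y^t}$ collapses to $(1-st)\braket{\sigma_x^s}{\sigma_y^t} = (1-st)\braket{w_x^s}{w_y^t}$. Specializing to $s = -$, $t = +$ gives $\braket{\sigma_x^-}{\sigma_y^+} = \braket{w_x^-}{w_y^+}$, which is precisely what is needed so that the inner product of $\ket{\sigma_x^-} \oplus -\ket{w_x^-}$ with each generator $\ket{\sigma_y^+} \oplus \ket{w_y^+}$ of the lower span vanishes. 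The right inclusion of the sandwich then propagates this to orthogonality against all of $\A$. The main subtlety I expect is the sign bookkeeping on the negative branch: the minus sign produced by $O_x$ acting on $\ket{w_x^-}$ must be threaded carefully through the orthogonality condition defining $\A^\perp$, since it is precisely this sign alignment that reconciles the adversary equation with the geometric reflection property of $U$.
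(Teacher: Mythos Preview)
Your plan matches the paper's: use the eigenvector relations $O_x\ket{w_x^\pm}=\pm\ket{w_x^\pm}$ to compute $(I\oplus O_x)$ on $\ket{\sigma_x^\pm}\oplus\ket{w_x^\pm}$, then invoke membership in $\A$ or $\A^\perp$ so that the reflection $U=2\Pi_\A-I$ fixes or negates. The paper's proof is essentially the two-line display you sketch, preceded by an orthogonality computation showing the sandwich is non-vacuous. You have also correctly extracted the feasibility identity $\braket{\sigma_x^-}{\sigma_y^+}=\braket{w_x^-}{w_y^+}$.

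There is, however, a genuine gap in your last step, and it is exactly the ``sign bookkeeping'' you warned yourself about. Your computation shows $\ket{\sigma_x^-}\oplus(-\ket{w_x^-})$ is orthogonal to the \emph{lower} span $\Span\{\ket{\sigma_y^+}\oplus\ket{w_y^+}\}_{y\in\D}$. The right inclusion does \emph{not} propagate this to all of $\A$: it says $\A$ is orthogonal to every $\ket{\sigma_y^-}\oplus\ket{w_y^-}$, so it only places $\ket{\sigma_y^-}\oplus(+\ket{w_y^-})$ into $\A^\perp$, which carries the wrong sign. For a generic $\A$ strictly between the two ends of the sandwich, the vector $\ket{\sigma_x^-}\oplus(-\ket{w_x^-})$ need not lie in $\A^\perp$ at all (e.g., take $\A$ equal to the upper bound in a two-dimensional example with $\ket{w_x^-}\neq 0$). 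The clean resolution---which the paper's own argument also needs---is either to record the right span as $\Span\{\ket{\sigma_x^-}\oplus(-\ket{w_x^-})\}_{x\in\D}^\perp$, so the right inclusion directly yields the required membership and your feasibility identity becomes precisely the check that the sandwich is non-empty; or, equivalently, to take $-\ket{w_x^-}$ as the catalyst on the negative branch, so that $(I\oplus O_x)$ produces $\ket{\sigma_x^-}\oplus\ket{w_x^-}$ and the right inclusion as stated applies.
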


    \begin{proof}
        For any $x,y \in \D$, we have
        \begin{align*}
            (\bra{\sigma_x^+} \oplus \bra{w_x^+})(\ket{\sigma_y^-} \oplus \ket{w_y^-}) &= \braket{\sigma_x^+}{\sigma_y^-} + \braket{w_x^+}{w_y^-} \\
            &= \frac12\left(\braket{\sigma_x^+}{\sigma_y^-} - \bra{\sigma_x^+}(-\ket{\sigma_y^-}) + \bra{w_x^+}(I - O_x^{\dagger}O_y)\ket{w_y^-}\right) = 0,
        \end{align*}
        and so there indeed exists a subspace $\A$ that satisfies this definition. Moreover, we have
        \[\begin{bmatrix}
            \ket{\sigma_x^{\pm}} \\
            \ket{w_x^{\pm}}
        \end{bmatrix} \overset{I \oplus O_x}{\mapsto} \begin{bmatrix}
            \ket{\sigma_x^{\pm}} \\
            \pm\ket{w_x^{\pm}}
        \end{bmatrix} \overset{U}{\mapsto} \begin{bmatrix}
            \pm\ket{\sigma_x^{\pm}} \\
            \ket{w_x^{\pm}}
        \end{bmatrix}.\qedhere\]
    \end{proof}

    This explicit construction for the unitary $U$ generalizes the explicit implementation of a transducer from a span program~\cite[Theorem~3.3]{cornelissen2025quantum}. We also remark specifically that in the regular graph composition framework, one can take the space $\A$ to be the unit-$st$ flow space, to match the algorithmic implementation in \cite{cornelissen2025quantum}.

    Next, we observe that state-reflection problems exhibit a remarkable amount of flexibility. This is shown in the following lemma.

    \begin{lemma}
        \label{lem:rescale-state-reflection-problem}
        Let $R = \{(\ket{\sigma_x^{\pm}}, \pm\ket{\sigma_x^{\pm}}, O_x)\}_{x \in \D}$ be state-reflection problem. Let $D \in \mathcal{L}(\mathcal{V})$ be an invertible operator, $\alpha_{\pm} \in \C$, and let $R' := \{(\alpha_+D\ket{\sigma_x^+}, \alpha_-(D^{-1})^{\dagger}\ket{\sigma_x^{\pm}}, O_x)\}_{x \in \D}$. Then, any feasible solution $w := \{\ket{w_x^{\pm}}\}_{x \in \D}$ for a state-reflection problem $R$ can be turned into a feasible solution $w' := \{\alpha_+\ket{w_x^+},\alpha_-\ket{w_x^-}\}_{x \in \D}$ for $R'$, where $x^{-\dagger} := (x^{-1})^{\dagger}$. Consequently,
        \[\ADV(R) \leq \sqrt{\max_{x \in \D} \mathsf{R}_x^+(w) \cdot \max_{x \in \D} \mathsf{R}_x^-(w)}.\]
    \end{lemma}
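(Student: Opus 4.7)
The plan is to establish the two claims in sequence: first the main feasibility statement about $w'$ for $R'$, and then the bound on $\ADV(R)$, which will follow by specializing the scaling parameters.

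For the feasibility of $w' = \{\alpha_+\ket{w_x^+},\alpha_-\ket{w_x^-}\}_{x \in \D}$ in $R'$, I would verify the SDP constraints by case analysis on the signs. For same-sign pairs $((x,s),(y,s))$, both sides of the constraint vanish: the state side gives $\braket{\sigma_x^{'s}}{\sigma_y^{'s}}-\braket{s\sigma_x^{'s}}{s\sigma_y^{'s}}=0$, and the witness side gives $\bra{\alpha_s w_x^s}(I-O_x^\dagger O_y)\ket{\alpha_s w_y^s}=0$, using that $w$ was already feasible for $R$ (or equivalently from $O_x\ket{w_x^s}=s\ket{w_x^s}$ together with $O_x^\dagger=O_x$). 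For opposite-sign pairs, the key algebraic identity is $D^\dagger(D^{-1})^\dagger=I$, which yields
\[
2\braket{\alpha_+D\sigma_x^+}{\alpha_-(D^{-1})^\dagger\sigma_y^-} = 2\overline{\alpha_+}\alpha_-\bra{\sigma_x^+}D^\dagger(D^{-1})^\dagger\ket{\sigma_y^-} = 2\overline{\alpha_+}\alpha_-\braket{\sigma_x^+}{\sigma_y^-}
\]
on the state side, while the witness side gives $\overline{\alpha_+}\alpha_-\bra{w_x^+}(I-O_x^\dagger O_y)\ket{w_y^-}$. Both sides pick up the identical scalar $\overline{\alpha_+}\alpha_-$, and so the feasibility of $w$ for $R$ transfers cleanly to $w'$ for $R'$.

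For the bound on $\ADV(R)$, I would specialize the construction to $D = I$ and $\alpha_+ = \beta$, $\alpha_- = \beta^{-1}$ for some real $\beta > 0$. The crucial observation is that $\overline{\alpha_+}\alpha_- = 1$, so the cross-sign SDP constraints of $R'$ become identical to those of $R$ itself. Hence $\{\beta\ket{w_x^+},\beta^{-1}\ket{w_x^-}\}_{x \in \D}$ is not merely feasible for $R'$ but for $R$. Writing $A := \max_x \mathsf{R}_x^+(w)$ and $B := \max_x \mathsf{R}_x^-(w)$, the objective value of this rescaled witness for $R$ is
\[
\max_{x \in \D}\max\bigl(\beta^2 \mathsf{R}_x^+(w),\ \beta^{-2}\mathsf{R}_x^-(w)\bigr) = \max\bigl(\beta^2 A,\ \beta^{-2}B\bigr).
\]
Minimizing in $\beta$ by balancing $\beta^2 A = \beta^{-2} B$ (i.e., $\beta^2 = \sqrt{B/A}$) gives objective $\sqrt{AB}$, and therefore $\ADV(R)\le \sqrt{AB}$.

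The proof is essentially routine once one spots that the invertibility condition on $D$ is only used through the cancellation $D^\dagger(D^{-1})^\dagger=I$, and that taking $\alpha_+\alpha_-=1$ turns $R'$'s cross-sign constraint into $R$'s. The only mild subtlety, which I would flag explicitly, is that the "Consequently" step does not literally go through $\ADV(R')$: instead, one notes that for this particular choice of parameters the rescaled witness already satisfies $R$'s constraints, avoiding any need to compare $\ADV(R)$ and $\ADV(R')$ directly.
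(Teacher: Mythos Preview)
Your proof is correct and follows essentially the same approach as the paper: the feasibility claim is verified by a sign-by-sign case analysis, with the cross-sign case reducing to the cancellation $D^{\dagger}(D^{-1})^{\dagger}=I$, exactly as in the paper. You additionally supply an explicit argument for the ``Consequently'' bound (choosing $D=I$, $\alpha_\pm=\beta^{\pm1}$, and balancing), which the paper leaves entirely implicit; your observation that this particular choice makes the constraints of $R'$ coincide with those of $R$, so the rescaled witness is feasible for $R$ itself, is a clean way to close that step.
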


    \begin{proof}
        For all $x,y \in \D$, and $s,t \in \{\pm\}$, we have
        \[\left(\alpha_s\ket{w_x^s}\right)^{\dagger} (I - O_x^{\dagger}O_y) \left(\alpha_t\ket{w_y^t}\right) = \alpha_s^{\dagger}\alpha_t \left(\braket{\sigma_x^s}{\sigma_y^t} - st\braket{\sigma_x^s}{\sigma_y^t}\right) = 2\delta_{s \neq t}(\bra{\sigma_x^s}\alpha_s^{\dagger})(\alpha_t\ket{\sigma_y^t}).\]
        By symmetry, it suffices to consider $s = +$ and $t = -$, and we verify that the right-hand side becomes
        \[2(\bra{\sigma_x^+}\alpha_+^{\dagger})D^{\dagger}(D^{-1})^{\dagger}(\alpha_-\ket{\sigma_y^-}) = 2\left(\bra{\sigma_x^+}D^{\dagger}\alpha_+^{\dagger}\right)\left(\alpha_-(D^{-1})^{\dagger}\ket{\sigma_y^-}\right).\qedhere\]
    \end{proof}

    We remark here that it seems to be quite tricky to see what these rescalings do to the subspace $\A$ from \cref{thm:transducers}. Understanding this seems to be fundamental for time-efficient implementations of the algorithms we consider in this work. We leave this for future work.

    We end this section by observing that we can interpret the construction from Jeffery~\cite{jeffery2022quantum} as a feasible solution to the state-reflection program.

    \begin{theorem}[{\cite[Corollary~3.7]{jeffery2022quantum}}]
        \label{thm:las-vegas-algorithm}
        Let $\A$ be a quantum Monte-Carlo algorithm that implements the state-conversion problem given by $P = \{(\ket{\sigma_x}, \ket{\tau_x}, O_x)\}_{x \in \D}$, with run-time given by the random variable $T_x$, for every $x \in \D$. Then, for any choice of $(\alpha_t)_{t \in \N_0} \subseteq \R_{>0}$, we can implement a feasible solution $W := \{\ket{w_x^{\pm}}\}_{x \in \D}$ to the corresponding state-reflection problem $R := \{(\ket{\sigma_x} \pm \ket{\tau_x}, \pm(\ket{\sigma_x} \pm \ket{\tau_x}), \overline{O_x})\}_{x \in \D}$, with complexities
        \[\mathsf{W}_x^+ = \E\left[\sum_{t=0}^{T_x-1} \alpha_t\right], \qquad \text{and} \qquad \mathsf{W}_x^- = \E\left[\sum_{t=0}^{T_x-1} \frac{1}{\alpha_t}\right].\]
        In particular, by choosing $\alpha_t = 1$, for all $t \in \N_0$, we obtain $\mathsf{W}_x^{\pm} = \E[T_x]$, and by choosing $\alpha_t = t+1$, we obtain $\mathsf{W}_x^+ = \E[T_x(T_x+1)/2] \in O(\E[T_x^2])$, and $\mathsf{W}_x^- \in O(1 + \E[\log(T_x)])$.
    \end{theorem}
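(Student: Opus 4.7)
My plan is to construct the witnesses $\ket{w_x^\pm}$ directly from the algorithm's intermediate states, with time-dependent weights $\sqrt{\alpha_t}$ on the positive side and $1/\sqrt{\alpha_t}$ on the negative side. This mirrors the standard algorithm-to-adversary conversion, augmented by the per-time-slice rescaling freedom that is structurally analogous to \cref{lem:rescale-state-reflection-problem} applied to the time register.

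First I would unfold $\A$ into its purified form: a sequence of input-independent unitaries $V_0,V_1,\dots$ alternating with the reflection oracle $\overline{O_x}$ from \cref{thm:reformulation-state-conversion}, acting on a workspace that carries a ``halt'' flag and starting on $\ket{\sigma_x}$ so that on the halted subspace the state is $\ket{\tau_x}$. Let $\ket{\psi_t^x}$ denote the state immediately before the $t$-th query, and decompose $\ket{\psi_t^x} = \ket{h_t^x} + \ket{r_t^x}$ according to the halt flag, so that $\norm{\ket{r_t^x}}^2 = \P[T_x>t]$ and therefore $\sum_{t\geq 0}\norm{\ket{r_t^x}}^2 = \E[T_x]$. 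Any intermediate measurements in the Monte-Carlo algorithm are coherentized via the deferred-measurement principle, which simultaneously produces the halt flag.

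Next I would set
\[
\ket{w_x^+} \;=\; \sum_{t\geq 0}\sqrt{\alpha_t}\,\ket{t}\otimes\ket{r_t^x}, \qquad \ket{w_x^-} \;=\; \sum_{t\geq 0}\frac{(-1)^t}{\sqrt{\alpha_t}}\,\ket{t}\otimes\ket{r_t^x},
\]
and verify feasibility for $R$. The sign-twist on the negative side (or an analogous construction using a phase register) ensures $\overline{O_x}\ket{w_x^\pm} = \pm\ket{w_x^\pm}$, so that the same-sign adversary constraints hold automatically. The cross-sign constraint reduces to a telescoping identity built from $(V_t\overline{O_x}-I)\ket{\psi_t^x}$, whose cumulative inner product equals the right-hand side $\braket{\sigma_x^+}{\sigma_y^-}$ prescribed by \cref{thm:reformulation-state-conversion}. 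The crucial point is that $\sqrt{\alpha_t}\cdot 1/\sqrt{\alpha_t}=1$ at every time slice, so the cross constraint is independent of the weights $\alpha_t$. The complexities then follow by direct computation:
\[
\mathsf{W}_x^+ \;=\; \sum_{t\geq 0}\alpha_t\,\P[T_x>t] \;=\; \E\biggl[\sum_{t=0}^{T_x-1}\alpha_t\biggr],
\]
and symmetrically for $\mathsf{W}_x^-$. The specialisations $\alpha_t=1$ and $\alpha_t=t+1$ then drop out from $\sum_{t=0}^{T-1}(t+1)=T(T+1)/2$ and $\sum_{t=0}^{T-1}1/(t+1)=H_T\leq 1+\ln T$.

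The main obstacle will be the careful bookkeeping needed to produce a witness that lies \emph{exactly} in the $\pm 1$ eigenspace of $\overline{O_x}$ when the underlying algorithm is genuinely stochastic. Coherentising intermediate measurements is standard, but ensuring that the two sign conventions yield exact eigenvectors while keeping the telescoping inner-product calculation landing cleanly on $\braket{\sigma_x}{\tau_y}$ requires access to both the forward and backward oracles and a dedicated sign register. This is precisely what the doubled oracle $\overline{O_x}=(X\otimes I)(O_x\oplus O_x^{\dagger})$ supplies, which is why the statement is cast in the state-reflection setting from the outset.
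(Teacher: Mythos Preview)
Your overall strategy matches the paper's: coherentise via deferred measurement, take the query-active (``running'') part $\ket{r_t^x}=\Pi_{\H_t}\ket{\psi_{x,t}}$ at each time step, weight time-slice $t$ by $\alpha_t^{\pm 1/2}$, and verify the cross constraint by the telescoping identity $\sum_t(\braket{\psi_{x,t}}{\psi_{y,t}}-\braket{\psi_{x,t+1}}{\psi_{y,t+1}})=\braket{\sigma_x}{\sigma_y}-\braket{\tau_x}{\tau_y}$. The norm computation $\sum_t\alpha_t^{\pm1}\P[T_x>t]=\E[\sum_{t<T_x}\alpha_t^{\pm1}]$ and the two specialisations are correct.

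The gap is in your eigenspace step. The sign $(-1)^t$ lives on the \emph{time} register, whereas $\overline{O_x}$ acts only on the oracle register; multiplying $\ket{t}\otimes\ket{r_t^x}$ by $(-1)^t$ does nothing to the decomposition of $\ket{r_t^x}$ into $\pm1$-eigenspaces of $\overline{O_x}$, so your $\ket{w_x^-}$ is not a $(-1)$-eigenvector. The paper resolves this not by a phase on the time index but by explicitly \emph{doubling} each slice into the two copies of $\H$ that $\overline{O_x}$ acts on:
\[
\ket{w_x^{\pm}} \;=\; \bigoplus_{t=0}^{T-1}\sqrt{\alpha_t^{\pm1}}\;\frac{\Pi_{\H_t}\ket{\psi_{x,t}}\ \oplus\ \pm O_x\Pi_{\H_t}\ket{\psi_{x,t}}}{\sqrt{2}},
\]
which is manifestly a $\pm1$-eigenvector of $\overline{O_x}=(X\otimes I)(O_x\oplus O_x^{\dagger})$ term-by-term. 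Note also that the paper works with the \emph{original} oracle $O_x$ in the algorithm and only uses the doubling to build the witness, rather than first rewriting $\A$ to query $\overline{O_x}$; this keeps the telescoping computation literally equal to $\braket{\sigma_x}{\sigma_y}-\braket{\tau_x}{\tau_y}$ without an extra conversion step. Once you replace your sign-twist by this doubling, the rest of your argument goes through verbatim.
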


    \begin{proof}
        Apply the principle of deferred measurement to $\A$. The resulting algorithm is now a sequence of $U_0, O_x, U_1, \dots, U_T$, with $T \in \N$. For all $t \in [T]_0$, let $\ket{\psi_{x,t}}$ be the state of the algorithm after $U_t$. At every time-step $t \in \N_0$, we let $\H_t \subseteq \H$ be the subspace of the state space on which the control qubits for the application of $O_x$ equal $1$. We define
        \[\ket{w_x^{\pm}} := \bigoplus_{t=0}^{T-1} \sqrt{\alpha_t^{\pm1}} \frac{\Pi_{\H_t}\ket{\psi_{x,t}} \oplus \pm O_x\Pi_{\H_t}\ket{\psi_{x,t}}}{\sqrt{2}}.\]
        Then, we obtain
        \[\mathsf{W}_x^{\pm} = \norm{\ket{w_x^{\pm}}}^2 = \sum_{t=0}^{T-1} \alpha_t^{\pm1}\norm{\Pi_{\H_t}\ket{\psi_{x,t}}}^2 \leq \sum_{t=0}^{T-1} \alpha_t^{\pm1} \P\left[T_x > t\right] = \E\left[\sum_{t=0}^{T_x-1} \alpha_t^{\pm1}\right],\]
        and for all $x,y \in \D$,
        \begin{align*}
            2\braket{w_x^+}{w_y^-} &= \sum_{t=0}^{T-1} \left[\bra{\psi_{x,t}}\Pi_{\H_t}\ket{\psi_{y,t}} - \bra{\psi_{x,t}}O_x^{\dagger}\Pi_{\H_t}O_y\ket{\psi_{y,t}}\right] = \sum_{t=0}^{T-1} \left[\braket{\psi_{x,t}}{\psi_{y,t}} - \bra{\psi_{x,t}}O_x^{\dagger}O_y\ket{\psi_{y,t}}\right] \\
            &= \sum_{t=0}^{T-1} \left[\braket{\psi_{x,t}}{\psi_{y,t}} - \braket{\psi_{x,t+1}}{\psi_{y,t+1}}\right] = \braket{\psi_{x,0}}{\psi_{y,0}} - \braket{\psi_{x,T}}{\psi_{y,T}} = \braket{\sigma_x}{\sigma_y} - \braket{\tau_x}{\tau_y}.\qedhere
        \end{align*}
    \end{proof}

    \section{Generalized graph composition}
    \label{sec:generalized-graph-composition}

    In this section, we generalize the graph composition framework from \cite{cornelissen2025quantum} to the state-conversion setting. We perform the composition on the level of state-reflection problems -- these seem to be more amenable to being composed than their state-conversion problem counterparts.

    \subsection{Hyperedges}

    The core idea is to interpret certain state-reflection problems as hyperedges in a graph, i.e., edges connecting more than two vertices. We start by formally introducing which state-reflection problems can be interpreted in this way.

    \begin{definition}[Hyperedge problem]
        \label{def:hyperedge-problem}
        Let $V$ be a finite set, and $R = \{(\ket{\sigma_x^{\pm}}, \pm\ket{\sigma_x^{\pm}}, O_x)\}_{x \in \D}$ be a state-reflection problem over $\mathcal{V} = \C^V$. Suppose in addition that we can write for all $x \in \D$,
        \[\ket{\sigma_x^+} := \delta_x \in \C^V, \qquad \text{and} \qquad \ket{\sigma_x^-} = U_x \in \C^V,\]
        such that
        \[U_x \text{ is constant on } \supp(\delta_x), \qquad \text{and} \qquad \sum_{v \in V} (\delta_x)_v = 0.\]
        Then, we say that $R$ is a hyperedge problem denoted by $\{(\delta_x,U_x,O_x)\}_{x \in \D}$, and we say that it implements the net-flow $\delta_x$ and the potential function $U_x$ through the hyperedge with vertices $V$.
    \end{definition}

    It is important to note that not every state-reflection problem implements a hyperedge. Indeed, we will see that we need the additional constraints outlined in the above definition to carry out the compositions in the upcoming sections. Nevertheless, we will show that the above definition is still sufficiently generic to implement quantum walks. It would be interesting to figure out if there is a generic conversion from state-reflection problems into this form, but we leave that for future research.

    One can interpret a state-reflection program $R$ implementing a hyperedge on the vertices $V$ pictorially as in \cref{fig:hyperedge}, by drawing parallels to the electrical network framework. The idea is that for every $x \in \D$ there is a net-flow on the vertices in the support of $\delta_x$, and since we require the sum of the entries of $\delta_x$ to vanish, the total flow is conserved. The vertices that have a non-zero net-flow $(\delta_x)_v$, the potential function has to be the same, but there can be some potential function difference with the other vertices adjacent to the hyperedge.

    We observe that one can freely add or subtract constants to the potential functions, analogously to the electric network framework. We formalize that in the following lemma statement.

    \begin{lemma}
        Let $R = \{(\delta_x, U_x, O_x)\}_{x \in \D}$, and for all $x \in \D$, let $C_x \in \C$. Let $R' = \{(\delta_x, U_x + C_x\mathbbm{1}_V, O_x)\}_{x \in \D}$. Then, the feasible region of $R$ equals the feasible region of $R'$.
    \end{lemma}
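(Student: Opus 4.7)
The plan is to reduce the claim to the observation that the only constraints of the adversary bound that involve $\ket{\sigma_x^-}$ at all are the mixed-sign ones, and these depend on $U_y$ only through the inner products $\braket{\delta_x}{U_y}$. Concretely, by \cref{def:state-reflection-problem} a feasible solution $w = \{\ket{w_x^{\pm}}\}_{x \in \D}$ must satisfy, for all $x,y \in \D$ and $s,t \in \{\pm\}$,
\[
\braket{\sigma_x^s}{\sigma_y^t} - st\,\braket{\sigma_x^s}{\sigma_y^t} = \bra{w_x^s}(I - O_x^{\dagger}O_y)\ket{w_y^t},
\]
together with the eigenvector conditions $O_x\ket{w_x^{\pm}} = \pm\ket{w_x^{\pm}}$. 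The $s = t$ constraints collapse to $0 = 0$, so only the $s \neq t$ case is nontrivial; for a hyperedge problem this reads $2\braket{\delta_x}{U_y} = \bra{w_x^+}(I - O_x^{\dagger}O_y)\ket{w_y^-}$. The right-hand side does not involve the potential functions at all, and since the oracles $O_x$ are unchanged, the eigenvector conditions are identical for $R$ and $R'$.

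I would then show that the left-hand sides coincide. Since $R$ is a hyperedge problem we have $\sum_{v \in V}(\delta_x)_v = 0$ for every $x \in \D$, which means $\braket{\delta_x}{\mathbbm{1}_V} = 0$. Hence
\[
\braket{\delta_x}{U_y + C_y\mathbbm{1}_V} = \braket{\delta_x}{U_y} + C_y\braket{\delta_x}{\mathbbm{1}_V} = \braket{\delta_x}{U_y}.
\]
Therefore every constraint of $R$ is literally the same equation as the corresponding constraint of $R'$, so a feasible solution for one is automatically feasible for the other.

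Finally, I would verify that $R'$ is itself a valid hyperedge problem in the sense of \cref{def:hyperedge-problem}, so that the statement is actually meaningful. The net-flow $\delta_x$ is unchanged, so $\sum_v(\delta_x)_v = 0$ still holds. The new potential $U_x + C_x\mathbbm{1}_V$ is constant on $\supp(\delta_x)$ because $U_x$ is and $C_x\mathbbm{1}_V$ is constant everywhere. Finally, the orthogonality $\braket{\delta_x}{U_x + C_x\mathbbm{1}_V} = 0$ needed to make $R'$ a state-reflection problem follows from the same vanishing-sum computation as above.

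There is no real obstacle here: the lemma is essentially the linear-algebraic statement that $\mathbbm{1}_V$ is in the kernel of the map $v \mapsto \braket{\delta_x}{v}$ whenever $\delta_x$ has vanishing coordinate sum, together with the observation that the adversary-bound constraints see $U_y$ only through such inner products. The only thing one has to be mildly careful about is reading off from \cref{def:state-reflection-problem} which constraints are nontrivial, so that one does not mistakenly worry about norms $\norm{U_x}^2$ (which do change) entering the feasibility conditions — they do not.
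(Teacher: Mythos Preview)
Your proposal is correct and follows essentially the same argument as the paper: reduce to the mixed-sign constraint, then use $\braket{\delta_x}{\mathbbm{1}_V}=0$ to show the relevant inner products are unchanged. You are in fact slightly more thorough, since you also check that $R'$ is a valid hyperedge problem, which the paper leaves implicit.
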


    \begin{proof}
        We compare the difference of the Gram matrices of the input and output states to the state-reflection problem. For all $x,y \in \D$, we have
        \[\braket{\sigma_x^{\pm}}{\sigma_y^{\pm}} - (\pm\bra{\sigma_x^{\pm}})(\pm\ket{\sigma_y^{\pm}}) = 0,\]
        for any state-reflection problem, so it remains to focus on the cross-talk between $+$ and $-$ states. To that end, let $\ket{(\sigma_x^{\pm})_C}$ be the states in the state-reflection problem for $R'$, and similarly $\ket{\sigma_x^{\pm}}$ for $R$. We observe that
        \[\braket{(\sigma_x^+)_C}{(\sigma_y^-)_C} = \delta_x^T(U_y + C_y\mathbbm{1}_V) = \delta_x^TU_x + C_y\delta_x^T\mathbbm{1}_V = \delta_x^TU_x = \braket{\sigma_x^+}{\sigma_y^-},\]
        and so the feasible regions of $R'$ and $R$ are identical.
    \end{proof}

    Next, we observe that span programs can be converted into state-reflection problems, that implement hyperedges with just two endpoints, which we will refer to as $s$ and $t$. The span program implements a unit flow from $s$ and $t$ for positive instances, and a unit potential difference for negative instances. Importantly, this characterization is an equivalence, so interpreting span programs this way is not lossy, not even up to constant factors.

    \begin{theorem}[Hyperedge from a span program]
        \label{thm:hyperedge-from-span-program}
        Let $f : \D \to \{0,1\}$. For all $x \in \D$, we define net-flow $\delta_x$ and potential function $U_x$ on $\{s,t\}$, as
        \[\delta_x = \begin{cases}
            \mathbbm{1}_s - \mathbbm{1}_t, & \text{if } f(x) = 1, \\
            0, & f(x) = 0,
        \end{cases} \qquad \text{and} \qquad U_x = \begin{cases}
            0, & \text{if } f(x) = 1, \\
            -\mathbbm{1}_t, & f(x) = 0.
        \end{cases}\]

        On the one hand, let $\mathcal{P} = (\H, x \mapsto \H(x), \K, \ket{w_0})$ be a span program on $\D$ that computes $f$. Let $\{\ket{w_x}\}_{x \in \D}$ be a set of witnesses for $\mathcal{P}$. We write
        \[\ket{w_x^+} = \begin{cases}
            \ket{w_x}, & \text{if } f(x) = 1, \\
            0, & \text{if } f(x) = 0,
        \end{cases} \qquad \text{and} \qquad \ket{w_x^-} = \begin{cases}
            0, & \text{if } f(x) = 1, \\
            \ket{w_x}, & \text{if } f(x) = 0.
        \end{cases}\]
        Then, $\{\ket{w_x^{\pm}}\}_{x \in \D}$ is a feasible solution for the hyperedge problem $R = \{(\delta_x, U_x, 2\Pi_{\H(x)} - I)\}_{x \in \D}$.

        Conversely, let $\{\ket{w_x^{\pm}}\}_{x \in \D}$ be a feasible solution to the hyperedge problem $R = \{(\delta_x, U_x, O_x)\}_{x \in \D}$, where the oracles $O_x$ act on $\H$. We let $\H(x)$ be the $+1$-eigenspace of $O_x$. Then, there exists a span program $\mathcal{P} = (\H, x \mapsto \H(x), \K, \ket{w_0})$ that computes $f$, with
        \[C(\mathcal{P})^2 \leq \max\left\{\max_{x \in f^{-1}(1)} \norm{\ket{w_x^+}}^2, \max_{x \in f^{-1}(0)} \norm{\ket{w_x^-}}^2\right\}.\]
    \end{theorem}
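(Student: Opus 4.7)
The approach is to prove both directions by directly matching the span-program conditions with the state-reflection feasibility constraints, exploiting the fact that the reflection $O_x = 2\Pi_{\H(x)} - I$ has $\pm 1$-eigenspaces equal to $\H(x)$ and $\H(x)^{\perp}$, and that the feasibility constraint simplifies substantially when the witnesses lie in eigenspaces of $O_x$.

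For the forward direction, I would take the $\ket{w_x^{\pm}}$ as defined in the statement and verify that they form a feasible solution to $R$. The eigenspace conditions $O_x\ket{w_x^{\pm}} = \pm\ket{w_x^{\pm}}$ are immediate from the standard span-program membership conditions ($\ket{w_x} \in \H(x)$ for $f(x)=1$, $\ket{w_x} \in \H(x)^{\perp}$ for $f(x)=0$) together with $O_x$ being a reflection about $\H(x)$. Since $O_x^{\dagger} = O_x$, the feasibility constraint $\braket{\sigma_x^s}{\sigma_y^t}(1-st) = \bra{w_x^s}(I - O_x O_y)\ket{w_y^t}$ collapses, using the eigenspace property, to $(1-st)\braket{\sigma_x^s}{\sigma_y^t} = (1-st)\braket{w_x^s}{w_y^t}$. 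The $s=t$ case is trivial; the $s \neq t$ case is trivial whenever $f(x) = f(y)$ because one of the witnesses vanishes by construction; and the remaining nontrivial case (say $f(x)=1$, $f(y)=0$) reduces to checking $\braket{\sigma_x^+}{\sigma_y^-} = (\mathbbm{1}_s - \mathbbm{1}_t)^T(-\mathbbm{1}_t) = 1$, which matches the standard span-program inner-product normalization $\braket{w_x}{w_y} = 1$.

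For the reverse direction, I would define $\H(x)$ to be the $+1$-eigenspace of $O_x$, which is well-defined since $O_x^2 = I$, so that indeed $O_x = 2\Pi_{\H(x)} - I$. The vectors $\ket{w_x^+}$ for $x \in f^{-1}(1)$ and $\ket{w_x^-}$ for $x \in f^{-1}(0)$ then lie in the correct subspaces by \cref{thm:reformulation-state-conversion}, and running the calculation above in reverse using feasibility gives the span-program inner-product relation $\braket{w_x^+}{w_y^-} = 1$ for any positive $x$ and negative $y$. I would then choose $\ket{w_0}$ and $\K$ in the standard way so that this inner-product condition is exactly the span program's target condition. For the complexity bound, I would apply \cref{lem:rescale-state-reflection-problem} with $D = I$ and scalars $\alpha_{\pm}$ chosen to balance the maximum positive and negative witness norms, producing $C(\mathcal{P})^2 \leq \sqrt{W_+ W_-}$ where $W_{\pm}$ denote the respective maxima over $f^{-1}(\pm)$, and then conclude by the elementary inequality $\sqrt{ab} \leq \max(a,b)$ for $a,b \geq 0$.

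The main obstacle is the bookkeeping in the reverse direction: specifying $\ket{w_0}$ and $\K$ precisely enough to match the paper's exact span-program convention and verifying the span-program acceptance/rejection conditions under that convention. The mathematical content is essentially routine once the inner-product relation has been extracted from feasibility, but the correct packaging into the quadruple $(\H, x \mapsto \H(x), \K, \ket{w_0})$ depends on the specific definitions in use.
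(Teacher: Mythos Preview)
Your forward direction matches the paper's proof exactly: verify the eigenspace conditions, note that the same-sign constraints are automatic, and reduce the cross term to the span-program inner-product relation $\braket{w_x}{w_y}=1$ when $f(x)=1$, $f(y)=0$.

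For the reverse direction, your outline is correct but the paper is more explicit precisely at the point you flag as the ``main obstacle.'' The paper constructs
\[
\A=\Span\left\{\begin{bmatrix}\ket{\sigma_x^+}\\\ket{w_x^+}\end{bmatrix}\right\}_{x\in f^{-1}(1)}
\]
and, using that the top block is always proportional to $\mathbbm{1}_s-\mathbbm{1}_t$, splits $\A$ as a line spanned by $[\mathbbm{1}_s-\mathbbm{1}_t;\,\ket{w_0}]$ plus a subspace $\K$ sitting entirely in the witness coordinates, with $\ket{w_0}\perp\K$. This immediately gives $\ket{w_x^+}-\ket{w_0}\in\K$ for positive $x$ (hence $\ket{w_0}\in\K+\H(x)$), and feasibility against negative $y$ yields $\braket{w_0}{w_y^-}=1$. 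So the span-program data is read off directly from the geometry of $\A$, rather than invoked as a ``standard'' choice.

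Your complexity argument via \cref{lem:rescale-state-reflection-problem} and $\sqrt{ab}\le\max(a,b)$ is valid but an unnecessary detour: in the paper's construction the span-program witnesses are literally the $\ket{w_x^+}$ (for $f(x)=1$) and $\ket{w_x^-}$ (for $f(x)=0$), so the bound $C(\mathcal{P})^2\le\max\{\max_{f(x)=1}\|\ket{w_x^+}\|^2,\max_{f(x)=0}\|\ket{w_x^-}\|^2\}$ holds by definition of $C(\mathcal{P})$, with no rescaling needed.
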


    \begin{proof}
        For the first claim, we write $O_x = 2\Pi_{\H(x)} - I$. Since $\ket{w_x} \in \H(x)$ if $f(x) = 1$, and $\ket{w_x} \in \H(x)^{\perp}$ otherwise, we immediately observe that $O_x\ket{w_x^{\pm}} = \pm\ket{w_x^{\pm}}$. As such, for all $x,y \in \D$, we have
        \[\bra{w_x^{\pm}}(I - O_x^{\dagger}O_y)\ket{w_y^{\pm}} = \braket{w_x^{\pm}}{w_y^{\pm}} - \braket{w_x^{\pm}}{w_y^{\pm}} = 0,\]
        which equals $\braket{\sigma_x^{\pm}}{\sigma_y^{\pm}} - (\pm\bra{\sigma_x^{\pm}})(\pm\ket{\sigma_y^{\pm}}) = 0$.

        Thus, it remains to consider the cross-talk between $+$ and $-$. To that end, observe for all $x,y \in \D$ that
        \[\braket{\sigma_x^+}{\sigma_y^-} - \bra{\sigma_x^+}(-\ket{\sigma_y^-}) = 2\braket{\sigma_x^+}{\sigma_y^-} = 2U_x^T\delta_y = 2\delta_{f(x) = 0}\delta_{f(y) = 1},\]
        and on the other hand,
        \[\bra{w_x^+}(I - O_xO_y)\ket{w_y^-} = 2\braket{w_x^+}{w_y^-} = 2\braket{w_x}{w_y}\delta_{f(x) = 1}\delta_{f(y) = 0},\]
        It remains to prove that $\braket{w_x}{w_y} = 1$ whenever $f(x) = 1$ and $f(y) = 0$. Indeed, in that case we can write $\ket{w_x} - \ket{w_0} \in \K$, and $\ket{w_y} \in \K^{\perp}$, and so
        \[\braket{w_x}{w_y} = \braket{w_0}{w_y} + (\bra{w_x} - \bra{w_0})\ket{w_y} = 1 - 0 = 1.\]
        This completes the proof for the first claim.

        For the converse, let
        \[\A = \Span\left\{\begin{bmatrix}
            \ket{\sigma_x^+} \\ \ket{w_x^+}
        \end{bmatrix}\right\}_{x \in \D} = \Span\left\{\begin{bmatrix}
            \ket{\sigma_x^+} \\ \ket{w_x^+}
        \end{bmatrix}\right\}_{x \in f^{-1}(1)}.\]
        Now, since the first two entries of any vector in $\A$ sum to $0$, we can decompose $\A$ into two parts, i.e.,
        \[\A = \Span\left\{\begin{bmatrix}
            \mathbbm{1}_s - \mathbbm{1}_t \\
            \ket{w_0}
        \end{bmatrix}\right\} \oplus \mathcal{K},\]
        where $\ket{w_0} \in \K^{\perp}$. We observe for every $x \in f^{-1}(1)$ that $\ket{w_x^+} - \ket{w_0} \in \K$, and so $\ket{w_0} \in \K + \H(x)$. Similarly, for $x \in f^{-1}(0)$, we have $1 - \braket{w_x^-}{w_0} = 0$, and so the span program computes $f$, and the witnesses are the same as for the state-reflection problem.
    \end{proof}

    There are also other ways to generate hyperedges. We specifically remark here that function evaluations and database updates fit neatly in this context, as witnessed by the following theorem.

    \begin{theorem}[Hyperedge from database update and function evaluation]
        \label{thm:hyperedge-from-database-function}
        Let $\D$ be a finite domain, and consider a database update problem with input $\xi_x \in \Sigma_I$ output $\eta_x \in \Sigma_O$ and oracle $O_x$ satisfying $O_x^2 = I$, for all $x \in \D$. Then, the state-reflection problem $R = \{(\ket{\xi_x} - \ket{\eta_x}, \ket{\xi_x} + \ket{\eta_x}, O_x)\}_{x \in \D}$ implementing the database update problem is also a hyperedge problem.
    \end{theorem}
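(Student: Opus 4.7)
The plan is to verify each condition of Definition~\ref{def:hyperedge-problem} directly; once the vertex structure is identified, there is little more than bookkeeping to do. First, I would fix the vertex set as the disjoint union $V := \Sigma_I \sqcup \Sigma_O$, so that $\ket{\xi_x}$ and $\ket{\eta_x}$ sit in $\C^V$ as distinct standard basis vectors. With this identification, $\delta_x := \ket{\xi_x} - \ket{\eta_x}$ and $U_x := \ket{\xi_x} + \ket{\eta_x}$ match the $+$ and $-$ states of $R$ respectively, and the support of $\delta_x$ is the two-element set $\{\xi_x,\eta_x\} \subseteq V$.

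Next, I would check the three structural requirements in turn. First, the entries of $\delta_x$ sum to $1 + (-1) = 0$. Second, $U_x$ takes the value $+1$ on both elements of $\supp(\delta_x)$, so it is constant there. Third, orthogonality of the two states follows by expanding
\[
\braket{\xi_x - \eta_x}{\xi_x + \eta_x} = \braket{\xi_x}{\xi_x} - \braket{\eta_x}{\eta_x} + \braket{\xi_x}{\eta_x} - \braket{\eta_x}{\xi_x} = 1 - 1 + 0 - 0 = 0,
\]
where the two cross terms vanish because $\xi_x \in \Sigma_I$ and $\eta_x \in \Sigma_O$ index disjoint parts of $V$. Combined with the hypothesis $O_x^2 = I$, this confirms that $R$ is a state-reflection problem in the sense of Definition~\ref{def:state-reflection-problem}, and together with the first two items it satisfies the additional hyperedge constraints.

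There is essentially no hard step in the calculation itself. The only real subtlety is insisting that $\Sigma_I$ and $\Sigma_O$ be treated as disjoint, so that the ``input copy'' and ``output copy'' of a database state appear as distinct vertices of the hyperedge, matching the switch-like picture from Figure~\ref{fig:function-evaluation-database-updates}. If these were conflated, $\delta_x$ could vanish when $\xi_x$ and $\eta_x$ happen to coincide as database states, and the interpretation as a flow connecting an input vertex to an output vertex would break down; the formal disjoint-union construction rules this out. The function-evaluation case alluded to by the theorem's title is then just the specialization $\xi_x \equiv \perp$.
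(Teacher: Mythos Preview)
Your proof is correct and follows the same direct-verification approach as the paper's own proof, which simply sets $\delta_x := \mathbbm{1}_{\xi_x} - \mathbbm{1}_{\eta_x}$, $U_x := \mathbbm{1}_{\xi_x} + \mathbbm{1}_{\eta_x}$ and checks the two hyperedge conditions from Definition~\ref{def:hyperedge-problem}. You are more explicit than the paper about the disjoint-union vertex set and the orthogonality check, but these are exactly the points the paper leaves implicit (cf.\ Figure~\ref{fig:function-evaluation-database-updates}), so there is no substantive difference.
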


    \begin{proof}
        We verify that the properties of \cref{def:hyperedge-problem} are satisfied. To that end, we write $\delta_x := \mathbbm{1}_{\xi_x} - \mathbbm{1}_{\eta_x}$ and $U_x := \mathbbm{1}_{\xi_x} + \mathbbm{1}_{\eta_x}$, and we observe that the sum of $\delta_x$ vanishes, and that $U_x$ is constant on $\supp(\delta_x)$.
    \end{proof}

    It seems in general harder to turn a general state-conversion problem into a hyperedge. However, there are some other special cases we can also handle, most notably when we know that the flow vectors are some restriction of a distribution to an arbitrary subset.

    \begin{theorem}
        \label{thm:known-marked-fraction-recovery}
        Let $V$ be a finite set, and let $\pi \in \Delta_V$. For all $x \in \D$ in a finite domain $\D$, let $M_x \subseteq V$, such that $\P_{v \sim \pi}[v \in M_x] = \varepsilon$, for some fixed $\varepsilon \in (0,1)$. We write
        \[\ket{\psi_x} = \frac{1}{\sqrt{\varepsilon}}\sum_{v \in M_x}\sqrt{\pi_v}\ket{v},\]
        and we define the state-reflection problems $R = \{(\ket{\perp} + \ket{\psi_x}, \ket{\perp} - \ket{\psi_x}, O_x)\}_{x \in \D}$ and $R' = \{(\mathbbm{1}_{\perp} - \frac{\pi|_{M_x}}{\varepsilon}, \mathbbm{1}_{\perp} + \mathbbm{1}_{M_x}, O_x)\}_{x \in \D}$. Then, the feasible regions for $R$ and $R'$ are identical, and $R'$ is a hyperedge problem.
    \end{theorem}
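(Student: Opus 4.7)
The plan is to verify directly that $R'$ satisfies the two conditions of \cref{def:hyperedge-problem}, and then to exhibit an invertible diagonal rescaling that, via \cref{lem:rescale-state-reflection-problem}, maps the input and output states of $R$ onto those of $R'$ exactly. Since the rescaling operation is invertible (swap $D$ with $D^{-1}$), this will establish equality of the two feasible regions.

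For the hyperedge conditions, the entries of $\delta_x = \mathbbm{1}_\perp - \pi|_{M_x}/\varepsilon$ sum to $1 - \frac{1}{\varepsilon}\sum_{v \in M_x}\pi_v = 0$ by the marked-fraction assumption, and $U_x = \mathbbm{1}_\perp + \mathbbm{1}_{M_x}$ is identically $1$ on $\supp(\delta_x) \subseteq \{\perp\} \cup M_x$, so $R'$ is indeed a hyperedge problem.

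For the rescaling, set $\alpha_+ = \alpha_- = 1$ and let $D$ be the diagonal operator on $\C^{\{\perp\} \cup V}$ given by $D\ket{\perp} = \ket{\perp}$ and $D\ket{v} = -\sqrt{\pi_v/\varepsilon}\,\ket{v}$ for each $v \in \supp(\pi)$, extended by the identity on $V\setminus\supp(\pi)$ so that $D$ is invertible. A direct calculation then gives
\[
D\bigl(\ket{\perp} + \ket{\psi_x}\bigr) = \ket{\perp} - \frac{1}{\varepsilon}\sum_{v \in M_x}\pi_v\ket{v} = \mathbbm{1}_\perp - \pi|_{M_x}/\varepsilon
\]
and
\[
(D^{-1})^\dagger\bigl(\ket{\perp} - \ket{\psi_x}\bigr) = \ket{\perp} + \sum_{v \in M_x}\ket{v} = \mathbbm{1}_\perp + \mathbbm{1}_{M_x},
\]
matching the states of $R'$ exactly. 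Applying \cref{lem:rescale-state-reflection-problem} to $R$ with $(D,1,1)$ shows every feasible solution of $R$ is feasible for $R'$ with identical witnesses; applying it to $R'$ with $(D^{-1},1,1)$ gives the reverse inclusion.

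I expect the main conceptual obstacle to be choosing a single $D$ that simultaneously handles both the $+$ and $-$ sides. The $+$ side pins down $D_{v,v}$ up to a sign as $\pm\sqrt{\pi_v/\varepsilon}$, and it is the algebraic identity $(\pi_v/\varepsilon)\cdot\sqrt{\varepsilon/\pi_v} = \sqrt{\pi_v/\varepsilon}$ that makes the same choice automatically yield the correct coefficient $1$ on the $-$ side; the negative sign is then forced by the sign flip between $+\ket{\psi_x}$ in the input state of $R$ and $-\pi|_{M_x}/\varepsilon$ in the input state of $R'$. A minor edge case is vertices with $\pi_v = 0$, which is harmless in the intended quantum-walk setting where $\pi$ is the stationary distribution of an irreducible chain and hence strictly positive by \cref{thm:pi-unique}; otherwise one restricts $M_x$ to $M_x \cap \supp(\pi)$ without affecting $\varepsilon$, $\pi|_{M_x}$, or $\ket{\psi_x}$.
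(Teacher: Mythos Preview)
Your proof is correct and follows essentially the same approach as the paper: both verify the hyperedge conditions directly and apply \cref{lem:rescale-state-reflection-problem} with the diagonal operator that is the identity on $\ket{\perp}$ and acts as $-\sqrt{\pi_v/\varepsilon}$ on $\ket{v}$. Your write-up is in fact more careful than the paper's, explicitly checking both directions of the feasible-region equality and handling the edge case $\pi_v = 0$.
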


    \begin{proof}
        We use \cref{lem:rescale-state-reflection-problem} with a diagonal operator $D$ that acts as $-\diag(\sqrt{\pi})/\sqrt{\varepsilon}$ on $\C^V$, and as identity on $\ket{\perp}$, to argue that the feasible regions are identical. Next, checking that $R'$ is a hyperedge problem is done by directly checking the conditions from \cref{def:hyperedge-problem}.
    \end{proof}

    \subsection{Composition of hyperedges}

    The core idea of the generalized graph composition framework is to take a hypergraph $G = (V,E)$, where we divide the vertices into two groups, the boundary vertices $B \subseteq V$ and the internal vertices $V \setminus B$. Now, we associate a hyperedge problem to each of the hyperedges, in such a way that the net flow vanishes at every internal node, and such that the potential functions are uniquely defined at all vertices. We formalize this notion in the following definition.

    \begin{definition}[Generalized graph composition]
        \label{def:generalized-graph-composition}
        Let $G = (V,E)$ be a hypergraph, with boundary vertices $B \subseteq V$, and let $\D$ be a finite domain. For every hyperedge $e \in E$, let $R^e = \{(\delta_x^e, U_x^e, O_x^e)\}_{x \in \D}$ be a hyperedge problem. For all $v \in  V$ and $x \in \D$, we write $\delta_x(v) = \sum_{e \in N(v)} \delta_x^e(v)$. Suppose that
        \begin{enumerate}[nosep]
            \item For all internal nodes $v \in V \setminus B$, the net-flow vanishes, i.e., $\delta_x(v) = 0$.
            \item The potential function is uniquely defined at every vertex $v \in V$, i.e., $U_x^e(v) = U_x(v)$, for all $e \in N(v)$.
        \end{enumerate}
        Then, we say that $(G,B,(R^e)_{e \in E})$ is an instance of the generalized graph composition framework, that implements the hyperedge problem $R = \{(\delta_x, U_x, \bigoplus_{e \in E} O_x^e)\}_{x \in \D}$ on vertex set $B$.
    \end{definition}

    The framework now describes how we can combine feasible solutions for the individual hyperedge problems into a feasible solution for the hyperedge problem on the boundary vertices. This is the objective of the following theorem.

    \begin{theorem}[Composition of hyperedges]
        \label{thm:generalized-graph-composition}
        Let $(G = (V,E),B,(R^e)_{e \in E})$ be an instance of the generalized graph composition, implementing the hyperedge problem $R$. For all $e \in E$, suppose that $w^e = \{\ket{(w^e)_x^{\pm}}\}_{x \in \D}$ is a feasible solution to $R^e$. For all $x \in \D$, let
        \[\ket{w_x^{\pm}} := \bigoplus_{e \in E} \ket{(w^e)_x^{\pm}}.\]
        Then, $w = \{\ket{w_x^{\pm}}\}_{x \in \D}$ forms a feasible solution to $R$, with $\mathsf{R}_x^{\pm}(w) = \sum_{e \in E} \mathsf{R}_x^{\pm}(w^e)$, for all $x \in \D$.
    \end{theorem}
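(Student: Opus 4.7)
The plan is to directly verify that $w = \{\ket{w_x^{\pm}}\}_{x\in\D}$ satisfies the three defining conditions of a feasible solution to the state-reflection problem $R$, and then read off the witness sizes. By \cref{def:state-reflection-problem}, I need to check (a) the eigenvalue condition $O_x\ket{w_x^\pm}=\pm\ket{w_x^\pm}$, where $O_x=\bigoplus_{e\in E}O_x^e$; (b) the adversary-bound constraint tying $\braket{\sigma_x^{\pm}}{\sigma_y^{\pm}}$ and $\braket{w_x^{\pm}}{w_y^{\pm}}$ together for all $x,y\in\D$; and (c) that the sum of squared norms matches the claimed witness size.

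Point (a) is immediate: since $O_x^e\ket{(w^e)_x^{\pm}}=\pm\ket{(w^e)_x^{\pm}}$ for every edge $e$, the direct sum inherits the same sign. Point (c) is a one-line consequence of the Pythagorean theorem applied to the direct-sum structure of $\ket{w_x^\pm}$. So the real content lies in (b). The ``like-sign'' cases $(+,+)$ and $(-,-)$ are automatic: both $\braket{\sigma_x^\pm}{\sigma_y^\pm}-(\pm)(\pm)\braket{\sigma_x^\pm}{\sigma_y^\pm}$ and $\bra{w_x^\pm}(I-O_x^\dagger O_y)\ket{w_y^\pm}$ vanish, using (a) on the right. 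The only nontrivial identity is the cross-constraint, which (after the factor of $2$) reduces to showing
\[
\braket{\sigma_x^+}{\sigma_y^-}\;=\;\braket{w_x^+}{w_y^-}\qquad\text{for all }x,y\in\D.
\]

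The main step, and the only place where both defining conditions of \cref{def:generalized-graph-composition} actually get used, is a double-counting argument for the left-hand side. By the direct-sum structure, $\braket{w_x^+}{w_y^-}=\sum_{e\in E}\braket{(w^e)_x^+}{(w^e)_y^-}$, and feasibility of each $w^e$ for $R^e$ gives $\braket{(w^e)_x^+}{(w^e)_y^-}=\braket{(\sigma^e)_x^+}{(\sigma^e)_y^-}=\sum_{v\in N(e)}\delta_x^e(v)\,U_y^e(v)$. Swapping the order of summation,
\[
\sum_{e\in E}\sum_{v\in N(e)}\delta_x^e(v)\,U_y^e(v)\;=\;\sum_{v\in V}U_y(v)\sum_{e\in N(v)}\delta_x^e(v)\;=\;\sum_{v\in V}\delta_x(v)\,U_y(v),
\]
where I used condition~(2) of \cref{def:generalized-graph-composition} to replace $U_y^e(v)$ by the well-defined $U_y(v)$, and the definition $\delta_x(v)=\sum_{e\in N(v)}\delta_x^e(v)$. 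Finally, condition~(1) says $\delta_x(v)=0$ on internal vertices, so the sum restricts to $B$, matching exactly $\braket{\sigma_x^+}{\sigma_y^-}$ for the composed hyperedge problem $R$ on vertex set $B$.

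I expect the main obstacle to be purely bookkeeping: being careful that $U_x^e(v)$ really is constant on $\supp(\delta_x^e)\subseteq N(e)$ (so the hyperedge-problem conditions are genuinely satisfied by $R$ itself, which is needed for the statement to make sense), and that the inner products for the composed problem are computed over $B$ rather than $V$. Once the identity $\sum_v\delta_x(v)U_y(v)=\sum_e\sum_{v\in N(e)}\delta_x^e(v)U_y^e(v)$ is in hand, the remainder of the argument is mechanical.
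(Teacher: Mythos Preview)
Your proposal is correct and follows essentially the same approach as the paper: both proofs reduce to the cross-term identity and verify it via the same double-counting/order-of-summation argument, using condition~(2) of \cref{def:generalized-graph-composition} to pull $U_y^e(v)$ out as $U_y(v)$ and condition~(1) to collapse the sum to the boundary. The paper's version is terser (it runs the chain from $(\delta_x)|_B^T(U_y)|_B$ toward $\bra{w_x^+}(I-O_x^\dagger O_y)\ket{w_y^-}$ rather than the reverse, and omits the explicit checks of (a), (c), and the like-sign cases by citing the general structure of state-reflection problems), but the substance is identical.
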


    \begin{proof}
        We check that the witnesses indeed form a feasible solution. To that end, recall that we only have to check the cross-talk between $+$ and $-$ states for state-reflection problems. So, let $x,y \in \D$, and we observe that
        \begin{align*}
            (\delta_x)|_B^T(U_y)|_B &= \delta_x^TU_y = \sum_{v \in V} \delta_x(v)U_y(v) = \sum_{\substack{e \in E \\ v \in N(e)}} \delta_x^e(v)U_y^e(v) = \sum_{e \in E} (\delta_x^e)^TU_y^e \\
            &= \sum_{e \in E} \bra{(w_x^+)^e}(I - (O_x^e)^{\dagger}O_y^e)\ket{(w_y^-)^e} = \bra{w_x^+}(I - O_x^{\dagger}O_y)\ket{w_y^-},
        \end{align*}
        where we wrote $O_x = \bigoplus_{e \in E} O_x^e$, for all $x \in \D$.
    \end{proof}

    We proceed by showing how this subsumes the ``simple'' graph composition framework, i.e., the construction from \cite[Definition~4.5]{cornelissen2025quantum}.

    \begin{theorem}[Generalization of simple graph composition]
        \label{thm:graph-composition}
        Every instance of graph composition can be turned into an instance for generalized graph composition, with the same witness sizes.
    \end{theorem}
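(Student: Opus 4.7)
The plan is to build a generalized graph composition instance from a simple graph composition instance $(G,s,t,(w_e)_{e\in E},(\mathcal{P}_e)_{e\in E})$ by an edge-by-edge conversion. I would keep the same graph $G$ (viewed as a hypergraph with simple edges), set the boundary $B=\{s,t\}$, and preserve the weights; the only real choice is how to build each hyperedge problem $R^e$ from the span program $\mathcal{P}_e$. For every input $x\in\D$ I precompute auxiliary flow and potential data: if $s,t$ are connected in $G(x)$ I let $\phi^x$ be the minimum-energy unit $st$-flow in $G(x)$ with resistances $w_e\mathsf{P}_e^+(x)$ (writing $\mathsf{P}_e^{\pm}(x)$ for the span-program witness sizes) and set $V^x\equiv 0$; otherwise I set $\phi^x\equiv 0$ and take any $V^x$ that is constant on each connected component of $G(x)$ with $V^x(s)=0$ and $V^x(t)=-1$. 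Writing $\phi_e^x$ for the edge-values of $\phi^x$ and $\psi_e^x:=V^x(v_e)-V^x(w_e)$, note that $\phi_e^x=0$ whenever $f_e(x)=0$ and $\psi_e^x=0$ whenever $f_e(x)=1$.

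Starting from the span-program hyperedge problem on $\{v_e,w_e\}$ from \cref{thm:hyperedge-from-span-program}, with underlying witnesses $\ket{w_x^e}$, I would define $R^e$ by the input-dependent rescaling $\delta_x^e=\phi_e^x(\mathbbm{1}_{v_e}-\mathbbm{1}_{w_e})$, $U_x^e=V^x(v_e)\mathbbm{1}_{v_e}+V^x(w_e)\mathbbm{1}_{w_e}$, with positive and negative witnesses $\phi_e^x\ket{w_x^e}$ and $\psi_e^x\ket{w_x^e}$ respectively; edge weights are absorbed by a further rescaling via \cref{lem:rescale-state-reflection-problem} (with $D=w_e^{-1/2}I$ and $\alpha_{\pm}=w_e^{\pm 1/2}$) that leaves the states unchanged and multiplies positive/negative witness sizes by $w_e$ and $1/w_e$. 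Feasibility of each $R^e$ reduces to checking that $\braket{w_x^{e,+}}{w_y^{e,-}}=\phi_e^x\psi_e^y\braket{w_x^e}{w_y^e}$ equals $(\delta_x^e)^T U_y^e=\phi_e^x\psi_e^y$, which holds because the support conditions force both sides to vanish unless $f_e(x)=1$ and $f_e(y)=0$, and in that case $\braket{w_x^e}{w_y^e}=1$ by \cref{thm:hyperedge-from-span-program}. The hyperedge conditions of \cref{def:hyperedge-problem} are then immediate: $\delta_x^e$ sums to zero, and $U_x^e$ is constant on $\supp(\delta_x^e)$ because whenever $\phi_e^x\neq 0$ the two endpoints $v_e,w_e$ lie in the same connected component of $G(x)$, on which $V^x$ is constant.

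Finally I would verify the global conditions of \cref{def:generalized-graph-composition} and invoke \cref{thm:generalized-graph-composition}: net-flow vanishes at every internal vertex because $\sum_{e\in N(v)}\delta_x^e(v)=0$ is exactly flow-conservation of $\phi^x$, and the potential function is uniquely defined because each $U_x^e$ is just the restriction of the single global $V^x$ to $\{v_e,w_e\}$. The composed state-reflection problem on the boundary is $\{(\delta_x,U_x,O_x)\}_{x\in\D}$ with $\delta_x=\mathbbm{1}_s-\mathbbm{1}_t,U_x=0$ on positive inputs and $\delta_x=0,U_x=-\mathbbm{1}_t$ on negative inputs, i.e., the state-reflection form of $st$-connectivity. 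Its total positive witness size is $\sum_e w_e|\phi_e^x|^2\mathsf{P}_e^+(x)=R_{\mathrm{eff}}(G(x),(w_e\mathsf{P}_e^+(x))_e;s\leftrightarrow t)$ and its total negative witness size is $\sum_{e\in C_x}\mathsf{P}_e^-(x)/w_e$, where $C_x$ is the cut induced by $V^x$, matching the simple graph composition exactly. The main technical obstacle is the feasibility check for the input-dependent rescaling, which works cleanly because the supports of $\phi_e^x$ and $\psi_e^x$ are aligned with the supports of the span-program positive and negative witnesses, respectively.
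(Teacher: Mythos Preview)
Your construction is essentially the paper's: same hypergraph, same boundary $\{s,t\}$, same edge-by-edge conversion of span programs into two-vertex hyperedge problems via input-dependent rescalings $\alpha_+ = \phi_e^x$, $\alpha_- = \psi_e^x$, same flow on positive inputs, and the same final appeal to \cref{thm:generalized-graph-composition}. The feasibility check you give is exactly the one needed.

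There is, however, one genuine slip on the negative side. You take ``any $V^x$ constant on connected components with $V^x(s)=0$, $V^x(t)=-1$'', and then compute the negative witness size as $\sum_{e\in C_x}\mathsf{P}_e^-(x)/w_e$ for the cut $C_x$ this induces. That formula presumes $V^x$ is $\{0,-1\}$-valued, and it recovers only the \emph{path--cut} bound of \cite[Theorem~4.3]{cornelissen2025quantum}, not the full graph-composition witness size of \cite[Theorem~4.2]{cornelissen2025quantum}, which is $R_{\mathrm{eff}}(\overline{G}(x),\,e\mapsto \mathsf{P}_e^-(x)^{-1}/w_e;\,s\leftrightarrow t)^{-1}$. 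Since the theorem claims ``the same witness sizes'', this is a gap. The fix is exactly what the paper does: let $V^x$ be the \emph{optimal} real-valued potential in the contracted graph $\overline{G}(x)$ (where connected components of $G(x)$ are merged) with resistances $e\mapsto w_e/\mathsf{P}_e^-(x)$ and boundary condition $V^x(s)-V^x(t)=1$. Then the negative witness size becomes
\[
\sum_{e\in E}\frac{|\psi_e^x|^2\,\mathsf{P}_e^-(x)}{w_e}
=\sum_{e}\left|\frac{\psi_e^x}{w_e/\mathsf{P}_e^-(x)}\right|^2\cdot\frac{w_e}{\mathsf{P}_e^-(x)}
= R_{\mathrm{eff}}\!\left(\overline{G}(x),\,e\mapsto \tfrac{w_e}{\mathsf{P}_e^-(x)};\,s\leftrightarrow t\right)^{-1},
\]
the last equality by the potential/energy duality \cite[Theorem~4.4, item~3]{cornelissen2025quantum}. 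Nothing else in your argument changes: the optimal $V^x$ is still constant on connected components of $G(x)$, so your hyperedge and feasibility checks go through verbatim with real (rather than $\{0,1\}$) values of $\psi_e^x$.
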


    \begin{proof}
        Let $(G = (V,E))$ be an undirected graph with distinct source and sink nodes $s,t \in V$, and with edge span programs $\mathcal{P}^e$ on $\D$, for every $e \in E$. We take witness vectors $\{\ket{(w^e)_x^{\pm}}\}_{x \in \D}$ to each of the span programs $\mathcal{P}^e$. For all $x \in \D$, let $G(x) = (V,E(x))$ be the graph that contains all the edges $e \in E$ for which $x$ is positive for $\mathcal{P}^e$, with resistance function $r : e \mapsto \mathsf{R}_x^+(w^e)$.

        On the one hand, for every input $x \in \D$ for which $s$ and $t$ are connected in $G(x)$, let $(f_x^e)_{e \in E(x)}$ be the minimum-energy unit $st$-flow in $G(x)$, and let $f_x^e = 0$ for all $e \not\in E(x)$. Simultaneously, let $U_x : V \to \R$ be identically $0$ everywhere.

        On the other hand, for every $x \in \D$ for which $s$ and $t$ are not connected in $G(x)$, consider the graph $\overline{G}(x)$, where all the connected components in $G(x)$ are contracted, and the remaining edges from $G$ are present. Let the resistance function be $r : e \mapsto \mathsf{R}_x^-(w^e)^{-1}$. Let $U_x : V \to \R$ be the maximum-energy potential function in $\overline{G}(x)$ that satisfies $U_s - U_t = 1$, and let $f_x : E \to \R$ be the identically-zero function.

        Now, for every $uv = e \in E$, we define the hyperedge problem $R^e = \{(f_x^e(\mathbbm{1}_u - \mathbbm{1}_v), U_x|_{\{u,v\}}, 2\Pi_{\H^e(x)} - I)\}_{x \in \D}$, and we observe that it indeed satisfies the conditions from \cref{def:hyperedge-problem}. Moreover, we observe that $(G, \{s,t\}, (R^e)_{e \in E})$ is an instance of the generalized graph composition framework, as it satisfies all the conditions from \cref{thm:generalized-graph-composition}, and it implements the hyperedge problem that sends unit flow from $s$ to $t$ if $x$ is a positive input to the graph composition span program, and that has unit potential difference between $s$ and $t$ if not.

        Finally, we can use \cref{lem:rescale-state-reflection-problem} to obtain feasible solutions to the hyperedge problems $R^e$. Then, we use \cref{thm:generalized-graph-composition} to conclude that the total witness sizes become
        \[\mathsf{R}_x^+(w) = \sum_{e \in E} |f_e^x|^2 \cdot \mathsf{R}_x^+(w^e) = R_{\mathrm{eff}}(G(x), e \mapsto \mathsf{R}_x^+(w^e); s \leftrightarrow t),\]
        and
        \[\mathsf{R}_x^-(w) = \sum_{uv = e \in E} \left|U_u - U_v\right|^2 \cdot \mathsf{R}_x^-(w^e) = \sum_{uv = e \in E} \left|\frac{U_u - U_v}{\mathsf{R}_x^-(w^e)^{-1}}\right|^2 \cdot \mathsf{R}_x^-(w^e)^{-1} = R_{\mathrm{eff}}(\overline{G}(x), e \mapsto \mathsf{R}_x^-(w^e)^{-1}, s \leftrightarrow t)^{-1},\]
        where we used \cite[Theorem~4.4, item~3]{cornelissen2025quantum} to recover the witness sizes from \cite[Theorem~4.2]{cornelissen2025quantum}.
    \end{proof}

    Finally, similar to \cite[Theorem~4.3]{cornelissen2025quantum}, we give a somewhat less general formulation of the framework that is a bit easier to use. We refer to this result as the resistance-cut theorem.

    \begin{theorem}[Resistance-cut theorem for function evaluation and database update]
        \label{thm:path-cut-theorem}
        Let $G = (V,E)$ be a hypergraph, with boundary vertices $B \subseteq V$ and hyperedge problems $R^e = \{(\delta_x^e, U_x^e, O_x^e)\}_{x \in \D}$ for all $e \in E$. Suppose that for every $x \in \D$ and $e \in E$, the total flow vector $\delta_x^e$ has support on at most two vertices, and $U_x^e = 1$ for those vertices, and $0$ everywhere else. Then, for every $x \in \D$, we can make a simple graph $G(x) = (V,E(x))$, where we include all the edges $uv$ that are connected by the unit flow in the hyperedge problem. Suppose that $G(x)$ connects exactly two boundary vertices $\{s_x,t_x\}$. Then, for all $x \in \D$, let $C_x \subseteq E$ be a cut that certifies that all the other boundary vertices cannot  be reached from $s_x$ or $t_x$. Then, we can generate an instance to generalized graph composition with witness complexities
        \[\mathsf{R}_x^+(w) = R_{\mathrm{eff}}(G(x), e \mapsto \mathsf{R}_x^+(w^e), s_x \leftrightarrow t_x), \qquad \text{and} \qquad \mathsf{R}_x^-(w) = \sum_{e \in C_x} \mathsf{R}_x^-(w^e),\]
        where all $w^e$'s are feasible solutions to the $R^e$'s.
    \end{theorem}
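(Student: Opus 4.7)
The plan is to instantiate the generalized graph composition framework on the same hypergraph $G$ with boundary $B$, after rescaling each hyperedge problem so that for every input $x \in \D$ the induced flow is a minimum-energy unit $s_x$-$t_x$ flow and the induced potential is the $\{0,1\}$-indicator of the $\{s_x,t_x\}$-side of the cut $C_x$.

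First, for each $x \in \D$ I would fix $f^x$ to be the minimum-energy unit $s_x$-$t_x$ flow in $G(x)$ with resistances $e \mapsto \mathsf{R}_x^+(w^e)$, and fix $U_x : V \to \{0,1\}$ to be the indicator of the connected component of $s_x$ in $G \setminus C_x$. The hypothesis on $C_x$ guarantees $U_x$ is well-defined and that no edge of $E(x)$ crosses $C_x$, so $U_x$ is constant on $\supp(\delta_x^e)$ for every $e \in E(x)$. I would then define, for each $e = \{u_e, v_e\} \in E$, the rescaled hyperedge problem $\tilde R^e$ whose flow is $\tilde\delta_x^e = f^x_e(\mathbbm{1}_{u_e} - \mathbbm{1}_{v_e})$ and whose potential is $\tilde U_x^e = U_x|_{N(e)}$. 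The conditions of \cref{def:generalized-graph-composition} are then immediate: Kirchhoff's law at internal vertices follows from $f^x$ being a unit $s_x$-$t_x$ flow; the potential is globally consistent because $\tilde U_x^e(v) = U_x(v)$ depends only on $v$; and $\tilde U_x^e$ is constant on $\supp(\tilde\delta_x^e)$ because no edge that carries flow crosses the cut.

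Next, using \cref{lem:rescale-state-reflection-problem} with input-dependent scalar rescalings (which is valid by a direct check of the cross-term identity, exactly as in the proof of \cref{thm:graph-composition}) together with the constant-shift freedom, I would turn each given $w^e$ into a feasible solution $\tilde w^e$ for $\tilde R^e$, with positive witness size $|f^x_e|^2 \mathsf{R}_x^+(w^e)$ and negative witness size $|U_x(u_e) - U_x(v_e)|^2 \mathsf{R}_x^-(w^e)$. Invoking \cref{thm:generalized-graph-composition} then produces a feasible solution to the composed hyperedge problem on $B$, namely $\delta_x = \mathbbm{1}_{s_x} - \mathbbm{1}_{t_x}$ with potential $\mathbbm{1}_{B_x}$. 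Summing the positive witness sizes over $e \in E$ yields $\sum_e |f^x_e|^2 \mathsf{R}_x^+(w^e)$, which by the defining property of the minimum-energy flow equals $R_{\mathrm{eff}}(G(x), e \mapsto \mathsf{R}_x^+(w^e); s_x \leftrightarrow t_x)$; summing the negative witness sizes yields $\sum_{e \in C_x} \mathsf{R}_x^-(w^e)$, since the potential difference $U_x(u_e) - U_x(v_e)$ is $\pm 1$ exactly when $e$ is in the cut and $0$ otherwise.

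The main obstacle is the construction of $\tilde w^e$ in the rescaling step. The tricky case is when the edge $e$ is a cut edge for input $x$ (so $\tilde U_x^e \neq 0$) while carrying no flow in the original problem (so $\delta_x^e = U_x^e = 0$): there is no obvious witness to rescale from. Resolving this requires decomposing $\tilde U_x^e$ into its symmetric part (a scalar multiple of $U_x^e$ when $e \in E(x)$, handled by straightforward scaling and constant-shift) and its antisymmetric part (nonzero only when $e \in C_x$, where $\delta_x^e$ vanishes), and checking that the witness cross-terms match the required state cross-terms $(\tilde\delta_x^e)^T \tilde U_y^e = f^x_e(U_y(u_e) - U_y(v_e))$ across all pairs $x, y$. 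This is precisely the place where the structural assumption on $R^e$ — that $U_x^e$ is the indicator of $\supp(\delta_x^e)$ — is used, because it forces the original cross-terms to align with the cut structure after rescaling.
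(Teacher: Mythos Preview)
Your approach is essentially the same as the paper's: choose the minimum-energy $s_x$-$t_x$ flow in $G(x)$ with resistances $\mathsf{R}_x^+(w^e)$ for the positive side, take the $\{0,1\}$ potential given by reachability from $\{s_x,t_x\}$ avoiding the cut $C_x$ for the negative side, rescale the individual hyperedge witnesses accordingly, and apply \cref{thm:generalized-graph-composition}. The paper's own proof is considerably terser than your proposal --- it simply asserts that the rescaling goes through ``similarly as in \cref{thm:graph-composition}'' without isolating the cross-term verification you flag in your last paragraph --- so your identification of the cut-edge subtlety (where $\delta_x^e=0$ and $U_x^e=0$ but $\tilde U_x^e\neq 0$) is a level of care the paper does not reach. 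One minor point: the paper phrases the potential slightly differently, allowing traversal of a hyperedge in $C_x$ along a direction that \emph{is} connected by the flow at that edge, rather than deleting $C_x$ entirely; for genuine hyperedges with more than two vertices this phrasing is what keeps the potential constant on $\supp(\delta_x^e)$ for every $e$.
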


    \begin{proof}
        We send the optimal flow from $s_x$ to $t_x$ over the graph $G(x)$, weighted by the witness sizes $\mathsf{R}_x^+(w^e)$, for all $e \in E$. The resulting complexity, similarly as in \cref{thm:graph-composition}, becomes the effective resistance between these two nodes in $G(x)$.

        For the negative witness size, we explicitly define the potential function to be $1$ on all the nodes that can be reached from $s_x$ and $t_x$ without taking a direction in a hyperedge in $C_x$ that is not connected by the flow at that edge. We take the potential function to be $0$ everywhere else. The witness size then simply becomes the sum over all the negative witness sizes of the hyperedges in $C_x$.
    \end{proof}

    We remark here that the resistance-cut theorem can be used to recover a path-cut theorem, similar to the one from \cite[Theorem~4.3]{cornelissen2025quantum}, by upper bounding the effective resistance by the total resistance along a single path from $s_x$ to $t_x$. We refer to this as the path-cut theorem for generalized graph composition.

    \subsection{Direct applications}
    \label{subsec:examples}

    We first give some straightforward applications of the path-cut theorem. To that end, we start with the problem that finds a first marked index in a bit string.

    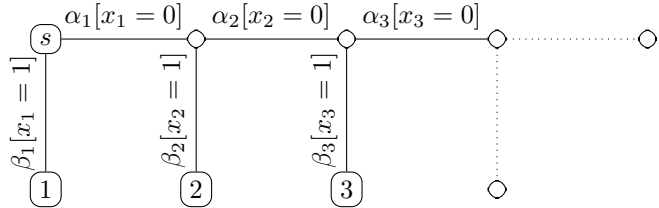
\begin{figure}[!ht]
        \centering
        \begin{tikzpicture}[vertex/.style = {draw, rounded corners = .4em}, yscale = -1, scale = 2]
            \node[vertex] (s) at (0,0) {$s$};
            \node[vertex] (1) at (0,1) {$1$};
            \node[vertex] (v2) at (1,0) {};
            \node[vertex] (2) at (1,1) {$2$};
            \node[vertex] (v3) at (2,0) {};
            \node[vertex] (3) at (2,1) {$3$};
            \node[vertex] (v4) at (3,0) {};
            \node[vertex] (4) at (3,1) {};
            \node[vertex] (v5) at (4,0) {};

            \draw (s) to node[above] {$\alpha_1[x_1 = 0]$} (v2) to node[above] {$\alpha_2[x_2 = 0]$} (v3) to node[above] {$\alpha_3[x_3 = 0]$} (v4);
            \draw[dotted] (v4) to (v5);
            \draw (s) to node[above, rotate = 90] {$\beta_1[x_1 = 1]$} (1);
            \draw (v2) to node[above, rotate = 90] {$\beta_2[x_2 = 1]$} (2);
            \draw (v3) to node[above, rotate = 90] {$\beta_3[x_3 = 1]$} (3);
            \draw[dotted] (v4) to (4);
        \end{tikzpicture}
        \caption{A generalized graph composition for the first marked index finding problem. The span programs $[x_j = b]$, for all $j \in [n]$ and $b \in \{0,1\}$ are trivial span programs that simply make one query, and for all $j \in [n]$, $\alpha_j,\beta_j > 0$.}
        \label{fig:first-marked-index-finding}
    \end{figure}

    \begin{theorem}
        Let $x \in \{0,1\}^n \setminus \{0^n\}$, and let $i(x)$ be the first index containing a $1$, i.e., $i(x) = \min\{i \in [n] : x_i = 1\}$. The graph composition construction from \cref{fig:first-marked-index-finding} computes $i(x)$ with complexities
        \[\mathsf{R}_x^+(w) = \sum_{j=1}^{i(x) - 1} \alpha_j + \beta_{i(x)}, \qquad \text{and} \qquad \mathsf{R}_x^-(w) = \sum_{j=1}^{i(x)-1} \frac{1}{\beta_j} + \frac{1}{\alpha_{i(x)}}.\]
        Both become $O(\sqrt{i(x)})$, if we choose $\alpha_j = 1/\sqrt{j}$ and $\beta_j = \sqrt{j}$, for all $j \in [n]$, and we have $\mathsf{R}_x^+(w) \in O(\log(i(x)))$ and $\mathsf{R}_x^-(w) \in O(i(x))$, if we choose $\alpha_j = 1/j$ and $\beta_j = 1$, for all $j \in [n]$.
    \end{theorem}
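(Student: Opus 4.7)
The plan is to apply the path-cut theorem (\cref{thm:path-cut-theorem}) directly to the hypergraph depicted in \cref{fig:first-marked-index-finding}, treating the boundary as $B = \{s, 1, 2, \ldots, n\}$. Each labelled edge is a trivial single-query span program $[x_j = b]$ (with both positive and negative witness size equal to $1$), which by \cref{thm:hyperedge-from-span-program} corresponds to a two-endpoint hyperedge problem implementing a unit-flow switch. To bake the weights into the witness sizes I would apply \cref{lem:rescale-state-reflection-problem} with scaling constants $(\sqrt{\alpha_j}, 1/\sqrt{\alpha_j})$ on each horizontal edge and $(\sqrt{\beta_j}, 1/\sqrt{\beta_j})$ on each vertical edge; this rescales a horizontal edge $v_j v_{j+1}$ to positive witness size $\alpha_j$ on inputs with $x_j = 0$ and negative witness size $1/\alpha_j$ on inputs with $x_j = 1$, and analogously for the vertical edges with $\beta_j$. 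A constant shift of the potential function then normalizes each $U_x^e$ to the form required by \cref{thm:path-cut-theorem}.

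Next I would identify the connected component of $s$ in $G(x)$. Writing $k = i(x)$, by definition $x_j = 0$ for $j < k$ and $x_k = 1$, so the active edges up through column $k$ form exactly the path
\[ s \to v_2 \to v_3 \to \cdots \to v_k \to k, \]
while the horizontal edge $v_k v_{k+1}$ is inactive (since $x_k \ne 0$). Hence the component of $s$ reaches no boundary vertex other than $k$, so the composition implements the function-evaluation flow between $s$ and $i(x)$. The natural cut separating this component from $\{1,\ldots,n\}\setminus\{k\}$ is
\[ C_x = \{v_j j : 1 \le j \le k-1\} \cup \{v_k v_{k+1}\}, \]
consisting of the vertical edges $v_j j$ for $j < k$ (each a negative instance of $[x_j = 1]$, contributing $1/\beta_j$) together with the blocking horizontal edge $v_k v_{k+1}$ (a negative instance of $[x_k = 0]$, contributing $1/\alpha_k$).

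Plugging this path and cut into \cref{thm:path-cut-theorem} yields
\[ \mathsf{R}_x^+(w) = \sum_{j=1}^{k-1} \alpha_j + \beta_k, \qquad \mathsf{R}_x^-(w) = \sum_{j=1}^{k-1} \frac{1}{\beta_j} + \frac{1}{\alpha_k}, \]
since the effective resistance along a simple path is the sum of the edge resistances. The two specializations are then routine sum estimates: with $\alpha_j = 1/\sqrt{j}$ and $\beta_j = \sqrt{j}$, the bound $\sum_{j=1}^{k-1} 1/\sqrt{j} \in O(\sqrt{k})$ gives both witness sizes in $O(\sqrt{i(x)})$; with $\alpha_j = 1/j$ and $\beta_j = 1$, the harmonic sum yields $\mathsf{R}_x^+(w) \in O(\log i(x))$, while $\mathsf{R}_x^-(w) = (k-1) + k \in O(i(x))$. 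There is no real obstacle here; the only delicate point is matching the figure's labels to the rescaling of \cref{lem:rescale-state-reflection-problem} so that each edge actually contributes the intended weight to the path-cut bound.
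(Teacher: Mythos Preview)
Your proposal is correct and follows essentially the same approach as the paper: apply \cref{thm:path-cut-theorem}, take the unique $s\to i(x)$ path for the positive witness, and cut the vertical edges to leaves $\ell < i(x)$ together with the horizontal edge carrying $\alpha_{i(x)}[x_{i(x)}=0]$ for the negative witness. Your write-up is simply more explicit than the paper's about how the weights $\alpha_j,\beta_j$ enter via \cref{lem:rescale-state-reflection-problem}, which is fine.
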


    \begin{proof}
        The complexities follow directly from \cref{thm:path-cut-theorem}. The path going from $s$ to $i(x)$ is unique, and for the positive witness size we simply sum over all the weights on the edges. For the negative witness size, we cut all the edges adjacent to the leaves labeled $\ell$ with $\ell < i(x)$, and we cut through the edge with span program $\alpha_{i(x)}[x_{i(x)} = 0]$.
    \end{proof}

    From the above construction, we observe that there are two weighting schemes with different qualitative behaviors. Indeed, by choosing $\alpha_j = 1/\beta_j = 1/\sqrt{j}$, we obtain that both the negative and positive complexities become $O(\sqrt{i(x)})$, which means that we can compute the first marked index in $O(\sqrt{i(x)})$ queries. However, if we wanted to have all the weight pushed onto one side, i.e., we wanted the positive witness sizes to be $1$ and the negative witness sizes to be $O(i(x))$, then the naive weighting scheme $\alpha_j = 1/j$ and $\beta_j = 1$ doesn't quite achieve that. Instead, we get the negative witness size to be $O(i(x))$, but the positive witness size picks up a logarithmic factor $O(\log(i(x)))$. We remark here that this behavior is fundamental, since a first marked index finding routine with constant positive witness size and negative witness size linear in $i(x)$ would imply a quantum algorithm with $O(\sqrt{T})$ queries for the $\mathrm{OR} \circ \mathrm{pSEARCH}$ problem, as defined in \cite[Section~4]{ambainis2023improved}, contradicting \cite[Theorem~4.2]{ambainis2023improved}.

    Next, we can learn an $n$-bit string using positive and negative witness complexity $n$.

    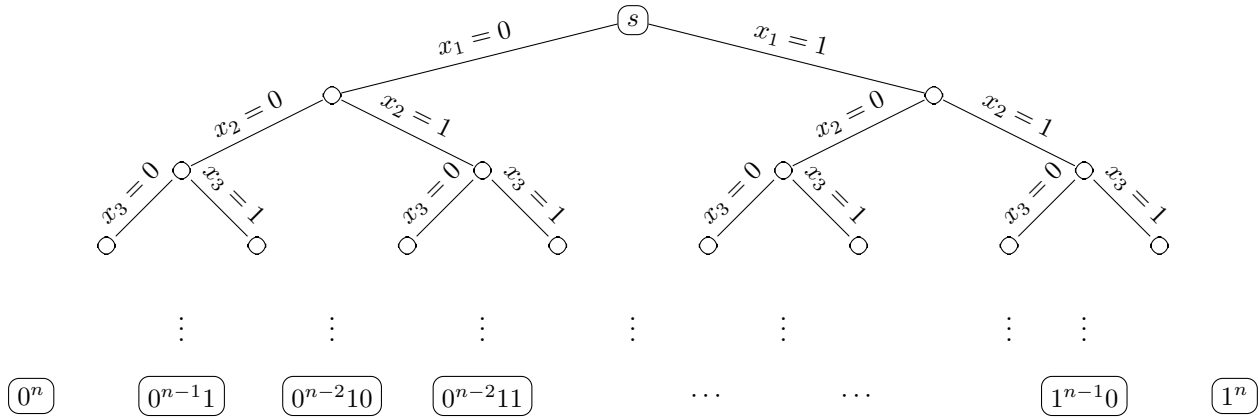
\begin{figure}[!ht]
        \centering
        \begin{tikzpicture}[vertex/.style = {draw, rounded corners = .4em}]
            \node[vertex] (s) at (0,0) {$s$};
            \node[vertex] (0) at (-4,-1) {};
            \node[vertex] (1) at (4,-1) {};
            \node[vertex] (00) at (-6,-2) {};
            \node[vertex] (01) at (-2,-2) {};
            \node[vertex] (10) at (2,-2) {};
            \node[vertex] (11) at (6,-2) {};
            \node[vertex] (000) at (-7,-3) {};
            \node[vertex] (001) at (-5,-3) {};
            \node[vertex] (010) at (-3,-3) {};
            \node[vertex] (011) at (-1,-3) {};
            \node[vertex] (100) at (1,-3) {};
            \node[vertex] (101) at (3,-3) {};
            \node[vertex] (110) at (5,-3) {};
            \node[vertex] (111) at (7,-3) {};

            \node[vertex] (0n) at (-8,-5) {$0^n$};
            \node[vertex] (0n1) at (-6,-5) {$0^{n-1}1$};
            \node[vertex] (0n10) at (-4,-5) {$0^{n-2}10$};
            \node[vertex] (0n11) at (-2,-5) {$0^{n-2}11$};
            \node at (1,-5) {$\cdots$};
            \node at (3,-5) {$\cdots$};
            \node[vertex] (1n0) at (6,-5) {$1^{n-1}0$};
            \node[vertex] (1n) at (8,-5) {$1^n$};

            \node at (-6,-4) {$\vdots$};
            \node at (-4,-4) {$\vdots$};
            \node at (-2,-4) {$\vdots$};
            \node at (0,-4) {$\vdots$};
            \node at (2,-4) {$\vdots$};
            \node at (5,-4) {$\vdots$};
            \node at (6,-4) {$\vdots$};

            \draw (s) to node[above, rotate = {atan(1/4)}] {$x_1 = 0$} (0);
            \draw (s) to node[above, rotate = {-atan(1/4)}] {$x_1 = 1$} (1);

            \draw (0) to node[above, rotate = {atan(1/2)}] {$x_2 = 0$} (00);
            \draw (0) to node[above, rotate = {-atan(1/2)}] {$x_2 = 1$} (01);
            \draw (1) to node[above, rotate = {atan(1/2)}] {$x_2 = 0$} (10);
            \draw (1) to node[above, rotate = {-atan(1/2)}] {$x_2 = 1$} (11);

            \draw (00) to node[above, rotate = 45] {$x_3 = 0$} (000);
            \draw (00) to node[above, rotate = -45] {$x_3 = 1$} (001);
            \draw (01) to node[above, rotate = 45] {$x_3 = 0$} (010);
            \draw (01) to node[above, rotate = -45] {$x_3 = 1$} (011);
            \draw (10) to node[above, rotate = 45] {$x_3 = 0$} (100);
            \draw (10) to node[above, rotate = -45] {$x_3 = 1$} (101);
            \draw (11) to node[above, rotate = 45] {$x_3 = 0$} (110);
            \draw (11) to node[above, rotate = -45] {$x_3 = 1$} (111);
        \end{tikzpicture}
        \caption{A generalized graph composition construction for learning an $n$-bit string.}
        \label{fig:dense-learning}
    \end{figure}

    \begin{theorem}
        The graph composition construction from \cref{fig:dense-learning} admits a feasible solution $w$ that learns a bit string $x \in \{0,1\}^n$ with witness complexities
        \[\mathsf{R}_x^+(w) = \mathsf{R}_x^-(w) = n.\]
    \end{theorem}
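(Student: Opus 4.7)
The plan is to apply the resistance-cut theorem (\cref{thm:path-cut-theorem}) to the construction in \cref{fig:dense-learning}. First I identify the ingredients: the hypergraph is a complete binary tree of depth $n$ with root $s$ at the top and $2^n$ leaves labelled by bit-strings; the boundary vertex set is $B = \{s\} \cup \{0,1\}^n$; every edge carries a trivial one-query span program $[x_j = b]$, which by \cref{thm:hyperedge-from-span-program} gives a hyperedge problem with both positive and negative witness size equal to $1$.

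Next I would analyze $G(x)$. At every internal node at depth $j-1$, the edge labelled $[x_j = b]$ is positive on input $x$ iff $b = x_j$; the complementary edge is negative. Tracing down from $s$, the positive edges reachable from $s$ form the unique root-to-leaf path $\pi_x$ of length $n$ whose endpoint is the leaf labelled $x$. Thus the connected component of $G(x)$ containing $s$ is exactly this path, and it connects exactly two boundary vertices $s_x = s$ and $t_x = x$, matching the hypothesis of \cref{thm:path-cut-theorem}.

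For the positive witness I then apply the path-cut specialization: the effective resistance along a simple path with unit resistances equals the path length, so
\[
\mathsf{R}_x^+(w) \;=\; R_{\mathrm{eff}}\bigl(G(x),\, e \mapsto \mathsf{R}_x^+(w^e),\, s \leftrightarrow x\bigr) \;=\; \sum_{e \in \pi_x} 1 \;=\; n.
\]
For the negative witness I construct a cut $C_x$ separating $\pi_x$ from all other leaves. At each of the $n$ internal nodes of $\pi_x$ (at depths $0, 1, \ldots, n-1$), take the unique sibling edge that leaves $\pi_x$; removing these $n$ edges detaches the subtree containing $x$ from every other leaf, since any other leaf $y \neq x$ disagrees with $x$ at some coordinate $j$, and the edge $[x_j = 1 - x_j]$ on $\pi_x$'s sibling at depth $j-1$ lies on every path from $s$ to $y$. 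Hence
\[
\mathsf{R}_x^-(w) \;=\; \sum_{e \in C_x} \mathsf{R}_x^-(w^e) \;=\; n.
\]

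No step is expected to be hard: the entire argument is a direct instantiation of \cref{thm:path-cut-theorem}. The only point that warrants care is verifying that the chosen $C_x$ genuinely separates the $s$-component of $G(x)$ from every other boundary vertex, which follows from the observation above that disagreement at coordinate $j$ forces a path through the sibling edge we have cut.
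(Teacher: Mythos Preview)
Your proposal is correct and follows essentially the same approach as the paper's proof, which also invokes \cref{thm:path-cut-theorem}, uses the length-$n$ root-to-leaf path for the positive witness, and takes the $n$ sibling edges along that path as the cut. One small imprecision: when you argue that a leaf $y \neq x$ is separated by $C_x$, the coordinate $j$ must be the \emph{first} index where $x$ and $y$ disagree (otherwise the path from $s$ to $y$ need not pass through the node $x_1\cdots x_{j-1}$ on $\pi_x$); with that choice the argument goes through.
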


    \begin{proof}
        The proof follows from \cref{thm:path-cut-theorem}. Every path from root to leaf has $n$ edges, so the positive witness complexity is $n$. Similarly, for every node visited along the path, there is exactly one edge that is not available. We group all those in a cut, and we obtain a cut with exactly $n$ edges as well.
    \end{proof}

    We remark here that it is possible to slightly modify the above construction, to improve the complexities in the case where the Hamming weight of the bit string is small, akin to \cite[Lemma~4]{carette2020extended}. We leave porting these ideas to the graph composition framework for future work.

    Finally, we note that we can also recover quantum minimum-finding in the generalized graph composition framework, albeit with a polylogarithmic slowdown.

    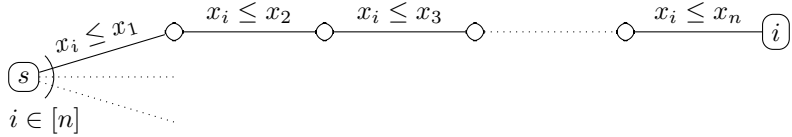
\begin{figure}[!ht]
        \centering
        \begin{tikzpicture}[vertex/.style = {draw, rounded corners = .4em}, scale = 2]
            \node[vertex] (s) at (0,0) {$s$};
            \node[vertex] (1) at (1,.3) {};
            \node[vertex] (2) at (2,.3) {};
            \node[vertex] (3) at (3,.3) {};
            \node[vertex] (4) at (4,.3) {};
            \node[vertex] (i) at (5,.3) {$i$};

            \draw (s) to node[above, rotate = {atan(.3)}] {$x_i \leq x_1$} (1);
            \draw (1) to node[above] {$x_i \leq x_2$} (2);
            \draw (2) to node[above] {$x_i \leq x_3$} (3);
            \draw[dotted] (3) to (4);
            \draw (4) to node[above] {$x_i \leq x_n$} (i);

            \draw ({.2*cos(45)},{.2*sin(45)}) arc(45:-45:.2) node[below] {$i \in [n]$};
            \draw[dotted] (s) to (1,0);
            \draw[dotted] (s) to (1,-.3);
        \end{tikzpicture}
        \caption{A generalized graph composition construction for quantum minimum finding.}
        \label{fig:minimum-finding}
    \end{figure}

    \begin{theorem}
        The generalized graph composition construction from \cref{fig:minimum-finding} solves the minimum finding problem with complexities
        \[\mathsf{R}_x^+ \in O(n), \qquad \text{and} \qquad \mathsf{R}_x^- \in O(\log(n)).\]
    \end{theorem}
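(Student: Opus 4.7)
The plan is to invoke the resistance-cut theorem, \cref{thm:path-cut-theorem}, on the hypergraph of \cref{fig:minimum-finding}, whose boundary is $\{s\}\cup\{i:i\in[n]\}$. Fix an input $x$, assume for simplicity that the minimum is unique (ties can be resolved by a fixed tie-break), and let $i^*$ denote the minimum's index. The chain of length $n$ from $s$ to leaf $i^*$ is assembled from single-query comparison span programs $[x_{i^*}\leq x_j]$ for $j\in[n]$, each of which is positive for $x$, so this entire chain lies in $G(x)$. Conversely, any chain terminating at a leaf $i\neq i^*$ contains the edge $[x_i\leq x_{i^*}]$, which is a negative instance, so every such chain is severed in $G(x)$. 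Hence the connected component of $G(x)$ meeting the boundary is exactly $\{s,i^*\}$, and \cref{thm:path-cut-theorem} applies with $s_x=s$ and $t_x=i^*$.

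For the positive witness I route the unit flow along the $n$-edge chain from $s$ to $i^*$ in $G(x)$; under the chosen edge-weights the effective-resistance bound of the resistance-cut theorem reduces to the total weighted positive witness size along this path, and with unit span-program witnesses and unit edge-weights this evaluates to at most $n$, giving $\mathsf{R}_x^+\in O(n)$.

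For the negative witness I construct a cut $C_x\subseteq E$ separating $\{s,i^*\}$ from every other boundary leaf by picking, for each $i\neq i^*$, a single open edge on $i$'s chain; the edge $[x_i\leq x_{i^*}]$ is guaranteed negative and so always serves as such a witness. With naive uniform weighting each of the $n-1$ cut contributions equals $1$, giving $\mathsf{R}_x^-\in\Theta(n)$. To drive this down to $O(\log n)$, I weight the chain positions non-uniformly in the spirit of the $\alpha_j=1/j,\ \beta_j=1$ scheme that yields $\mathsf{R}_x^+\in O(\log i(x))$ for the first-marked-index construction of \cref{fig:first-marked-index-finding}. The idea is to pick weights so that the heavy edges absorb the positive-path cost while the cheap edges are systematically available as cut-witnesses in every non-minimum chain, collapsing the $n-1$ cut-cost terms into a telescoping harmonic sum.

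The main obstacle is that the cheapest cut position in a non-minimum chain is itself input-dependent, so the weighting must be balanced across all possible minima simultaneously: the scheme must satisfy $\sum_j w_{i^*,j}\in O(n)$ for every candidate $i^*\in[n]$ and $\sum_{i\neq i^*}\min_{j:x_j<x_i}1/w_{i,j}\in O(\log n)$ for every $x$. I expect the proof to resolve this by assigning harmonic weights along each chain and using an averaging estimate analogous to the first-marked-index calculation, thereby obtaining the claimed logarithmic bound on the negative witness while preserving the $O(n)$ positive bound.
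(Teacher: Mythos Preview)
Your positive-witness analysis is fine. The gap is entirely in the negative witness. By invoking the resistance-cut theorem (\cref{thm:path-cut-theorem}) you commit to a $\{0,1\}$-valued potential and hence to paying for one cut edge per non-minimum chain; with $n-1$ chains this gives $\Theta(n)$ under unit weights, as you note. Your plan to repair this by harmonic reweighting is never actually carried out: you correctly identify the obstacle---the negative edges in chain $\ell$ sit at the positions of the elements smaller than $x_\ell$, an input-dependent subset bearing no relation to any fixed linear order along the chain---and then simply ``expect'' an averaging argument to close it. The analogy with \cref{fig:first-marked-index-finding} does not transfer, because there the blocked edges always form an initial segment $\{1,\dots,i(x)-1\}$, and that prefix structure is exactly what makes the harmonic weights $\alpha_j=1/j$ work.

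The paper avoids reweighting altogether. It applies \cref{thm:generalized-graph-composition} directly with unit weights and a \emph{fractional} potential function: along the chain to the $j$th-smallest element there are exactly $j-1$ negative comparison edges (one for each smaller element), and the unit potential drop from $s$ to that leaf is spread evenly across them. Each such edge then contributes $(1/(j-1))^2$ to the negative witness, so the whole chain contributes $(j-1)\cdot(1/(j-1))^2=1/(j-1)$, and summing over $j=2,\dots,n$ gives $\sum_{j=2}^{n}1/(j-1)=H_{n-1}\in O(\log n)$. The idea you are missing is precisely this use of non-binary potentials---available in \cref{thm:generalized-graph-composition} but discarded by the cruder \cref{thm:path-cut-theorem}.
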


    \begin{proof}
        We use \cref{thm:generalized-graph-composition} directly. We observe that there is a single path consisting of $n$ edges connecting $s$ with the index of the minimum element $i$, and so by sending unit flow across those edges, the positive witness size is $n$. For the negative witness size, we observe that we have to cut all the other connections between $s$ and labels $\ell \in [n]$ for which $\ell \neq i$. To that end, observe that for the $j$th smallest element in the list, there are $j-1$ edges that can be cut in the connection between $s$ and its corresponding leaf. Thus, we observe that the negative witness size is
        \[\mathsf{R}_x^-(w) = \sum_{j=2}^n \frac{1}{j-1} \in O(\log(n)).\qedhere\]
    \end{proof}

    It is a little unsatisfying that we cannot recover D\"urr and H{\o}yer's minimum finding routine exactly \cite{durr1996quantum}, since they show how this problem can be solved in $O(\sqrt{n})$ comparison queries. It would be very instructive to figure out if the log-factor can be removed here with a more intricate construction, but we leave those considerations for future work.

    \section{Graph composition for non-boolean decision trees}
    \label{sec:decision-trees}

    In this section, we discuss how decision-trees can be turned into quantum algorithms. To that end, suppose that we have a decision tree that computes a function $f : \D \to \Sigma_O$, where $\D \subseteq \Sigma_I^n$ for some input and output alphabets $\Sigma_I$ and $\Sigma_O$.

    The first result in this direction was obtained by Lin and Lin~\cite{lin2016upper}, who considered the setting where $\Sigma_I = \Sigma_O = \{0,1\}$. They showed that a decision tree with depth $T$ can be turned into a quantum query algorithm, where the number of queries depends on the structure of the tree. Specifically, color the edges of the decision tree in red and black, such that every node has at most one outgoing black edge. Let $G$ be the maximum number of red edges along a path from the root to a leaf. Then, the total number of queries is $O(\sqrt{GT})$. This result was later generalized by \cite{beigi2020quantum} to the setting where $\Sigma_I$ and $\Sigma_O$ are arbitrary.

    Next, this question was revisited in \cite{cornelissen2025improved}. They showed that in the case where $\Sigma_I = \{0,1\}$, one can assign a weighting scheme on the edges, that generalizes the coloring scheme from the earlier results. After that, it was shown in \cite{cornelissen2025quantum} how to recover this result, in the case where $\Sigma_I = \Sigma_O = \{0,1\}$ in from the graph composition framework. In both of these works, the question is left open whether such a weighting scheme can be imposed on decision trees with $\Sigma_I \neq \{0,1\}$.

    Here, we employ the generalized graph composition framework to convert decision trees with non-binary inputs and outputs into quantum algorithms. We observe that this naturally gives rise to a weighting scheme, where the optimal assignment can be obtained by recursively solving a simple semi-definite program at every vertex. We also show that this can be used to recover the results from \cite{beigi2020quantum,cornelissen2025improved}.

    We start by defining a weighting scheme on decision trees.

    \begin{definition}
        \label{def:weighting-scheme}
        Let $T = (V,E)$ be a decision tree with root node $r \in V$, and $w : V \to \R_{\geq0}$. We say that $w$ is a weighting scheme for $T$, if:
        \begin{enumerate}[nosep]
            \item $w_v = 0$ for all leafs $v$.
            \item For any internal node $v$ with children $C_v \subseteq V$, we have matrices $0 \preceq X,Y \in \C^{C_v \times C_v}$ such that $X[c_1,c_2] - Y[c_1,c_2] = 1$, for all $c_1,c_2 \in C_V$ such that $c_1 \neq c_2$, and $w_v - w_c \geq X[c,c] + Y[c,c]$ for all $c \in C_v$.
        \end{enumerate}
        We say that $w_r$ is the weight of the tree, and we write
        \[\mathsf{WDT}(T) = \min_{w \text{ weighting scheme for } T} w_r.\]
    \end{definition}

    We start by showing that these weighting schemes provide upper bounds on the query complexity of the function being computed by the decision tree.

    \begin{theorem}
        Let $T = (V,E)$ be a decision tree with root node $r \in V$ and weighting scheme $w$, computing a function $f : \Sigma_I^n \supseteq \D \to \Sigma_O$. Then, we can compute $f$ with bounded error using $O(w_r)$ queries. Consequently, $\mathsf{Q}(f) \in O(\mathsf{WDT}(f))$.
    \end{theorem}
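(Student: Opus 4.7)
The plan is to turn the decision tree $T$ into an instance of generalized graph composition whose underlying hypergraph has one hyperedge for each internal node. Specifically, for each internal node $v$ with query position $i_v$ and children $C_v$, I introduce a hyperedge incident to $\{v\} \cup C_v$. For each input $x$, the query outcome $x_{i_v}$ picks a unique child $c_v(x) \in C_v$, and the hyperedge at $v$ implements the database-update-style hyperedge problem with net-flow $\delta_x^v = \mathbbm{1}_v - \mathbbm{1}_{c_v(x)}$ and potential function $U_x^v = \mathbbm{1}_v + \mathbbm{1}_{c_v(x)}$ (\cref{thm:hyperedge-from-database-function}), using the query oracle $O_x$ to position $i_v$. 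The root $r$ and the leaves of $T$ are taken as boundary vertices; at each leaf labeled $y \in \Sigma_O$ we only keep a single boundary vertex per output value so that the induced boundary state-reflection problem is exactly function evaluation of $f$.

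The core step is the \emph{local} construction of witnesses at each internal node $v$ from the matrices $X_v, Y_v$ provided by the weighting scheme. Since $X_v, Y_v \succeq 0$ on $\C^{C_v}$, factor them as $X_v = A_v^\dagger A_v$ and $Y_v = B_v^\dagger B_v$, and for each child $c \in C_v$ set the positive and negative witnesses (for any input $x$ with $c_v(x) = c$) to be the vectors associated to $c$ in $A_v, B_v$, placed respectively in the $+1$ and $-1$ eigenspaces of the local oracle $O_x$ (which is available by \cref{thm:reformulation-state-conversion}, since restricting to the $i_v$th bit makes $O_x$ a reflection whose eigenspaces are indexed by the value of $x_{i_v}$). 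The adversary-bound cross-talk for the local hyperedge problem then reads, for $x,y$ with $c_v(x) \neq c_v(y)$, as $\braket{w_x^+}{w_y^-} = X_v[c_v(x),c_v(y)] - Y_v[c_v(x),c_v(y)]$, which by the weighting-scheme condition equals exactly $1 = \delta_x^{v\,T} U_y^v$; for $c_v(x) = c_v(y)$ both sides vanish. Hence each local $R^v$ has a feasible solution of positive size $X_v[c_v(x),c_v(x)]$ and negative size $Y_v[c_v(x),c_v(x)]$.

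Next I invoke \cref{thm:generalized-graph-composition} to compose these local solutions. For a given input $x$, the chosen unit flows form a unique path in $T$ from the root $r$ to the leaf labeled $f(x)$, so the total positive and negative witness sizes telescope along this path:
\begin{align*}
\mathsf{R}_x^+(w) &= \sum_{v \text{ on path}} X_v[c_v(x),c_v(x)] \;\leq\; \sum_{v \text{ on path}} (w_v - w_{c_v(x)}) \;=\; w_r, \\
\mathsf{R}_x^-(w) &= \sum_{v \text{ on path}} Y_v[c_v(x),c_v(x)] \;\leq\; \sum_{v \text{ on path}} (w_v - w_{c_v(x)}) \;=\; w_r,
\end{align*}
using that $w$ vanishes at the terminal leaf. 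Applying the rescaling bound from \cref{lem:rescale-state-reflection-problem}, the adversary value of the resulting state-reflection problem implementing $f$ is at most $\sqrt{\max_x \mathsf{R}_x^+ \cdot \max_x \mathsf{R}_x^-} \leq w_r$, and converting this to a quantum algorithm via \cref{thm:transducers} yields $\mathsf{Q}(f) \in O(w_r)$, whence $\mathsf{Q}(f) \in O(\mathsf{WDT}(f))$.

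The main obstacle I expect is bookkeeping at each internal node: writing down the local witnesses in the correct $\pm 1$-eigenspaces of $O_x$ so that the state-reflection constraint reduces to the identity $X_v[c_1,c_2] - Y_v[c_1,c_2] = 1$ exactly as given in \cref{def:weighting-scheme}. In particular, one must handle arbitrary alphabets $\Sigma_I$ at each node by working in the subspace of $\mathcal{M}$ where $O_x$ acts as a reflection distinguishing the $|C_v|$ possible query outcomes, and verify that nothing is lost in the step where several inputs are collapsed into a single ``child-indexed'' witness. Once the local verification is done, the remaining composition argument is a direct application of the framework developed earlier in the paper.
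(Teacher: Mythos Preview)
Your overall strategy---one hyperedge per internal node carrying the database-update flow $\mathbbm{1}_v - \mathbbm{1}_{c_v(x)}$, then composing via \cref{thm:generalized-graph-composition} and telescoping along the root-to-leaf path---is exactly the paper's approach. The gap is in your local witness construction. You assert that the positive witness can be the $c$th column of $A_v$ alone (size $X_v[c,c]$) and the negative witness the $c$th column of $B_v$ alone (size $Y_v[c,c]$), with cross-talk $X_v[c_1,c_2]-Y_v[c_1,c_2]$. This cannot hold: if $\ket{w_x^+}$ involves only $A_v$ and $\ket{w_y^-}$ only $B_v$, their inner product is $\braket{a_{c_1}}{b_{c_2}}$, not a difference of two Gram entries. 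More decisively, the weighting scheme permits $Y_v=0$ (take $|C_v|=2$, $X_v=\begin{pmatrix}100&1\\1&1/100\end{pmatrix}$); your construction would then force $\ket{w_x^-}=0$, making the required cross-talk $\braket{w_x^+}{w_y^-}=1$ for $c_1\neq c_2$ impossible.

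The paper's fix is to let \emph{both} witnesses carry \emph{both} factorizations: it sets $\ket{w_x^\pm}=(\ket{u^\pm_{c}}\oplus\pm\ket{u^\mp_{c}})\otimes(\ket{\perp}\pm\ket{c})$, so that each witness has squared norm $2(X_v[c,c]+Y_v[c,c])$ and the cross-talk produces the difference $X_v[c_1,c_2]-Y_v[c_1,c_2]$ times the factor $(1-\braket{c_1}{c_2})=\mathbbm{1}_{c_1\neq c_2}$. The telescoping then uses the weighting-scheme inequality $w_v-w_c\geq X_v[c,c]+Y_v[c,c]$ directly, giving $\mathsf{R}_x^\pm\leq 2w_r$. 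Once you make this correction, your argument coincides with the paper's.
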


    \begin{proof}
        We use the generalized graph composition framework. To that end, we let $G = (V,E')$ be a hypergraph, such that for every internal node $v \in V$ of the decision tree, with children $C_v$, we let $\{v\} \cup C_v$ be a hyperedge in $G$. Now, we associate the function-evaluation hyperedge problem to it, that on an input $x$ whose path from root to leaf reaches $v$, sends flow $1$ from the parent node to its child that matches the query outcome, and assigns potential function $1$ to these two nodes, and $0$ everywhere else.

        We now present a feasible solution to the hyperedge problem associated to the hyperedge with parent node $v$ and children $C_v$. As $X$ and $Y$ are PSD, we can factorize $X = U_+^{\dagger}U$ and $Y = U_-^{\dagger}U$, where for every $c \in C_v$ we write $\ket{u_c^+}$ and $\ket{u_c^-}$ for the $c$th column in $U_+$ and $U_-$, respectively. Now, to every input $x \in \D$ that passes through node $v$ on the path from root to leaf in $T$, we associate the vectors $\ket{w_x^{\pm}} = (\ket{u^{\pm}_{x_v}} \oplus \pm\ket{u^{\mp}_{x_v}}) \otimes (\ket{\perp} \pm \ket{x_v})$, and we set $\ket{w_x^{\pm}} = 0$ otherwise. For all $x,y \in \D$ whose paths pass through $v$, we let $x_v$ and $y_v$ be the value queried at node $v$, and observe that
        \[\braket{w_x^+}{w_y^-} = \left(\braket{u_{x_v}^+}{u^-_{y_v}} - \braket{u_{x_v}^-}{u^+_{y_v}}\right) \cdot (1 - \braket{x_v}{y_v}) = (X[x_v,y_v] - Y[x_v,y_v]) \cdot \mathbbm{1}_{x_v \neq y_v} = \mathbbm{1}_{x_v \neq y_v} = \delta_x^TU_y,\]
        and in all other cases, the inner product is $0$.`

        Now, we combine all the feasible solutions using \cref{thm:generalized-graph-composition}, with all weights equal to $1$. The combined flow becomes simply from the root node of the graph to the leaf computed by the decision tree, and the potential function is $1$ on these two vertices and $0$ everywhere else. This is the function-evaluation state-reflection problem, so it remains to compute the witness sizes. To that end, for any input $x \in \D$, let $P_x$ be the hyperedges that are traversed upon evaluating $x$, and $p(e)$ the corresponding parent nodes. We find that
        \begin{align*}
            \mathsf{R}_x^{\pm}(w) &= \sum_{e \in P_x} (\mathsf{R}_e)_x^{\pm} = \sum_{e \in P_x} \norm{\ket{(w_e)_x^{\pm}}}^2 = 2 \sum_{e \in P_x} \left(\norm{\ket{(u^e)^+_{x_{p(e)}}}}^2 + \norm{\ket{(u^e)^-_{x_{p(e)}}}}^2\right) \\
            &= 2\sum_{e \in P_x} (X^e[x_{p(e)},x_{p(e)}] + Y^e[x_{p(e)},x_{p(e)}]) \leq 2\sum_{e \in P_x} (w_{p(e)} - w_{x_{p(e)}}) = 2w_r = 2\mathsf{WDT}(T),
        \end{align*}
        and so the total complexity is upper bounded by $2\mathsf{WDT}(T)$.
    \end{proof}

    Next, we explain how the optimal weighting scheme can be recursively computed.

    \begin{theorem}
        \label{thm:optimal-weight-assignment}
        Let $T = (V,E)$ be a decision tree with root node $r \in V$. Its optimal weighting scheme is computed by setting all the weight of all the leaf nodes to $0$, and then for every internal node $v \in V$ with children $C_v \subseteq V$, recursively computing the semi-definite program
        \begin{align*}
            w_v := \min\quad & \max_{c \in C_v} (w_c + X[c,c] + Y[c,c]) \\
            \text{s.t.}\quad & X[c_1,c_2] - Y[c_1,c_2] = 1, & \text{if } c_1,c_2 \in C_v, c_1 \neq c_2, \\
            & 0 \preceq X,Y \in \R^{C_v \times C_v}.
        \end{align*}
    \end{theorem}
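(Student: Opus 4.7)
The plan is to prove both directions simultaneously by bottom-up induction on the tree: first, that the recursively defined weighting $w^{\mathrm{rec}}$ is itself a feasible weighting scheme for $T$, and second, that every feasible weighting $w^*$ satisfies $w_v^* \geq w_v^{\mathrm{rec}}$ at every vertex $v$. Together these force $\mathsf{WDT}(T) = w_r^{\mathrm{rec}}$. The structural fact that makes the induction possible is that the constraints of \cref{def:weighting-scheme} decouple across nodes: the matrix variables $X^v, Y^v$ at an internal node $v$ appear only in constraints local to $v$, and the only coupling between different nodes is through the scalar inequalities $w_v - w_c \geq X^v[c,c] + Y^v[c,c]$.

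For feasibility of $w^{\mathrm{rec}}$, I would observe that after computing $w_c^{\mathrm{rec}}$ for each child $c \in C_v$, the SDP of the theorem is always feasible (take e.g.\ $X = \mathbf{1}\mathbf{1}^T$, $Y = 0$) and therefore attained by some pair $X^v, Y^v \succeq 0$ satisfying the off-diagonal condition, with value $w_v^{\mathrm{rec}} = \max_{c \in C_v}(w_c^{\mathrm{rec}} + X^v[c,c] + Y^v[c,c])$. By construction, $w_v^{\mathrm{rec}} - w_c^{\mathrm{rec}} \geq X^v[c,c] + Y^v[c,c]$ for every $c \in C_v$, so the local constraint at $v$ holds. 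Leaves trivially satisfy $w_\ell^{\mathrm{rec}} = 0$, and hence $w^{\mathrm{rec}}$ is a bona fide weighting scheme.

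For the lower bound, let $w^*$ be any feasible weighting; I claim $w_v^* \geq w_v^{\mathrm{rec}}$ for every $v$, by induction on the height of $v$. Leaves give equality. For an internal node $v$, feasibility supplies matrices $X^*, Y^* \succeq 0$ with the prescribed off-diagonal entries and $w_v^* \geq w_c^* + X^*[c,c] + Y^*[c,c]$ for every $c \in C_v$. The induction hypothesis $w_c^* \geq w_c^{\mathrm{rec}}$ then yields $w_v^* \geq w_c^{\mathrm{rec}} + X^*[c,c] + Y^*[c,c]$ for all $c$, so the triple $(X^*, Y^*, w_v^*)$ is feasible for the SDP at $v$ with children's weights $w_c^{\mathrm{rec}}$; hence $w_v^* \geq w_v^{\mathrm{rec}}$. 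Specialising at $v = r$ completes the proof.

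The main observation powering both steps is monotonicity: shrinking a child's weight only slackens its parent's constraint. Because of this, the global minimization decomposes cleanly into node-local SDPs and nothing is lost by evaluating them bottom-up. There is no serious obstacle; one need only check that each local SDP is feasible (immediate) and that the $X^*, Y^*$ chosen by $w^*$ remain admissible after replacing $w_c^*$ by the smaller $w_c^{\mathrm{rec}}$, which follows directly from the direction of the inequality $w_v - w_c \geq X[c,c] + Y[c,c]$.
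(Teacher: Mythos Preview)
Your proposal is correct and follows essentially the same approach as the paper: feasibility of the recursively defined scheme is immediate from the SDP constraints matching those of \cref{def:weighting-scheme}, and optimality follows from the monotonicity of each local SDP in the children's weights, which you make explicit via a bottom-up induction. The paper's proof is terser but relies on exactly these two observations.
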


    \begin{proof}
        We immediately observe that the weighting scheme generated in this way is feasible for $T$, since the constraints of the SDP are the same as those in \cref{def:weighting-scheme}. Similarly, every valid weighting scheme for $T$ indeed generates feasible solutions to all the SDPs. Finally, we observe that the objective function is monotone in the values of $w_c$, and so the value of $w_r$ is optimized if and only if all of the intermediate SDPs are optimized.
    \end{proof}

    For future reference we also provide the dual of the SDP from \cref{thm:optimal-weight-assignment}. For all internal nodes $v \in V$ with children $C_v \subseteq V$, we have
    \begin{align*}
        w_v := \max \quad &\norm{\diag(\{w_c : c \in C_v\}) + \Gamma}, \\
        \text{s.t.} \quad &\norm{\Gamma} \leq 1, \\
        &\Gamma[c,c] = 0, & \forall c \in C_v, \\
        &\Gamma \in \R^{C_v \times C_v}.
    \end{align*}

    We can now obtain the result from \cite{beigi2020quantum} by exhibiting an explicit weighting scheme. This is the objective of the following theorem.

    \begin{theorem}
        \label{thm:BT20}
        For every decision tree $T = (V,E)$, we have $\mathsf{WDT}(T) \leq 3\sqrt{GT}$.
    \end{theorem}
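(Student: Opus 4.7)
The plan is to exhibit an explicit weighting scheme witnessing the bound, using a single global parameter $a \geq 1$ to be optimized and applying the same asymmetric rank-one factorization at every internal node. Here and below I overload notation and let $T$ also denote the depth of the decision tree.

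Fix $a \geq 1$. At an internal node $v$ with children $C_v$ and unique black child $b \in C_v$ (if one exists; otherwise treat all children as red), take the rank-one PSD matrices $X = u u^T$ and $Y = \nu \nu^T$ indexed by $C_v$, with $u_b = 1/a$, $\nu_b = 0$, and $u_c = a$, $\nu_c = \sqrt{a^2 - 1}$ for every red child $c$. Both matrices are PSD by construction, and a direct check verifies the off-diagonal identity $X[c_1, c_2] - Y[c_1, c_2] = 1$: a black--red pair gives $(1/a) \cdot a - 0 = 1$ and a red--red pair gives $a^2 - (a^2 - 1) = 1$. The diagonals give costs $X[b,b] + Y[b,b] = 1/a^2$ at the black child and $X[c,c] + Y[c,c] = 2 a^2 - 1$ at each red child.

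Now define $w$ recursively from the leaves upward by $w_v := \max_{c \in C_v} (w_c + d_c)$, where $d_c$ is the diagonal cost just computed (and $w_v := 0$ at leaves). Unrolling this induction gives $w_r = \max_P \sum_{v \in P} d_{c_v(P)}$, where $P$ ranges over root-to-leaf paths and $c_v(P)$ is the child of $v$ on $P$. Every such path has at most $T$ edges, of which at most $G$ are red, so
\[
    w_r \;\leq\; (T - G)/a^2 + G(2 a^2 - 1) \;\leq\; T / a^2 + 2 G a^2.
\]

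It remains to optimize $a$. When $T \geq 2 G$, the choice $a^2 = \sqrt{T / (2 G)}$ satisfies $a \geq 1$ and yields $w_r \leq 2 \sqrt{2 G T} \leq 3 \sqrt{G T}$ (since $2 \sqrt{2} < 3$). When $T < 2 G$, taking $a = 1$ gives unit diagonal costs and $w_r \leq T \leq \sqrt{2 G T} \leq 3 \sqrt{G T}$. I do not anticipate a serious obstacle; the one point that needs care is the asymmetric rank-one factorization, engineered so that $X$ and $Y$ stay PSD while the black diagonal is forced to be tiny, at the cost of only a factor-$2$ inflation on the red diagonals.
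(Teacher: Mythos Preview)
Your proof is correct and takes essentially the same approach as the paper: the rank-one factorizations $X = uu^T$, $Y = \nu\nu^T$ with the asymmetric black/red entries are literally identical to the paper's choice (with your $a^2$ equal to the paper's $\alpha$). The only difference is cosmetic---the paper sets $\alpha = \sqrt{T/G}$ directly and gets the bound $3\sqrt{GT} - G$ in one shot, whereas you optimize slightly differently and split into the cases $T \gtrless 2G$, but both arrive at the same $3\sqrt{GT}$.
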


    \begin{proof}
        We exhibit a weighting scheme. We set the weights such that for every black edge $vv'$, we have $w_v - w_{v'} \geq \sqrt{G/T}$, and similarly for every red edge $w_v - w_{v'} \geq 2\sqrt{T/G} - 1$. We can indeed do this in a way that ensures $w_r \leq 3\sqrt{GT}$. It remains to prove that such a weighting scheme is feasible. To that end, let $k \in \N$ be the number of outcomes of the query. We let $\alpha = \sqrt{T/G} \geq 1$, $v = [1/\sqrt{\alpha}] + [\sqrt{\alpha}]^{k-1}$, and
        \[X := vv^T \succeq 0, \qquad \text{and} \qquad Y := \begin{bmatrix}
            0 & ([0]^{k-1})^T \\
            [0]^{k-1} & (\alpha-1)J_{k-1 \times k-1}
        \end{bmatrix} \succeq 0.\]
        We verify for all $c,c' \in C$ with $c \neq c'$ satisfies $X[c,c'] - Y[c,c'] = 1$, $X[1,1] + Y[1,1] = 1/\alpha$, and $X[c,c] + Y[c,c] = 2\alpha - 1$.
    \end{proof}

    Finally, we recover the results from \cite{cornelissen2025improved}, by observing that in the case where the query has just two outcomes, the corresponding SDP can be solved analytically.

    \begin{theorem}
        \label{thm:CMP25}
        For a decision tree with binary queries, the measure $\mathsf{WDT}(T)$ introduced in \cref{def:weighting-scheme} equals the optimum of \cite[Definition~1.5]{cornelissen2025improved}.
    \end{theorem}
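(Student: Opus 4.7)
The plan is to exploit the fact that with binary queries every internal node $v$ has exactly two children, say $C_v = \{c_0, c_1\}$, so the semidefinite program from \cref{thm:optimal-weight-assignment} collapses to a $2\times 2$ problem that admits a closed-form solution. I would work with the dual formulation stated just after \cref{thm:optimal-weight-assignment}: any $\Gamma \in \R^{C_v \times C_v}$ with zero diagonal has the form $\Gamma = \bigl(\begin{smallmatrix} 0 & a \\ b & 0 \end{smallmatrix}\bigr)$, whose singular values are $|a|$ and $|b|$, so the operator-norm constraint $\|\Gamma\| \le 1$ reduces to $|a|,|b| \le 1$. The dual objective then asks to maximize $\|M\|$ over $M = \bigl(\begin{smallmatrix} w_{c_0} & a \\ b & w_{c_1} \end{smallmatrix}\bigr)$ with $|a|,|b|\le 1$.

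Next I would carry out the $2\times 2$ singular-value computation: $\operatorname{tr}(MM^T) = w_{c_0}^2 + w_{c_1}^2 + a^2 + b^2$ and $\det(MM^T) = (w_{c_0}w_{c_1} - ab)^2$. The top squared singular value is monotone increasing in the trace and decreasing in the determinant, so a short sign/monotonicity argument shows the optimum is attained at $a = b = 1$ (matching signs lower the determinant term relative to opposite signs, and $|a|=|b|=1$ maximize the trace). Diagonalizing the resulting symmetric matrix $\bigl(\begin{smallmatrix} w_{c_0} & 1 \\ 1 & w_{c_1} \end{smallmatrix}\bigr)$ yields the recursion
\[w_v \;=\; \frac{w_{c_0} + w_{c_1}}{2} + \sqrt{\frac{(w_{c_0} - w_{c_1})^2}{4} + 1}.\]

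Finally, I would match this closed form against the recursion defining the measure in \cite[Definition~1.5]{cornelissen2025improved}. That scheme assigns, at each binary node, positive edge-weights $\alpha_0,\alpha_1$ with $\alpha_0\alpha_1 \ge 1$ and sets the node weight to $\max_b (w_{c_b}+\alpha_b)$; optimizing via the substitution $\alpha_1 = 1/\alpha_0$ gives $w_v = \min_{\alpha>0}\max(w_{c_0}+\alpha,\; w_{c_1}+1/\alpha)$. Equalizing the two arguments of the max at the optimum yields $1/\alpha - \alpha = w_{c_0}-w_{c_1}$, and plugging the positive root of this quadratic back into $w_{c_1}+1/\alpha$ recovers exactly the closed form above. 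Since both schemes agree at the leaves ($w_\ell = 0$), an induction on the depth of $T$ then gives equality of the root weights, proving $\mathsf{WDT}(T)$ equals the measure from \cite{cornelissen2025improved}.

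The main obstacle I anticipate is the case analysis for the $2\times 2$ dual SDP: one must rule out carefully that asymmetric or opposite-sign choices of $(a,b)$ can beat the symmetric choice $a = b = 1$. Once that reduction to the symmetric case is established, the two recursions become manifestly the same quadratic, and the induction concludes the proof.
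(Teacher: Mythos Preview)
Your approach is essentially the same as the paper's: both solve the $2\times 2$ dual SDP by identifying the optimal $\Gamma$ as the Pauli-$X$ matrix and reading off the operator norm of $\bigl(\begin{smallmatrix} w_{c_0} & 1 \\ 1 & w_{c_1}\end{smallmatrix}\bigr)$. The paper's proof simply asserts that $\Gamma = X$ is optimal and cites \cite[Theorem~5.4]{cornelissen2025improved} for the match, whereas you spell out the case analysis for the optimal $(a,b)$ and rederive the CMP25 recursion explicitly; your Perron--Frobenius-style monotonicity concern is real but easily handled (pass to entrywise absolute values, then use monotonicity of the spectral radius of nonnegative matrices), so there is no genuine gap.
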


    \begin{proof}
        Let $v \in V$ be an internal node, with weights $w_v$ and $w_{v'}$ on its children. Then, the optimal choice for the $\Gamma$-matrix in the maximization version of the SDP is the Pauli-$X$ matrix, from which we observe that the optimal choice for $w_v$ is
        \[w_v = \norm{\begin{bmatrix}
            w_v & 1 \\
            1 & w_{v'}
        \end{bmatrix}} = \frac{w_v + w_{v'} + \sqrt{(w_v - w_{v'})^2 + 4}}{2}.\]
        This recovers \cite[Theorem~5.4]{cornelissen2025improved}.
    \end{proof}

    For the non-Boolean case, finding an explicit analytical solution for this optimization program seems elusive. We do remark, however, that these SDPs are very small (only of the size of the input alphabet), and that the total number of SDPs to solve is the same as the size of the tree. Thus, for input alphabets of constant size, computing the optimal weighting schemes can be done in time essentially linear in the size of the decision tree.

    \section{Graph composition for quantum divide and conquer}
    \label{sec:divide-and-conquer}

    In this section, we relate our construction to the results from \cite{childs2025quantum}. In \cite{cornelissen2025quantum}, it is already shown how one can recover Strategy~1 from this paper using the regular graph composition framework. here, we show how Strategy~2 can be obtained from the generalized graph composition framework.

    \begin{theorem}
        \label{thm:divide-and-conquer}
        Let $f^{\mathrm{aux}} : \Sigma^n \to \Lambda$ be an auxiliary function, and for all $s \in \Lambda$ and $i \in A$ with $A$ a finite set, let $f_i^{(s)} : \Sigma^n \to \{0,1\}$ and let $g^{(s)} : \{0,1\}^{\Lambda \times A} \to \{0,1\}$. Then, for any $x \in \Sigma^n$, we can compute $h^{f^{\mathrm{aux}}(x)}(x) := g^{(f^{\mathrm{aux}}(x))}((f_i^{(f^{\mathrm{aux}}(x))}(x))_{i \in A})$, with a feasible solution that satisfies
        \[\mathsf{R}_x^+(w) \leq (\mathsf{R}^{\mathrm{aux}})_x^+ + (\mathsf{R}^{f^{\mathrm{aux}}(x)})_x^+, \qquad \text{and} \qquad \mathsf{R}_x^-(w) = (\mathsf{R}^{\mathrm{aux}})_x^- + (\mathsf{R}^{f^{\mathrm{aux}}(x)})_x^-.\]
    \end{theorem}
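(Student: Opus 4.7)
The plan is to build an instance of the generalized graph composition framework that sequentially chains the auxiliary function evaluation with the branch-dependent subroutine. I take a hypergraph with source $s$, intermediate vertices $\{m_\lambda : \lambda \in \Lambda\}$, and output vertices $\{t_{b,\lambda} : b \in \{0,1\}, \lambda \in \Lambda\}$, setting the boundary to $B = \{s\} \cup \{t_{b,\lambda}\}_{b,\lambda}$. Writing $\lambda^* := f^{\mathrm{aux}}(x)$, the composition will implement the database-update hyperedge problem $\ket{\perp}_s \mapsto \ket{(h^{\lambda^*}(x), \lambda^*)}$, from which the boolean $h^{f^{\mathrm{aux}}(x)}(x)$ is read off by measuring the $b$-register of the output state.

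First, for the aux hyperedge $e^{\mathrm{aux}}$ on $\{s\} \cup \{m_\lambda\}_\lambda$, I apply \cref{thm:hyperedge-from-database-function} to cast the given feasible solution for the aux state-conversion problem as a hyperedge problem with $\delta_x^{\mathrm{aux}} = \mathbbm{1}_s - \mathbbm{1}_{m_{\lambda^*}}$ and $U_x^{\mathrm{aux}} = \mathbbm{1}_s + \mathbbm{1}_{m_{\lambda^*}}$, inheriting the witness sizes $(\mathsf{R}^{\mathrm{aux}})_x^\pm$. Second, for each $\lambda \in \Lambda$, I attach a \emph{guarded} $h^\lambda$-subgraph with boundary $\{m_\lambda, t_{0,\lambda}, t_{1,\lambda}\}$: on inputs $x$ with $f^{\mathrm{aux}}(x) = \lambda$ it implements the database update $\ket{\lambda}_{m_\lambda} \mapsto \ket{h^\lambda(x)}_{t_{h^\lambda(x),\lambda}}$ using the feasible solution assumed for computing $h^\lambda$, while on inputs $x$ with $f^{\mathrm{aux}}(x) \neq \lambda$ it is the trivial hyperedge problem ($\delta_x^\lambda = 0$, $U_x^\lambda = 0$) with all witnesses set to zero. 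This guarded problem remains a valid state-reflection problem because its constraints reduce to $0 = 0$ whenever both sides vanish identically.

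Third, I verify the conditions of \cref{def:generalized-graph-composition}. Flow conservation at each internal $m_\lambda$ holds because the aux contributes net-flow $-\mathbbm{1}[\lambda = \lambda^*]$ there while the guarded subgraph contributes $+\mathbbm{1}[\lambda = \lambda^*]$, and these cancel. Potential consistency at $m_\lambda$ also holds: when $\lambda = \lambda^*$ both hyperedges report potential $1$ at $m_\lambda$, and when $\lambda \neq \lambda^*$ both report $0$. Each output vertex $t_{b,\lambda}$ is adjacent only to the subgraph $h^\lambda$, so no cross-conflict arises there; the internal flow/potential consistency within each subgraph is inherited from the assumed feasibility of its $h^\lambda$-solution. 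Fourth, I apply \cref{thm:generalized-graph-composition} to combine the per-hyperedge feasible solutions into one whose boundary data is exactly the database update $\ket{\perp}_s \mapsto \ket{(h^{\lambda^*}(x), \lambda^*)}$, so that measuring the $b$-register yields $h^{f^{\mathrm{aux}}(x)}(x)$. Since only the aux and the $\lambda^*$-subgraph have nonzero witnesses at input $x$, the additive identity $\mathsf{R}_x^\pm(w) = \sum_{e \in E} (\mathsf{R}^e)_x^\pm$ collapses to $(\mathsf{R}^{\mathrm{aux}})_x^\pm + (\mathsf{R}^{f^{\mathrm{aux}}(x)})_x^\pm$, matching the claim.

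The main obstacle I anticipate is justifying the guarded hyperedge construction: one must confirm that zeroing out both the states $\ket{\sigma_x^\pm}$ and the witnesses $\ket{w_x^\pm}$ on the irrelevant inputs does not violate the state-reflection constraints, and that the resulting potentials and flows glue correctly with those of the aux hyperedge along each internal $m_\lambda$. Once that is in place, the composition step and the witness-size accounting are routine.
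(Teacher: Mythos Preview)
Your proposal is correct and follows essentially the same approach as the paper: chain the auxiliary hyperedge to branch-dependent subroutine hyperedges at the intermediate nodes, send unit flow only through the aux edge and the active branch $\lambda^* = f^{\mathrm{aux}}(x)$, and set the potential to zero everywhere outside this flow. Your ``guarded'' construction (zeroing the states and witnesses on inputs with $f^{\mathrm{aux}}(x) \neq \lambda$) is precisely the mechanism the paper leaves implicit, and your verification that the state-reflection constraints trivially hold there is correct; the paper's own proof is a two-sentence sketch of exactly this picture.
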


    \begin{proof}
        The idea is to simply first perform the auxiliary hyperedge, and then attach all the hyperedges computing $h^{(s)}$ to the corresponding output nodes. The resulting witness analysis then follows from \cref{thm:generalized-graph-composition}, by sending unit flow through both the hyperedges, and the potential function is simply $0$ for all the nodes outside of this flow.
    \end{proof}

    The construction sketched here allows for more freedom than the one outlined in \cite{childs2025quantum}. We can even take a feasible solution to the SDP from the previous section, and multiply it with our feasible solution element-wise, to better balance the costs of the difference branches of computation, dependent on $s \in \Lambda$. We leave investigating the potential of this approach for future research.

    \section{Graph composition for quantum walk search}
    \label{sec:quantum-walk-search}

    In this section, we show how we can use the generalized graph composition framework to implement quantum walks. The idea of quantum walks originated in \cite{szegedy2004quantum}, in what became later known as the ``hitting time framework''. This idea was later built upon by several works, most notably \cite{magniez2011search}, \cite{belovs2013quantum}, and \cite{dohotaru2017controlled}, and these results were subsequently unified by \cite{apers2021unified}. Subsequently, Jeffery showed how some of these results can be amortized \cite{jeffery2022quantum}, and we build on those ideas here.

    \subsection{Construction}

    The idea of quantum walk search is as follows. Suppose we have a connected undirected graph $G = (V,E)$, with a resistance function on the edges $r : E \to \R_{>0}$. By \cref{thm:random-walk-characterization}, this defines an irreducible, reversible Markov process $M = (V,P)$.

    Now, let $\D$ be a finite domain, and for all $x \in \D$, let $M_x \subseteq V$ be the set of \textit{marked} vertices. The goal is to \textit{decide} whether $M_x \neq \varnothing$, or to \textit{find} an element $v \in M_x$, if $M_x$ is non-empty. We refer to these two problems as the \textit{detection} and \textit{finding} version of quantum walk search.

    Intuitively, we can think of an algorithm solving the detection and finding version of quantum walk search as follows. We select a vertex $v \in V$ according to some input distribution $\sigma \in \Delta_V$. Next, we randomly walk on the graph $G$ according to the dynamics prescribed by the random walk $M$, and once every so often, we check if the vertex $v \in V$ that we visit is contained in the set $M_x$. If it is, we have successfully solved both the detection and finding version of the problem, and if we can prove an upper bound on the time it takes to find such a marked vertex, if it exists, then we can also use this idea to verify that the set of marked vertices is empty, with some reasonable probability.

    In order to aid the step where we are checking whether $v \in M_x$, it can sometimes be beneficial to store some information about $x$ as we are walking on the vertices in $G$. To that end, we define a database with possible states $\mathcal{S}$, and we associate to every vertex $v \in V$ and input $x \in \D$, a database entry $D_{v,x} \in \mathcal{S}$. The idea is to compute the database entry exactly in the initial sampling step, and then to update the database entry at every step of the random walk. We can then use the database entry $D_{v,x}$ to check whether $v \in M_x$, or not.

    In order to implement this approach quantumly, we need to be able to perform three operations as subroutines. That is, we will require to have hyperedges that implement the following three operations:

    \begin{enumerate}[nosep]
        \item For all $v \in V$, we need a \textit{setup routine} $S_v$ that computes the database entry $D_{v,x}$, for all $x \in \D$.
        \item For all $vw = e \in E$, we need an \textit{update routine} that updates the database entry $D_{v,x}$ to $D_{w,x}$, for all $x \in \D$.
        \item For all $v \in V$ and $D \in \mathcal{S}$, we need a \textit{checking routine} that checks whether $v \in M_x$, for all $ \in \D$, if $D = D_{v,x}$.
    \end{enumerate}

    We bundle all these objects into a single \textit{instance of quantum walk search}, and we formally define it below.

    \begin{definition}[Quantum walk search]
        Let $G = (V,E)$ be a connected, undirected graph, with resistances $r : E \to \R_{>0}$, such that $\sum_{e \in E} r_e^{-1} = 1/2$. Let $\D$ be a finite domain, and for all $x \in \D$, let $M_x \subseteq V$. Let $\mathcal{S}$ be a set of database entries, and for all $v \in V$ and $x \in \D$, let $D_{v,x} \in \mathcal{S}$. Now, assume we have feasible solutions implementing the following operations:
        \begin{enumerate}[nosep]
            \item For all $v \in V$, let $S_v$ be a hyperedge that computes $D_{v,x}$, for all $x \in \D$. We write the positive and negative witness sizes as $(\mathsf{S}_v)_x^{\pm}$.
            \item For all $vw = e \in E$, let $\mathsf{U}_e$ be a hyperedge that updates $D_{v,x}$ to $D_{w,x}$, for all $ \in \D$. We write the positive and negative witness sizes as $(\mathsf{U}_e)_x^{\pm}$.
            \item For all $v \in V$ and $D \in \mathcal{S}$, let $C_{v,D}$ be a span program that checks whether $v \in M_x$, for all $x \in \D$, whenever $D = D_{v,x}$. We write the positive and negative witness sizes as $(\mathsf{C}_{v,D})_x^{\pm}$.
        \end{enumerate}
        Then, we say that $(G, r, x \mapsto M_x, (v,x) \mapsto D_{v,x}, \{S_v\}_{v \in V}, \{U_e\}_{e \in E}, \{C_{v,D}\}_{v \in V, D \in \mathcal{S}})$ is an instance of quantum walk search.
    \end{definition}

    We note that it is okay if $C_{v,D}$ errs if $D \neq D_{v,x}$, i.e., if the database entry $D$ is not the entry associated with the input $x$ at vertex $v$, then it is okay that the checking routine fails. The construction of quantum walk search will make sure that this doesn't impact the algorithm's correctness.

    We now show how we can solve the detection and finding versions of quantum random walk search, using these subroutines as hyperedges. We start with the detection version of the problem.

    \begin{theorem}[Graph composition for quantum walk search -- detection version]
        \label{thm:quantum-walk-detection}
        Let $(G, r, x \mapsto M_x, (v,x) \mapsto D_{v,x}, \{S_v\}_{v \in V}, \{U_e\}_{e \in E}, \{C_{v,D}\}_{v \in V, D \in \mathcal{S}})$ be an instance of quantum walk search. Let $\sigma,\tau \in \Delta_V$. We write $\D^+ = \{x \in \D : M_x \neq \varnothing\}$, and $\D^- = \D \setminus \D^+$. For all $x \in \D^+$, let $\mu_x \in \Delta_{\supp(\sigma)}$ and $\nu_x \in \Delta_{M_x \cap \supp(\tau)}$. Then, there is a span program $\mathcal{P}$, that solves the detection version of quantum walk search, with witness sizes
        \begin{align*}
            w_+(x,\mathcal{P}) &\leq \sum_{v \in \supp(\sigma)} \frac{(\mu_x)_v^2}{\sigma_v}(\mathsf{S}_v)_x^+ + R_{\mathrm{eff}}(G, r \circ (\mathsf{U}_{\cdot})_x^+; \mu_x - \nu_x) + \sum_{v \in \supp(\tau) \cap M_x} \frac{(\nu_x)_v^2}{\tau_v}(\mathsf{C}_{v,D_{v,x}})_x^+, & \forall x \in \D^+, \\
            w_-(x,\mathcal{P}) &\leq \sum_{v \in \supp(\sigma)} \sigma_v(\mathsf{S}_v)_x^- + \sum_{e \in E} \frac{(\mathsf{U}_e)_x^-}{r_e} + \sum_{v \in \supp(\tau)} \tau_v(\mathsf{C}_{v,D_{v,x}})_x^-, & \forall x \in \D^-.
        \end{align*}
    \end{theorem}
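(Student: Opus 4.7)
The plan is to instantiate the generalized graph composition framework on a hypergraph whose internal vertices track (vertex, database entry) pairs. Concretely, I build $G' = (V',E')$ with boundary $B = \{s,t\}$ and internal vertices $\{v_D : v \in V,\, D \in \mathcal{S}\}$. For each $v \in V$, I attach a setup hyperedge adjacent to $\{s\} \cup \{v_D : D \in \mathcal{S}\}$ with weight $1/\sigma_v$; its hyperedge problem, obtained via \cref{thm:hyperedge-from-database-function}, implements a unit flow between $s$ and $v_{D_{v,x}}$. For each $e = vw \in E$, I attach an update hyperedge on $\{v_D : D \in \mathcal{S}\} \cup \{w_D : D \in \mathcal{S}\}$ with weight $r_e$, sending unit flow between $v_{D_{v,x}}$ and $w_{D_{w,x}}$. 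For each pair $(v,D)$, I attach a check hyperedge $C_{v,D}$ on $\{v_D,t\}$ with weight $1/\tau_v$ obtained from the span program via \cref{thm:hyperedge-from-span-program}; when $D \neq D_{v,x}$ I exploit the unspecified behaviour of $C_{v,D}$ and declare it to return ``not marked'' with trivial witnesses.

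\textbf{Positive witness ($x \in \D^+$).} I construct a unit flow $s \to t$ in $G'$ in three stages: at the source, partition unit flow by $\mu_x$ across the setup hyperedges, delivering $(\mu_x)_v$ units into each $v_{D_{v,x}}$; on the ``active layer'' $\{v_{D_{v,x}} : v \in V\}$, where the update hyperedges span a copy of $G$, route the minimum-energy flow with net divergence $\mu_x - \nu_x$; finally, from each $v_{D_{v,x}}$ with $v \in \supp(\nu_x)$, drain $(\nu_x)_v$ units through $C_{v,D_{v,x}}$ into $t$. Assigning potential $1$ on $\{s\} \cup \{v_{D_{v,x}} : v \in V\}$ and $0$ elsewhere satisfies the consistency conditions of \cref{def:generalized-graph-composition}. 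Applying \cref{thm:generalized-graph-composition}, the three stages contribute $\sum_v (\mu_x)_v^2 (\mathsf{S}_v)_x^+/\sigma_v$, $R_{\mathrm{eff}}(G, r \circ (\mathsf{U}_{\cdot})_x^+; \mu_x - \nu_x)$ (since the update-layer energy with per-edge resistance $r_e (\mathsf{U}_e)_x^+$ is minimised exactly by this effective resistance), and $\sum_v (\nu_x)_v^2 (\mathsf{C}_{v,D_{v,x}})_x^+/\tau_v$ to the positive witness, matching the claimed bound.

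\textbf{Negative witness ($x \in \D^-$).} No flow is needed; I use the potential $U_x = 1$ on $\{s\} \cup \{v_{D_{v,x}} : v \in V\}$ and $U_x = 0$ on $\{t\} \cup \{v_D : D \neq D_{v,x}\}$. Consistency follows because $U_x^e$ must only be constant on $\supp(\delta_x^e)$: the setup and update hyperedges have their (flow-)support entirely inside the value-$1$ set, and the check $C_{v,D_{v,x}}$, a negative span-program instance since $v \notin M_x$, requires exactly a unit potential difference between $v_{D_{v,x}}$ and $t$. The inactive checks $C_{v,D}$ with $D \neq D_{v,x}$ have both endpoints at $0$ and, by the trivial-behaviour choice, contribute nothing. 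Summing via \cref{thm:generalized-graph-composition} with weights $1/\sigma_v$, $r_e$, $1/\tau_v$ yields $\sum_v \sigma_v (\mathsf{S}_v)_x^- + \sum_e (\mathsf{U}_e)_x^-/r_e + \sum_v \tau_v (\mathsf{C}_{v,D_{v,x}})_x^-$. One final application of \cref{thm:hyperedge-from-span-program} converts the resulting two-boundary hyperedge problem on $\{s,t\}$ into the desired span program $\mathcal{P}$.

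\textbf{Main obstacle.} The bulk of the work lies in the careful bookkeeping on the ``inactive'' sublattice $\{v_D : D \neq D_{v,x}\}$: one must verify that the chosen potentials are globally consistent across every adjacent setup, update, and check hyperedge, and that the many $C_{v,D}$ with $D \neq D_{v,x}$ truly contribute nothing to either witness. A subtler step is justifying that the update-layer energy equals the effective resistance on $G$ itself rather than on the larger $G'$, which follows because on $\{v_{D_{v,x}} : v \in V\}$ the relevant sub-hypergraph is graph-isomorphic to $G$ and carries exactly the net divergence $\mu_x - \nu_x$.
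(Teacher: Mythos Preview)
Your proposal is correct and follows essentially the same construction as the paper: the same hypergraph $G'$ with vertex set $\{s,t\}\cup\{(v,D):v\in V,\,D\in\mathcal{S}\}$, the same three families of hyperedges with the same weights $1/\sigma_v$, $r_e$, $1/\tau_v$, the same positive flow (split by $\mu_x$, routed minimally on the active layer, drained by $\nu_x$), the same negative potential (value $1$ on $\{s\}\cup\{v_{D_{v,x}}\}$, value $0$ elsewhere), and the same final conversion via \cref{thm:hyperedge-from-span-program}.

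Two small remarks. First, you attach setup and check hyperedges for \emph{all} $v\in V$, but the weights $1/\sigma_v$ and $1/\tau_v$ are ill-defined when $\sigma_v=0$ or $\tau_v=0$; restrict these to $v\in\supp(\sigma)$ and $v\in\supp(\tau)$, exactly as the paper does, and the bounds in the statement then match literally. Second, your discussion of the ``inactive'' checks $C_{v,D}$ with $D\neq D_{v,x}$ is actually more explicit than the paper's (which simply asserts that only the active checks contribute); your way of neutralising them---zero flow, zero potential, hence zero witness---is the right picture, but phrase it as an $x$-dependent rescaling of the flow and potential (which is permitted because the feasibility constraint $\braket{w_x^+}{w_y^-}=(\delta_x^e)^T U_y^e$ is bilinear) rather than as ``declaring'' a different span-program output.
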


    \begin{proof}
        We perform a generalized graph composition, with the following graph structure. For all $v \in V$, we define $V_v = \{(v,D) : D \in \mathcal{S}\}$, and we use the new vertex set
        \[V' = \{s,t\} \cup \bigcup_{v \in V} V_v,\]
        with boundary vertices $B = \{s,t\}$. For all $v \in \supp(\sigma)$, we embed the hyperedge $S_v$ between vertices $\{s\} \cup V_v$ with weight $1/\sigma_v$, and for all $vw = e \in E$, we embed the hyperedge $U_e$ between vertices $V_v \cup V_w$ with weight $r_e$. Finally, for all $v \in \supp(\tau)$ and $D \in \mathcal{S}$, we convert $C_{v,D}$ into a hyperedge using \cref{thm:hyperedge-from-span-program}, and we embed it between vertices $\{(v,D)\} \cup \{t\}$ with weight $1/\tau_v$.

        We now construct a flow from $s$ to $t$ for $x \in \D^+$. To that end, we send flow $(\mu_x)_v$ through $S_v$, and $(\nu_x)_v$ through $C_{v,D_{v,x}}$. Thus, in order to satisfy flow conservation, we have to generate a net-flow of $\mu_x - \nu_v$ through the update edges. We choose the minimum-energy flow through the graph $G$, with the resistances computed by $e \mapsto r_e \cdot (\mathsf{U}_e)_x^+$. The corresponding positive witness complexity, hence, becomes
        \[\mathsf{R}_x^+ := \sum_{v \in \supp(\sigma)} \frac{(\mu_x)_v^2}{\sigma_v}(\mathsf{S}_v)_x^+ + R_{\mathrm{eff}}(G, r \circ (\mathsf{U}_{\cdot})_x^+; \mu_x - \nu_x) + \sum_{v \in \supp(\tau)} \frac{(\nu_v)_x^2}{\tau_v} (\mathsf{C}_{v,D_{v,x}})_x^+.\]

        Next, we construct a unit potential function $U_x$ for $x \in \D^-$, which is $1$ at $s$, and $0$ at $t$. We assign a potential of $1$ at every vertex in $V'$ that can be reached from $s$, and a potential of $0$ at all the others. Then, all the nodes of the form $(v,D_{v,x})$ and $s$ have potential function $1$, and all the others have $0$. Thus, all the setup and update hyperedges contribute to the negative witness complexity, and all the checking routines of the form $C_{v,D_{v,x}}$ contribute as well. This yields the negative witness complexity
        \[\mathsf{R}_x^- := \sum_{v \in \supp(\sigma)} \sigma_v(\mathsf{S}_v)_x^- + \sum_{e \in E} \frac{(\mathsf{U}_e)_x^-}{r_e} + \sum_{v \in \supp(\tau)} \tau_v(\mathsf{C}_{v,D_{v,x}})_x^-.\]

        Now, we know from \cref{thm:hyperedge-from-span-program} that we can turn the resulting hyperedge back into a span program, and since we have exhibited a feasible solution with witness complexities $\mathsf{R}_x^{\pm}$, we can upper bound its witness sizes accordingly, by taking the maximum of $x \in \D^{\pm}$, respectively.
    \end{proof}

    We can generalize this result to the finding case as well. The result carries over immediately when we have the promise that $|M_x| = 1$.

    \begin{theorem}[Graph composition for quantum walk search -- finding version for unique marked element]
        \label{thm:quantum-walk-finding-unique}
        Let $(G, r, x \mapsto M_x, (v,x) \mapsto D_{v,x}, \{S_v\}_{v \in V}, \{U_e\}_{e \in E}, \{C_{v,D}\}_{v \in V, D \in \mathcal{S}})$ be an instance of quantum walk search, with the promise that for all $x \in \D$, $M_x = \{m_x\}$. Let $\sigma,\tau \in \Delta_V$. For all $x \in \D$, let $\mu_x \in \Delta_{\supp(\sigma)}$. Then, we can find the marked vertex with complexities
        \begin{align*}
            \mathsf{R}_x^+ &\leq \sum_{v \in \supp(\sigma)} \frac{(\mu_x)_v^2}{\sigma_v}(\mathsf{S}_v)_x^+ + R_{\mathrm{eff}}(G, r \circ (\mathsf{U}_{\cdot})_x^+; \mu_x - \nu_x) +  \frac{1}{\tau_{m_x}}(\mathsf{C}_{m_x,D_{m_x,x}})_x^+, & \forall x \in \D, \\
            \mathsf{R}_x^- &\leq \sum_{v \in \supp(\sigma)} \sigma_v(\mathsf{S}_v)_x^- + \sum_{e \in E} \frac{(\mathsf{U}_e)_x^-}{r_e} + \sum_{v \in \supp(\tau) \setminus \{m_x\}} \tau_v(\mathsf{C}_{v,D_{v,x}})_x^-, & \forall x \in \D.
        \end{align*}
    \end{theorem}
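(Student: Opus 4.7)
The plan is to adapt the construction from \cref{thm:quantum-walk-detection} by splitting the single sink $t$ into separate boundary sinks $t_v$, one for each $v \in \supp(\tau)$, so that the composed boundary hyperedge becomes the database-update problem $\ket{s} \leftrightarrow \ket{t_{m_x}}$ rather than a decision problem. The resulting transducer then sends $\ket{s}$ to $\ket{t_{m_x}}$, and a computational-basis measurement on the $t$-register returns the unique marked vertex $m_x$. Concretely, I take $V' = \{s\} \cup \{t_v : v \in \supp(\tau)\} \cup \bigcup_{v \in V} V_v$ with boundary $B = \{s\} \cup \{t_v : v \in \supp(\tau)\}$, embed the setup and update hyperedges exactly as in the detection construction (with weights $1/\sigma_v$ and $r_e$), and attach each checking hyperedge $C_{v,D}$ with weight $1/\tau_v$ between $(v,D)$ and $t_v$. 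Setting $\nu_x := \mathbbm{1}_{m_x}$ then reduces the witness analysis to that of \cref{thm:quantum-walk-detection} almost verbatim.

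For the positive witness I route unit flow from $s$ to $t_{m_x}$ by sending flow $(\mu_x)_v$ through $S_v$, transporting the resulting source/sink demand through the update graph via the minimum-energy flow of net-flow $\mu_x - \mathbbm{1}_{m_x}$ (which concentrates all excess at $(m_x, D_{m_x,x})$), and pushing the resulting unit flow out through $C_{m_x, D_{m_x,x}}$---a positive span-program instance since $m_x \in M_x$. Applying \cref{thm:generalized-graph-composition} with the chosen weights reproduces the stated $\mathsf{R}_x^+$ term by term.

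For the negative witness I pick the global potential $U$ on $V'$ given by $U(s) = U(t_{m_x}) = 1$, $U((v, D_{v,x})) = 1$ for every $v \in V$, and $U \equiv 0$ on every $t_v$ with $v \neq m_x$ and every $(v,D)$ with $D \neq D_{v,x}$. Every setup and update hyperedge has its natural potential equal to the indicator of its two active vertices, matching $U$ on its support; every checking hyperedge $C_{v, D_{v,x}}$ with $v \neq m_x$ is a negative span-program instance and realizes the required unit potential difference between $(v, D_{v,x})$ and $t_v$; every $C_{v, D}$ with $D \neq D_{v,x}$ sees $U \equiv 0$ on its support and contributes nothing. The main obstacle is the hyperedge $C_{m_x, D_{m_x, x}}$: it is a positive span-program instance for input $x$, so its natural potential is $0$, yet $U$ assigns value $1$ to both of its endpoints. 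Fortunately, shifting a hyperedge's potential by a constant leaves the witnesses feasible (the flow vector sums to zero, so constant shifts never alter the inner-product constraint), and the natural negative witness of a positive span-program instance is already zero; hence this hyperedge contributes nothing to $\mathsf{R}_x^-$, which is exactly why the sum in the statement excludes $v = m_x$. Summing the weighted contributions recovers the claimed $\mathsf{R}_x^-$, and invoking \cref{thm:transducers} on the composed state-reflection problem yields the desired finding algorithm.
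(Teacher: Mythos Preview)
Your construction is essentially the paper's own: split the detection sink into separate boundary sinks $t_v$, reuse the detection flow with $\nu_x=\mathbbm{1}_{m_x}$, and set the potential to $1$ on the component reachable from $s$. The paper compresses this into a one-line appeal to \cref{thm:path-cut-theorem}, but what you wrote is exactly the unfolding of that appeal, including the observation that the positive checking edge $C_{m_x,D_{m_x,x}}$ can be handled by the constant-shift lemma.

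There is, however, one genuine oversight in your negative-witness analysis. You assert that ``every $C_{v,D}$ with $D \neq D_{v,x}$ sees $U \equiv 0$ on its support,'' but this fails precisely when $v = m_x$: the hyperedge $C_{m_x,D}$ for $D \neq D_{m_x,x}$ is adjacent to $(m_x,D)$, where $U=0$, \emph{and} to $t_{m_x}$, where $U=1$. Since the behaviour of $C_{m_x,D}$ is unspecified for $D \neq D_{m_x,x}$, it may well be a positive instance for $x$, in which case its potential (after any scaling and constant shift) is forced to be constant across its two endpoints, and you cannot realise the $0$--$1$ drop. So as written, the potential is not globally consistent. The fix is easy and is implicit in the paper too: either extend the reachable set to include those $(m_x,D)$ for which $C_{m_x,D}$ happens to accept (they form a dead end touching no other boundary vertex, so nothing else changes), or simply assume without loss of generality that $C_{v,D}$ rejects whenever $D\neq D_{v,x}$. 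Either way the contribution to $\mathsf{R}_x^-$ stays as stated; you just need to say a word about this case rather than fold it into the ``$U\equiv 0$'' claim.
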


    \begin{proof}
        This follows from \cref{thm:path-cut-theorem}. We use the same flow as in the detection case to upper bound the effective resistance, and we use the cut that cuts everything that can be reached from the root node from all the others. Since we are promised that there is a unique marked element, the resulting flow ends up connecting two vertices, which we can turn back into a state-reflection problem with unit-norm states using \cref{thm:hyperedge-from-database-function} in reverse.
    \end{proof}

    We also show how we can solve the finding problem in we know the fraction of marked vertices.

    \begin{theorem}[Graph composition for quantum walk search -- finding version for known fraction of marked vertices]
        \label{thm:quantum-walk-finding-fixed-fraction}
        Let $(G, r, x \mapsto M_x, (v,x) \mapsto D_{v,x}, \{S_v\}_{v \in V}, \{U_e\}_{e \in E}, \{C_{v,D}\}_{v \in V, D \in \mathcal{S}})$ be an instance of quantum walk search. Let $\sigma,\tau \in \Delta_V$, and suppose that for all $x \in \D$, we have $\P_{v \sim \tau}[v \in M_x] = \varepsilon$ for some fixed $\varepsilon > 0$. For all $x \in \D$, let $\mu_x \in \Delta_{\supp(\sigma)}$, and $\nu_x = (\tau|_{M_x})/\varepsilon$. Then, we can find the marked vertex with complexities
        \begin{align*}
            \mathrm{R}_x^+ &\leq \sum_{v \in \supp(\sigma)} \frac{(\mu_x)_v^2}{\sigma_v}(\mathsf{S}_v)_x^+ + R_{\mathrm{eff}}(G, r \circ (\mathsf{U}_{\cdot})_x^+; \mu_x - \nu_x) + \frac{1}{\varepsilon^2}\sum_{v \in \supp(\tau) \cap M_x} \tau_v(\mathsf{C}_{v,D_{v,x}})_x^+, & \forall x \in \D, \\
            \mathrm{R}_x^- &\leq \sum_{v \in \supp(\sigma)} \sigma_v(\mathsf{S}_v)_x^- + \sum_{e \in E} \frac{(\mathsf{U}_e)_x^-}{r_e} + \sum_{v \in \supp(\tau) \setminus M_x} \tau_v(\mathsf{C}_{v,D_{v,x}})_x^-, & \forall x \in \D.
        \end{align*}
    \end{theorem}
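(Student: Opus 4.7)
The plan is to mirror the hypergraph construction of \cref{thm:quantum-walk-detection} (and \cref{thm:quantum-walk-finding-unique}), but to replace the single sink $t$ by a whole family of boundary vertices, one for each potentially marked vertex in $\supp(\tau)$. Concretely, I will take boundary set $B = \{s\} \cup \supp(\tau)$, internal vertex set $\bigcup_{v \in V} V_v$ with $V_v = \{(v,D) : D \in \mathcal{S}\}$, and place the setup hyperedges $S_v$ between $\{s\} \cup V_v$ with weight $1/\sigma_v$, the update hyperedges $U_e$ between $V_v \cup V_w$ with weight $r_e$, and the checking hyperedges $C_{v,D}$ (viewed as hyperedge problems via \cref{thm:hyperedge-from-span-program}) between $\{(v,D)\}$ and the \emph{boundary} vertex $v \in \supp(\tau)$ with weight $1/\tau_v$.

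To obtain the positive witness for a given $x$, I will send flow $(\mu_x)_v$ through each $S_v$, route a minimum-energy net-flow $\mu_x - \nu_x$ through the update layer (using resistances $e \mapsto r_e \cdot (\mathsf{U}_e)_x^+$), and then drain $(\nu_x)_v$ out through the positive-instance span program $C_{v,D_{v,x}}$ for each $v \in M_x \cap \supp(\tau)$. Because $\nu_x = \tau|_{M_x}/\varepsilon$ and $(\nu_x)_v^2/\tau_v = \tau_v/\varepsilon^2$ on $M_x \cap \supp(\tau)$, the contributions from \cref{thm:generalized-graph-composition} assemble into exactly the three summands of $\mathsf{R}_x^+$ claimed in the theorem. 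For the negative witness, I will assign potential $1$ to $s$, to each internal node $(v,D_{v,x})$, and to each $v \in M_x \cap \supp(\tau)$, and potential $0$ to all other boundary and internal nodes. This assignment is uniquely defined at each vertex because the non-zero-potential region is exactly the set reachable from $s$ through positive-instance hyperedges. Under this potential, every setup and every update hyperedge contributes its full negative witness, while among the checking hyperedges only those at unmarked $v \in \supp(\tau) \setminus M_x$ contribute, yielding exactly the $\mathsf{R}_x^-$ in the statement.

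At this point the generalized graph composition produces a hyperedge problem on $B$ with net-flow $\mathbbm{1}_s - \nu_x = \mathbbm{1}_s - \tau|_{M_x}/\varepsilon$ and potential function $\mathbbm{1}_s + \mathbbm{1}_{M_x \cap \supp(\tau)}$. This is precisely the hyperedge problem $R'$ from \cref{thm:known-marked-fraction-recovery} applied with $\pi = \tau$ (using the given promise $\mathbb{P}_{v \sim \tau}[v \in M_x] = \varepsilon$), whose feasible region coincides with that of the state-reflection problem $R$ that prepares the marked-vertex state $\ket{\psi_x} = \frac{1}{\sqrt{\varepsilon}}\sum_{v \in M_x}\sqrt{\tau_v}\ket{v}$ from $\ket{\perp}$. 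Invoking \cref{thm:known-marked-fraction-recovery} therefore converts our combined witnesses into witnesses for the finding state-reflection problem with the same sizes, completing the proof.

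The main obstacle I expect is the bookkeeping around \cref{thm:known-marked-fraction-recovery}: verifying that the potential function produced by the composition on the boundary really matches $\mathbbm{1}_s + \mathbbm{1}_{M_x}$ up to vanishing contributions on $M_x \setminus \supp(\tau)$ (where $(\nu_x)_v = 0$ and the potential is irrelevant), and that the rescaling $D = -\diag(\sqrt{\tau})/\sqrt{\varepsilon}$ used in \cref{lem:rescale-state-reflection-problem} does not alter the witness sizes. Once this is handled, the rest is a direct copy of the calculation in \cref{thm:quantum-walk-detection}, with the checking summation split into marked/unmarked halves instead of collapsing into a single sink $t$.
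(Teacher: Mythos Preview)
Your proposal is correct and follows essentially the same approach as the paper: route the detection flow from \cref{thm:quantum-walk-detection} for the positive witness, assign potential~$1$ to everything reachable from $s$ for the negative witness, and then invoke \cref{thm:known-marked-fraction-recovery} to convert the resulting boundary hyperedge problem into the unit-norm state-reflection problem that prepares $\ket{\psi_x}$. The only difference is that you spell out the modified boundary set $B=\{s\}\cup\supp(\tau)$ and the checking-edge rerouting explicitly, whereas the paper's three-sentence proof leaves these implicit in the phrases ``same flow as in the detection version'' and ``same idea as in the previous theorem.''
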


    \begin{proof}
        The idea is to use the same flow as in the detection version for the positive witness analysis. For the negative witness analysis, we use the same idea as in the previous theorem, i.e., we let the potential function be $1$ for all the vertices that we can reach from $s$, and $0$ everywhere else. Now, we end up with the same situation as in \cref{thm:known-marked-fraction-recovery}, and so we can again multiply with a diagonal to obtain a state-reflection problem with unit-norm states.
    \end{proof}

    It is at the moment not clear how this can be generalized to the setting where the fraction of marked vertices is unknown. We leave resolving this case for future research.

    \subsection{Relation to unified quantum walk search~\cite{apers2021unified}}

    We now relate our results to those obtained by Apers, Gily\'en and Jeffery~\cite{apers2021unified}. We first of all show that w.r.t.\ the query complexity, one can remove all logarithmic factors from the complexity statement.

    \begin{theorem}
        \label{thm:unified-quantum-walk-search}
        Let $(G,r,x \mapsto M_x,(v,x) \mapsto D_{v,x}, \{S_v\}_{v \in V}, \{U_e\}_{e \in E}, \{C_{v,D}\}_{v \in V, D \in \mathcal{S}})$ be an instance of quantum walk search on $\D$, and let $\sigma \in \Delta_V$. We write
        \[\mathsf{S}^{\pm} := \max_{v \in D}\max_{x \in \D} (\mathsf{S}_v)_x^{\pm}, \qquad \mathsf{U}^{\pm} := \max_{e \in E} \max_{x \in \D} (\mathsf{U}_e)_x^{\pm}, \qquad \text{and} \qquad \mathsf{C}^{\pm} := \max_{v \in V}\max_{x \in \D} (\mathsf{C}_{v,D_{v,x}})_x^{\pm},\]
        and we let $\mathsf{M} := \sqrt{\mathsf{M}^+\mathsf{M}^-}$, with $\mathsf{M} \in \{\mathsf{S}, \mathsf{U}, \mathsf{C}\}$. Let $t \geq 1$, and let $R \geq 0$ such that for all $x \in \D_+$, we have $R_{\mathrm{eff}}(P^t, \sigma \leftrightarrow M_x) \leq R$. Then, we can solve the detection version of the quantum walk search problem with a span program $\mathcal{P}$ with complexity
        \[C(\mathcal{P}) \in O\left(\mathsf{S} + \sqrt{tR}\mathsf{U} + \sqrt{1+R}\mathsf{C}\right).\]
    \end{theorem}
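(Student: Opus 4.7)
The plan is to invoke \cref{thm:quantum-walk-detection} with $\tau = \pi$ (the stationary distribution of $P$, which has full support by irreducibility), $\mu_x = \sigma$ for every $x \in \D_+$, and $\nu_x$ taken to be any distribution on $M_x$ achieving $R_{\mathrm{eff}}(P^t; \sigma - \nu_x) \leq R$ (which exists by the hypothesis on $R$). The resulting span program $\mathcal{P}$ solves detection, so it only remains to bound $w_+(\mathcal{P}) := \max_{x \in \D_+} w_+(x,\mathcal{P})$ and $w_-(\mathcal{P}) := \max_{x \in \D_-} w_-(x,\mathcal{P})$ uniformly, and then to turn the product $w_+(\mathcal{P})\,w_-(\mathcal{P})$ into a sum of three square roots.

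For $x \in \D_-$, substituting $(\mathsf{S}_v)^-_x \leq \mathsf{S}^-$, $(\mathsf{U}_e)^-_x \leq \mathsf{U}^-$ and $(\mathsf{C}_{v,D_{v,x}})^-_x \leq \mathsf{C}^-$ into the three terms of the negative witness in \cref{thm:quantum-walk-detection}, and using $\sum_v \sigma_v = \sum_v \pi_v = 1$ together with the normalization $\sum_e r_e^{-1} = 1/2$, yields $w_-(x,\mathcal{P}) \leq \mathsf{S}^- + \tfrac12\mathsf{U}^- + \mathsf{C}^-$. For $x \in \D_+$: the setup contribution is at most $\mathsf{S}^+$; the update contribution is at most $\mathsf{U}^+ \cdot R_{\mathrm{eff}}(G, r; \sigma - \nu_x) = \mathsf{U}^+ R_{\mathrm{eff}}(P; \sigma - \nu_x)$ by linearity of effective resistance in the resistance function, which by \cref{lem:fast-forwarding}(1) is at most $t\mathsf{U}^+ R_{\mathrm{eff}}(P^t; \sigma - \nu_x) \leq tR\,\mathsf{U}^+$; and the checking contribution is at most $\mathsf{C}^+ \sum_v (\nu_x)_v^2/\pi_v$, which combining \cref{lem:fraction-vs-eff-resistance} with the trivial Cauchy--Schwarz lower bound $\sum_v (\nu_x)_v^2/\pi_v \geq (\sum_v (\nu_x)_v)^2 = 1$ is at most $(1+2R)\mathsf{C}^+$.

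To obtain the claimed \emph{sum} of three contributions rather than the square root of a product of two sums, I would further rescale the edge weights of the setup, update, and checking hyperedges in the underlying generalized graph composition by three constants $\alpha_S,\alpha_U,\alpha_C > 0$. Each $\alpha$ multiplies its class's contribution to $w_+$ by $\alpha$ and divides its contribution to $w_-$ by $\alpha$. Writing the bounds as $w_+ \leq \alpha_S A_S + \alpha_U A_U + \alpha_C A_C$ and $w_- \leq B_S/\alpha_S + B_U/\alpha_U + B_C/\alpha_C$ with $A_S = \mathsf{S}^+$, $A_U = tR\,\mathsf{U}^+$, $A_C = (1+2R)\mathsf{C}^+$, $B_S = \mathsf{S}^-$, $B_U = \tfrac12\mathsf{U}^-$, $B_C = \mathsf{C}^-$, Cauchy--Schwarz gives $w_+ w_- \geq \bigl(\sum_i \sqrt{A_i B_i}\bigr)^2$ with equality for $\alpha_i \propto \sqrt{B_i/A_i}$. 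A final application of \cref{lem:rescale-state-reflection-problem} balances positive and negative witness maxima so that the span-program complexity from \cref{thm:hyperedge-from-span-program} becomes $\sqrt{w_+(\mathcal{P})\,w_-(\mathcal{P})} = \sqrt{\mathsf{S}^+\mathsf{S}^-} + \sqrt{tR\,\mathsf{U}^+\mathsf{U}^-/2} + \sqrt{(1+2R)\mathsf{C}^+\mathsf{C}^-} \in O(\mathsf{S} + \sqrt{tR}\,\mathsf{U} + \sqrt{1+R}\,\mathsf{C})$.

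The main obstacle is that \cref{lem:fraction-vs-eff-resistance} requires $\supp(\sigma)$ and $\supp(\nu_x)$ to be disjoint, whereas in general $\supp(\sigma) \cap M_x$ may be non-empty. I would handle this by restricting $\nu_x$ to $M_x \setminus \supp(\sigma)$ (preserving the bound on the effective resistance up to constants), or by separating the case $\sigma(M_x) = \Omega(1)$, where detection is achieved directly by a single setup-plus-check step whose cost is already absorbed by the $\mathsf{S}+\mathsf{C}$ terms in the final bound.
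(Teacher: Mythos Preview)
Your overall strategy matches the paper's: apply \cref{thm:quantum-walk-detection} with $\mu_x=\sigma$ and the $P^t$-optimal $\nu_x$, bound the setup, update, and check contributions separately, invoke \cref{lem:fast-forwarding} on the update term and \cref{lem:fraction-vs-eff-resistance} on the check term, and then rescale to convert the product of the two witness maxima into a sum of square roots. Your per-class rescaling by $\alpha_S,\alpha_U,\alpha_C$ is equivalent to the paper's upfront normalisation of each subroutine so that $\mathsf{M}^-=1$ and $\mathsf{M}^+=\mathsf{M}^2$.

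The genuine gap is exactly the one you flag at the end, and neither proposed fix closes it. Restricting $\nu_x$ to $M_x\setminus\supp(\sigma)$ can leave the empty set (take $\sigma$ of full support, which is allowed), and even when non-empty there is no reason $R_{\mathrm{eff}}(P^t;\sigma\leftrightarrow M_x\setminus\supp(\sigma))$ stays within a constant factor of $R$: the optimal $\nu_x$ may place essentially all its mass inside $\supp(\sigma)\cap M_x$. The case split on $\sigma(M_x)$ does not help either, since even when $\sigma(M_x)$ is arbitrarily small the quantity $\sum_v(\nu_x)_v^2/\pi_v$ can be unbounded (concentrate $\nu_x$ on a vertex of $M_x\cap\supp(\sigma)$ with tiny $\pi_v$), and the ``single setup-plus-check'' branch would require constructing and combining a second span program, which you have not done.

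The paper's fix is to take $\tau=(\sigma+\pi)/2$ rather than $\tau=\pi$. Having $\sigma_v$ in the denominator of the checking term lets one write $\nu_x=\sigma|_{M_x}+\nu_x'$ and bound $\sum_{v\in M_x}(\nu_x)_v^2/(\sigma_v+\pi_v)$ in two pieces: the part carrying a factor $\sigma_v$ contributes $O(1)$ after using $\sigma_v/(\sigma_v+\pi_v)\le 1$, while the remaining $\sum_{v\in M_x}(\nu_x')_v^2/\pi_v$ is controlled by \cref{lem:fraction-vs-eff-resistance}, which now applies because $\sigma-\nu_x=\sigma|_{V\setminus M_x}-\nu_x'$ is a difference of vectors with \emph{disjoint} supports. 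That single change in the choice of $\tau$ is the idea your argument is missing.
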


    \begin{proof}
        First, we rescale all the hyperedges using \cref{lem:rescale-state-reflection-problem}, such that we can without loss of generality assume that $\mathsf{M}^- = 1$ and $\mathsf{M}^+ = \mathsf{M}^2$, for all $\mathsf{M} \in \{\mathsf{S},\mathsf{U},\mathsf{C}\}$.

        Now, we use \cref{thm:quantum-walk-detection} to construct a generalized graph composition for the quantum walk search problem. We choose $\sigma \in \Delta_V$, and we let $\tau = (\sigma + \pi)/2 \in \Delta_V$\footnote{The idea to take the average between the stationary distribution and the target distribution was already used before in \cite{bencivenga2020sampling}. Here, we rehash that idea for a different purpose: we want the routine to detect a marked vertex if the initial sample already finds it with high probability. Adding $\sigma_v$ in the denominator prevents the effective resistance from blowing up in that case.}. Observe from \cref{thm:pi-unique} that $\tau$ has full support. For all $x \in \D$ for which $M_x \neq \varnothing$, we define $\mu_x = \sigma_x$, and we let $\nu_x$ be the net-flow with support in $M_x$ that minimizes the effective resistance between $\mu_x$ and $\nu_x$ in the graph induced by $P^t$. With \cref{thm:quantum-walk-detection}, we obtain a span program $\mathcal{P}$ that solves the detection version of the quantum walk search problem.

        It remains to compute the witness sizes. For a negative instance $x \in \D^-$, we have
        \[w_-(x,\mathcal{P}) \leq \sum_{v \in \supp(\sigma)} \sigma_v(\mathsf{S}_v)_x^- + \sum_{e \in E} \frac{(\mathsf{U}_e)_x^-}{r_e} + \sum_{v \in V} \tau_v(\mathsf{C}_{v,D_{v,x}})_x^- \leq 1 + 1 + 1 \in O(1),\]
        where we use that all of the negative witness complexities are upper bounded by $1$, that $\sigma$ and $\tau$ are probability distributions, and that the sum of the resistances is at most a constant because it corresponds to a Markov process $M = (V,P)$, as in \cref{thm:random-walk-characterization}.

        For a positive instance, $x \in \D_+$, we upper bound the positive witness complexity as
        \begin{align*}
            w_+(x,\mathcal{P}) &\leq \sum_{v \in supp(\sigma)} \frac{(\mu_x)_v^2}{\sigma_v}(\mathsf{S}_v)_x^+ + R_{\mathrm{eff}}(G, r \circ (\mathsf{U}_e)_x^+; \mu_x - \nu_x) + \sum_{v \in M_x} \frac{(\nu_x)_v^2}{\tau_v}(\mathsf{C}_{v,D_{v,x}})_x^+ \\
            &\leq \sum_{v \in \supp(\sigma)} \sigma_v\mathsf{S}^2 + R_{\mathrm{eff}}(G, r; \mu_x - \nu_x)\mathsf{U}^2 + 2\sum_{v \in M_x} \frac{(\nu_x)_v^2}{\sigma_v + \pi_v}\mathsf{C}^2.
        \end{align*}

        We upper bound the second and third term separately. For the second term, we use \cref{lem:fast-forwarding} to obtain that
        \[R_{\mathrm{eff}}(G, r; \mu_x - \nu_x)\mathsf{U}^2 = R_{\mathrm{eff}}(P; \mu_x - \nu_t)\mathsf{U}^2 \leq t \cdot R_{\mathrm{eff}}(P^t; \mu_x - \nu_x)\mathsf{U}^2 = t \cdot R_{\mathrm{eff}}(P^t; \sigma \leftrightarrow M_x)\mathsf{U}^2 \leq tR\mathsf{U}^2.\]

        For the third term, we define $\nu_x' := \nu_x - \sigma|_{M_x}$. Now, we find that $\sigma - \nu_x = \sigma|_{V \setminus M_x} - \nu_x'$, and moreover
        \[\sum_{v \in M_x} (\nu_x')_v = \sum_{v \in M_x} (\nu_x)_v - \sum_{v \in M_x} \sigma_v = 1 - \sum_{v \in M_x} \sigma_v = \sum_{v \in V \setminus M_x} \sigma_v.\]
        Combining this with \cref{lem:fraction-vs-eff-resistance}, we obtain that
        \begin{align*}
            \sum_{v \in M_x} \frac{(\nu_x)_v^2}{\sigma_v + \pi_v} &= \sum_{v \in M_x} \frac{(\sigma_v + (\nu_x')_v)^2}{\sigma_v + \pi_v} = \sum_{v \in M_x} \frac{\sigma_v(\sigma_v + 2(\nu_x')_v)}{\sigma_v + \pi_v} + \sum_{v \in M_x} \frac{(\nu_x')_v^2}{\sigma_v + \pi_v} \\
            &\leq \sum_{v \in M_x} (\sigma_v + 2(\nu_x')_v) + \sum_{v \in M_x} \frac{(\nu_x')_v^2}{\pi_v} = \sum_{v \in M_x} \sigma_v + 2\sum_{v \in V \setminus M_x} \sigma_v + \sum_{v \in M_x} \frac{(\nu_x')_v^2}{\pi_v} \\
            &\leq 2 + 2R_{\text{eff}}(P^t; \sigma|_{V \setminus M_x} - \nu_x') = 2 + 2R_{\text{eff}}(P^t; \sigma - \nu_x) = 2(1 + R_{\mathrm{eff}}(P^t; \sigma \leftrightarrow M_x)) \leq 2(1+R).
        \end{align*}
        Putting everything together, we obtain that
        \[w_+(x,\mathcal{P}) \in O\left(\mathsf{S}^2 + tR\mathsf{U}^2 + (1+R)\mathsf{C}^2\right),\]
        which using that for all $a,b > 0$, $\sqrt{a+b} \in O(\sqrt{a} + \sqrt{b})$, implies the result.
    \end{proof}

    We remark here that this result relates to \cite[Theorem~1/Theorem~13]{apers2021unified}, as well as \cite[Theorem~2/Theorem~27]{apers2021unified}. The result we present here is weaker, in the sense that it only solves the detection version of the quantum walk search problem, whereas the results from Apers, Gily\'en and Jeffery solve both the detection and the finding version of the problem simultaneously. On the other hand, it is stronger in the sense that it removes all logarithmic factors from the query complexity bound. On top of that, as we will see in the next section, our result can be extended to allow for amortization of the costs too.

    We observe that the complexity statement in our theorem differs slightly from the one in \cite{apers2021unified}, as the prefactor in front of the checking cost, $\mathsf{C}$, is $\sqrt{1+R}$ for us, whereas it is $\sqrt{R}$ in \cite{apers2021unified}. This is because Apers, Gily\'en and Jeffery et al.\ assume that the effective resistance between $\mu_x$ and $\nu_x$ is always at least a constant, i.e., $R \in \Omega(1)$. However, it is possible to construct situations where this effective resistance is much smaller than one, i.e., $R \ll 1$, in which case we still need an additive overhead of $\mathsf{C}$ in the complexity statement.

    Curiously, the construction of \cite{apers2021unified} requires a choice of $t$ in the procedural description of the algorithm. Here, though, the parameter $t$ only shows up in the analysis of the algorithm, but not in its implementation. Thus, there is an interesting qualitative improvement we make here, i.e., we give a single algorithm with a query complexity that beats any choice for the parameter $t$ in the construction from Apers, Gily\'en and Jeffery.

    Finally, we remark that this result can be ported to the finding case as well, in the same settings as considered in the previous section. This also shaves off some of the log-factors in the finding results in \cite{apers2021unified}. However, the construction in \cite{apers2021unified} also allows for solving the finding version of the problem without knowing the fraction of marked vertices beforehand, so we leave it for future research to recover this setting from the graph composition construction as well.

    \subsection{Relation to quantum subroutine composition~\cite{jeffery2022quantum}}

    In this subsection, we compare the results we obtain in this work to those obtained by Jeffery~\cite{jeffery2022quantum}. We start by obtaining a version of the variable-time quantum walk search span program, where all the weight is put on the negative witness complexity.

    \begin{theorem}[Variable-query quantum walk search -- detection version]
        \label{thm:var-time-quantum-walk-detection}
        Suppose we have the instance of quantum walk search $(G, r, x \mapsto M_x, (v,x) \mapsto D_{v,x}, \{S_v\}_{v \in V}, \{U_e\}_{e \in E}, \{C_{v,D}\}_{v \in V, D \in \mathcal{S}})$. Let $M = (V,P)$ be the corresponding random walk on $G$, with stationary distribution $\pi \in \Delta_V$. Let $\sigma,\tau \in \Delta_V$ be distributions with full support. We write, for all $v \in V$, $e \in E$ and $D \in \mathcal{S}$,
        \[\mathsf{S}_v^{\pm} := \max_{x \in \D_{\pm}} (\mathsf{S}_v)_x^{\pm}, \qquad \mathsf{U}_e^{\pm} := \max_{x \in \D_{\pm}} (\mathsf{U}_e)_x^{\pm}, \qquad \text{and} \qquad \mathsf{C}_{v,D}^{\pm} := \max_{x \in \D_{\pm}} (\mathsf{C}_{v,D})_x^{\pm}.\]
        Next, we take two different definitions of several quantities:
        \begin{enumerate}[nosep]
            \item Let $R := \max_{x \in \D_+} R_{\mathrm{eff}}(P; \sigma \leftrightarrow M_x)$, and for all $x \in \D_+$, let $\nu_x \in \Delta_{M_x}$ be the target distribution that minimizes the flow between $\sigma$ and $M_x$, i.e., $R_{\mathrm{eff}}(P; \sigma \leftrightarrow M_x) = R_{\mathrm{eff}}(P; \sigma - \nu_x)$. For all $x \in \D_+$, let $\varepsilon_x := \E_{v \sim \tau}[(\nu_x)_v^2/\tau_v]^{-1}$, and let $\varepsilon := \min_{x \in \D_+} \varepsilon_x$.
            \item For all $x \in \D_+$, let $\varepsilon_x = \P_{v \sim \tau} [v \in M_x]$, and $\nu_x := (\tau|_{M_x})/\varepsilon_x \in \Delta_{M_x}$. Next, let $\varepsilon = \min_{x \in \D_+} \varepsilon_x$, and $R := \max_{x \in \D_+} R_{\mathrm{eff}}(P; \mu_x - \nu_x)$.
        \end{enumerate}
        In both cases, we can solve the detection version of the quantum walk search problem with a span program $\mathcal{P}$, with complexities
        \begin{align*}
            w_+(x,\mathcal{P}) &\in O\left(\frac{\underset{v \sim \sigma}{\E}[(\mathsf{S}_v)_x^+]}{\underset{v \sim \sigma}{\E}[\mathsf{S}_v^+]} + \frac{R_{\mathrm{eff}}\left(G, r \circ \frac{(\mathsf{U}_{\cdot})_x^+}{\mathsf{U}_{\cdot}^+}; \sigma - \nu_x\right)}{R} + \underset{v \sim \tau}{\E} \left[\frac{(\nu_x)_v^2}{\tau_v} \cdot \frac{\varepsilon(\mathsf{C}_{v,D_{v,x}})_x^+}{\mathsf{C}_{v,D_{v,x}}^+}\right] \right) \subseteq O(1), \\
            w_-(x,\mathcal{P}) &\in O\left(\underset{v \sim \sigma}{\E}\left[\mathsf{S}_v^+\right] \cdot \underset{v \sim \sigma}{\E} \left[(\mathsf{S}_v)_x^-\right] + R \cdot \underset{\substack{v \sim \pi \\ w \sim P_{v\cdot}}}{\E} \left[\mathsf{U}^+_{vw}(\mathsf{U}_{vw})_x^-\right] + \frac{1}{\varepsilon} \cdot \underset{v \sim \tau}{\E} \left[\mathsf{C}_{v,D_{v,x}}^+(\mathsf{C}_{v,D_{v,x}})_x^-\right]\right),
        \end{align*}
        for $x \in \D_+$ and $x \in \D_-$, respectively.
    \end{theorem}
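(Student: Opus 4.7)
The plan is to build on \cref{thm:quantum-walk-detection} by amortizing each of the setup, update, and check contributions using the rescaling flexibility of \cref{lem:rescale-state-reflection-problem}. Taking the same graph composition as in the proof of \cref{thm:quantum-walk-detection}, I would change the weights so that the worst-case positive witness sizes get absorbed into the negative side: $w_{S_v} = 1/(\sigma_v\, \E_{v \sim \sigma}[\mathsf{S}_v^+])$, $w_{U_e} = r_e/(R\, \mathsf{U}_e^+)$, and $w_{C_{v,D}} = \varepsilon/(\tau_v\, \mathsf{C}_{v,D}^+)$. Setting $\mu_x := \sigma$ and taking $\nu_x$ as specified in each of the two cases of the theorem, the positive flow and negative potential assignments from the proof of \cref{thm:quantum-walk-detection} carry over directly and give three additive contributions on each side.

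For the positive witness at $x \in \D_+$, I would bound the three resulting terms separately. The setup term $\E_{v \sim \sigma}[(\mathsf{S}_v)_x^+]/\E_{v \sim \sigma}[\mathsf{S}_v^+]$ is at most $1$ because $(\mathsf{S}_v)_x^+ \leq \mathsf{S}_v^+$. The update term $R_{\mathrm{eff}}(G, r \circ ((\mathsf{U}_\cdot)_x^+/\mathsf{U}_\cdot^+); \sigma - \nu_x)/R$ is at most $1$ by Rayleigh monotonicity (as $(\mathsf{U}_e)_x^+/\mathsf{U}_e^+ \leq 1$) together with $R_{\mathrm{eff}}(G, r; \sigma - \nu_x) = R_{\mathrm{eff}}(P; \sigma - \nu_x) \leq R$. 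The check term is $\varepsilon \sum_v (\nu_x)_v^2 (\mathsf{C}_{v,D_{v,x}})_x^+/\mathsf{C}_{v,D_{v,x}}^+$; in Case~1 this is at most $\varepsilon/\varepsilon_x \leq 1$ since $\sum_v (\nu_x)_v^2 = 1/\varepsilon_x$ by the defining expectation, and in Case~2 it reduces to $\sum_{v \in M_x} \tau_v^2/\varepsilon \leq \sum_{v \in M_x} \tau_v/\varepsilon = 1$ using $\tau_v \leq 1$.

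For the negative witness at $x \in \D_-$, I would unfold the three terms from \cref{thm:quantum-walk-detection} with the new weights. The setup term becomes $\sum_v \sigma_v\, \E_{v \sim \sigma}[\mathsf{S}_v^+]\, (\mathsf{S}_v)_x^- = \E_{v \sim \sigma}[\mathsf{S}_v^+] \cdot \E_{v \sim \sigma}[(\mathsf{S}_v)_x^-]$. The update term $\sum_{e \in E} R\, \mathsf{U}_e^+ (\mathsf{U}_e)_x^-/r_e$ rewrites as $\tfrac{R}{2}\, \E_{v \sim \pi,\, w \sim P_{v,\cdot}}[\mathsf{U}_{vw}^+ (\mathsf{U}_{vw})_x^-]$ via the identity $\pi_v P_{vw} = R_\pi/r_e$ with $R_\pi = 1$ (which follows from $\sum_v 1/r_v = 2\sum_e 1/r_e = 1$, using the normalization $\sum_e 1/r_e = 1/2$). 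The check term evaluates to $(1/\varepsilon)\, \E_{v \sim \tau}[\mathsf{C}_{v,D_{v,x}}^+ (\mathsf{C}_{v,D_{v,x}})_x^-]$ on the nose; all constant factors are absorbed into the $O(\cdot)$.

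The hard part will be verifying that these rescalings yield a valid instance of generalized graph composition: \cref{def:generalized-graph-composition} requires the potential function to be uniquely defined at every vertex, which restricts how \cref{lem:rescale-state-reflection-problem} may be applied asymmetrically across adjacent hyperedges. Using the earlier observation that hyperedge potentials are only defined up to additive constants, the setup, update, and check potentials can be shifted so that all potentials at internal vertices become simultaneously consistent; the per-hyperedge freedom in the positive rescaling $\alpha_+^e$ then suffices to tune positive witnesses independently, while a single uniform $\alpha_-$ handles the negative side.
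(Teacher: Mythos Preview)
Your approach is exactly the paper's: rescale each subroutine's witness sizes by the factors you list and then read off the positive and negative complexities from \cref{thm:quantum-walk-detection}; the paper does this in a single sentence by invoking \cref{lem:rescale-state-reflection-problem} on each of $S_v$, $U_e$, $C_{v,D}$ and plugging the rescaled witness sizes straight into \cref{thm:quantum-walk-detection} as a black box. Your last paragraph's concern about potential-function consistency is unnecessary and your proposed fix (per-edge $\alpha_+^e$ but a single global $\alpha_-$) would not actually reproduce your own weights: the clean route, and the one the paper takes, is to rescale each routine \emph{before} handing it to \cref{thm:quantum-walk-detection}, so that the flow and potential assignments inside that theorem's proof are completely untouched and \cref{def:generalized-graph-composition}'s consistency conditions never need re-verification.
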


    \begin{proof}
        From \cref{lem:rescale-state-reflection-problem}, we can rescale the setup routines so that for every $v \in V$, $(\mathsf{S}'_v)_x^{\pm} = (\mathsf{S}_v)_x^{\pm} \cdot \E_{v \sim \sigma}[\mathsf{S}_v^+]^{\mp1}$. Similarly for all $e \in E$, we rescale $U_e$ such that $(\mathsf{U}_e')_x^{\pm} = (\mathsf{U}_e)_x^{\pm} \cdot (R\mathsf{U}_e^+)^{\mp1}$, and for all $v \in V$ and $D \in \mathcal{S}$, we rescale $C_{v,D}$ such that $(\mathsf{C}_{v,D}')_x^{\pm} = (\mathsf{C}_{v,D})_x^{\pm} \cdot (\mathsf{C}_{v,D_{v,x}}^+/\varepsilon)^{\mp1}$. Then, the resulting complexities follow from \cref{thm:quantum-walk-detection}, with the rescaled setup, update and checking routines.
    \end{proof}

    Using the above theorem, we can recover the results from \cite[Corollary~4.2]{jeffery2022quantum}. The idea is that if we have quantum Las Vegas algorithms for the setup, update, and checking routines, then we can convert them into hyperedges using \cref{thm:las-vegas-algorithm,thm:hyperedge-from-database-function}. If we know the expected number of queries beforehand, we simply use $\alpha_t = 1$ for all $t \in \N_0$ in \cref{thm:las-vegas-algorithm}, which yields $\mathsf{W}_x^{\pm} = \E[T_x]$ for every Las Vegas routine with stopping time $T_x$ on input $x \in \D$. Since we now know all these numbers $\mathsf{W}_x^{\pm}$, we can use them in the construction in \cref{thm:var-time-quantum-walk-detection}. This yields all the complexities from \cite[Corollary~4.2]{jeffery2022quantum} for known query complexities.

    On the other hand, if we only know a crude upper bound $T$ on their query complexities, we use $\alpha_t = 1/(t+1)$, for all $t \in \N_0$ in \cref{thm:las-vegas-algorithm}. We obtain $\mathsf{W}_x^+ \in O(\log(T))$, and $\mathsf{W}_x^- \in O(\E[T_x^2])$, for every Las Vegas routine with stopping time $T_x$ on input $x \in \D$. Since we only need to know upper bounds on $\mathsf{W}_x^+$ in \cref{thm:var-time-quantum-walk-detection}, this suffices for our construction. This yields all the complexities from \cite[Corollary~4.2]{jeffery2022quantum} for unknown query complexities.

    Note that we improve by a factor of $\log(T)$ over \cite[Corollary~4.2]{jeffery2022quantum} with our approach. The reason is because Jeffery also analyzes the time complexity of the construction, which requires an extra $\log(T)$-overhead. We show here that this overhead is not required if one only cares about queries.

    We can now combine the previous theorem with \cref{lem:fast-forwarding,lem:fraction-vs-eff-resistance}, to obtain a variable-query result for the MNRS-framework. This improves over \cite[Theorem~4.3]{jeffery2022quantum}.

    \begin{corollary}[Variable-query MNRS]
        \label{thm:mnrs}
        Suppose we have the instance of quantum walk search $(G, r, x \mapsto M_x, (v,x) \mapsto D_{v,x}, \{S_v\}_{v \in V}, \{U_e\}_{e \in E}, \{C_{v,D}\}_{v \in V, D \in \mathcal{S}})$. Let $\pi$ be the stationary distribution of the random walk $P$ on $G$, and let $\delta$ be its spectral gap. For all $x \in \D_+$, let $\varepsilon_x := \P_{v \sim \pi}[v \in M_x]$, and we write $\varepsilon := \min_{x \in \D_+} \varepsilon_x$. Then, we can solve the detection version of the quantum walk search problem with a span program $\mathcal{P}$, whose complexities satisfy
        \begin{align*}
            w_+(x,\mathcal{P}) &\in O(1), \\
            w_-(x,\mathcal{P}) &\in O\left(\underset{v \sim \pi}{\E}\left[\mathsf{S}_v^+\right] \cdot \underset{v \sim \pi}{\E} \left[(\mathsf{S}_v)_x^-\right] + \frac{1}{\varepsilon} \left(\frac{1}{\delta} \cdot \underset{\substack{v \sim \pi \\ w \sim P_{v\cdot}}}{\E} \left[\mathsf{U}^+_{vw}(\mathsf{U}_{vw})_x^-\right] + \underset{v \sim \pi}{\E} \left[\mathsf{C}_{v,D_{v,x}}^+(\mathsf{C}_{v,D_{v,x}})_x^-\right]\right)\right),
        \end{align*}
        for $x \in \D_+$ and $x \in \D_-$, respectively.
    \end{corollary}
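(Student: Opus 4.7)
The plan is to derive this corollary as a direct specialization of \cref{thm:var-time-quantum-walk-detection} using case~2 of that theorem, with $\sigma = \tau = \pi$ (the stationary distribution). With this choice, the definition $\varepsilon_x := \P_{v \sim \tau}[v \in M_x]$ from case~2 coincides exactly with $\P_{v \sim \pi}[v \in M_x]$ in the corollary statement, and $\nu_x = (\pi|_{M_x})/\varepsilon_x$ is concentrated on $M_x$ with appropriate weights. So the only real work is to translate the generic effective-resistance factor $R = \max_{x \in \D_+} R_{\mathrm{eff}}(P; \pi - \nu_x)$ into the spectral-gap expression $1/(\varepsilon\delta)$ that appears in the final bound.

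To carry out this translation, I would apply item~2 of \cref{lem:fast-forwarding} to the flow $\xi = \pi - \nu_x$, which has vanishing sum since both $\pi$ and $\nu_x$ are probability distributions. With $D = \diag(\pi)$ and $\widetilde{\xi} = D^{-1/2}\xi$, a direct computation gives
\[\|\widetilde{\xi}\|_2^2 = \sum_{v \in V} \frac{(\pi_v - (\nu_x)_v)^2}{\pi_v} = 1 - 2\sum_v (\nu_x)_v + \sum_v \frac{(\nu_x)_v^2}{\pi_v} = \sum_{v \in M_x} \frac{\pi_v}{\varepsilon_x^2} - 1 = \frac{1}{\varepsilon_x} - 1 \leq \frac{1}{\varepsilon},\]
where the key step uses $(\nu_x)_v = \pi_v/\varepsilon_x$ on $M_x$, so $\sum_v (\nu_x)_v^2/\pi_v = \varepsilon_x/\varepsilon_x^2 = 1/\varepsilon_x$. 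Then \cref{lem:fast-forwarding} yields $R_{\mathrm{eff}}(P; \pi - \nu_x) \leq \|\widetilde{\xi}\|_2^2/\delta \leq 1/(\varepsilon\delta)$, so we may take $R = 1/(\varepsilon\delta)$ in \cref{thm:var-time-quantum-walk-detection}.

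Plugging $\sigma = \tau = \pi$ and $R = 1/(\varepsilon\delta)$ into the negative-witness bound of \cref{thm:var-time-quantum-walk-detection} gives precisely
\[w_-(x,\mathcal{P}) \in O\!\left(\underset{v \sim \pi}{\E}[\mathsf{S}_v^+] \cdot \underset{v \sim \pi}{\E}[(\mathsf{S}_v)_x^-] + \frac{1}{\varepsilon\delta}\underset{\substack{v \sim \pi \\ w \sim P_{v,\cdot}}}{\E}\!\left[\mathsf{U}^+_{vw}(\mathsf{U}_{vw})_x^-\right] + \frac{1}{\varepsilon}\underset{v \sim \pi}{\E}\!\left[\mathsf{C}_{v,D_{v,x}}^+(\mathsf{C}_{v,D_{v,x}})_x^-\right]\right),\]
which is the stated bound after factoring $1/\varepsilon$ from the second and third terms. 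The positive-witness bound $w_+(x,\mathcal{P}) \in O(1)$ is inherited verbatim from \cref{thm:var-time-quantum-walk-detection}. The only step that requires any real care is the computation of $\|\widetilde{\xi}\|_2^2$; everything else is plug-and-chug. It is worth noting that the removal of the $\sqrt{\log(1/\pi_{\min})}$ overhead relative to \cite[Corollary~4.3]{jeffery2022quantum} comes entirely from using the linear-algebraic spectral-gap bound in \cref{lem:fast-forwarding} instead of the combinatorial hitting-time arguments underlying Jeffery's analysis, so there is no additional machinery beyond what has already been set up.
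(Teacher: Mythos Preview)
Your proposal is correct and follows essentially the same approach as the paper: apply case~2 of \cref{thm:var-time-quantum-walk-detection} with $\sigma=\tau=\pi$, then use item~2 of \cref{lem:fast-forwarding} together with the computation $\|\widetilde{\xi}\|_2^2 = 1/\varepsilon_x - 1$ to bound $R$ by $1/(\varepsilon\delta)$. The only cosmetic difference is that you expand $\sum_v(\pi_v-(\nu_x)_v)^2/\pi_v$ via the cross term, whereas the paper splits the sum over $M_x$ and $V\setminus M_x$; both reach the same value $1/\varepsilon_x-1$.
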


    \begin{proof}
        We use the second claim from \cref{thm:var-time-quantum-walk-detection}, with $\sigma = \tau = \pi$. It then remains to prove that
        \[R_{\mathrm{eff}}\left(G,r;\pi - \frac{\pi|_{M_x}}{\varepsilon_x}\right) \leq \frac{1}{\delta\varepsilon_x}.\]
        To that end, we write $\xi := \pi - \pi|_{M_x}/\varepsilon_x$ and $D = \diag(\pi)$. We use the second claim from \cref{lem:fast-forwarding} and observe that
        \begin{align*}
            R_{\mathrm{eff}}\left(G, r; \pi - \frac{\pi|_{M_x}}{\varepsilon_x}\right) &\leq \frac{\norm{D^{-1/2}\xi}_2^2}{\delta} = \frac{1}{\delta} \sum_{v \in V} \frac{\xi_v^2}{\pi_v} = \frac{1}{\delta} \left[\sum_{v \in V \setminus M_x} \pi_v + \sum_{v \in M_x} \left(\frac{1}{\varepsilon_x} - 1\right)^2\pi_v\right] \\
            &= \frac{1}{\delta}\left[1 - \frac{2}{\varepsilon_x}\sum_{v \in V \setminus M_x} \pi_v + \frac{1}{\varepsilon_x^2}\sum_{v \in M_x} \pi_v\right] = \frac{1}{\delta}\left[\frac{1}{\varepsilon_x} - 1\right] \leq \frac{1}{\delta\varepsilon_x}.\qedhere
        \end{align*}
    \end{proof}

    The above analysis improves over \cite{jeffery2022quantum} by getting rid of the $\sqrt{\log(1/\pi_{\min})}$ factor. This is significant, because one of the most prominent usecases of the MNRS-framework is implementing quantum walk search over a Johnson graph $J(n,r)$. Since the number of vertices of such a Johnson graph is $\binom{n}{r}$, and the stationary distribution is uniform, we obtain in this setting that $\log(1/\pi_{\min}) = \log\binom{n}{r} \sim r\log(n)$, which introduces polynomial overhead in the cases where we choose $r$ polynomial in $n$. Jeffery already conjectured just before \cite[Corollary~4.3]{jeffery2022quantum} that this overhead is unnecessary, and we confirm that conjecture here.

    Furthermore, the previous theorem is a generalization of \cite[Theorem~3]{carette2020extended}. There, the authors provide a theorem that amortizes the costs of the setup, update, and checking routines, in the same way as in \cref{thm:mnrs}. However, their result only works when the quantum walk is performed on a Johnson graph, and the objective routines are given by learning graphs. We generalize to quantum walks on arbitrary graphs, and with the input routines being represented as feasible solutions to the adversary bound, rather than learning graphs.

    Finally, we remark that we can port these results in exactly the same way as in the general case to the finding case, i.e., if we are promised that the marked vertex is unique, or if we know the fraction of marked vertices a priori. To the best of our knowledge, this is the first result of its kind that is able to solve the finding version of the quantum random walk search question on a general graph, while simultaneously allowing for amortization of its subroutines. We leave it for future work to figure out if one can also solve the finding question in the more generic sense, i.e., without knowing the fraction of marked vertices beforehand.

    \section*{Acknowledgements}

    I would like to thank Simon Apers, Zeph Landau, Galina Pass, Stacey Jeffery and Sebastian Zur for fruitful discussions. I am supported by a Simons-CIQC postdoctoral fellowship through NSF QLCI Grant No. 2016245.

    \bibliographystyle{alpha}
    \bibliography{references}

\newcommand{\etalchar}[1]{$^{#1}$}
\begin{thebibliography}{KMOR16}

\bibitem[AGJ21]{apers2021unified}
Simon Apers, Andr{\'a}s Gily{\'e}n, and Stacey Jeffery.
\newblock A unified framework of quantum walk search.
\newblock In {\em 38th International Symposium on Theoretical Aspects of
  Computer Science}, 2021.

\bibitem[AKV23]{ambainis2023improved}
Andris Ambainis, Martins Kokainis, and Jevg{\=e}nijs Vihrovs.
\newblock Improved algorithm and lower bound for variable time quantum search.
\newblock In {\em 18th Conference on the Theory of Quantum Computation,
  Communication and Cryptography (TQC 2023)}, pages 7--1. Schloss
  Dagstuhl--Leibniz-Zentrum f{\"u}r Informatik, 2023.

\bibitem[Amb07]{ambainis2007quantum}
Andris Ambainis.
\newblock Quantum walk algorithm for element distinctness.
\newblock {\em SIAM Journal on Computing}, 37(1):210--239, 2007.

\bibitem[Bel12a]{belovs2012learning}
Aleksandrs Belovs.
\newblock Learning-graph-based quantum algorithm for k-distinctness.
\newblock In {\em 2012 IEEE 53rd Annual Symposium on Foundations of Computer
  Science}, pages 207--216. IEEE, 2012.

\bibitem[Bel12b]{belovs2012span}
Aleksandrs Belovs.
\newblock Span programs for functions with constant-sized 1-certificates.
\newblock In {\em Proceedings of the forty-fourth annual ACM symposium on
  Theory of computing}, pages 77--84, 2012.

\bibitem[Bel13]{belovs2013quantum}
Aleksandrs Belovs.
\newblock Quantum walks and electric networks.
\newblock {\em arXiv preprint arXiv:1302.3143}, 2013.

\bibitem[Bel15]{belovs2015variations}
Aleksandrs Belovs.
\newblock Variations on quantum adversary.
\newblock {\em arXiv preprint arXiv:1504.06943}, 2015.

\bibitem[Bel24]{belovs2024global}
Aleksandrs Belovs.
\newblock Global phase helps in quantum search: Yet another look at the welded
  tree problem.
\newblock {\em arXiv preprint arXiv:2404.19476}, 2024.

\bibitem[Ben20]{bencivenga2020sampling}
Dante Bencivenga.
\newblock Sampling using controlled quantum walks.
\newblock 2020.

\bibitem[BHMT02]{brassard2002quantum}
Gilles Brassard, Peter H{\o}yer, Michele Mosca, and Alain Tapp.
\newblock Quantum amplitude amplification and estimation.
\newblock In {\em Quantum computation and information ({W}ashington, {DC},
  2000)}, volume 305 of {\em Contemp. Math.}, pages 53--74. Amer. Math. Soc.,
  Providence, RI, 2002.

\bibitem[BJY24]{belovs2024taming}
Aleksandrs Belovs, Stacey Jeffery, and Duyal Yolcu.
\newblock Taming quantum time complexity.
\newblock {\em Quantum}, 8:1444, 2024.

\bibitem[BR12]{belovs-reichardt2012span}
Aleksandrs Belovs and Ben~W Reichardt.
\newblock Span programs and quantum algorithms for st-connectivity and claw
  detection.
\newblock In {\em European Symposium on Algorithms}, pages 193--204. Springer,
  2012.

\bibitem[BT20]{beigi2020quantum}
Salman Beigi and Leila Taghavi.
\newblock Quantum speedup based on classical decision trees.
\newblock {\em Quantum}, 4:241, 2020.

\bibitem[BV97]{bernstein1997quantum}
Ethan Bernstein and Umesh Vazirani.
\newblock Quantum complexity theory.
\newblock {\em SIAM journal on computing (Print)}, 26(5):1411--1473, 1997.

\bibitem[BY23]{belovs2023one}
Aleksandrs Belovs and Duyal Yolcu.
\newblock One-way ticket to las vegas and the quantum adversary.
\newblock {\em arXiv preprint arXiv:2301.02003}, 2023.

\bibitem[CCD{\etalchar{+}}03]{childs2003exponential}
Andrew~M Childs, Richard Cleve, Enrico Deotto, Edward Farhi, Sam Gutmann, and
  Daniel~A Spielman.
\newblock Exponential algorithmic speedup by a quantum walk.
\newblock In {\em Proceedings of the thirty-fifth annual ACM symposium on
  Theory of computing}, pages 59--68, 2003.

\bibitem[CKK{\etalchar{+}}25]{childs2025quantum}
Andrew Childs, Robin Kothari, Matt {Kovacs-Deak}, Aarthi Sundaram, and Daochen
  Wang.
\newblock Quantum divide and conquer.
\newblock {\em ACM Transactions on Quantum Computing}, 6(2):1--26, 2025.

\bibitem[CLM20]{carette2020extended}
Titouan Carette, Mathieu Lauri{\`e}re, and Fr{\'e}d{\'e}ric Magniez.
\newblock Extended learning graphs for triangle finding.
\newblock {\em Algorithmica}, 82(4):980--1005, 2020.

\bibitem[CMP25]{cornelissen2025improved}
Arjan Cornelissen, Nikhil~S Mande, and Subhasree Patro.
\newblock Improved quantum query upper bounds based on classical decision
  trees.
\newblock {\em Quantum}, 9:1777, 2025.

\bibitem[Cor25]{cornelissen2025quantum}
Arjan Cornelissen.
\newblock Quantum algorithms through graph composition.
\newblock {\em arXiv preprint arXiv:2504.02115}, 2025.

\bibitem[CV18]{cousins2018gaussian}
Ben Cousins and Santosh Vempala.
\newblock Gaussian cooling and o\^{}*(n\^{}3) algorithms for volume and
  gaussian volume.
\newblock {\em SIAM Journal on Computing}, 47(3):1237--1273, 2018.

\bibitem[DH96]{durr1996quantum}
Christoph Durr and Peter H{\o}yer.
\newblock A quantum algorithm for finding the minimum.
\newblock {\em arXiv preprint quant-ph/9607014}, 1996.

\bibitem[DH17]{dohotaru2017controlled}
Catalin Dohotaru and Peter H{\o}yer.
\newblock Controlled quantum amplification.
\newblock In {\em 44th International Colloquium on Automata, Languages, and
  Programming (ICALP 2017)}, pages 18--1. Schloss Dagstuhl--Leibniz-Zentrum
  f{\"u}r Informatik, 2017.

\bibitem[Gro96]{grover1996fast}
Lov~K Grover.
\newblock A fast quantum mechanical algorithm for database search.
\newblock In {\em Proceedings of the twenty-eighth annual ACM symposium on
  Theory of computing}, pages 212--219, 1996.

\bibitem[GSLW19]{gilyen2019quantum}
Andr{\'a}s Gily{\'e}n, Yuan Su, Guang~Hao Low, and Nathan Wiebe.
\newblock Quantum singular value transformation and beyond: exponential
  improvements for quantum matrix arithmetics.
\newblock In {\em Proceedings of the 51st annual ACM SIGACT symposium on theory
  of computing}, pages 193--204, 2019.

\bibitem[H{\"a}g02]{haggstrom2002finite}
Olle H{\"a}ggstr{\"o}m.
\newblock {\em Finite Markov chains and algorithmic applications}, volume~52.
\newblock Cambridge University Press, 2002.

\bibitem[IJ19]{ito2019approximate}
Tsuyoshi Ito and Stacey Jeffery.
\newblock Approximate span programs.
\newblock {\em Algorithmica}, 81(6):2158--2195, 2019.

\bibitem[Jef22]{jeffery2022quantum}
Stacey Jeffery.
\newblock Quantum subroutine composition.
\newblock {\em arXiv preprint arXiv:2209.14146}, 2022.

\bibitem[JJKP18]{jarret2018quantum}
Michael Jarret, Stacey Jeffery, Shelby Kimmel, and Alvaro Piedrafita.
\newblock Quantum algorithms for connectivity and related problems.
\newblock {\em arXiv preprint arXiv:1804.10591}, 2018.

\bibitem[JK17]{jeffery2017quantum}
Stacey Jeffery and Shelby Kimmel.
\newblock Quantum algorithms for graph connectivity and formula evaluation.
\newblock {\em Quantum}, 1:26, 2017.

\bibitem[JP25]{jeffery-pass2025multidimensional}
Stacey Jeffery and Galina Pass.
\newblock Multidimensional quantum walks, recursion, and quantum divide \&
  conquer.
\newblock In {\em 42nd International Symposium on Theoretical Aspects of
  Computer Science (STACS 2025)}, pages 54--1. Schloss
  Dagstuhl--Leibniz-Zentrum f{\"u}r Informatik, 2025.

\bibitem[JZ25]{jeffery2025multidimensional}
Stacey Jeffery and Sebastian Zur.
\newblock Multidimensional quantum walks, with application to $ k
  $-distinctness.
\newblock {\em TheoretiCS}, 4, 2025.

\bibitem[Kle08]{klenke2008probability}
Achim Klenke.
\newblock {\em Probability theory: a comprehensive course}.
\newblock Springer, 2008.

\bibitem[KMOR16]{krovi2016quantum}
Hari Krovi, Fr{\'e}d{\'e}ric Magniez, Maris Ozols, and J{\'e}r{\'e}mie Roland.
\newblock Quantum walks can find a marked element on any graph.
\newblock {\em Algorithmica}, 74(2):851--907, 2016.

\bibitem[LC17]{low2017optimal}
Guang~Hao Low and Isaac~L Chuang.
\newblock Optimal hamiltonian simulation by quantum signal processing.
\newblock {\em Physical review letters}, 118(1):010501, 2017.

\bibitem[LL16]{lin2016upper}
Cedric Yen-Yu Lin and Han-Hsuan Lin.
\newblock Upper bounds on quantum query complexity inspired by the
  elitzur--vaidman bomb tester.
\newblock {\em Theory OF Computing}, 12(18):1--35, 2016.

\bibitem[LMR{\etalchar{+}}11]{lee2011quantum}
Troy Lee, Rajat Mittal, Ben~W Reichardt, Robert {\v{S}}palek, and Mario
  Szegedy.
\newblock Quantum query complexity of state conversion.
\newblock In {\em 2011 IEEE 52nd Annual Symposium on Foundations of Computer
  Science}, pages 344--353. IEEE, 2011.

\bibitem[LP17]{levin2017markov}
David~A Levin and Yuval Peres.
\newblock {\em Markov chains and mixing times}, volume 107.
\newblock American Mathematical Soc., 2017.

\bibitem[MNRS11]{magniez2011search}
Fr{\'e}d{\'e}ric Magniez, Ashwin Nayak, J{\'e}r{\'e}mie Roland, and Miklos
  Santha.
\newblock Search via quantum walk.
\newblock {\em SIAM Journal on Computing}, 40(1):142--164, 2011.

\bibitem[Rei09]{reichardt2009span}
Ben~W Reichardt.
\newblock Span programs and quantum query complexity: The general adversary
  bound is nearly tight for every boolean function.
\newblock In {\em 2009 50th Annual IEEE Symposium on Foundations of Computer
  Science}, pages 544--551. IEEE, 2009.

\bibitem[Rei11]{reichardt2011reflections}
Ben~W Reichardt.
\newblock Reflections for quantum query algorithms.
\newblock In {\em Proceedings of the Twenty-Second Annual ACM-SIAM Symposium on
  Discrete Algorithms}, pages 560--569. SIAM, 2011.

\bibitem[R{\v{S}}12]{reichardt2012span}
Ben~W Reichardt and Robert {\v{S}}palek.
\newblock Span-program-based quantum algorithm for evaluating formulas.
\newblock {\em Theory of Computing}, 8:291--319, 2012.

\bibitem[Sho99]{shor1999polynomial}
Peter~W Shor.
\newblock Polynomial-time algorithms for prime factorization and discrete
  logarithms on a quantum computer.
\newblock {\em SIAM review}, 41(2):303--332, 1999.

\bibitem[Sim97]{simon1997power}
Daniel~R Simon.
\newblock On the power of quantum computation.
\newblock {\em SIAM journal on computing}, 26(5):1474--1483, 1997.

\bibitem[Sze04]{szegedy2004quantum}
Mario Szegedy.
\newblock Quantum speed-up of markov chain based algorithms.
\newblock In {\em 45th Annual IEEE symposium on foundations of computer
  science}, pages 32--41. IEEE, 2004.

\bibitem[vM23]{mieghem2023graph}
Piet van Mieghem.
\newblock {\em Graph spectra for complex networks}.
\newblock Cambridge university press, 2023.

\end{thebibliography}
\end{document}